\newcommand{\pushright}[1]{\ifmeasuring@#1\else\omit\hfill$\displaystyle#1$\fi\ignorespaces}
\newcommand{\pushleft}[1]{\ifmeasuring@#1\else\omit$\displaystyle#1$\hfill\fi\ignorespaces}
\DeclareMathOperator{\poly}{poly}
\newcommand{\ttau}{{\tilde{\tau}}}
\newcommand{\tilt}{{\tilde{t}}}
\DeclareMathOperator{\diag}{diag}
\DeclareMathOperator{\Ring}{R}
\DeclareMathOperator{\cop}{cpr}
\DeclareMathOperator{\sgn}{sgn}
\DeclareMathOperator{\odty}{odt}
\DeclareMathOperator{\sig}{sig}
\DeclareMathOperator{\adj}{adj}
\DeclareMathOperator{\SYM}{sym}
\DeclareMathOperator{\ord}{ord}
\DeclareMathOperator{\can}{can}
\DeclareMathOperator{\red}{red}
\DeclareMathOperator{\tsym}{\textsc{sym}} 
\DeclareMathOperator{\Sym}{\Upsilon}
\newcommand{\symtk}{\tsym_{2^k}}
\newcommand{\canp}{\can_p}
\newcommand{\sigp}{\sig_p}
\newcommand{\sigt}{\sig_2}
\newcommand{\tSym}{\tilde{\Sym}}
\newcommand{\copt}{\cop_2}
\newcommand{\copp}{\cop_p}
\newcommand{\scale}{\mathbb{I}}
\DeclareMathOperator{\excess}{exs}
\newcommand{\tMQ}{\tilde{\MQ}}
\newcommand{\tMH}{\tilde{\MH}}
\newcommand{\tMU}{\tilde{\MU}}
\newcommand{\SymP}{\Upsilon_p}
\newcommand{\SymT}{\Upsilon_2}
\newcommand{\pexcess}{\excess_p}
\newcommand{\zRz}[1]{\bbZ/#1\bbZ}
\newcommand{\zqz}{\zRz{q}}
\newcommand{\zpz}{\zRz{p}}
\newcommand{\ztkz}{\zRz{2^k}}
\newcommand{\zpkz}{\zRz{p^k}}
\newcommand{\modtk}{\bmod{2^k}}
\newcommand{\eqq}{\overset{q}{\sim}}
\newcommand{\eqR}{\overset{\bbR}{\sim}}
\newcommand{\eqp}{\overset{p^*}{\sim}}
\newcommand{\eqt}{\overset{2^*}{\sim}}
\newcommand{\eqx}[1]{\overset{#1}{\sim}}
\newcommand{\dSym}{\det(\Sym)}
\newcommand{\PSym}{\bbP_{\Sym}}
\newcommand{\ordp}{\ord_p}
\newcommand{\ordt}{\ord_2}
\newcommand{\sgnp}{\sgn_p}
\DeclareMathOperator{\scalep}{\scale_p}
\DeclareMathOperator{\scalet}{\scale_2}
\newcommand{\MTP}{\mathtt{T}^{+}}
\newcommand{\MTM}{\mathtt{T}^{-}}
\newcommand{\MA}{\mathtt{A}} \newcommand{\MB}{\mathtt{B}}
 \newcommand{\MD}{\mathtt{D}}
\newcommand{\MH}{\mathtt{H}} \newcommand{\MI}{\mathtt{I}}
\newcommand{\MM}{\mathtt{M}} 
\newcommand{\MQ}{\mathtt{Q}} \newcommand{\MS}{\mathtt{S}}
 \newcommand{\MU}{\mathtt{U}}
\newcommand{\MV}{\mathtt{V}} 
\newcommand{\MX}{\mathtt{X}}
\newcommand{\Vb}{\mathbf{b}}
\newcommand{\Vd}{\mathbf{d}} 
\newcommand{\Vx}{\mathbf{x}} 
\newcommand{\Vy}{\mathbf{y}}
\newcommand{\Vv}{\mathbf{v}}
\newcommand{\Vw}{\mathbf{w}}
\DeclareMathOperator{\type}{type}
\DeclareMathOperator{\SGN}{\textsc{sgn}}
\DeclareMathOperator{\SGNI}{\textsc{sgn}^\times}
\DeclareMathOperator*{\argmax}{arg\,max}
\DeclareMathOperator{\maj}{maj}
\DeclareMathOperator{\gl}{GL}
\DeclareMathOperator{\SL}{SL}
\DeclareMathOperator{\gln}{GL_n}
\DeclareMathOperator{\sln}{SL_n}
\newcommand{\legendre}[2]{%
\left( \frac{#1}{#2} \right)%
}
\DeclareMathOperator{\gen}{Gen}
\newcommand{\Lattice}{L}
\newtheorem{fact}{\textsc{Fact}}
\newcommand{\bbR}{\mathbb{R}}
\newcommand{\bbQ}{\mathbb{Q}}
\newcommand{\bbZ}{\mathbb{Z}}
\newcommand{\bbP}{\mathbb{P}}
\newcommand{\fS}{\mathcal{S}}
\renewcommand{\fS}{\mathcal{S}}
\DeclareSymbolFont{bbold}{U}{bbold}{m}{n}
\DeclareSymbolFontAlphabet{\mathbbold}{bbold}
\newcommand{\RET}{\textbf{return }}
\newcommand{\THEN}{\textbf{then}}
\newcommand{\FOR}{\textbf{for }}
\newcommand{\ENDIF}{\textbf{end if}}
\newcommand{\DO}{\textbf{do}}
\newcommand{\PROC}{\textbf{proc}}
\newcommand{\END}{\textbf{end }}
\newcommand{\NOT}{\textbf{not }}
\DeclareMathOperator{\TYPE}{type}
\DeclareMathOperator{\EVEN}{even}
\DeclareMathOperator{\ODD}{odd}
\DeclareMathOperator{\NOPS}{nops}
\newcommand{\AND}{\textbf{ and }}
\newcommand{\IN}{\textbf{in }}
\newcommand{\IF}{\textbf{if }}
\newcommand{\OR}{\textbf{ or }}
\newcommand{\INT}{\textbf{ intersect }}
\DeclareMathOperator{\FXI}{\textsc{fxi}}
\newsavebox{\theorembox} 
\newsavebox{\lemmabox}
\newsavebox{\claimbox}
\newsavebox{\corollarybox} \newsavebox{\propositionbox}
\newsavebox{\examplebox} \newsavebox{\conjecturebox}
\newsavebox{\algbox} \newsavebox{\qbox} \newsavebox{\problembox}
\newsavebox{\definitionbox} \newsavebox{\assumptionbox}
\newsavebox{\hypothesisbox} \newsavebox{\obsbox} 
\savebox{\theorembox}{\noindent\bf Theorem} 
\savebox{\lemmabox}{\noindent\bf Lemma}
\savebox{\claimbox}{\noindent\bf Claim}
\savebox{\corollarybox}{\noindent\bf Corollary}
\savebox{\propositionbox}{\noindent\bf Proposition}
\savebox{\examplebox}{\noindent\bf Example}
\savebox{\conjecturebox}{\noindent\bf Conjecture}
\savebox{\algbox}{\noindent\bf Algorithm}
\savebox{\qbox}{\noindent\bf Question}
\savebox{\definitionbox}{\noindent\bf Definition}
\savebox{\problembox}{\noindent\bf Problem}
\savebox{\assumptionbox}{\noindent\bf Assumption}
\savebox{\hypothesisbox}{\noindent\bf Hypothesis}
\savebox{\obsbox}{\noindent\bf Observation}
\newtheorem{theorem}{\usebox{\theorembox}}
\newtheorem{lemma}[theorem]{\usebox{\lemmabox}}
\newtheorem{claim}{\usebox{\claimbox}}
\newtheorem{definition}{\usebox{\definitionbox}}
\newcommand{\qed}{\;\;\;\Box} 
\newenvironment{proof}{{\bf Proof:}}{\hfill\(\qed\)\newline}
\title{Generating a Quadratic Forms from a Given Genus}
\author{
	Chandan Dubey\\
	\texttt{chandan.dubey@inf.ethz.ch}
	\and
	Thomas Holenstein\\
	\texttt{thomas.holenstein@inf.ethz.ch}
}
\date{Institut f\"ur Theoretische Informatik, ETH Z\"urich}
\begin{document}
\maketitle

\begin{abstract}
Given a non-empty genus in $n$ dimensions with 
determinant $d$, we give a randomized algorithm
that outputs a quadratic form from this genus. The time complexity
of the algorithm is $\poly(n,\log d)$; assuming Generalized 
Riemann Hypothesis (GRH). 
\end{abstract}

\renewcommand{\baselinestretch}{0.95}\normalsize
\tableofcontents
\renewcommand{\baselinestretch}{1}\normalsize
\newpage

\section{Introduction}

Let $\Ring$ be a commutative ring with unity and $\Ring^\times$ 
be the set of units (i.e., invertible elements) of $\Ring$. A 
quadratic form over the ring $\Ring$ in $n$-formal variables 
$x_1,\cdots,x_n$ in an expression 
$\sum_{1\leq i, j \leq n}a_{ij}x_ix_j$, where $a_{ij}=a_{ji} 
\in \Ring$. A quadratic form can equivalently be represented by
a symmetric matrix $\MQ^n=(a_{ij})$ such that 
$Q(x_1,\cdots,x_n)=(x_1,\cdots,x_n)'\MQ(x_1,\cdots,x_n)$. The
quadratic form is called integral if $\Ring=\bbZ$ and the 
determinant of the quadratic form $Q$ is defined as $\det(\MQ)$.
In this paper, we concern ourselves with integral quadratic
forms, henceforth referred only as {\em quadratic forms}.

One of the classical problems in the study of quadratic
forms is their classification into equivalence classes. 
Two quadratic forms $\MQ_1, \MQ_2$ are said to be equivalent
over a ring $\Ring$ if there exists a transformation 
$\MU \in \gln(\Ring)$ such that $\MQ_1 = \MU'\MQ_2\MU$.
For example, $\MQ_1$ and $\MQ_2$ are $q$-equivalent, for an
integer $q$ (denoted, $\MQ_1 \eqq \MQ_2$), if there exists a
matrix $\MU \in \gln(\zqz)$ such that 
$\MQ_1 \equiv \MU'\MQ_2\MU \bmod q$. Intuitively, 
$q$-equivalence means
that there exists an invertible linear change of variables over 
$\zqz$ that transforms one form to the other. Gauss \cite{Gauss86}
gives a complete classification of binary quadratic forms 
(i.e., $n=2$).

Two quadratic forms are said to be in the 
same genus if they are equivalent over the reals $\bbR$ and also
over $\zqz$ for all positive integers $q$. In this paper, we 
consider the following problem: given a description of a non-empty
genus, produce a quadratic form from that genus. 
A discussion of the problem can
be found in Conway and Sloane \cite{CS99}, page 403. The
best algorithm for this problem is based on Minkowski 
Reduced forms and takes $O(d^{n^2})$ time for genus in dimension
$n$ with determinant $d$.

The skeleton of our
algorithm is similar to the algorithm given by Hartung 
\cite{Hartung08}. His thesis uses an equivalent but different
approach based on Cassels \cite{Cassels78}. Unfortunately, there 
are several gaps in his construction. There are also mistakes when
dealing with prime~2. But, the most
severe problem with the algorithm is that its time
complexity is proportional to $n^n$ i.e., it is not
polynomial. A discussion can be found in Section \ref{sec:Hartung}.

We mention here, a connection of our problem to 
lattices as studied in the Computer Science community.
A {\em full-rank} lattice $\Lattice$ in $\bbR^n$ is a discrete subgroup of
$\bbR^n$ which is the set of all integer linear combinations of 
$n$-linearly independent vectors, say $\Vb_1,\ldots,\Vb_n$ i.e., 
$\Lattice=\{\sum_{i=1}^n z_i \Vb_i \mid z_1,\ldots,z_n \in \bbZ\}$. 
The matrix $\MB=[\Vb_1,\ldots,\Vb_n]$ is called
the {\em basis} of the lattice and the matrix $\MQ=\MB'\MB$ is called
a {\em Gram} matrix of the lattice. A lattice is {\em integral} if its
Gram matrix has only integer entries. It is not difficult to see
that the Gram matrix of a lattice defines a positive definite 
quadratic form.

Two lattices
are called {\em isomorphic} if one can be transformed into
another by an orthogonal linear transformation.
A fundamental question, called the Lattice 
Isomorphism Problem (LIP), is to decide if two given
Gram matrices come from isomorphic lattices. 
In other words, given two Gram matrices
$\MQ_1$ and $\MQ_2$ one has to decide if there exists a 
unimodular matrix $\MU$ such that $\MQ_2=\MU'\MQ_1\MU$.
For Gram matrices in dimension $n$ and determinant $d$,
the problem can be solved using Minkowski Reduced Forms
(see Section 10, Chapter 10 \cite{CS99}) in time
$O(d^{n^2})$. Other exhaustive search algorithms are known,
see \cite{Dietmann03,Siegel72}. Recently, Regev and 
Haviv \cite{HR13}
gave an algorithm with time complexity which is 
$n^{O(n)}$ times the size of the input.

The shortest vector problem (SVP) is the problem of finding the
shortest non-zero vector in a given lattice.
The current best known hardness for SVP is given
by Regev-Haviv \cite{HR07} and is based on tensoring lattice bases
in the hope of amplifying the length of the shortest vector. 
This approach fails in general. For large enough dimension $n$, there are 
self dual lattices with shortest vector $\Omega(\sqrt{n})$. The
usual tensoring among these lattices fails to amplify the length
of the shortest vector (Lemma~2.4, \cite{HR07}). It is not known
how one can construct self-dual lattice with shortest vector
length $\Omega(\sqrt{n})$ but it can be shown that such lattices
exist in large dimensions. The proof of existence (see page~48, 
\cite{MH73}) uses the
Smith-Minkowski-Siegel mass formula; which computes 
the average number vectors of a certain length in a genus.
One way to generate a self-dual lattice with shortest vector 
$\Omega(\sqrt{n})$ is to sample a lattice according to a certain
distribution from a specific genus (see Milnor-Husemoller 
\cite{MH73}). Our result falls short in the following way. Given
this specific genus, we can construct one lattice but we do not
know how to sample according to the distribution specified in 
\cite{MH73}. In this respect, our work can be seen as an important 
first step towards construction of self-dual lattices with shortest
vector $\Omega(\sqrt{n})$.

\paragraph*{Our Contributions.}

Let $d$ be the determinant of a genus in dimension $n$. 
We present a $\poly(n,\log d)$ Las Vegas algorithm
that outputs a quadratic form in the genus with constant probability.
Our construction technique is inspired by the proof of 
Smith-Minkowski-Siegel mass formula given by Siegel 
\cite{Siegel35} and uses similar notations as Conway-Sloane
\cite{CS99}. 

A significant feature of our work is the simplification achieved by
not using $p$-adic numbers, a staple in the analysis of 
integral quadratic forms \cite{CS99,Kitaoka99,Kneser02,Siegel35}.

\section{Preliminaries}

Integers and ring elements are denoted by lowercase letters,
vectors by bold lowercase letters and matrices by typewriter
uppercase letters. The $i$'th component of a vector $\Vv$ is
denoted by $v_i$. We use the notation $(v_1,\cdots,v_n)$ for
a column vector and the transpose of matrix $\MA$ is denoted by
$\MA'$. The matrix
$\MA^n$ will denote a $n\times n$ square matrix.
The scalar product of two vectors will be denoted 
$\Vv'\Vw$ and equals $\sum_i v_iw_i$. The standard Euclidean norm of the 
vector $\Vv$ is denoted by $||\Vv||$ and equals $\sqrt{\Vv'\Vv}$.

If $\MQ_1^n, \MQ_2^m$ are matrices, then the {\em direct product}
of $\MQ_1$ and $\MQ_2$ is denoted by $\MQ_1\oplus\MQ_2$ and is
defined as $\diag(\MQ_1,\MQ_2)=\begin{pmatrix}
\MQ_1&0\\0&\MQ_2
\end{pmatrix}$.
Given two matrices $\MQ_1$ and $\MQ_2$ with the same number
of rows, $[\MQ_1,\MQ_2]$ is the matrix which is obtained
by concatenating the two matrices columnwise.
A matrix is called unimodular
if it is an integer $n\times n$ matrix with determinant $\pm 1$.
If $\MQ^n$ is a $n\times n$ integer matrix and $q$ is a positive integer
then $\MQ \bmod{q}$ is defined as the matrix with all entries of $\MQ$
reduced modulo $q$.

Let $\Ring$ be a commutative ring with unity and $\Ring^\times$ be the
set of units (i.e., invertible elements) of $\Ring$.
If $\MQ \in \Ring^{n\times n}$ is a square matrix, the 
{\em adjugate} of $\MQ$ is defined as the transpose of 
the cofactor matrix and is denoted by $\adj(\MQ)$. The matrix $\MQ$ 
is invertible if and only if $\det(\MQ)$ is a unit of $\Ring$. 
In this case, $\adj(\MQ)=\det(\MQ)\MQ^{-1}$. The set of invertible 
$n\times n$ matrices over $\Ring$ is denoted by $\gln(\Ring)$. The 
subset of matrices with determinant $1$ will be denoted by 
$\sln(\Ring)$.

\begin{fact}\label{fact:gln}
A matrix $\MU$ is in $\gln(\Ring)$ iff $\det(\MU) \in \Ring^\times$. 
\end{fact}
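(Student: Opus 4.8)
The plan is to derive both directions from a single algebraic input: the adjugate identity
\[
\MU\,\adj(\MU) \;=\; \adj(\MU)\,\MU \;=\; \det(\MU)\,\MI,
\]
where $\MI$ denotes the $n\times n$ identity matrix, together with multiplicativity of the determinant, $\det(\MA\MB)=\det(\MA)\det(\MB)$. I would first record that both hold over an arbitrary commutative ring $\Ring$: each is a polynomial identity in the matrix entries with integer coefficients, so it suffices to check it for ``generic'' matrices of indeterminates over $\bbZ[x_{ij}]$ (resp. $\bbZ[x_{ij},y_{ij}]$), which is an integral domain and hence embeds in its fraction field, where the identity is the standard fact from linear algebra over a field; specializing the indeterminates to the entries of $\MU$ (resp. $\MA,\MB$) then transports it to $\Ring$.

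For the forward implication I would assume $\MU\in\gln(\Ring)$, pick $\MV\in\Ring^{n\times n}$ with $\MU\MV=\MV\MU=\MI$, take determinants, and use multiplicativity to get $\det(\MU)\det(\MV)=\det(\MI)=1$; thus $\det(\MV)$ is an inverse of $\det(\MU)$ in $\Ring$, so $\det(\MU)\in\Ring^\times$.

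For the converse I would assume $\det(\MU)\in\Ring^\times$ and exhibit the inverse explicitly by setting $\MV:=\det(\MU)^{-1}\,\adj(\MU)$, which lies in $\Ring^{n\times n}$ because $\det(\MU)^{-1}\in\Ring$. The adjugate identity then gives $\MU\MV=\det(\MU)^{-1}\bigl(\MU\,\adj(\MU)\bigr)=\det(\MU)^{-1}\det(\MU)\,\MI=\MI$, and symmetrically $\MV\MU=\MI$, so $\MU$ is invertible over $\Ring$, i.e.\ $\MU\in\gln(\Ring)$.

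I do not expect a genuine obstacle: the content is entirely formal once the adjugate identity is available. The only point that deserves a sentence of care --- and the step I would be most careful to state correctly --- is that both the adjugate identity and the multiplicativity of $\det$ are being used over a general commutative ring, not merely over a field, which is exactly what the ``generic matrix'' reduction above is for; after that, each direction is a one-line computation.
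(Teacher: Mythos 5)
Your proposal is correct, and it is the standard adjugate-based argument that the paper itself gestures at: the surrounding text in the Preliminaries already records the identity $\adj(\MQ)=\det(\MQ)\MQ^{-1}$ for invertible $\MQ$, which is exactly the ingredient you use for the converse direction. The paper presents this as a Fact without proof (treating it as well known), so there is no paper proof to compare against in detail; your additional remark justifying the adjugate identity and the multiplicativity of $\det$ over an arbitrary commutative ring via specialization from $\bbZ[x_{ij}]$ is a valid and careful way to fill in the one point a reader might question.
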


The set of odd primes is denoted by $\bbP$. 
We define $\bbQ/(-1)\bbQ=\zRz{(-1)}:=\bbR$. 
For every prime $p$ and positive integer $k$,
we define the ring $\zpkz=\{0,\cdots,p^k-1\}$, where product
and addition is defined modulo $p^k$.

Let $p$ be a prime, and $a, b$ be integers. Then, 
$\ordp(a)$ is the largest integer exponent of $p$ such that $p^{\ordp(a)}$
divides $a$. We let $\ordp(0) = \infty$. The $p$-coprime part of $a$ is then 
$\copp(a)=\frac{a}{p^{\ordp(a)}}$. Note that $\copp(a)$ is,
by definition, a unit of $\zpz$. 
For $\frac{a}{b}$, a rational number, we define
$\ordp(\frac{a}{b})=\ordp(a)-\ordp(b)$. The $p$-coprime part of $\frac{a}{b}$
is denoted as
$\copp(\frac{a}{a})$ and equals $\frac{a/p^{\ordp(a)}}{b/p^{\ordp(b)}}$.
For a positive integer $q$,
one writes $a\equiv b \bmod{q}$, if $q$ 
divides $a-b$. By $x:=a \bmod{q}$, we mean that
$x$ is assigned the unique value $b \in \{0,\cdots,q-1\}$ such that 
$b \equiv a \bmod{q}$.
An integer $t$ is 
called a {\em quadratic residue} modulo $q$ if $\gcd(t,q)=1$ and
$x^2\equiv t \bmod{q}$ has a solution. 

\begin{definition}\label{def:Legendre} 
Let $p$ be an odd prime, and $t$ be a positive integer with
$\gcd(t,p)=1$.
Then, the Legendre-symbol of $t$ with 
respect to $p$ is defined as follows.
\begin{displaymath}
\legendre{t}{p} = t^{(p-1)/2} \bmod p = \left\{ \begin{array}{ll}
1 & \textrm{if $t$ is a quadratic residue modulo $p$}\\
-1 & \textrm{otherwise.}
\end{array}\right.
\end{displaymath}
\end{definition}

For the prime~2, there is an extension of Legendre symbol called the
Kronecker symbol. It is defined for odd integers $t$ and 
$\legendre{t}{2}$ equals $1$ if $t\equiv \pm 1 \bmod 8$, and $-1$
if $t \equiv \pm 3 \bmod 8$.

The Law of Quadratic Reciprocity, conjectured by Euler and Legendre
and first proved by Gauss, says that if 
$p_1, p_2$ are distinct primes, and $p$ is an odd prime, then
\begin{align}\label{QuadraticReciprocity}
\legendre{p_1}{p_2} = \left\{
\begin{array}{ll} 
- \legendre{p_2}{p_1} & \text{if $p_1\equiv p_2 \equiv 3 \pmod 4$,}\\
\legendre{p_2}{p_1} & \text{otherwise.}
\end{array}\right. \qquad
\legendre{p\vphantom{2}}{2\vphantom{p}} = \legendre{2}{p}
\end{align}

The $p$-sign of $t$, denoted $\sgnp(t)$, is defined as 
$\legendre{\copp(t)}{p}$
for odd primes $p$ and $\copt(t) \bmod 8$ otherwise. We
also define $\sgnp(0)=0$, for all primes $p$.
Thus,
\[
\sgnp(0) = 0 \qquad 
\sgnp(t>0) \in \left\{
	\begin{array}{ll}
	\{+1,-1\} & \text{if $p$ is odd}\\
	\{1,3,5,7\} & \text{otherwise}
	\end{array}\right.
\]

The following lemma is well known.

\begin{lemma}\label{lem:QR}
Let $p$ be an odd prime. Then, there
are $\frac{p-1}{2}$ quadratic residues and $\frac{p-1}{2}$ 
quadratic non-residues modulo $p$. Also, every quadratic residue
in $\zpz$ can be written as a sum of two quadratic non-residues
and every quadratic non-residue can be written as a sum of two quadratic
residues.
\end{lemma}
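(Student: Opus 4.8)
The statement to prove (Lemma~\ref{lem:QR}) has two parts: the counting of quadratic residues and non-residues, and the additive statement about writing residues as sums of non-residues and vice versa.

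\medskip

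\noindent\textbf{Proof proposal.}
The plan is to handle the two assertions separately. For the counting statement, I would use the fact that $\zpz^\times$ is a cyclic group of order $p-1$ (for $p$ odd), and that the squaring map $x \mapsto x^2$ is a group homomorphism on $\zpz^\times$ whose kernel is $\{\pm 1\}$, which has size $2$ since $p$ is odd. Hence the image --- the set of quadratic residues --- has size $(p-1)/2$, and the remaining $(p-1)/2$ nonzero elements are non-residues. Alternatively, one can invoke the Legendre symbol directly: $\legendre{\cdot}{p}$ is a nontrivial multiplicative character (nontrivial because a cyclic group of even order has a non-square), so it takes the value $+1$ and the value $-1$ equally often, namely $(p-1)/2$ times each.

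For the additive statement, the key observation is the following symmetry. Fix a quadratic non-residue $\alpha$. I claim that for any target $t \in \zpz^\times$, the number of ways to write $t = u + v$ with $u, v$ both non-residues has the same parity/count behaviour as writing $t$ as a sum of two residues, via the scaling $u \mapsto \alpha u$, $v \mapsto \alpha v$: multiplying a residue by a fixed non-residue gives a non-residue and vice versa, so $t = u+v$ with $u,v$ residues if and only if $\alpha t = (\alpha u) + (\alpha v)$ with $\alpha u, \alpha v$ non-residues. So it suffices to show that \emph{every} element of $\zpz^\times$ is a sum of two non-residues (equivalently, every element is a sum of two residues, after scaling), and then specialize. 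Concretely: to show a residue is a sum of two non-residues, write the residue as $\alpha \cdot r'$ where $r'$ is the residue obtained by scaling; reduce to showing $r'$ --- actually any fixed nonzero element --- is a sum of two residues.

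To show every $t \in \zpz^\times$ is a sum of two quadratic residues when $p$ is odd, I would count solutions to $x^2 + y^2 \equiv t \pmod p$ with $x, y \not\equiv 0$. The number of solutions to $x^2 + y^2 = t$ over $\zpz$ is a standard Gauss/Jacobi sum computation equal to $p - \legendre{-1}{p}$, which is at least $p-1 \geq 4$ for $p \geq 5$; subtracting the few solutions with $x=0$ or $y=0$ (at most $4$, and for $p\geq 7$ this still leaves room) gives a solution with both coordinates nonzero, hence $t$ as a sum of two residues. The small cases $p = 3, 5$ I would check by hand. This Jacobi-sum step, together with the bookkeeping of the degenerate solutions, is the main obstacle; everything else is routine group theory. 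An alternative to the character-sum argument, which avoids Jacobi sums entirely, is a pigeonhole/counting argument: the sets $\{r : r \text{ a residue}\}$ and $\{t - r : r \text{ a residue}\}$ each have size $(p-1)/2$ (the latter possibly including $t - 0$, but $0$ is not a residue so all $(p-1)/2$ shifted values are among the $p$ residue-classes), and if they were disjoint their union would have size $p-1$ while omitting at least the class $0$ and also possibly forcing a contradiction on cardinalities; making this count precise --- carefully tracking whether $0$ and $t$ lie in these sets --- yields the needed intersection. I would present whichever of these two routes is cleanest, likely the counting one, reserving the character-sum identity as a remark.
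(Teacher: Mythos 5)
The paper does not actually prove Lemma~\ref{lem:QR}; it is stated as ``well known'' with no argument given, and it does not appear among the ``Missing Proofs'' in Appendix~\ref{sec:Proofs}. So there is no paper proof to compare against, and I will only assess your proposal on its own terms.

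Your counting argument (kernel of the squaring homomorphism has order $2$, hence image has index $2$) is correct, and so is the scaling observation: multiplying the identity $t=u+v$ by a fixed non-residue $\alpha$ interchanges ``residue is a sum of two non-residues'' with ``non-residue is a sum of two residues,'' so it suffices to prove one of the two assertions. The Jacobi-sum route does work, with one caveat you should make explicit: the stronger statement ``every $t\in(\zpz)^\times$ is a sum of two nonzero squares'' is \emph{false} for $p=3$ and $p=5$ (for instance $1$ is not a sum of two nonzero squares modulo $5$), so those cases are not merely ``small primes to check for safety'' --- the general bound genuinely fails there, and what you must verify by hand is the lemma's actual statement, not the stronger claim. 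For $p\ge 7$ the estimate $p-\legendre{-1}{p}-4\ge 1$ does produce a representation with both coordinates nonzero, so that branch is fine.

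Your pigeonhole sketch, as written, has a gap. Taking both sets to be $QR$ and $t-QR$ gives total size $p-1\le p$, so disjointness is not immediately contradicted. There are two clean fixes. One is to use $S=QR\cup\{0\}$ (all squares including $0$), so $|S|+|t-S|=p+1>p$ forces $S\cap(t-S)\ne\emptyset$; for a non-residue $t$ the resulting $a^2+b^2=t$ automatically has $a,b\ne 0$, which is exactly assertion~(b), and then~(a) follows by your scaling. The other fix stays with $QR$: if $t$ is a non-residue and $QR\cap(t-QR)=\emptyset$, then since $0$ lies in neither set their union is all of $(\zpz)^\times$, hence $t-QR$ equals the set of non-residues; but $t\notin t-QR$ (that would need $0\in QR$) while $t$ is a non-residue, contradiction. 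Either patch completes the argument, and with one of them in place the proof is correct.
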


An integer $t$ is a square modulo $q$ 
if there exists an integer $x$ such that $x^2\equiv t \pmod{q}$.
The integer $x$ is called the {\em square root} of $t$
modulo $q$. If no such $x$ exists, then $t$ is a non-square modulo $q$.

\begin{definition}\label{def:NonSquare}
Let $p$ be a prime and $\frac{x}{y}$ be a rational number. Then, $\frac{x}{y}$
can be uniquely written as $\frac{x}{y}=p^{\alpha}\frac{a}{b}$, 
where $a,b$ are  units of $\zpz$. We say that $\frac{x}{y}$ is a 
$p$-antisquare if $\alpha$ is odd and $\sgnp(a) \neq \sgnp(b)$.
\end{definition}

The following lemma is folklore and gives the necessary and sufficient
conditions for an integer $t$ to be a square modulo $p^k$. For
completeness, a proof is provided in Appendix \ref{sec:Proofs}.

\begin{lemma}\label{lem:Square}
Let $p$ be a prime, $k$ be a positive integer and 
$t \in \zpkz$ be a non-zero integer. Then, $t$ is a 
square modulo $p^k$ if and only if $\ordp(t)$ is even and
$\sgnp(t)=1$. 
\end{lemma}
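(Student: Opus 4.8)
The plan is to prove Lemma~\ref{lem:Square} in two directions, reducing everything to the prime factorization structure and Hensel-type lifting. Write $t = p^{\alpha} u$ where $\alpha = \ordp(t)$ and $u = \copp(t)$ is a unit of $\zpz$; here $0 \le \alpha < k$ since $t \in \zpkz$ is nonzero. The forward direction is easy: if $t \equiv x^2 \bmod p^k$, then $x \neq 0 \bmod p^k$, so writing $x = p^\beta w$ with $w$ a unit, we get $p^{2\beta} w^2 \equiv p^\alpha u \bmod p^k$. Comparing $p$-adic valuations forces $\alpha = 2\beta$ (hence even), and after dividing by $p^\alpha$ we get $u \equiv w^2 \bmod p^{k-\alpha}$, so $u$ is a square modulo $p$, i.e.\ $\sgnp(t) = \sgnp(u) = \legendre{u}{p} = 1$ for odd $p$ (and similarly $u$ is a square mod $8$, i.e.\ $u \equiv 1 \bmod 8$, for $p = 2$).

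The reverse direction is the substantive part. Assume $\alpha$ is even, say $\alpha = 2\beta$, and $\sgnp(t) = 1$. It suffices to show that the unit $u$ is a square modulo $p^{k-\alpha}$: if $u \equiv w^2 \bmod p^{k-\alpha}$ with $w$ a unit, then $t = p^{2\beta} u \equiv (p^\beta w)^2 \bmod p^{k}$. So the whole lemma reduces to the claim: \emph{a unit $u \in (\zRz{p^m})^\times$ is a square modulo $p^m$ iff $\sgnp(u) = 1$}.

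For odd $p$ this claim follows by Hensel lifting. The condition $\sgnp(u) = 1$ means $u$ is a quadratic residue mod $p$ (Definition~\ref{def:Legendre}), so there is $w_1$ with $w_1^2 \equiv u \bmod p$; since $p$ is odd, $2 w_1$ is a unit mod $p$, so the standard Newton/Hensel iteration lifts $w_1$ to a solution mod $p^m$. (One can present this as: given $w_j^2 \equiv u \bmod p^j$, set $w_{j+1} = w_j + p^j s$ and solve the resulting linear congruence $2 w_j s \equiv (u - w_j^2)/p^j \bmod p$ for $s$, which is possible because $2w_j$ is invertible mod $p$.) For $p = 2$ the lifting step $2w_j s \equiv \cdots$ degenerates, so one argues directly: the squares among units mod $2^m$ are exactly those $\equiv 1 \bmod 8$. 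This is a small separate computation — check $m = 1, 2, 3$ by hand (mod $8$ the unit squares are $\{1\}$, and $\sgnt(u) = u \bmod 8$), then for $m \ge 4$ lift a solution of $w^2 \equiv u \bmod 2^j$ (with $j \ge 3$) to modulus $2^{j+1}$ using $w_{j+1} = w_j$ or $w_j + 2^{j-1}$, observing that squaring this changes the residue mod $2^{j+1}$ by exactly $0$ or $2^{j-1}\cdot 2 w_j \equiv 2^j \bmod 2^{j+1}$ (using $w_j$ odd), so one of the two choices works.

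The main obstacle is the $p = 2$ case: the naive Hensel iteration fails because the derivative $2w$ is not a unit mod $2$, so the lifting needs the slightly subtler $j \to j+1$ induction using increments of size $2^{j-1}$ together with the base case check through modulus $8$, and one must be careful that the hypothesis $\sgnt(t) = 1$ is exactly the $\equiv 1 \bmod 8$ condition that makes the base case and the induction go through. Everything else is bookkeeping with $p$-adic valuations. I would organize the write-up as: (1) reduce to the unit case via valuation comparison; (2) handle odd $p$ by Hensel; (3) handle $p = 2$ by the base case plus the increment-$2^{j-1}$ induction; the reference to Appendix~\ref{sec:Proofs} suggests the authors do exactly this.
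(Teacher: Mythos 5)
Your proof is correct and follows the same overall structure as the paper's proof in Appendix~\ref{sec:Proofs}: the forward direction compares $p$-adic valuations to force $\ordp(t)$ even and $\copp(t)$ a residue, and the reverse direction reduces to showing that a unit with $\sgnp$ equal to~$1$ is a square modulo $p^m$.

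The place where your write-up genuinely improves on the paper is the reverse direction. The paper's converse argument (for both the odd-prime and $p=2$ cases) asserts ``there exists a $u \in \zpkz$ such that $u^2 \equiv \copp(t) \pmod{p^k}$, by Lemma~\ref{lem:Square}'' --- a circular self-citation that never actually establishes the lifting. You supply the missing content: for odd $p$, the Newton/Hensel iteration $w_{j+1}=w_j+p^j s$ solving $2w_j s \equiv (u-w_j^2)/p^j \pmod p$, which works because $2w_j$ is a unit; and for $p=2$, where the derivative obstruction kills the naive iteration, the base case modulo~$8$ (unit squares are exactly $\{1\}$) together with the increment-$2^{j-1}$ induction for $j\ge 3$, using $w_j$ odd to see that $(w_j+2^{j-1})^2 - w_j^2 \equiv 2^j \pmod{2^{j+1}}$. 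Your identification of the $p=2$ case as ``the main obstacle'' is apt, and your argument there is exactly what makes the converse complete; the paper glosses over precisely this step.
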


\begin{definition}\label{def:Prim}
Let $p^k$ be a prime power. A vector $\Vv \in (\zpkz)^n$ is called 
primitive if there exists a component $v_i$, $i \in [n]$, of $\Vv$ such
that $\gcd(v_i,p)=1$. Otherwise, the vector $\Vv$ is 
non-primitive.
\end{definition}

Our definition of primitiveness of a vector is different but equivalent 
to the 
usual one in the literature. A
vector $\Vv \in (\zqz)^n$ is called primitive over $\zqz$ for a 
composite integer $q$ if it is primitive modulo $p^{\ordp(q)}$ for
all primes that divide $q$.

\paragraph{Randomized Algorithms.} 
Our randomized algorithms are 
Las Vegas algorithms. They either fail
and output nothing, or produce a correct answer. The 
probability of failure is bounded by a constant. Thus, for any
$\delta>0$, it is possible to repeat the algorithm 
$O(\log \frac{1}{\delta})$ times and succeed with probability at least
$1-\delta$. Henceforth, these algorithms will be called
{\em randomized algorithms}.

Our algorithms perform two kinds of operations. Ring operations e.g.,
multiplication, additions, inversions over $\zpkz$ and operations
over integers $\bbZ$ e.g., multiplications, additions, divisions etc
and operations over integers $\bbZ$.
The runtime for all these operations is treated as constant i.e., $O(1)$
and the time complexity of the algorithms is measured in terms of 
ring operations. Note that the complexity cannot be assumed to be
$O(1)$ if the numbers are doubly exponential in $n$. Thus, we make sure
than the numbers generated during the algorithm are bounded by 
$2^{\poly(n,d)}$.
Sometimes, we also need to sample a 
uniform ring element from $\zpkz$. We adapt the convention that sampling
a uniform ring elements also takes $O(1)$ ring operations.

For example, the Legendre symbol of an integer $a$ can be computed
by fast exponentiation in $O(\log p)$ ring operations over $\zpz$ 
while $\ordp(t)$
for $t \in \zpkz$ can be computed by fast exponentiation in $O(\log k)$ ring
operations over $\zpkz$.

Let $\omega$ be the constant, such that multiplying two $n\times n$ 
matrices over $\zpkz$ takes
$O(n^{\omega})$ ring operations. 

\paragraph{Dirichlet's Theorem.}

Let $a,q$ be positive integers such that $\gcd(a,q)=1$.
Dirichlet's theorem states
that there are infinitely many primes of the form $a+zq$, where
$z$ is a non-negative integer. The following theorem gives a
quantitative version of Dirichlet's theorem using Generalized 
Riemann Hypothesis (GRH). A proof of the theorem 
can be found in any analytic number theory book, for 
example \cite{Kowalski04}.

\begin{theorem}\label{thm:ERH}
Let $a, q$ be integers such that $\gcd(a,q)=1$ and $S$ be the set
$\{a+zq\mid z \in \bbZ, a+zq \leq q^3\}$. Then assuming GRH, there exists a 
constant $c$ such that $S$ has $c\frac{|S|}{\log|S|}$ primes.
\end{theorem}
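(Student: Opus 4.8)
The plan is to reduce the statement to the classical GRH-conditional effective form of the prime number theorem for arithmetic progressions and then compare sizes. First I would normalize: replacing $a$ by its residue we may assume $1\le a\le q$, and since primes are positive and exceed $1$, the primes in $S$ are precisely the primes $p\equiv a\pmod q$ with $p\le q^3$, whose number is $\pi(q^3;q,a)$ up to an $O(1)$ term. Counting $z$ gives $|S|=\lfloor (q^3-a)/q\rfloor+1=q^2+O(q)$, hence $\log|S|=2\log q+O(1/q)$ and therefore
\[
\frac{|S|}{\log|S|}=\frac{q^2}{2\log q}\bigl(1+o(1)\bigr)\qquad(q\to\infty).
\]

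Next I would invoke the standard estimate: assuming GRH for all Dirichlet $L$-functions modulo $q$, one has, uniformly in $a$ with $\gcd(a,q)=1$, $\psi(x;q,a)=\frac{x}{\phi(q)}+O\bigl(\sqrt{x}\,(\log qx)^2\bigr)$, and, after partial summation, $\pi(x;q,a)=\frac{\mathrm{li}(x)}{\phi(q)}+O\bigl(\sqrt{x}\,(\log qx)^2\bigr)$; this is exactly the content of the analytic number theory references such as \cite{Kowalski04}. Setting $x=q^3$, the main term is $\frac{\mathrm{li}(q^3)}{\phi(q)}=\frac{q^3}{3\phi(q)\log q}\bigl(1+o(1)\bigr)$, which, using $\phi(q)\le q$, is at least $\frac{q^2}{3\log q}\bigl(1+o(1)\bigr)$; the error $O\bigl(q^{3/2}(\log q)^2\bigr)$ is $o\bigl(q^2/\log q\bigr)$, so the main term dominates and
\[
\pi(q^3;q,a)\;\ge\;\frac{q^2}{3\log q}\bigl(1+o(1)\bigr)\;=\;\frac{2}{3}\cdot\frac{|S|}{\log|S|}\bigl(1+o(1)\bigr).
\]
Thus there is a $q_0$ so that with, say, $c=\tfrac12$ the set $S$ contains at least $c\,|S|/\log|S|$ primes for all $q\ge q_0$. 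The finitely many $q<q_0$ are handled separately: for each such $q$ the set $S$ contains all residues $\equiv a\pmod q$ below $q^3$, which is nonempty by Dirichlet and, under GRH, already contains the least such prime since it is $\ll q^2(\log q)^2<q^3$ for $q$ not too small; shrinking $c$ to absorb these finitely many cases gives the claim for all $q$. (An upper bound $\pi(q^3;q,a)\ll |S|/\log|S|$ holds the same way, or directly from Brun--Titchmarsh.)

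The only real subtlety — and the reason GRH is invoked rather than the Siegel--Walfisz theorem — is uniformity in $q$: when $q$ grows like a power of $x$, unconditional estimates either do not apply or carry an error term too large to beat the main term, whereas the GRH error $\sqrt{x}(\log qx)^2$ is negligible against $x/(\phi(q)\log x)$ precisely once $x\gg q^{2}(\log q)^{O(1)}$, which is why the truncation level $q^3$ in the statement is comfortably sufficient (any fixed exponent strictly greater than $2$ would do). Beyond pinning down this point, everything is routine size-comparison.
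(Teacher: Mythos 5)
The paper itself gives no proof of Theorem~\ref{thm:ERH}; it simply cites a standard analytic number theory text (\cite{Kowalski04}) for the GRH-conditional effective prime number theorem in arithmetic progressions. Your argument is precisely the textbook reduction that such a citation presupposes: normalize $a$, compute $|S|\approx q^2$, invoke $\pi(x;q,a)=\mathrm{li}(x)/\phi(q)+O(\sqrt{x}(\log qx)^2)$ at $x=q^3$, bound $\phi(q)\le q$ to isolate a main term $\gg q^2/\log q$, and observe that the GRH error $q^{3/2}(\log q)^2$ is negligible. That is correct and is the same route the reference takes, so the proposal matches the paper's intent.

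Two small points worth tightening. First, the parenthetical claim that $\pi(q^3;q,a)\ll|S|/\log|S|$ via Brun--Titchmarsh is not uniform in $q$: Brun--Titchmarsh (and indeed the GRH asymptotic itself) gives $\pi(q^3;q,a)\asymp q^3/(\phi(q)\log q)$, and for $q$ with $\phi(q)\sim q/\log\log q$ this exceeds $q^2/\log q$ by a $\log\log q$ factor. This does not harm the theorem, since the statement is read (and used in the paper, in the proof of Theorem~\ref{thm:FindT}) only as a lower bound on the count of primes, but the upper-bound aside should be dropped or qualified. Second, the handling of the finitely many small $q$ needs the implicit convention $q\ge 2$: for $q=1$ the set $S$ contains no primes at all, so no positive $c$ works there. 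This is a degeneracy in the paper's own statement rather than a flaw in your argument, but it deserves a sentence rather than the blanket ``shrink $c$'' phrasing.
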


Another implication of GRH is that the 
smallest quadratic non-residue modulo $p$, for odd prime $p$; is
a number less than $3(\ln p)^2/2$, see \cite{Ankeny52,Wedeniwski01}.
Thus, assuming GRH, a quadratic residue modulo $p$ can be found 
deterministically in time $O(\log^3p)$ ring operations over $\zpz$ 
by trying all integers $\leq 3(\ln p)^2/2$.

\paragraph{Quadratic Form.} 
An $n$-ary quadratic form over a ring $\Ring$
is a symmetric matrix $\MQ \in \Ring^{n\times n}$, interpreted as the following
polynomial in $n$ formal variables $x_1,\cdots, x_n$ of uniform degree~2.
\[
\sum_{1\leq i,j \leq n}\MQ_{ij}x_ix_j = 
\MQ_{11}x_1^2 + \MQ_{12}x_1x_2 + \cdots = \Vx'\MQ\Vx
\]
The quadratic form is called {\em integral} if it is defined over the ring
$\bbZ$. It is called positive definite if for all non-zero column vectors
$\Vx$, $\Vx'\MQ\Vx > 0$. This work deals with integral quadratic forms,
henceforth called simply {\em quadratic forms}.
The {\em determinant} of the quadratic 
form is defined as $\det(\MQ)$. 
A quadratic form is called {\em diagonal} if $\MQ$ is a diagonal matrix. 

Given a set of formal variables 
$\Vx=\begin{pmatrix}x_1 & \cdots & x_n\end{pmatrix}'$ one can make a linear 
change of variables to $\Vy=\begin{pmatrix}y_1 & \cdots & y_n\end{pmatrix}'$ 
using a matrix $\MU \in \Ring^{n\times n}$ by setting $\Vy=\MU\Vx$. 
If additionally, 
$\MU$ is invertible over $\Ring$ i.e., $\MU \in \gln(\Ring)$, then this 
change of 
variables is reversible over the ring. We now define the equivalence of
quadratic forms over the ring $\Ring$ (compare with Lattice Isomorphism).

\begin{definition}\label{def:equiv}
Let $\MQ_1^n, \MQ_2^n$ be quadratic forms over a ring $\Ring$. They are called 
$\Ring$-{\em equivalent} if there exists a $\MU \in \gln(\Ring)$ such that 
$\MQ_2=\MU'\MQ_1\MU$.
\end{definition}

If $\Ring=\zqz$, for some positive integer $q$, then two integral
quadratic forms $\MQ_1^n$ and $\MQ_2^n$ will be called $q$-equivalent (denoted,
$\MQ_1\eqq \MQ_2$)
if there exists a matrix $\MU \in \gln(\zqz)$ such that 
$\MQ_2\equiv\MU'\MQ_1\MU\pmod q$.
For a prime $p$, they
are $p^*$-equivalent (denoted, $\MQ_1\eqp \MQ_2$) if  they
are $p^k$-equivalent for every positive integer $k$. Additionally,
$(-1)^*$-equivalence as well as $(-1)$-equivalence mean
equivalence over the reals $\bbR$.

Let $\MQ^n$ be a $n$-ary integral quadratic form, and $q,t $ be positive 
integers. If the equation $\Vx'\MQ\Vx \equiv t \pmod{q}$ has a solution
then we say that $t$ has a $q$-representation in $\MQ$ (or $t$ has 
a representation in
$\MQ$ over $\zqz$). Solutions 
$\Vx \in (\zqz)^n$ to the equation are called $q$-{\em representations} of
$t$ in $\MQ$. We classify the representations into two
categories: {\em primitive} and {\em non-primitive} (see Definition 
\ref{def:Prim}). The following lemma shows that a primitive 
representation can be extended to a invertible transformation.

\begin{lemma}\label{lem:ExtendPrimitive}
Let $p$ be a prime, $k$ be a positive integer and 
$\Vx \in (\zpkz)^n$ be a primitive vector.
Then, an $\MA$ can be found in $O(n^2)$ ring operation
such that $[\Vx,\MA] \in \sln(\zpkz)$.
\end{lemma}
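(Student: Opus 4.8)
The plan is to produce $\MA$ by an explicit formula and then read off the $O(n^2)$ bound; throughout I take $n\ge 2$ (the case $n=1$ is degenerate, since then $\MA$ is empty).

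First I would use primitivity to locate a unit coordinate. By Definition~\ref{def:Prim}, $\Vx$ being primitive means there is an index $i\in[n]$ with $\gcd(x_i,p)=1$, i.e.\ with $x_i$ a unit of $\zpkz$; scanning the $n$ components of $\Vx$ finds such an $i$ in $O(n)$ ring operations, after which $x_i^{-1}\bmod p^k$ costs $O(1)$ more. Next I would build $\MA$ as follows: fix any index $j\in[n]\setminus\{i\}$, let $e_1,\dots,e_n$ denote the standard basis column vectors of $(\zpkz)^n$, and let $\MA$ be the $n\times(n-1)$ matrix whose columns are indexed by $[n]\setminus\{i\}$, the column of index $l$ (for $l\neq i,j$) being $e_l$ and the column of index $j$ being $\varepsilon\,x_i^{-1}e_j$ for a sign $\varepsilon\in\{\pm1\}$ to be pinned down. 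Set $\MM:=[\Vx,\MA]$.

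To verify $\MM\in\sln(\zpkz)$ I would expand $\det\MM$ directly. Every column of $\MM$ other than the first is supported on a single row (the column of index $l$ on row $l$, the column of index $j$ on row $j$), so in the Leibniz expansion of $\det\MM$ exactly one monomial survives, and it is forced to pair the first column $\Vx$ with the one remaining row $i$, contributing the factor $\MM_{i,1}=x_i$. Hence
\[
\det\MM=\sgn(\sigma)\cdot x_i\cdot\varepsilon\,x_i^{-1}=\sgn(\sigma)\,\varepsilon,
\]
where $\sigma$ is the (columns-to-rows) permutation of that monomial; $\sgn(\sigma)$ depends only on $n,i,j$, hence is a fixed $\pm1$ computable in $O(n)$ operations. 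Choosing $\varepsilon:=\sgn(\sigma)$ makes $\det\MM=1$, i.e.\ $[\Vx,\MA]=\MM\in\sln(\zpkz)$. For the runtime: locating $i$ is $O(n)$, inverting $x_i$ is $O(1)$, evaluating $\sgn(\sigma)$ is $O(n)$, and writing out the $n(n-1)$ entries of $\MA$ is $O(n^2)$, for $O(n^2)$ ring operations in all.

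I do not expect a genuine obstacle here; the only delicate point is that we must land in $\sln$ rather than merely $\gln$, so the determinant has to come out exactly $1$ rather than an arbitrary unit — which is precisely why $\MA$ rescales one standard basis column by $x_i^{-1}$ and carries the corrective sign $\varepsilon$. (If one only wanted $\gln(\zpkz)$, both could be dropped; and in the special case $i=1$ one can dispense with the sign bookkeeping entirely by taking $\MA=[\,x_1^{-1}e_2,\,e_3,\,\dots,\,e_n\,]$, for which $[\Vx,\MA]$ is already lower triangular with diagonal $(x_1,x_1^{-1},1,\dots,1)$ and hence has determinant $1$.)
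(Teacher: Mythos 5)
Your construction is correct and is essentially the paper's: the paper likewise completes $\Vx$ by columns consisting of an $x_i^{-1}$ entry and an identity block, so that (up to the swap moving the unit coordinate $x_i$ to the front) the matrix $[\Vx,\MA]$ is triangular with diagonal $(x_i,x_i^{-1},1,\dots,1)$. If anything, you are more careful than the paper about the sign introduced by that swap, which is exactly what pins the determinant to $1$ and hence gives membership in $\sln(\zpkz)$ rather than merely $\gln(\zpkz)$.
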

\begin{proof}
The column vector $\Vx=(x_1,\cdots,x_n)$ is primitive, hence there 
exists a $x_i$, $i\in [n]$ such that $x_i$ is invertible over $\zpkz$.
It is easier to write the matrix $\MU$, which equal $[\Vx,\MA]$ where
the row $i$ and $1$ or $[\Vx,\MA]$ are swapped.
\[
\MU = \begin{pmatrix}x_i & 0 \\
\Vx_{-i} & x_i^{-1}\bmod{p^k} \oplus \MI^{n-2}
\end{pmatrix}\qquad \Vx_{-i}=(x_1,\cdots,x_{i-1},x_{i+1},\cdots,x_n)
\]
The matrix $\MU$ has determinant~1 modulo $p^k$ and hence is invertible
over $\zpkz$. The lemma now follows from the fact that the swapped matrix 
is invertible iff the original matrix is invertible.
\end{proof}

For the following result, see Theorem~2, \cite{Jones50}.

\begin{theorem}\label{thm:diagonal}
An integral quadratic form $\MQ^n$ is equivalent to a quadratic
form 
$q_1 \oplus \cdots \oplus q_a \oplus q_{a+1} \oplus \cdots \oplus q_n$ 
over the field of rationals $\bbQ$,
where $a \in [n]$, $q_1, \cdots, q_a$ are positive rational numbers and
$q_{a+1}, \cdots, q_n$ are negative rational numbers.
\end{theorem}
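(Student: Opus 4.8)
The plan is to diagonalize $\MQ$ over $\bbQ$ by Lagrange's method of completing squares — equivalently, by symmetric Gaussian elimination — proceeding by induction on $n$, and then to sort the resulting diagonal entries by sign. (Throughout, a quadratic form is tacitly nonsingular, i.e.\ $\det\MQ\neq0$; this is what makes all the $q_i$ nonzero.) The base case $n=1$ is immediate.

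For the inductive step I would first produce a vector $\Vu\in\bbQ^n$ with $c:=\Vu'\MQ\Vu\neq0$. If some diagonal entry $\MQ_{ii}$ is nonzero, take $\Vu=e_i$, the $i$-th standard basis vector; otherwise there is an off-diagonal entry $\MQ_{ij}\neq0$, and then $(e_i+e_j)'\MQ(e_i+e_j)=\MQ_{ii}+\MQ_{jj}+2\MQ_{ij}=2\MQ_{ij}\neq0$ because $2$ is invertible in $\bbQ$. Complete $\Vu$ to a basis of $\bbQ^n$, i.e.\ choose $\MU_1\in\gln(\bbQ)$ with first column $\Vu$; then $\MU_1'\MQ\MU_1$ has top-left entry $c$. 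Since $c$ is a unit of $\bbQ$, one further congruence by a unipotent upper-triangular matrix — subtracting suitable $\bbQ$-multiples of the first coordinate from the remaining ones, which is exactly ``completing the square'' — replaces $\MU_1'\MQ\MU_1$ by $c\oplus\MQ'$ for a symmetric $\MQ'\in\bbQ^{(n-1)\times(n-1)}$. Because $\det(\MU_1'\MQ\MU_1)=(\det\MU_1)^2\det\MQ$ equals $c\cdot\det\MQ'$ up to a nonzero square, $\MQ'$ is again nonsingular, so the induction hypothesis applies; composing the congruences (a product of matrices of $\gln(\bbQ)$ is again in $\gln(\bbQ)$) gives $\MQ\eqQ\diag(d_1,\cdots,d_n)$ with $d_i\in\bbQ$.

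To finish, note that congruence multiplies the determinant by a nonzero square, so $\det\MQ\neq0$ forces every $d_i\neq0$; hence each $d_i$ is a positive or a negative rational. Conjugating by the permutation matrix that moves the indices with $d_i>0$ to the front — permutation matrices lie in $\gln(\bbZ)\subseteq\gln(\bbQ)$ — puts the form into the claimed shape $q_1\oplus\cdots\oplus q_a\oplus q_{a+1}\oplus\cdots\oplus q_n$ with $q_1,\cdots,q_a>0$ and $q_{a+1},\cdots,q_n<0$; by Sylvester's law of inertia the number $a$ is in fact an invariant of $\MQ$, equal to its number of positive eigenvalues over $\bbR$. The only step with any real content is producing $\Vu$ when $\MQ$ has an identically zero diagonal, and that is precisely where $\operatorname{char}\bbQ\neq2$ enters; everything else is bookkeeping with invertible rational matrices.
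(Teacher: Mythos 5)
Your proof is correct. Note, however, that the paper itself does not prove Theorem~\ref{thm:diagonal}: it simply cites Theorem~2 of Jones~\cite{Jones50}, so there is no in-paper argument to compare against. The route you take --- Lagrange diagonalization by completing squares over $\bbQ$, handling the all-zero-diagonal case by pivoting on $e_i+e_j$ (which uses $\operatorname{char}\bbQ\neq 2$), then sorting the resulting diagonal entries by sign --- is the classical proof, and it is essentially what the cited source does. The determinant bookkeeping you give, showing $\MQ'$ stays nonsingular at each step, is exactly right and ensures all $q_i$ are nonzero rationals as the statement requires; the invocation of Sylvester's law at the end is unnecessary for the statement as written but correct.
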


The {\em signature} (also, $(-1)$-signature) of the form 
$\MQ$ (denoted $\sig(\MQ)$, also  $\sig_{(-1)}(\MQ)$) is defined
as the number $2a-n$, where $a$ is the integer in 
Theorem \ref{thm:diagonal}. 

Each rational number $q_i$ in Theorem
\ref{thm:diagonal} can be written uniquely as 
$p^{\alpha_i}a_i$, where $\alpha_i = \ordp(q_i)$ and $a_i = \copp(q_i)$.
Let $m$ be the number of $p$-antisquares among $q_1,\cdots,q_n$. 
Then, we define the $p$-signature of $\MQ$ as follows.
\begin{align}\label{def:PSignature}
\sigp(\MQ)=\left\{
\begin{array}{ll}
p^{\alpha_1}+ p^{\alpha_2} + \cdots + p^{\alpha_n} + 4m  \pmod{8}  & p\neq 2 \\
a_1 + a_2 + \cdots + + a_n + 4m \pmod{8} & p=2 
\end{array}\right.
\end{align}
The $2$-signature is also known as the {\em oddity} and is denoted by
$\odty(\MQ)$. 
Even though there are different ways to diagonalize a quadratic form
over $\bbQ$, the signatures are an invariant for the quadratic form.

For each $p \in \{-1,2\} \cup \bbP$, we define the $p$-excess of $\MQ$ 
as follows.
\begin{align}\label{def:PExcess}
\pexcess(\MQ) = \left\{
	\begin{array}{ll}
	\sigp(\MQ) - n & p \neq 2 \\
	 n - \sigt(\MQ) & p = 2 
	\end{array}\right.
\end{align}

\begin{theorem}\label{thm:oddity}(\cite[page~76]{Cassels78}, \cite[Theorem~29]{Jones50})
Let $\MQ^n$ be an integral quadratic form. Then,
\begin{align}\label{SumFormula}
\begin{array}{ll}
\sum_{p \in \{-1,2\}\cup \bbP} \pexcess(\MQ) &\equiv 0 \pmod 8 \text{ or equivalently,}\\
\sig(\MQ) + \sum_{p \in \bbP} \pexcess(\MQ) &\equiv \odty(\MQ) \pmod 8\;. 
\end{array}
\end{align}
\end{theorem}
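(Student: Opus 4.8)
The plan is to reduce the identity to one-dimensional forms using additivity, and then to recognise the remaining numerical statement as the law of quadratic reciprocity together with its two supplements. It suffices to prove the first congruence: substituting $\excess_{-1}(\MQ)=\sig(\MQ)-n$ and $\excess_2(\MQ)=n-\odty(\MQ)$ into $\sum_{p\in\{-1,2\}\cup\bbP}\pexcess(\MQ)\equiv 0$ and cancelling the two copies of $n$ yields $\sig(\MQ)+\sum_{p\in\bbP}\pexcess(\MQ)\equiv\odty(\MQ)$. By Theorem~\ref{thm:diagonal}, and since the signature and all the $p$-signatures depend only on the $\bbQ$-equivalence class, we may assume $\MQ=\langle q_1\rangle\oplus\cdots\oplus\langle q_n\rangle$ with each $q_i\in\bbQ^\times$.

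First I would show that $\Phi(\MQ):=\sum_{p\in\{-1,2\}\cup\bbP}\pexcess(\MQ)$ is additive modulo $8$ under $\oplus$. Diagonalising $\MQ_1$ and $\MQ_2$ over $\bbQ$ and concatenating the diagonal entries shows that, for every $p$, the list of $p$-antisquares of $\MQ_1\oplus\MQ_2$ is the concatenation of those of $\MQ_1$ and $\MQ_2$, and the monomials $p^{\alpha_i}$ (for odd $p$), respectively the units $a_i$ (for $p=2$), simply accumulate; hence $\sigp(\MQ_1\oplus\MQ_2)\equiv\sigp(\MQ_1)+\sigp(\MQ_2)\pmod{8}$ and $\sig(\MQ_1\oplus\MQ_2)=\sig(\MQ_1)+\sig(\MQ_2)$. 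Since $n$ is additive and each $\pexcess$ is an affine function of $\sigp$ (or of $\sig$) and $n$ by \eqref{def:PExcess}, $\Phi$ is additive modulo $8$. Therefore $\Phi(\MQ)\equiv\sum_{i=1}^{n}\Phi(\langle q_i\rangle)\pmod{8}$, and it remains to prove $\Phi(\langle q\rangle)\equiv 0\pmod{8}$ for every $q\in\bbQ^\times$.

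Since $\langle q\rangle\eqQ\langle qc^2\rangle$ for every $c\in\bbQ^\times$, I would write $q=\varepsilon\,p_1\cdots p_r$ with $\varepsilon\in\{+1,-1\}$ and $p_1<\cdots<p_r$ distinct primes, and evaluate each local excess directly from \eqref{def:PSignature}--\eqref{def:PExcess}. One finds $\excess_{-1}(\langle q\rangle)=\sig(\langle q\rangle)-1\in\{0,-2\}$ according to the sign of $q$; $\excess_\ell(\langle q\rangle)=0$ for every odd prime $\ell\nmid q$; and, for $\ell=p_i$ with $p_i$ odd, $\excess_{p_i}(\langle q\rangle)\equiv p_i-1+4m_i\pmod{8}$ with $4m_i\equiv 2\bigl(1-\legendre{\varepsilon\prod_{j\neq i}p_j}{p_i}\bigr)\pmod{8}$, since $m_i=1$ precisely when $\varepsilon\prod_{j\neq i}p_j$ is a quadratic non-residue modulo $p_i$. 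Finally $\excess_2(\langle q\rangle)=1-\odty(\langle q\rangle)$, which one reads off from $\copt(q)\bmod 8$ (and, when $2\mid q$, from the Kronecker symbol $\legendre{\copt(q)}{2}$) by distinguishing whether $2\mid q$. Adding all contributions and using $p_i+p_j-p_ip_j-1=-(p_i-1)(p_j-1)$ together with the fact that $(p_i-1)(p_j-1)\equiv 4\pmod{8}$ exactly when $p_i\equiv p_j\equiv 3\pmod{4}$, the congruence $\Phi(\langle q\rangle)\equiv 0\pmod{8}$ collapses term by term to \eqref{QuadraticReciprocity}: the main reciprocity relation pairs $\legendre{p_i}{p_j}$ with $\legendre{p_j}{p_i}$, while the supplement $\legendre{p}{2}=\legendre{2}{p}$ and the evaluation of $\legendre{-1}{p}$ account for the prime $2$ and the sign $\varepsilon$.

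I expect essentially all of the mathematical content to reside in this last step; the two preceding reductions are formal. It is probably cleanest to organise the final verification by induction on $r$, the number of distinct prime factors of $q$: the base cases $q\in\{1,-1,2\}\cup\bbP$ follow at once from the formulas above, where the local excesses cancel in pairs, and each step $r\mapsto r+1$ introduces exactly one new pair of Legendre symbols whose reciprocity relation is precisely what the bookkeeping demands. The delicate point, and the one I would check most carefully, is the prime $2$: there the oddity rather than a $\pm1$ sign intervenes, the formula for $\excess_2(\langle q\rangle)$ bifurcates according to the parity of $\ordt(q)$, and the two supplementary laws are needed to close the argument.
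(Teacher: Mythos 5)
The paper itself gives no proof of this theorem --- it is quoted from Cassels and Jones --- so the comparison is with the classical argument behind those citations, and your proposal is exactly that argument: diagonalize over $\bbQ$, use additivity of the signature and of every $\sigp$ (hence of every excess) under $\oplus$ modulo $8$, reduce to rank-one forms $\langle q\rangle$ with $q$ squarefree up to rational squares (legitimate since odd squares are $1 \bmod 8$), and observe that the rank-one identity is precisely quadratic reciprocity together with its two supplements. Your bookkeeping is the right one: $p_i+p_j-p_ip_j-1=-(p_i-1)(p_j-1)$ against the pairing $\legendre{p_i}{p_j}\legendre{p_j}{p_i}$, with $\legendre{-1}{p}$ and $\legendre{2}{p}=\legendre{p}{2}$ absorbing the sign $\varepsilon$ and the prime $2$; spot checks such as $q=p$, $q=-p$, $q=p_1p_2$, $q=2p$ confirm the cancellation closes, and the induction on the number of prime factors is a clean way to organize it. So the proposal is sound, with the remaining work being only the routine expansion of the rank-one case you outline.

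One caveat, precisely at the point you flag as delicate. The verification at $p=2$ closes only if ``$2$-adic antisquare'' means odd exponent and odd part $\equiv\pm3\pmod 8$ (Kronecker symbol $-1$), which is the convention under which the cited theorem holds and which you implicitly adopt by invoking $\legendre{\copt(q)}{2}$. Taken literally, the paper's Definition~\ref{def:NonSquare} at $p=2$ asks instead that $\sgnt(a)\neq\sgnt(b)$, i.e.\ for an integer that the odd part be $\not\equiv 1\pmod 8$, and under that reading the congruence (\ref{SumFormula}) already fails for $\MQ=\langle 14\rangle$: one would get $\odty(\MQ)\equiv 7+4\equiv 3\pmod 8$, whereas $\sig(\MQ)+\excess_7(\MQ)=1+6=7$; with the $\pm3$ convention the oddity is $7$ and the formula holds. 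Make that convention explicit in your write-up (or note the discrepancy with Definition~\ref{def:NonSquare}); otherwise the final step at the prime $2$ cannot be completed as stated.
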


The Equation \ref{SumFormula} is also referred to as the {\em 
oddity formula} in the literature.

\paragraph{Diagonalizing a Quadratic Form.}

For the ring $\zpkz$ such that $p$ is odd, there
always exists an equivalent quadratic form which is also diagonal 
(see \cite{CS99}, Theorem 2, page 369).
Additionally, one can explicitly find the invertible change of
variables that turns it into a diagonal quadratic form.
The situation is tricky over the ring $\ztkz$. Here, it might not
be possible to eliminate all mixed terms, i.e., terms of the form
$2a_{ij}x_ix_j$ with $i\neq j$. For example, consider
the quadratic form $2xy$ i.e., 
$\begin{pmatrix}0 & 1 \\ 1 & 0 \end{pmatrix}$ over $\ztkz$. 
An invertible linear change of variables over $\ztkz$ is
of the following form.
\begin{gather*}
\begin{array}{l}x \to a_1x_1 + a_2x_2\\ 
y \to b_1x_1+b_2x_2\end{array} \qquad 
\begin{pmatrix}a_1 & a_2 \\ b_1 & b_2\end{pmatrix} \text{ invertible over
$\ztkz$}
\end{gather*}
The mixed term after this transformation is $2(a_1b_2+a_2b_1)$. 
As $a_1b_2+a_2b_1 \bmod 2$ is the same as the determinant
of the change of variables above i.e., $a_1b_2-a_2b_1$
modulo $2$; it is not possible for a transformation in 
$\gl_2(\ztkz)$ to eliminate the mixed term. 
Instead, one can show that over $\ztkz$ it is possible to get an 
equivalent form where the mixed terms are disjoint i.e., both
$x_ix_j$ and $x_ix_k$ do not appear, where $i,j,k$ are 
pairwise distinct.
One captures this form by the following definition.

\begin{definition}\label{def:BlockDiagonal}
A matrix $\MD^n$ over integers is in a block diagonal form if it is a
direct sum of type I and type II forms; where type I form is an integer
while type II is a matrix of the form 
$\begin{pmatrix}2^{\ell+1}a & 2^{\ell} b \\ 2^{\ell}b & 2^{\ell+1}c\end{pmatrix}$
with $b$ odd.
\end{definition}

The following theorem is folklore and is also implicit in the proof of
Theorem~2 on page~369 in \cite{CS99}. For completeness, we provide a 
proof in Appendix \ref{sec:BlockDiagonal}.

\begin{theorem}\label{thm:BlockDiagonal}
Let $\MQ^n$ be an integral quadratic form, $p$ be a prime, 
and $k$ be a positive integer. 
Then, there is an algorithm that performs $O(n^{1+\omega}\log k)$ ring operations
and produces a matrix $\MU \in \sln(\zpkz)$ such that $\MU'\MQ\MU\pmod{p^k}$, is a
diagonal matrix for odd primes $p$ and a block diagonal matrix
(in the sense of Definition~\ref{def:BlockDiagonal}) for $p=2$.
\end{theorem}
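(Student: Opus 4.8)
The plan is to run a Gram--Schmidt--style elimination over the ring $\zpkz$, working column by column, and to bound the number of ring operations by carefully reusing a single invariant pivot throughout. First I would treat the odd case. Given $\MQ\pmod{p^k}$, find an entry of the matrix of minimal $p$-order; by a symmetric permutation (a unimodular move costing $O(n)$ operations) bring it to the $(1,1)$ position, unless the minimal order is attained only off-diagonal, in which case I first apply the elementary move $x_i\to x_i+x_j$ to create a diagonal entry of the same order — here I would invoke Lemma~\ref{lem:QR}-type reasoning, namely that over $\zpz$ (odd $p$) one can always produce a unit on the diagonal when the form is not identically $\equiv 0$ modulo a higher power of $p$, factoring out the common power $p^{\alpha}$ first. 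Write the pivot as $d_1=p^{\alpha_1}u_1$ with $u_1$ a unit of $\zpkz$ (a unit modulo $p^{k-\alpha_1}$, lifted). Then clear the first row and column: for each $j>1$ replace $x_j\to x_j-(a_{1j}d_1^{-1})x_1$, which is legitimate because $d_1\mid a_{1j}$ in $\zpkz$ after we have chosen $d_1$ of minimal order. This produces $\MU_1'\MQ\MU_1 = d_1\oplus \MQ'$ and $\MU_1$ is a product of elementary matrices assembled in $O(n^2)$ ring operations. Recurse on $\MQ'$.

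The naive recursion costs $O(n)$ stages, each doing $O(n^2)$ work on an $n\times n$ matrix, so $O(n^3)$; to hit the stated $O(n^{1+\omega}\log k)$ I would instead batch the elimination: once the pivot $d_1$ is fixed, clearing all of row/column~$1$ is a single rank-one update $\MQ' \leftarrow \MQ_{-1,-1} - d_1^{-1}\, \Vq\,\Vq'$ where $\Vq$ is the first column of $\MQ$ below the diagonal, and accumulating the transformation $\MU$ across all stages is a sequence of structured (triangular) multiplications; blocking the recursion in halves and using fast matrix multiplication for the Schur-complement updates gives the $n^{\omega}$ factor, with the extra $n$ and the $\log k$ coming from, respectively, the $n$ pivot searches and the fast-exponentiation cost of each inversion/$\ordp$ computation modulo $p^k$ (as noted in the Preliminaries). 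I would present this blocking as the routine but slightly tedious bookkeeping part.

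For $p=2$ the obstruction, as the excerpt already flags, is that a $2\times2$ block $\begin{pmatrix}0&1\\1&0\end{pmatrix}$ (or $\begin{pmatrix}2&1\\1&2\end{pmatrix}$) cannot be diagonalized by $\gl_2(\ztkz)$. The fix: at each stage pick an entry $a_{ij}$ of minimal $2$-order $\ell$. If the minimum is attained on the diagonal, say at $(i,i)$ with $a_{ii}=2^{\ell}u$, $u$ odd (hence a unit), proceed exactly as in the odd case — $u^{-1}$ exists modulo $2^{k-\ell}$ and clears its row/column, splitting off a type~I block. If the minimum is attained \emph{only} off-diagonal, at $(i,j)$ with $a_{ij}=2^{\ell}b$, $b$ odd, then $a_{ii}=2^{\ell+1}a$ and $a_{jj}=2^{\ell+1}c$ for some integers $a,c$ (possibly after the move $x_i\to x_i+x_j$ to adjust parity — one checks $a_{ii}+2a_{ij}+a_{jj}$ changes the diagonal entry by an even multiple of $2^{\ell}$, so its order stays $\ge\ell+1$), giving a genuine type~II block $\begin{pmatrix}2^{\ell+1}a & 2^{\ell}b\\ 2^{\ell}b & 2^{\ell+1}c\end{pmatrix}$. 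The determinant of this block is $2^{2\ell+2}ac - 2^{2\ell}b^2 = 2^{2\ell}(4ac-b^2)$ with $4ac-b^2$ odd, hence a unit, so the block is invertible over $\ztkz$; I use its adjugate over $\ztkz$ to clear the remaining two rows and columns $i,j$ (every other entry in those rows is divisible by $2^{\ell}$, the block's content, hence by the block matrix — this is the one place a short divisibility check is needed), splitting off this $2\times2$ summand. Recurse on the complement. The same blocking argument as above controls the complexity; the block-clearing step is a rank-two Schur complement.

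The main obstacle I expect is not the existence argument — that is classical and local — but the \emph{complexity bookkeeping}: showing that the pivot of minimal order can be found, the inverse (or $2\times2$ adjugate over $\ztkz$) computed, and the accumulated transformation matrix maintained, all within $O(n^{1+\omega}\log k)$ ring operations, rather than the obvious $O(n^3\log k)$. The key idea there is that each stage removes a block of size $1$ or $2$, the clearing is a single (rank-one or rank-two) Schur-complement update, and a divide-and-conquer layout of these updates — process the top-left $n/2\times n/2$ submatrix recursively, update the remaining Schur complement with one $n\times n$ multiplication, recurse — yields the recurrence $T(n)=2T(n/2)+O(n^{\omega})$, i.e. $O(n^{\omega})$ matrix work, times the $O(n)$ cost of locating pivots and the $O(\log k)$ cost of each modular inversion/order computation. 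I would defer the full complexity accounting, and the verification that the assembled $\MU$ indeed lies in $\sln(\zpkz)$ (each elementary and each block move has determinant a unit, and one rescales one coordinate at the end to force determinant exactly $1$, as in Lemma~\ref{lem:ExtendPrimitive}), to Appendix~\ref{sec:BlockDiagonal}.
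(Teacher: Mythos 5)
Your elimination scheme is essentially the paper's: pick an entry of minimal $p$-order, move it to the diagonal (for odd $p$ via the substitution that replaces a diagonal entry by $a_{ii}+2a_{ij}+a_{jj}$, whose order equals $\ordp(a_{ij})$ because $2$ is a unit — no quadratic-residue lemma is needed here), clear the pivot's row and column using divisibility of all other entries by the pivot's power of $p$, split off a $1\times1$ block and recurse; for $p=2$, when the minimum occurs only off-diagonal, split off the resulting type~II block and clear its two rows/columns by solving a $2\times2$ system whose matrix, \emph{after factoring out} $2^{\ell}$, has odd determinant $4ac-b^2$ and is therefore invertible modulo $2^{k-\ell}$ (your phrase ``the block is invertible over $\ztkz$'' is literally false for $\ell>0$; your own divisibility remark is what actually makes the Cramer/adjugate step work, exactly as in the paper's appendix).

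The one genuine problem is your complexity reasoning, which is inverted. Since $\omega\ge 2$, we have $n^{1+\omega}\ge n^3$, so the ``naive'' accounting you dismiss already meets the stated bound: $n$ stages, each consisting of a pivot search costing $O(n^2\log k)$ ring operations (an $\ordp$ computation per entry) plus a constant number of $n\times n$ matrix multiplications costing $O(n^{\omega})$ to update the form and accumulate $\MU$, gives $O(n^{1+\omega}+n^3\log k)\subseteq O(n^{1+\omega}\log k)$ — this is precisely the paper's analysis. Consequently the divide-and-conquer blocking you present as ``the main obstacle'' and defer is unnecessary; worse, it is the one part of your plan whose correctness is not routine, because pivots are chosen \emph{globally} by minimal $p$-order (and for $p=2$ may be $2\times2$ type~II blocks), so a halved recursion cannot simply process the top-left $n/2\times n/2$ corner independently. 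Drop the blocking, keep the straightforward per-stage accounting, and your proof matches the paper's. (Your final remark on forcing $\det\MU=1$ is fine: signed swaps or rescaling one coordinate by a unit preserves the diagonal/block-diagonal shape, including oddness of the off-diagonal entry of a type~II block.)
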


\paragraph{Canonical Forms.}

For a quadratic form $\MQ$ and prime $p$, the set 
$\{\MS^n \mid \MS \eqp \MQ\}$ is the set of 
$p^*$-equivalent forms of $\MQ$ (also called $p^*$-equivalence class of 
$\MQ$).  It is possible to define something called a 
``canonical'' quadratic form for the $p^*$-equivalence class of a given
quadratic form $\MQ$. In
particular, we are interested in 
a function $\canp$ such that for all integral quadratic forms
$\MQ$, $\canp(\MQ) \in \{\MS \mid \MS \eqp \MQ\}$; with the
property that if $\MQ_1 \eqp \MQ_2$ then 
$\canp(\MQ_1) = \canp(\MQ_2)$.
We also consider a related problem of coming up with a 
canonicalization procedure. In particular, we want a polynomial
time algorithm that given $\MQ, p$ and a positive integer $k$,
finds $\MU \in \gln(\zpkz)$ such that 
$\MU'\MQ\MU \equiv \canp(\MQ)\bmod{p^k}$.

It is not difficult to show the existence of a canonical form.
For example, we can go over the $p^*$-equivalence class of
$\MQ$ and output the form which is lexicographically the smallest
one. But, this form gives us no meaningful information about
$\MQ$ or the $p^*$-equivalence class of $\MQ$.

For odd prime $p$,
the $p$-canonical form is implicit in Conway-Sloane 
\cite{CS99} and is also described explicitly by Hartung
\cite{Hartung08, Cassels78}. The canonicalization algorithm in this case
is not complicated and can be claimed to be implicit in 
Cassels \cite{Cassels78}.

The definition of canonical form for the case of prime~2 is 
quite involved and needs careful analysis\footnote{
Cassels (page~117, Section~4, \cite{Cassels78}),
referring to the canonical forms for $p=2$
observes that ``only a masochist is invited to read the rest''.} 
\cite{Jones44, CS99, Watson60}. 
Jones \cite{Jones44} presents the most complete description
of the $2$-canonical form. His method is to come up with a small
$2$-canonical forms and then showing that every quadratic form is
$2^*$-equivalent to one of these. Unfortunately, a few of his 
transformations are existential i.e., he shows that a transformations
with certain properties exists without explicitly finding them.

The following theorems appear in \cite{DH14Can}. 

\begin{theorem}\label{thm:ALG:CanP}
Let $\MQ^n$ be an integral quadratic form, $p$ be a prime and
$k\geq \ordt(\det(\MQ))+k_p$. Then, there is an algorithm (Las Vegas
with constant probability of success for odd primes and deterministic for
the prime~2)
that given $(\MQ^n, p, k)$ performs
$O(n^{1+\omega}\log k+nk^3+n\log p+\log^3p)$ ring operations over $\zpkz$
and outputs $\MU \in \gln(\zpkz)$
such that $\MU'\MQ\MU \equiv \canp(\MQ) \bmod{p^k}$.
\end{theorem}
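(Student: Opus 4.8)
The plan is to treat every prime through the same two phases. \emph{Phase one} is diagonalization: apply Theorem~\ref{thm:BlockDiagonal} to obtain $\MU_0\in\sln(\zpkz)$ for which $\MD:=\MU_0'\MQ\MU_0\bmod p^k$ is diagonal (for odd $p$) or block diagonal in the sense of Definition~\ref{def:BlockDiagonal} (for $p=2$); this costs $O(n^{1+\omega}\log k)$ ring operations. Since $\MU_0\in\sln(\zpkz)$ we have $\det(\MD)\equiv\det(\MQ)\bmod p^k$, and since the assumed lower bound on $k$ forces $p^k\nmid\det(\MQ)$, this is a nonzero element of $\zpkz$; hence no $1\times 1$ or $2\times 2$ block of $\MD$ is $\equiv 0\bmod p^k$, and every block has $p$-valuation (its \emph{scale}) strictly below $k$. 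I would then read off the scales ($O(n\log k)$ operations) and, folding a sorting permutation into $\MU_0$, arrange $\MD$ as a direct sum of \emph{Jordan constituents}, one per scale. \emph{Phase two} normalizes each constituent and their mutual interaction to the shape that defines $\canp$ in \cite{DH14Can}; the returned $\MU$ is then the composite of $\MU_0$ with all phase-two transformations (extended by the identity off the blocks they act on), and $\MU'\MQ\MU\equiv\canp(\MQ)\bmod p^k$ holds by construction.

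For an odd prime, phase two is short. A scale-$\alpha$ constituent has the form $p^\alpha\diag(u_1,\dots,u_m)$ with the $u_i$ units of $\zRz{p^{k-\alpha}}$. By Lemma~\ref{lem:QR} together with Hensel lifting, $\diag(u_i,u_j)$ represents every unit of $\zRz{p^{k-\alpha}}$, so whenever $u_i$ is a non-square I would pair $i$ with another index of the same constituent, apply an explicit $2\times 2$ change of variables that turns $u_i$ into a square, and then rescale that coordinate so the entry is exactly $1$ (legal by Lemma~\ref{lem:Square}). Iterating reduces each constituent to $p^\alpha\diag(1,\dots,1,\epsilon_\alpha)$, where $\epsilon_\alpha$ is $1$ or a fixed least quadratic non-residue $g$ modulo $p$ — obtained deterministically in $O(\log^3 p)$ operations under GRH — according to the value of $\sgnp(\prod_i u_i)$. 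The resulting direct sum over scales is $\canp(\MQ)$. The $n$ Legendre symbols cost $O(n\log p)$; the square roots used by the rescalings are the only randomized ingredient (a standard Las Vegas subroutine), and together with the $O(n)$ elementary transformations, all carried out at precision $p^k$, a routine accounting contributes the $O(nk^3)$ term.

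For $p=2$, phase two is the real work. Now $\MD$ is a direct sum over scales $2^\ell$ of constituents built from type~I blocks $\<\epsilon\>$ ($\epsilon$ an odd unit of $\zRz{2^{k-\ell}}$) and type~II blocks, of which there are essentially two kinds at each scale, $\MA:=\begin{pmatrix}0&1\\1&0\end{pmatrix}$ and $\MB:=\begin{pmatrix}2&1\\1&2\end{pmatrix}$ (up to scaling and precision). I would drive this to the $2$-canonical shape of \cite{DH14Can} by a bounded schedule of explicit local moves realizing the classical $2$-adic identities: (i) within a scale, fuse type~I blocks via $\<a\>\oplus\<b\>\cong\<a'\>\oplus\<b'\>$ whenever $ab\equiv a'b'$ and $a+b\equiv a'+b'$ hold to the needed $2$-adic precision, reducing the odd part of the constituent to a canonical multiset modulo~$8$; (ii) handle the even part using $\MA\oplus\MA\cong\MB\oplus\MB$ and, whenever a type~I block occurs at the same scale, absorb a type~II block into three type~I blocks (for instance $\MA\oplus\<1\>\cong\<1\>\oplus\<1\>\oplus\<-1\>$); (iii) perform oddity fusion across compartments and trains, and sign-walking, to place the signs and oddities of the constituents into the positions prescribed by the canonical form, the global budget for these moves being fixed by the oddity formula of Theorem~\ref{thm:oddity}. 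Each identity carries a fixed bounded-size integer matrix witness and is applied in $O(1)$ ring operations once the relevant blocks have been located; there are $\poly(n,k)$ applications; all square roots needed ($2$-adic Hensel lifting) are deterministic, so this case uses no randomness; and carrying all entries modulo $2^k$ together with the square-root subroutines yields the remaining $O(nk^3)$ term, which with phase one gives the claimed bound.

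The main obstacle is the $p=2$ case. One must turn every equivalence used in the definition of the $2$-canonical form — including the transformations Jones \cite{Jones44} establishes only existentially — into an explicit matrix of bounded size, exhibit a $\poly(n,k)$-length schedule of such moves that provably reaches the canonical representative, and verify both that the intermediate entries stay bounded and that sign-walking and oddity fusion jointly respect the constraint imposed by the oddity formula. This bookkeeping over scales, compartments and trains is precisely the ``masochist'' part of the $2$-adic theory that Cassels~\cite{Cassels78} warns about, and it is also where the $O(nk^3)$ estimate must be checked with care; everything else — phase one via Theorem~\ref{thm:BlockDiagonal} and the odd-prime part of phase two — is comparatively routine.
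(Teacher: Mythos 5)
There is no in-paper proof to compare against here: the paper imports Theorem~\ref{thm:ALG:CanP} from \cite{DH14Can} ("The following theorems appear in \cite{DH14Can}"), so your proposal can only be judged on its own merits. On those merits it is a plan, not a proof, and the gap sits exactly where the theorem's content lies. Your phase one (Theorem~\ref{thm:BlockDiagonal}) and the odd-prime phase two are fine in outline: pairing non-square units via Lemma~\ref{lem:QR}, Hensel lifting, and a GRH-supplied non-residue do reduce each scale-$\alpha$ constituent to $p^\alpha(\MI^{n_\alpha-1}\oplus\epsilon_\alpha)$, which is canonical because the $p$-symbol determines the $p^*$-class (Theorem~\ref{thm:PSymbol}). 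But for $p=2$ you list the classical moves (type-I fusion, $\MA\oplus\MA\cong\MB\oplus\MB$, absorbing type~II blocks, sign walking, oddity fusion) and then state, in your own closing paragraph, that one still has to convert Jones's existential transformations into explicit bounded matrices, exhibit a $\poly(n,k)$ schedule that provably terminates at a single well-defined representative, and check that sign walking and oddity fusion are consistent with the oddity formula. That unfinished bookkeeping \emph{is} the theorem at $p=2$: since $2^*$-equivalent forms can have distinct $2$-symbols, the whole difficulty is to define which representative $\cant(\MQ)$ is and to reach it deterministically; deferring it means the deterministic $p=2$ claim, which is the hardest and most load-bearing part of the statement, is not established.

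Two smaller points. The $O(nk^3)$ term is asserted by "routine accounting" in both phases without identifying which subroutine actually costs $k^3$ per coordinate, so the stated operation count is not derived. And your randomness bookkeeping for odd $p$ is muddled: once the non-residue is found (deterministically under GRH, as you say), Tonelli--Shanks square roots are deterministic, so the Las Vegas character of the odd-prime case should be located precisely (e.g.\ in rejection-sampling a non-residue without GRH, as in Lemma~\ref{lem:LocalQF}) rather than attributed vaguely to "the square roots used by the rescalings."
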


\begin{theorem}\label{thm:PStarEq}
Let $\MQ^n$ be an integral quadratic form, $p$ 
be a prime and $k = \ordp(\det(\MQ))+k_p$. If $\MD^n$ is a block 
diagonal form which is equivalent
to $\MQ$ over $\zpkz$, then $\MD \eqp \MQ$.
\end{theorem}

\paragraph{Primitive Representations.}
The following theorem gives an algorithmic handle on the 
question of deciding if an integer $t$ has a primitive
$p^*$-representation in $\MQ$. The theorem is implicit
in Siegel \cite{Siegel35} (a proof is provided in
Appendix \ref{sec:Proofs} for completeness).

\begin{theorem}\label{thm:Siegel13}
Let $\MQ^n$ be an integral quadratic form, $t$ be an integer,
$p$ be a prime and $k=\max\{\ordp(\MQ),\ordp(t)\}+k_p$. Then,
if $t$ has a primitive $p^k$-representation in $\MQ$ then 
$t$ has a primitive $p^*$-representation in $\tMQ$ for all
$\tMQ \eqp \MQ$.
\end{theorem}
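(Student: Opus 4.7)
The plan is to split the argument into two pieces: first a change-of-coordinates reduction to the case $\tMQ = \MQ$, and then a Hensel-style scaling that promotes the given mod-$p^k$ representation to a representation modulo an arbitrary $p^K$. The key observation is that the scalar $t/(\Vx'\MQ\Vx)$ is forced to be a $p$-adic square by the hypothesis, so that $\Vy := y\Vx$ with a suitable square root $y$ both represents $t$ and stays primitive.

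For the reduction, fix $\tMQ \eqp \MQ$ and any target modulus $p^K$, and pick $\MU \in \gln(\zRz{p^K})$ with $\tMQ \equiv \MU'\MQ\MU \pmod{p^K}$. Given a primitive $p^K$-representation $\Vy$ of $t$ in $\MQ$, the vector $\Vz := \MU^{-1}\Vy$ satisfies
\[
\Vz' \tMQ \Vz \equiv (\MU\Vz)' \MQ (\MU\Vz) = \Vy' \MQ \Vy \equiv t \pmod{p^K},
\]
and is primitive because $\MU^{-1}$ reduces to an element of $\gln(\zpz)$ and so sends nonzero vectors in $(\zpz)^n$ to nonzero vectors. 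Hence it suffices to construct a primitive $p^K$-representation of $t$ in $\MQ$ itself for every $K \geq k$.

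For that construction, the case $t = 0$ makes the hypothesis vacuous (it would require $k = \infty$), so assume $t \neq 0$ and set $m := \ordp(t)$. Putting $a := \Vx'\MQ\Vx \in \bbZ$, the congruence $a \equiv t \pmod{p^k}$ together with $k \geq m + k_p$ forces $\ordp(a) = m$, so we may write $a = p^m a'$ and $t = p^m t'$ with $a', t' \in \bbZ$ coprime to $p$ and $a' \equiv t' \pmod{p^{k-m}}$. The unit $u := t'(a')^{-1}$ then satisfies $u \equiv 1 \pmod{p^{k_p}}$, which is $u \equiv 1 \pmod{p}$ for odd $p$ and $u \equiv 1 \pmod{8}$ for $p = 2$; in both cases $\sgnp(u) = 1$, so Lemma~\ref{lem:Square} supplies a $y \in \zRz{p^{K-m}}$ with $y^2 \equiv u \pmod{p^{K-m}}$. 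Then $y$ is a unit, $\Vy := y\Vx$ inherits primitiveness from $\Vx$, and
\[
\Vy'\MQ\Vy = y^2 a = p^m y^2 a' \equiv p^m u a' = p^m t' = t \pmod{p^K}.
\]
The only nontrivial step is recognising $u$ as a $p$-adic square, which is precisely what the Hensel slack $k_p$ in the hypothesis is designed to guarantee via Lemma~\ref{lem:Square}; the rest is bookkeeping.
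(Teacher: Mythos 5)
Your proof is correct and takes essentially the same route as the paper's: a change-of-basis step that transfers a primitive representation between $p^*$-equivalent forms, combined with rescaling $\Vx$ by a unit square root of $t/(\Vx'\MQ\Vx)$ to promote the mod-$p^k$ representation to arbitrary moduli. The only difference is expository: where the paper says that $a=\Vx'\MQ\Vx$ and $t$ have the same $p^k$-symbol and hence differ by a unit square modulo every $p^i$, you spell this out explicitly through the slack $k_p$ and Lemma~\ref{lem:Square}.
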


Next, we give several results from \cite{DH14}.
This paper deals with the following problem. Given a quadratic
form $\MQ$ in $n$-variables, a prime $p$, and integers $k,t$
find a solution of $\Vx'\MQ\Vx\equiv t \bmod{p^k}$, if it exists.
Note that it is easy (i.e., polynomial time tester exists) to test 
if $t$ has a $p^k$-representation in $\MQ$.

\begin{theorem}\label{thm:PolyRep}
Let $\MQ^n$ be an integral quadratic form, $p$ be a prime, 
$k$ be a positive integer, $t$ be an element of $\zpkz$.
Then, there is a 
polynomial time algorithm (Las Vegas for odd primes and
deterministic for the prime~2)
that performs
$O(n^{1+\omega}\log k+nk^3+n\log p)$ ring operations over 
$\zpkz$ and outputs a primitive
 $p^k$-representation of $t$ by $\MQ$,
if such a representation exists. The time complexity can be improved
for the following special cases.
\begin{align*}
\begin{array}{ll}
\text{Type I, $p$ odd} & ~~O(\log k+\log p)\\
\text{Type I, $p=2$} & ~~O(k)\\
\text{Type II} & ~~O(k\log k)
\end{array}
\end{align*}
\end{theorem}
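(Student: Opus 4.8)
The plan is to reduce to a block-diagonal form, solve the representation problem one block at a time (which already yields the three refined bounds), and then treat a general block-diagonal form by a finite case analysis on its lowest-valuation Jordan constituents. First, using Theorem~\ref{thm:BlockDiagonal} I compute in $O(n^{1+\omega}\log k)$ ring operations a matrix $\MU\in\sln(\zpkz)$ such that $\MD:=\MU'\MQ\MU\bmod p^k$ is diagonal (odd $p$) or block diagonal in the sense of Definition~\ref{def:BlockDiagonal} ($p=2$); since $\MU$ is invertible over $\zpkz$ it carries primitive vectors to primitive vectors, so a primitive $p^k$-representation $\Vv$ of $t$ by $\MD$ gives the primitive representation $\MU\Vv$ of $t$ by $\MQ$. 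Existence of a $p^k$-representation is easy to test, so I assume one exists and work with $\MD=\MD_1\oplus\cdots\oplus\MD_r$, each $\MD_i$ a $1\times1$ (type~I) block or, when $p=2$, a $2\times2$ (type~II) block.

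\emph{Single block.} For $\MD=\langle a\rangle$ with $a=p^{\ell}u$, $u$ a unit, a primitive representation solves $ax^2\equiv t\pmod{p^k}$; by Lemma~\ref{lem:Square} this is possible iff $\ordp(t)=\ell$ and $u^{-1}\copp(t)$ is a square unit modulo $p^{k-\ell}$, a sign condition checked in $O(\log p)$ operations, and then $x$ is a square root of $u^{-1}\copp(t)$ modulo $p^{k-\ell}$. For odd $p$ this is Tonelli--Shanks modulo $p$ (the only randomized ingredient) followed by Hensel lifting, total $O(\log k+\log p)$; for $p=2$ it is an elementary bit-by-bit extraction modulo $2^{k-\ell}$ (ordinary Hensel fails since $2x$ is not a unit), total $O(k)$. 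For a type~II block, write it as $2^{\ell+1}f$ with $f=ax^2+bxy+cy^2$, $b$ odd; reducing $f$ modulo $2$ and invoking the standard classification of type~II blocks, $\MD$ is equivalent over $\ztkz$ to $2^{\ell}\left(\begin{smallmatrix}0&1\\1&0\end{smallmatrix}\right)$ or to $2^{\ell}\left(\begin{smallmatrix}2&1\\1&2\end{smallmatrix}\right)$. In the first (isotropic) case $\MD$ primitively represents every multiple of $2^{\ell+1}$, found by fixing one variable to a unit and solving a linear equation in the other; in the second (anisotropic) case $\MD$ primitively represents exactly the elements of $2$-adic valuation $\ell+1$, found by substituting a unit for one variable and solving a univariate quadratic modulo $2^{k-\ell-1}$, i.e.\ one more square-root extraction. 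Either way the cost is $O(k\log k)$.

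\emph{General block-diagonal form.} Group the blocks by $p$-adic valuation, $\MD\cong\bigoplus_{\ell\ge0}p^{\ell}\MD^{(\ell)}$ with each Jordan constituent $\MD^{(\ell)}$ of unit determinant, so the level-$\ell$ contribution to $\Vv'\MD\Vv$ is $p^{\ell}$ times a value represented by $\MD^{(\ell)}$, and $\Vv$ is primitive iff its part in some single level is primitive there. The structural inputs, from Lemma~\ref{lem:QR} together with Hensel or bit-by-bit lifting, are: for odd $p$, a unimodular form over $\zpkz$ of dimension $\ge3$ is universal with primitive representations, a unimodular binary form represents every unit primitively, and a unimodular type~I block represents one square-coset of units primitively; for $p=2$, a unimodular form over $\ztkz$ of dimension $\ge5$ is isotropic, universal, with primitive representations, while the unimodular constituents of dimension $\le4$ (type~I$\,\oplus\,$type~I pieces and type~II blocks) fall into finitely many classes whose primitively represented values are known explicitly. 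Hence, to build a primitive $p^k$-representation of $t$ it suffices to examine the lowest few Jordan constituents (levels at most $\ordp(t)$ plus a constant, or the few smallest levels if $\ordp(t)=\infty$): if the relevant bottom constituent is large it carries a primitive value while the higher levels contribute freely chosen (e.g.\ zero) values summing with it to $t$; if it is $1$- or $2$-dimensional, enlarge the window by one or two levels and reduce, after subtracting off the part of $t$ forced modulo the lowest powers of $p$, to one of finitely many configurations of one or two small unimodular constituents at two consecutive levels, each solved by the single-block routine plus a bounded number of extra square-root extractions. For $p=2$ the same scheme works, with the achievable oddities played against the $4m$ term of the $p$-signature \eqref{def:PSignature}, and it is deterministic since every square root modulo $2^{k}$ is computed deterministically.

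For the complexity, block-diagonalization costs $O(n^{1+\omega}\log k)$, the $O(\log p)$-cost sign tests and the single non-residue search contribute $O(n\log p)$, and the per-level solving and lifting work (Gaussian elimination on a bounded number of constituents together with root extractions modulo powers of $p$ up to $p^{k}$, over at most $n$ levels) contributes $O(nk^{3})$, matching the claimed $O(n^{1+\omega}\log k+nk^{3}+n\log p)$; the three refined bounds are precisely the single-block costs. The main obstacle is the general block-diagonal case: determining exactly which $t$ admit a primitive $p^k$-representation and choosing the per-level target values so that the carries between levels are consistent while primitivity is forced at some level, in polynomial time and, for $p=2$, deterministically. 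The cross-level bookkeeping (a non-primitive representation at level $\ell$ is a rescaled primitive one at level $\ell-2$) and the $p=2$ oddity/type~II subtleties are where essentially all the care sits.
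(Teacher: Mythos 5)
The paper does not actually prove Theorem~\ref{thm:PolyRep}: it is imported verbatim from \cite{DH14}, so there is no in-paper proof to measure you against. Judged on its own, your proposal gets the easy parts right (block-diagonalize via Theorem~\ref{thm:BlockDiagonal}, note that a matrix in $\gln(\zpkz)$ preserves primitivity, and solve a single type~I block by square-root extraction with Hensel lifting for odd $p$ and bit-by-bit lifting for $p=2$, and a single type~II block by the two-class classification); these do yield the three refined bounds and the Las-Vegas/deterministic split. But that is not where the theorem lives.

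The genuine gap is the general block-diagonal case, which you describe only as ``examine the lowest few Jordan constituents \dots\ reduce \dots\ to one of finitely many configurations'' and then concede that the cross-level carries, the criterion for which $t$ are primitively representable, and the $p=2$ oddity/type~II interactions are ``where essentially all the care sits.'' That is precisely the content of the theorem, and it is left unproved: you never specify the finite list of configurations, never prove that primitivity can always be forced at some scale when a primitive representation exists (a non-primitive solution at scale $\ell$ is a rescaled solution at scale $\ell-2$, so an algorithm that only inspects a bounded window of scales needs an argument that this suffices), never verify that the procedure is deterministic for $p=2$, and never account for the claimed $O(nk^{3})$ term --- it is pattern-matched from the statement rather than derived. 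Some of the structural facts you invoke are also asserted rather than established and are delicate exactly at $p=2$ (e.g.\ the claim that a unimodular form over $\ztkz$ of dimension $\geq 5$ is universal with primitive representations, and the explicit classification of the small unimodular constituents' primitively represented values); compare the counterexample culture around $x^2+y^2+z^2 \not\equiv 7 \pmod 8$ that the paper itself highlights in Section~\ref{sec:Hartung}. As it stands the proposal is a plausible plan whose decisive steps are deferred, so it cannot be accepted as a proof of the statement.
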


Next, we give necessary and sufficient conditions for a Type II
block to represent an integer $t$. A proof of this result can
also be found in \cite{DH14}.

\begin{lemma}\label{lem:RepresentTTypeII}
Let $\MQ=\begin{pmatrix}2a & b \\ b & 2c \end{pmatrix}$, $b$ odd 
be a type II block, and $t, k$ be positive integers. Then, $\MQ$ represents
$t$ primitively over $\ztkz$ if $\ordt(t)=1$.
\end{lemma}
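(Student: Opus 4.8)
The plan is to translate the claim about $\MQ$ into a claim about the binary form $f(x,y)=ax^2+bxy+cy^2$ attached to it, and then lift a solution modulo $2$ up to modulo $2^k$ by Hensel's lemma. Since $\Vx'\MQ\Vx = 2ax^2+2bxy+2cy^2 = 2f(x,y)$, a primitive $2^k$-representation of $t$ by $\MQ$ is exactly a primitive pair $(x,y)\in(\ztkz)^2$ with $2f(x,y)\equiv t\pmod{2^k}$. When $k=1$ the statement is trivial, since $t$ is even and $(1,0)$ works, so I would assume $k\ge 2$, write $t=2u$ with $u$ odd (this is where $\ordt(t)=1$ enters), and reduce to finding a primitive pair $(x,y)$ with $f(x,y)\equiv u\pmod{2^{k-1}}$. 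Such a pair is automatically primitive, because an odd value of $f$ forces $x$ or $y$ to be odd, and doubling the congruence recovers $2f(x,y)\equiv t\pmod{2^k}$.

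For the base case modulo $2$ I would observe that $f(1,0)+f(0,1)+f(1,1)=2a+2c+b\equiv b\equiv 1\pmod 2$, so at least one of $f(1,0),f(0,1),f(1,1)$ is odd, which gives a primitive pair representing $u$ modulo $2$. For the inductive step, suppose $(x,y)$ is primitive with $f(x,y)\equiv u\pmod{2^m}$, say $f(x,y)=u+2^m w$ for some integer $w$, and $m\ge 1$; then
\[
f(x+2^m s,\,y+2^m r)\equiv f(x,y)+2^m\bigl((2ax+by)s+(bx+2cy)r\bigr)\pmod{2^{m+1}},
\]
the quadratic correction being absorbed since $2m\ge m+1$. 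Reducing the bracket modulo $2$ and using that $b$ is odd, the condition $f(x+2^m s,y+2^m r)\equiv u\pmod{2^{m+1}}$ becomes $w+ys+xr\equiv 0\pmod 2$, which I can solve for whichever of $s,r$ multiplies the odd coordinate of $(x,y)$. The lift agrees with $(x,y)$ modulo $2$, hence stays primitive, and iterating from $m=1$ up to $m=k-1$ and then reducing modulo $2^k$ produces the desired representation.

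The main point, and the only place where the hypothesis that $b$ is odd is indispensable, is the inductive step: oddness of $b$ is precisely what makes the gradient $(2ax+by,\ bx+2cy)$ of $f$ reduce to $(y,x)$ modulo $2$, which is primitive whenever $(x,y)$ is, so the Hensel linearization is always solvable. Everything else is routine bookkeeping; the one thing I would state carefully is the passage between the modulus $2^k$ governing $\MQ$ and the modulus $2^{k-1}$ governing $f$, together with the remark that primitivity of the representation is automatic once the relevant value of $f$ is odd.
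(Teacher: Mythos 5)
Your proof is correct. Note that the paper does not prove this lemma internally at all --- it only cites \cite{DH14} for a proof --- so there is no in-paper argument to compare against; your Hensel-style lifting argument fills that gap in a self-contained way. The key points are all handled properly: the reduction from $\Vx'\MQ\Vx=2f(x,y)\equiv t\pmod{2^k}$ with $\ordt(t)=1$ to $f(x,y)\equiv u\pmod{2^{k-1}}$ with $u$ odd, the base case via $f(1,0)+f(0,1)+f(1,1)\equiv b\equiv 1\pmod 2$, the lifting step where oddness of $b$ makes the gradient $(2ax+by,\,bx+2cy)\equiv(y,x)\pmod 2$ and hence the linearized congruence solvable whenever $(x,y)$ is primitive, and the observation that primitivity is automatic (and preserved under lifting) because the represented value of $f$ is odd.
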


\paragraph{Additional Notation.}
For convenience, we introduce the following notations, where
$q$ is a positive integer. 
\begin{align}\label{not:kp}
k_p = \left\{\begin{array}{ll} 3 & \text{if $p=2$, and}\\
1 & \text{$p$ odd prime.}
\end{array}\right.
&\qquad 
\overline{q}= q\prod_{p|2q}p^{k_p} \\
\bbP_q &= \{p \mid \ordp(2q)>0\} \nonumber\\
\SGNI &= \{1,3,5,7\} \nonumber\\
\MTP = \begin{pmatrix}2&1\\1&4\end{pmatrix}
&\qquad
\MTM = \begin{pmatrix}2&1\\1&2\end{pmatrix}\nonumber
\end{align}

\section{Technical Overview}

In this section, we give an overview of our algorithm.
Note that this algorithm does not run in polynomial
time. The final version of the algorithm, which is correct and runs
in polynomial time will be presented in Section \ref{sec:QFGen}.

Before describing the algorithm, we need to describe the input
to the algorithm. 
The question of succinct specification of a genus has several 
competing answers \cite{Watson76,Kitaoka99,CS99,OMeara73}. 
In this work, we use the specification by Conway-Sloane, called
the {\em symbol}; with the property that two 
quadratic forms are in the same genus iff they have the same symbol. 
An intuitive overview of the symbol follows (see Section \ref{sec:Symbol}
for the formal version).

It can be shown that two quadratic forms $\MQ_1^n$
and $\MQ_2^n$ are in the same genus iff : (i) $\det(\MQ_1)=\det(\MQ_2)$,
(ii) $\sig(\MQ_1)=\sig(\MQ_2)$
and (iii) $\MQ_1\eqp \MQ_2$, for every prime $p$ that divides 
$2\det(\MQ_1)$. The set of primes 
$\PSym=\{p\mid \ordp(2\det(\Sym)) > 0\}$ is called the set of 
{\em relevant primes} for the genus $\Sym$.
The question now reduces to finding the necessary and sufficient
condition for $p^*$-equivalence. It can be shown that two 
quadratic forms $\MQ_1^n$ and $\MQ_2^n$ are $p^*$-equivalent iff
(a) $\ordp(\det(\MQ_1))=\ordp(\det(\MQ_2))$, and 
(b) $\MQ_1 \overset{p^k}{\sim} \MQ_2$, for 
$k=\ordp(\det(\MQ_1))+k_p$. These two conditions can be written
in an equivalent way, using what is called the $p$-symbol of a 
quadratic form.  The key property is that two quadratic forms are 
$p^*$-equivalent iff they have the same $p$-symbol. For now, we
can think of $\canp(\MQ)$ as the $p$-symbol of $\MQ$. By definition
of $p$-canonical forms, it follows that $\MQ_1 \eqp \MQ_2$ iff
$\canp(\MQ_1)=\canp(\MQ_2)$. 

\paragraph{Symbol.} One can now give an informal description of
the symbol. Intuitively, think of $\SYM_p(\MQ)$ as the tuple 
$(p,\canp(\MQ))$. Then, by the definition of canonical forms, two 
quadratic form are $p^*$-equivalent iff they have the same $p$-symbol.
 The symbol of a 
quadratic form $\MQ$ is the list of tuples
$(p,\canp(\MQ))$, one for each prime that divides $2\det(\MQ)$
along with its signature 
i.e., 
\begin{gather}\label{EQ:SYM}
\SYM(\MQ) = \{(-1, \sig(\MQ))\}
\bigcup_{p| 2\det(\MQ)} \{ (p,\can_p(\MQ))\}
\end{gather}
Note that the determinant of $\MQ$ is missing from the symbol. This is 
because it is possible to find the determinant from the symbol $\SYM(\MQ)$.
This is done by calculating $\ordp(\det(\MQ))$ for each prime $p$.
The relevant primes of $\SYM(\MQ)$ can be read from the first component
in of the tuples in $\SYM(\MQ)$. And, for a relevant prime $p$ it 
can be shown that $\ordp(\det(\MQ))=\ordp(\det(\canp(\MQ)))$.

The notation $\Sym^n$ will denote both 
a genus and the symbol of the
genus, depending on the context. The notation $\Sym_p$ will denote the 
$p$-symbol of the genus $\Sym$.
The set of {\em relevant primes} for the symbol $\Sym$ is denoted by 
$\PSym=\{p \mid \ordp(2\dSym) > 0\}$. 
If $\SYM_p(\MQ)$ is the $p$-symbol of $\MQ$ then
$\scalep(\MQ)$ denotes the set of $p$-scales of $\MQ$. 
The size of the symbol of a genus $\Sym^n$ of determinant $d$ 
is $O(n|\PSym|\log d)$.

\paragraph*{Local Form.} Given a symbol $\Sym^n$ and a positive 
integer $q$, it is possible to construct a quadratic form $\MS^n$
such that $\MS \eqq \MQ$, for every $\MQ \in \Sym$; in 
$\poly(n,\log\det(\Sym), \log q)$ time. Such a form is said to be locally
equivalent to the genus $\Sym$ over $\zqz$ and is denoted by 
$\MS \eqq \Sym$. The local form satisfies the following important 
property. For every prime $p$ that divides $q$, 
$\ordp(\det(\MS))=\ordp(\det(\Sym))$. Note that $\MS$ does not need to 
have determinant $\det(\Sym)$ and may not be equivalent to $\Sym$
over $\bbR$. In particular, $\MS \not\in \Sym$.

\paragraph{Primitive Representation.}
If $t$ has a primitive $p^*$-representation in $\MQ$ then, 
by Theorem \ref{thm:Siegel13}, $t$ has a 
primitive $p^*$-representation in every 
quadratic form in the genus $\gen(\MQ)$. Hence, the primitive
representativeness of an integer $t$ by a quadratic form $\MQ$ 
only depends on the symbol $\SYM(\MQ)$. 

\begin{definition}\label{def:PrimRepInGenus}
We say that an integer $t$ has a primitive representation in a genus
$\Sym$ if $t$ has a primitive $p^*$-representation in $\Sym$ for all
$p \in \{-1,2\} \cup \bbP$.
\end{definition}

\paragraph*{Simple Version of the Algorithm.}
Let $\Sym^n$ be a symbol of non-empty genus. In this section, 
we give the simple version of the algorithm. 
The run time of the algorithm is not polynomial; mainly because the
exponential blowup in the determinant after each recursive step. 
In Section \ref{sec:QFGen}, we show that by carefully selecting 
the embedding $\Vx$ of $t$ and by
simplifying the input before each recursive call, it is
possible to avoid the blowup and show a polynomial bound on the
runtime. 

\vspace*{10pt}
\noindent
\textsc{GenSimple} ({\em input:} $\Sym^n$)
{\em output:} $\MQ^n \in \Sym$
\begin{enumerate}[1.]
\item If $n=1$ then $\MQ=\dSym$.
\item Let $t$ be an integer such that $t$ has a primitive representation 
in $\Sym$. 
\item Let $q=\overline{t^{n-1}\dSym}$. Find $\MS$ such that 
$\MS\eqq \Sym$.
\item Find a primitive $q$-representation $\Vx$ such that $\Vx'\MS\Vx\equiv t\bmod{q}$.
\item Extend $\Vx$ by $\MA$ so that $[\Vx,\MA]\in \gln(\zqz)$.
\item Compute the following quantities.
\begin{align}\label{ALGO:dH}
\begin{array}{ll}
\Vd:=\Vx'\MS\MA \bmod q & \MH := (t\MA'\MS\MA - \Vd'\Vd) \bmod q\\
\end{array}
\end{align}
\item Define the symbol $\tSym^{n-1}$ as follows.
\begin{align}\label{def:tSym}
\tSym_p = \left\{ \begin{array}{ll}
\sig(t)\left(\sig(\Sym)-\sig(t)\right) & \text{if }p=-1\\
\SYM_p(\MH) & \text{if $p$ divides $q$}\\
\SYM_p(\MI^{n-2} \oplus t^{n-2}\dSym) & \text{otherwise}
\end{array}\right.
\end{align}
\item Let $\tMH =$ \textsc{GenSimple}$(\tSym)$.
\item Find $\tMU \in \gl_{n-1}(\zqz)$ such that $\tMH\equiv \tMU'\MH\tMU \bmod q$.
\item Output $
\MQ = \begin{pmatrix}t & \Vd\tMU \\ 
(\Vd\tMU)' & \frac{\tMH+\tMU'\Vd'\Vd\tMU}{t}\end{pmatrix}
$, where the division in the lower right is (usual) rational division.
\end{enumerate}

\subsection{Intuitive Description of the Algorithm}
For simplicity, we assume that $\sig_{(-1)}(\Sym)=n$ i.e., the genus
$\Sym$ is the genus of positive definite quadratic forms or
Gram matrix of lattices.

In order to find a Gram matrix $\MQ$ in the genus $\Sym$, we
start by finding a value $t$ such that $\MQ$ has the following form.
\begin{align}\label{EQ:MQForm}
\MQ = \begin{pmatrix}t & \Vw \\ \Vw' & \tMQ\end{pmatrix} 
\end{align}
It turns out that it suffices to find an integer $t$ which has a 
primitive representation in the genus $\Sym$. The next step
is best explained by thinking in terms of lattices.

The Gram matrix $\MQ$ equals $\MB'\MB$, where 
$\MB=[\Vb_1,\ldots,\Vb_n]$ is a basis of a lattice with Gram matrix
$\MQ$. Because $\MQ$ has the form given in Equation \ref{EQ:MQForm},
$\Vb_1'\Vb_1 = t$. Consider now the (possibly non-integral) lattice
one obtains by projecting $[\Vb_1,\ldots,\Vb_n]$ onto the subspace
orthogonal to $\Vb_1$. 
It is possible to show that the Gram matrix of this lattice in
$(n-1)$-dimensions is given by $\tMQ-\frac{\Vw'\Vw}{t}$.

The matrix $\tMQ$, and the matrix $\Vw'\Vw$ are integral. Thus,
$t\tMQ-\Vw'\Vw$ is a Gram matrix of an integral lattice.
To find $\MQ$, we therefore (a) find the symbol of the lattice
$t\tMQ - \Vw'\Vw$ and recursively find a corresponding lattice, 
and (b) find $\Vw$.
To solve (a), the algorithm above constructs a locally equivalent quadratic
form $\MS$, then finds a representation $\Vx$ of $t$ into $\MS$, and transforms
$\MS$ with a transformation $[\Vx,\MA] \in \gln(\zqz)$ 
which maps $t$ into the top left
\begin{align}
[\Vx,\MA]' \MS [\Vx,\MA] = \begin{pmatrix}
t & \Vd \\ \Vd' & \MA'\MS\MA 
\end{pmatrix}\bmod q
\end{align}
To solve (b), it is possible to show that one can recover $\Vw$
from the vector $\Vd$.

Finally, one can show that because of the way we chose $q$ 
in the algorithm, the expression 
$\frac{\tMH+\tMU'\Vd'\Vd\tMU}{t}$ in the construction of $\MQ$
is actually integral.

\subsection{Comparisons to Hartung's Algorithm.}\label{sec:Hartung}

Our construction in Section \ref{sec:QFGen} is 
similar to the algorithm given by Hartung \cite{Hartung08}.
This algorithm, as we will show, is not polynomial but $O(n^n)$.
There are other severe problems with Hartung's work
(i) several lemmas are incorrect because of insufficient care
while handling prime~2, and (ii) the construction of $t$, in case
of dimension~2,
is short but unfortunately incorrect.
This construction takes us several pages 
(page~\pageref{dim2}-\pageref{dim2end}).
A detailed discussion of the comparison follows.

The first non-trivial step is to construct an integer $t$
which is primitively representatable in the genus $\Sym$. 
Hartung constructs a $t$ such that $t=\wp s$, where $s$
divides $\dSym$ and $\wp$ is an prime which does not
divide $\dSym$. This construction seems correct when 
$n\geq 3$ but incorrect for $n=2$. One of the reasons is
the treatment of the prime~2; which is not thorough. The
prime~2 has been known to create problems, if not handled
correctly \cite{Pall65, Minkowski10, Watson76}. 

For example, Lemma 3.3.1 \cite{Hartung08} is incorrect
for $p=2$ because it is not possible to divide by~2 over
the ring $\ztkz$ at the end of the proof. This leads to an
easy counter example for Lemma 3.3.2, which claims that
a quadratic form $\MQ^{n\geq 3}$ with 
$\det(\MQ) \in (\zpz)^\times$ represents every integer $t$
primitively over $\zpkz$ for all positive integers $k$. A 
counter example is $(\MQ=x^2+y^2+z^2, p=2, k=3, t=7)$. By
exhaustive search, it can be verified that $x^2+y^2+z^2$
does not represent $7$ modulo $8$. This mistake becomes
more severe in the construction of $t$ for $\Sym^{n=2}$.
The construction in this case is
highly non-trivial and needs a separate treatment
(page~\pageref{dim2}-\pageref{dim2end}).

The construction of $t$ for $\Sym^{n\geq 3}$ seems to be 
correct \cite{Hartung08}. 
Our construction, though, gives smaller $t$. Hartung 
needs a prime $\wp$ which does not divide the determinant, 
each time he needs to find a primitively representable integer $t$. 
In contrast, we need
such a prime only once. Note that the construction of such a prime
$\wp$ takes polynomial time if ERH holds.

The most serious issue is that the algorithm by Hartung is not polynomial
time. He argues that each step in the algorithm is polynomial
and because each recursive step reduces the dimension by~1, overall
the algorithm is polynomial. This is not true because 
after finding
$t$ and reducing to one less dimension to a symbol
$\tSym^{n-1}$, 
$\det(\tSym)=t^{n-2}\dSym$ (see Claim \ref{CALG:dtSym}).
The upper found on $t$ is $\dSym$ and so
$\det(\tSym)$ can be as large as $\dSym^{n-1}$;
leading to a blowup $\sim \dSym^{n^n}$ if used $n$ times
recursively. As it is, the time complexity of Hartung's 
algorithm is proportional to $O(n^n)$.
In contrast, our construction represents $t$ in a specific way and uses the
property of this representation to show that the determinant 
blows up by $2^{n^2}$ at most; resulting in a 
polynomial time algorithm (see Section \ref{sec:QFGenPoly}).

\section{Formalizing the Input}

This section describes the input to the algorithm which boils 
down to the question of a succinct representation of the genus.

\subsection{Symbol of a Quadratic Form}\label{sec:Symbol}

There are several equivalent ways of giving a description of 
the $p^*$-equivalence \cite{CS99, Kitaoka99, OMeara73, Cassels78}. 
In this work, we go with a modified version of the Conway-Sloane description,
called the $p$-{\em symbol} of a quadratic form. Our modification
gets rid of the need to use the $p$-adic numbers.
Note that $p$-adic numbers are a 
staple in this area and we are not aware of any work which does not
use them \cite{Kitaoka99, OMeara73, Siegel35}.

The $(-1)$-symbol of a quadratic form is equal to the 
signature of the quadratic form.

\subsubsection{$p$-symbol, $p$ odd prime}\label{sec:PSym}

Let $k=\ordp(\det(\MQ))+1$ and $\MD$ be the diagonal quadratic form 
which is $p^k$-equivalent to $\MQ$ (see Theorem \ref{thm:BlockDiagonal}).
Then, $\MD$ can be written as follows.
\begin{align}\label{def:jordanp}
\MD=\MD_0^{n_0} \oplus p\MD_1^{n_1} \ldots \oplus p^i \MD_i^{n_i} \oplus \ldots 
\qquad i \leq \ordp(\det(\MQ))\;,
\end{align}
where $\MD_0,\ldots,\MD_{k-1}$ are diagonal quadratic forms, 
$\sum_i n_i = n$ and $p$ does not divide 
$\det(\MD_0)\ldots\det(\MD_{k-1})$. Let $\scalep(\MQ)$ is the set of
$p$-orders $i$ with non-zero $n_i$.
Then, the $p$-symbol of $\MQ$ is defined as the
set of {\em scales} $i$ occurring in Equation \ref{def:jordanp} with non-zero
$n_i$, {\em dimensions} $n_{i}=\dim(\MD_i)$ and {\em signs} 
$\epsilon_i=\legendre{\det(\MD_i)}{p}$. 
\begin{equation}\label{eq:PSymbol}
\SYM_p(\MQ)= \left\{(p, i,\legendre{\det(\MD_i)}{p},n_i) \mid 
i \in \scalep(\MQ)\right\}
\end{equation}

The following fundamental result follows from Theorem~9, page~379 
\cite{CS99} and Theorem \ref{thm:PStarEq}. 

\begin{theorem}\label{thm:PSymbol}
For $p \in \{-1\} \cup \bbP$, two quadratic forms 
are $p^*$-equivalent iff they have the same $p$-symbol.
\end{theorem}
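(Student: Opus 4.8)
The plan is to prove Theorem~\ref{thm:PSymbol} by reducing it, for each prime $p \in \{-1\}\cup\bbP$, to the stated ingredients: Theorem~\ref{thm:PStarEq} (a block-diagonal form over $\zpkz$ with the right $p$-order is already $p^*$-equivalent) together with the classification of $p$-adic equivalence classes of diagonal/block-diagonal forms over $\zpkz$ (Theorem~9, p.~379 of \cite{CS99}). The case $p=-1$ is immediate from the definitions: the $(-1)$-symbol is by definition the signature, and $(-1)^*$-equivalence is equivalence over $\bbR$, which by Theorem~\ref{thm:diagonal} (Sylvester's law of inertia) is determined exactly by the signature. So the content is in the case $p$ an odd prime; the excerpt only defines the $p$-symbol for odd $p$ in Section~\ref{sec:PSym}, which is consistent with this.

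For odd $p$, first I would fix $k = \ordp(\det(\MQ)) + 1 = \ordp(\det(\MQ)) + k_p$ and, using Theorem~\ref{thm:BlockDiagonal}, pass to a diagonal form $\MD$ that is $p^k$-equivalent to $\MQ$, written in the Jordan splitting of Equation~\ref{def:jordanp}. The forward direction is the easy one: if $\MQ_1 \eqp \MQ_2$ then in particular $\MQ_1 \overset{p^k}{\sim} \MQ_2$ and $\ordp(\det\MQ_1) = \ordp(\det\MQ_2)$, and one checks that the data extracted into the $p$-symbol — the set of scales $i$, the dimensions $n_i$, and the signs $\legendre{\det\MD_i}{p}$ — are invariants of the $p^k$-equivalence class once $k$ is this large. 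Concretely, the scales are read off from the elementary divisors of $\MD$ over $\zpkz$ (invariant under $\gln(\zpkz)$-congruence), the $n_i$ are their multiplicities, and the sign $\epsilon_i$ is invariant because over $\zpz$ a unimodular congruence multiplies each $\det\MD_i$ by a square, which does not change its Legendre symbol; one has to note that the choice of $k$ guarantees that no ``carrying'' between scales can disturb these, i.e. the Jordan decomposition is rigid enough at this precision. The reverse direction is where Theorem~9 of \cite{CS99} does the work: if two forms have the same $p$-symbol, they have the same scales, dimensions, and signs, hence the same diagonal Jordan form up to replacing each unit block by a unit block of the same determinant-class mod squares, which is exactly the statement that they are $p$-adically equivalent; then Theorem~\ref{thm:PStarEq} upgrades $p^k$-equivalence of the block-diagonal forms to genuine $p^*$-equivalence, and transitivity with $\MQ_i \overset{p^k}{\sim} \MD_i$ closes the loop. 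Here one must also verify the $\ordp(\det)$ condition is encoded: it is, since $\ordp(\det\MQ) = \sum_i i\, n_i$ is determined by the symbol.

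The steps in order: (1) dispose of $p=-1$ via Sylvester/Theorem~\ref{thm:diagonal}; (2) for odd $p$, reduce both forms to Jordan-split diagonal forms $\MD_1, \MD_2$ over $\zpkz$ with $k = \ordp(\det) + 1$ using Theorem~\ref{thm:BlockDiagonal}; (3) prove the $p$-symbol is well-defined, i.e. independent of the choice of $\MD$ and invariant under $p^k$-congruence — this is the ``same symbol $\Leftarrow$ $p^*$-equivalent'' direction; (4) for the converse, use Theorem~9 of \cite{CS99} to conclude $\MD_1 \eqp \MD_2$ from equality of symbols, then Theorem~\ref{thm:PStarEq} plus transitivity to get $\MQ_1 \eqp \MQ_2$.

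The main obstacle I expect is step~(3), specifically showing that the sign $\epsilon_i = \legendre{\det\MD_i}{p}$ and the dimension $n_i$ are genuine invariants — that the Jordan decomposition, while not unique as a matrix, has a unique ``type'' at each scale once $k \ge \ordp(\det) + 1$. This is the familiar subtlety that Jordan splittings over $\zpz$ for odd $p$ are unique in shape (unlike $p=2$, which is why the excerpt restricts to odd $p$ here), but it still needs the precision bound $k_p = 1$ to be invoked carefully: a congruence modulo $p^k$ cannot move mass between scale $i$ and scale $i+1$ without being detected, provided the top scale is strictly below $k$. I would lean on the structure theory already cited (Theorem~2, p.~369 and Theorem~9, p.~379 of \cite{CS99}) rather than reprove uniqueness of Jordan form from scratch, and the bulk of the remaining argument is the bookkeeping that the symbol records exactly the complete invariant and nothing redundant.
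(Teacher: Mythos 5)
Your proposal is correct and follows essentially the same route as the paper, which derives Theorem~\ref{thm:PSymbol} directly from Theorem~9 (page~379) of \cite{CS99} together with Theorem~\ref{thm:PStarEq}; you simply make explicit the routine details (the $p=-1$ case via Theorem~\ref{thm:diagonal}, the reduction to Jordan-split diagonal forms via Theorem~\ref{thm:BlockDiagonal}, and the invariance of scales, dimensions and signs) that the paper leaves to the citations. No gap to report.
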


\subsubsection{$2$-symbol}\label{sec:2Sym}

Let $k=\ordt(\det(\MQ))+3$ and $\MD$ be the block diagonal form 
which is $2^k$-equivalent to $\MQ$ (Theorem \ref{thm:BlockDiagonal}).
Then, $\MD$ can be written as follows.
\begin{align}\label{def:jordant}
\MD=\MD_0^{n_0} \oplus 2\MD_1^{n_1} \ldots \oplus 2^i \MD_i^{n_i} 
\oplus \ldots \qquad i \leq \ordt(\det(\MQ))\;,
\end{align}
where $\det(\MD_0), \ldots, \det(\MD_i), \ldots$ are odd, $\sum_i n_i=n$
and each $\MD_i$ is in block diagonal form according to Definition 
\ref{def:BlockDiagonal}.
The $2$-symbol of $2^i\MD_i$ are the following quantities.
\begin{equation}\label{def:2invariants}
\left(\begin{array}{lll}
i   & \text{{\em scale} of $\MD_i$} & \\
n_i= \dim(\MD_i) & \text{{\em dimension} of $\MD_i$} &\\
\epsilon_i = \legendre{\det(\MD_i)}{2} & \text{{\em sign} of $\MD_i$} & \\
\type_i=\text{I  or  II} & \text{{\em type} of $\MD_i$} & \text{I, iff 
there is an odd entry on} \\ && \text{~the main diagonal of the }\\
&&\text{~matrix $\MD_i$}\\
\odty_i\in\{0,\ldots,7\} & \text{{\em oddity} of $\MD_i$} & 
\text{it $\type_i=$I, then it 
is equal to} \\ && \text{~the trace of $\MD_i$ read modulo $8$,} \\ 
&& \text{~and is $0$ otherwise.}
\end{array}\right)
\end{equation}
Let the set of scales $i$, with non-zero $n_i$, be denoted 
$\scalet(\MQ)$.
Then, the $2$-symbol of $\MQ$ is written as follows.
\begin{equation}\label{eq:2sym}
\SYM_2(\MQ)= \left\{(2,i,\epsilon_i, n_i, \type_i, \odty_i) 
\mid i \in \scalet(\MQ)\right\}
\end{equation}

In contrast to the $p \in \{-1\} \cup \bbP$ case, 
two $2^*$-equivalent quadratic forms may produce two different 
$2$-symbols. These symbols are then said to be $2$-equivalent.

\bigskip

Consider the following useful generalization of the function $p$-order. 

\begin{definition}\label{def:porder}
Let $p \in \{2\}\cup \bbP$ be a prime, and $\Sym$ be a symbol with
$\SymP = \{(p,i,n_i,\epsilon_i, *, *)\}$, where $*$ is
empty in case $p$ is odd. Then, $\ordp(\Sym)$ is defined as 
$\argmax_i \{i \in \scale_p(\Sym)\}$.
\end{definition}

\subsection{Reduced Symbol}

The set $\cup_{p \in \{-1,2\} \cup \bbP} \SymP$, where $\SymP$ is a 
$p$-symbol, is a complete description of a genus and can be used as
an input to the algorithm. Unfortunately, 
this description is too long because there are infinitely many primes. The
following lemma helps us in giving a shorter description.

\begin{lemma}\label{lem:RelevantPrimes}
Let $\MQ^n$ be a lattice with determinant $d$ 
and $p$ be an odd prime that does not divide $d$. Then,
$\MQ \eqp d\oplus\MI^{n-1}$.
\end{lemma}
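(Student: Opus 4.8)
The claim is purely a statement about $p^*$-equivalence for an odd prime $p$ not dividing the determinant $d$ of a lattice $\MQ^n$, so by Theorem~\ref{thm:PSymbol} it suffices to show that $\MQ$ and $d\oplus\MI^{n-1}$ have the same $p$-symbol. The plan is to compute both $p$-symbols directly from the definition in Equation~\ref{eq:PSymbol}. First I would invoke Theorem~\ref{thm:BlockDiagonal} with $k=\ordp(d)+1=1$ (since $p\nmid d$) to get a diagonal form $\MD$ that is $p^1$-equivalent to $\MQ$; because $p\nmid\det(\MD)$, the Jordan splitting of Equation~\ref{def:jordanp} is trivial: there is only the scale-$0$ block, so $\scalep(\MQ)=\{0\}$, $n_0=n$, and the single sign is $\epsilon_0=\legendre{\det(\MD)}{p}=\legendre{\det(\MQ)}{p}=\legendre{d}{p}$ (the determinant of a quadratic form is a congruence invariant modulo $p$ up to squares, hence the Legendre symbol is well-defined). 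The same computation for $d\oplus\MI^{n-1}$ gives $\scalep=\{0\}$, dimension $n$, and sign $\legendre{d\cdot 1^{n-1}}{p}=\legendre{d}{p}$.

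Thus both $p$-symbols are $\{(p,0,\legendre{d}{p},n)\}$, so they coincide and Theorem~\ref{thm:PSymbol} yields $\MQ\eqp d\oplus\MI^{n-1}$.

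The only point requiring a small amount of care is the well-definedness of the sign: a priori the sign $\epsilon_0$ is defined via a choice of diagonalization $\MD$ over $\zpkz$, and one must note that any two such choices are related by a matrix in $\gln(\zpkz)$, whose determinant is a unit, and whose square is therefore a square unit, so $\legendre{\det(\MD)}{p}$ is independent of the choice and equals $\legendre{\det(\MQ)}{p}$. This is exactly the content already used implicitly when Equation~\ref{eq:PSymbol} is asserted to define an invariant, so I would simply cite Theorem~\ref{thm:PSymbol} (equivalently Theorem~9, page~379 of \cite{CS99}) for this. I do not anticipate a genuine obstacle here; the entire content is bookkeeping with the $p$-symbol, and the statement is essentially the observation that at a prime not dividing the determinant, a lattice looks like the unimodular diagonal form carrying all its determinant in one coordinate.
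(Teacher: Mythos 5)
Your proposal is correct and follows essentially the same route as the paper: diagonalize $\MQ$ over $\zpz$, observe that since $p\nmid d$ the only Jordan constituent has scale $0$, dimension $n$, and sign $\legendre{d}{p}$ (using $\det(\MD)\equiv\det(\MQ)\det(\MU)^2\bmod p$ for well-definedness), note $d\oplus\MI^{n-1}$ has the identical $p$-symbol, and conclude via Theorem~\ref{thm:PSymbol}. No gaps.
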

\begin{proof}
Let $p$ be an odd prime that does not divide $d$ and 
$\MD=d_1\oplus \cdots \oplus d_n$ be the diagonal matrix which is 
equivalent to $\MQ$ over $\zpz$. Then, $p$ does not divide
$d_1\cdots d_n$ and
\[
\SYM_p(\MQ) = \left\{ \left(p, 0, n, \legendre{\det(\MD)}{p}\right)\right\}
\]
By the definition of the $p$-symbol, $\MD \eqp \MQ$. It follows that there is a
$\MU \in \gln(\zpz)$ such that $\MD\equiv \MU'\MQ\MU \bmod p$.
But then, $\det(\MD)\equiv\det(\MQ)\det(\MU)^2 \bmod{p}$ and 
\[
\legendre{\copp(\det(\MD))}{p} = \legendre{\copp(\det(\MQ)\det(\MU)^2)}{p}
=\legendre{\copp(d)}{p}.
\]
This implies that $d\oplus\MI^{n-1}$ has the same $p$-symbol as
$\MQ$, completing the proof (Theorem \ref{thm:PSymbol}).
\end{proof}

Our input to the algorithm is the symbol of a genus, defined as
follows.

\begin{definition}\label{def:Symbol}
Let $\MQ$ be an integral quadratic form from a given genus.
Then, the symbol the genus is defined as the set
\[
\bigcup_{p \in \{-1\} \cup 
\{p \mid \ordp(2\det(\MQ)) > 0\}} \SYM_p(\MQ)
\]
\end{definition}

From Theorem \ref{thm:PSymbol}, it follows that two quadratic forms are
in the same genus if they are $2^*$-equivalent and have the same 
$p$-symbol for each $p \in \{-1\}\cup \bbP$.

The following theorem is a direct implication of 
\cite[Theorem~9, ][page~379]{CS99}, \cite[Theorem~10, ][page~381]{CS99},
and the definition of the symbol.

\begin{theorem}\label{thm:Symbol}
Let $\MQ_1^n$ and $\MQ_2^n$ be two integral quadratic forms.
Then, the following statements are equivalent.
\vspace{-10pt}
\begin{align*}
\begin{array}{l}
\text{(a)~}\MQ_1 \in \gen(\MQ_2), ~~~~\qquad
\text{(b)~} \det(\MQ_1)=\det(\MQ_2)(=d), 
\MQ_1 \overset{\overline{d}}{\sim} \MQ_2, \text{ and } 
\MQ_1 \overset{\bbR}{\sim} \MQ_2\\
~~~~\text{(c)~}\MQ_1 \eqp \MQ_2, \forall~p \in 
\{-1\}\cup\{p\mid \ordp(2\det(\MQ_1\MQ_2))> 0 \}.\\
\end{array}
\end{align*}
\end{theorem}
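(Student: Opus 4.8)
The plan is to establish the equivalences (a) $\Leftrightarrow$ (b) $\Leftrightarrow$ (c) by chaining together the structural results already collected in the excerpt, rather than by re-deriving anything from scratch. The core facts I would invoke are: the classical genus characterization via real equivalence together with $q$-equivalence for all $q$ (the definition of genus given in the introduction); the two cited Conway--Sloane theorems (Theorem~9 and Theorem~10, page~379--381) which say that genus membership is detected by the signature together with the $p$-symbols; Theorem~\ref{thm:PSymbol}, which says $p^*$-equivalence is the same as having the same $p$-symbol for $p \in \{-1\}\cup\bbP$; and Lemma~\ref{lem:RelevantPrimes}, which pins down the $p$-symbol at every odd prime not dividing the determinant.

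First I would prove (a) $\Rightarrow$ (c). If $\MQ_1 \in \gen(\MQ_2)$ then by definition $\MQ_1 \eqR \MQ_2$ and $\MQ_1 \eqq \MQ_2$ for every positive integer $q$; in particular $\MQ_1 \eqx{p^k} \MQ_2$ for every prime $p$ and every $k$, so $\MQ_1 \eqp \MQ_2$ for all primes $p$, and $(-1)^*$-equivalence is just real equivalence. This already gives (c) (and more). Next, (c) $\Rightarrow$ (b): from $\MQ_1 \eqp \MQ_2$ for all relevant primes $p$ we get in particular equality of the $p$-orders of the determinants, $\ordp(\det(\MQ_1)) = \ordp(\det(\MQ_2))$, for every $p$ dividing $2\det(\MQ_1\MQ_2)$; for every other prime $p$ both determinants are $p$-units. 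Combined with $(-1)$-equivalence, which forces the signs of the determinants to agree (same signature, hence same number of negative eigenvalues over $\bbR$), we conclude $\det(\MQ_1) = \det(\MQ_2) =: d$. Writing $\overline{d} = d\prod_{p \mid 2d} p^{k_p}$ as in \eqref{not:kp}, the hypotheses $\MQ_1 \eqp \MQ_2$ for $p \mid 2d$ give, via Theorem~\ref{thm:PSymbol} / the $p$-symbol description and CRT, $\MQ_1 \eqx{\overline{d}} \MQ_2$; and $\MQ_1 \eqR \MQ_2$ is again just the $(-1)^*$-part. So (b) holds.

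The substantive implication is (b) $\Rightarrow$ (a), and this is where I expect the main obstacle to lie: one must promote equivalence modulo the single finite modulus $\overline{d}$, together with real equivalence, all the way up to membership in the full genus, i.e. $q$-equivalence for \emph{every} $q$. The key point is that for an odd prime $p \nmid d$, Lemma~\ref{lem:RelevantPrimes} shows $\MQ_i \eqp d \oplus \MI^{n-1}$ for $i=1,2$, so $\MQ_1$ and $\MQ_2$ automatically agree at all such primes; thus only the finitely many relevant primes matter, and for those the hypothesis $\MQ_1 \eqx{\overline d}\MQ_2$ with the exponent margin $k_p$ built into $\overline d$ lets us upgrade to $\MQ_1 \eqp \MQ_2$ (this is exactly the content guaranteed by Theorem~\ref{thm:PStarEq} / the fact recalled in the Technical Overview that $p^*$-equivalence is decided at level $\ordp(\det)+k_p$). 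Having $\MQ_1 \eqp \MQ_2$ for all primes and $\MQ_1 \eqR \MQ_2$, one invokes the cited Conway--Sloane Theorems~9 and~10 (equivalently, that the collection of $p$-symbols plus the signature is a complete genus invariant) to conclude $\MQ_1 \in \gen(\MQ_2)$. I would also remark that (a) $\Leftrightarrow$ (c) can be stated cleanly on its own: (c) says precisely that $\MQ_1$ and $\MQ_2$ have the same symbol in the sense of Definition~\ref{def:Symbol}, and the whole theorem is the assertion that the symbol is a complete invariant of the genus. The one delicate bookkeeping step throughout is making sure the modulus $\overline d$ carries enough $p$-adic precision (the extra factors $p^{k_p}$, with $k_2 = 3$) so that congruence mod $\overline d$ really does determine $p^*$-equivalence, especially at $p = 2$ where a block-diagonal form mod $2^k$ need only be $2^*$-equivalent, not equal, to a canonical representative — this is precisely the subtlety flagged in the discussion of the $2$-symbol and handled by Theorem~\ref{thm:PStarEq}.
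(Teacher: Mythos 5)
Your proposal is correct and follows essentially the same route as the paper, which offers no written proof at all but simply deduces the theorem from Conway--Sloane's Theorems~9 and~10 together with the definition of the symbol; your outline is a faithful elaboration of exactly that deduction, with the CRT gluing, the disposal of irrelevant primes via Lemma~\ref{lem:RelevantPrimes}, and the upgrade from equivalence modulo $p^{\ordp(d)+k_p}$ to $p^*$-equivalence via Theorem~\ref{thm:PStarEq} made explicit. The bookkeeping you flag (the extra precision $p^{k_p}$, especially $k_2=3$) is handled correctly.
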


Note that Theorem \ref{thm:Symbol} implies that every quadratic
form in a particular genus has the same determinant (for a proof see
page~139, Lemma 4.1, \cite{Cassels78}). The determinant of a genus
can be computed from its symbol.

We simplify the input description further by introducing the notion
of reduced symbol.

\begin{definition}\label{def:ReducedGenus}
A symbol $\Sym$ is {\em reduced} if for every relevant prime 
$p \in \PSym$ the $p$-scale~0 appears in $\scalep(\Sym)$.
\end{definition}

If $i_p = \min\{i \in \scalep(\Sym)\}$ then $\Sym$ is 
$p^{i_p}$-equivalent to a matrix which is identically~0. Thus, if
$\MD$ is a quadratic form with $\MD \eqp \Sym$ then every entry of
$\MD$ is divisible by $p^{i_p}$. Given a genus $\Sym$, we define
the following quantity.
\begin{align}\label{def:gcd}
\gcd(\Sym) = \underset{p \in \PSym}{\prod} p^{i_p} \text{ where, }
i_p = \min\{i \in \scalep(\Sym)\}
\end{align}
The reduced symbol corresponding to the symbol $\Sym$ can now be defined
as follows.
\begin{align}\label{def:gcdSymbol}
\red(\Sym) = \left\{\SYM_p\left(\frac{\MD}{p^{\min\{i \in \scalep(\Sym)\}}}\right)\mid p \in \PSym, \MD \eqp \SymP\right\} \cup \{\SYM_{(-1)}(\Sym)\}
\end{align}

To find a quadratic form from a genus $\Sym$, it suffices to 
find a quadratic form in genus $\red(\Sym)$, the proof of which
is as follows.

\begin{lemma}\label{lem:ReducedGenus}
Let $\Sym$ be a genus, and for each prime $p \in \PSym$,
$i_p$ be the integer $\min\{i \in \scalep(\Sym)\}$.
Then, $\MQ \in \red(\Sym)$ iff 
$\gcd(\Sym)\MQ \in \Sym$.
\end{lemma}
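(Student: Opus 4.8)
Write $g:=\gcd(\Sym)=\prod_{p\in\PSym}p^{i_p}$ and fix an arbitrary representative $\MD\in\Sym$. The plan is to identify $\red(\Sym)$ with the genus of the scaled-down form $\MN:=\MD/g$, and then to observe that scaling by the positive integer $g$ both preserves and reflects genus equivalence. First I would check that $\MN$ is an honest integral quadratic form: since $\MD\in\Sym$ it is $p^*$-equivalent to $\Sym$ for every relevant prime $p$, so by the remark preceding (\ref{def:gcd}) each $p^{i_p}$ divides every entry of $\MD$; as the $p^{i_p}$ ($p\in\PSym$) are pairwise coprime, their product $g$ divides $\MD$ entrywise. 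Because $g>0$ one has $\sig(\MN)=\sig(\Sym)$, and the $p$-symbol of $\MN$ is obtained from that of $\MD$ by the transformation rule for scaling a form by $g$ (for $p\in\PSym$ this lowers each scale by $i_p$); matching this against (\ref{def:gcdSymbol}) gives $\SYM(\MN)=\red(\Sym)$, i.e.\ $\red(\Sym)=\gen(\MN)$. In particular $\red(\Sym)$ is a non-empty genus, with $\det(\MN)=\det(\Sym)/g^{\,n}$.

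The heart of the argument is the claim that, for integral quadratic forms $\MN_1,\MN_2$ of the same dimension and an integer $g\ge1$, one has $\MN_1\in\gen(\MN_2)$ if and only if $g\MN_1\in\gen(g\MN_2)$. One direction is immediate from the definition of a genus: a real equivalence $\MN_1=\MU'\MN_2\MU$ gives $g\MN_1=\MU'(g\MN_2)\MU$, and for each $q$ a congruence $\MN_1\equiv\MV'\MN_2\MV\pmod q$ with $\MV\in\gln(\zqz)$ gives $g\MN_1\equiv\MV'(g\MN_2)\MV\pmod q$ with the same $\MV$. For the converse, $g\MN_1\in\gen(g\MN_2)$ gives $g\MN_1\eqR g\MN_2$, hence $\MN_1\eqR\MN_2$ as $g>0$; and for the $q$-part, fix $q$ and apply $g\MN_1\in\gen(g\MN_2)$ at the modulus $gq$ to obtain $\MV\in\gln(\zRz{gq})$ with $g\MN_1\equiv\MV'(g\MN_2)\MV\pmod{gq}$. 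Then $g\bigl(\MN_1-\MV'\MN_2\MV\bigr)\equiv0\pmod{gq}$, hence $\MN_1\equiv\MV'\MN_2\MV\pmod q$, and reducing the entries of $\MV$ modulo $q$ produces $\overline{\MV}\in\gln(\zqz)$ (a unit of $\zRz{gq}$ reduces to a unit of $\zqz$) with $\MN_1\equiv\overline{\MV}{}'\MN_2\overline{\MV}\pmod q$. Since $q$ was arbitrary, $\MN_1\in\gen(\MN_2)$.

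Combining the two steps, and using $g\MN=\MD$ together with $\gen(\MD)=\Sym$,
\[
\MQ\in\red(\Sym)\iff\MQ\in\gen(\MN)\iff g\MQ\in\gen(g\MN)=\gen(\MD)=\Sym ,
\]
which is the assertion of the lemma.

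I expect the main obstacle to be the first step — confirming that the explicit list (\ref{def:gcdSymbol}) really is the symbol of $\MN=\MD/g$. At odd primes this amounts to a short computation with Legendre symbols for the signs, but at $p=2$ one has to track how the \emph{type} and \emph{oddity} entries behave under scaling, and to keep in mind that a $2$-symbol is only determined up to $2$-equivalence; this is routine but fiddly. The only other place needing attention is the converse half of the scaling claim, namely cancelling the factor $g$ to pass from a congruence modulo $gq$ to one modulo $q$.
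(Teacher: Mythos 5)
Your proposal reaches the right conclusion but divides the labor differently from the paper. The paper's proof is a two-line symbol argument: take $\MS$ with $\SYM(\MS)=\Sym$, set $\MQ=\MS/\gcd(\Sym)$, and assert that $\SYM_p(\MQ)$ agrees with the $p$-entry of $\red(\Sym)$ for all $p\in\{-1\}\cup\PSym$ ``by definition of $\red(\Sym)$''; conversely, that $p^{i_p}\MQ$ has the same $p$-symbol as $\SymP$, so $\gcd(\Sym)\MQ$ has symbol $\Sym$. You instead concentrate all the symbol bookkeeping in the single identification $\red(\Sym)=\gen(\MD/g)$ and then prove a clean, symbol-free statement: $\MN_1\in\gen(\MN_2)$ iff $g\MN_1\in\gen(g\MN_2)$, with the converse obtained by invoking equivalence at the modulus $gq$, cancelling $g$, and reducing the transformation matrix modulo $q$. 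That scaling lemma is correct and fully argued, and it is in effect a rigorous substitute for the step the paper only asserts; what it buys is an argument whose second half never touches local invariants.

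The caveat is the step you defer as ``routine but fiddly,'' namely $\SYM(\MD/g)=\red(\Sym)$ --- which is also exactly the step the paper dispatches with ``by definition,'' so you are no worse off than the paper, but it is not merely fiddly. At a relevant prime $p$, dividing by $g$ is dividing by $p^{i_p}$ times the unit $u_p=\prod_{p'\in\PSym\setminus\{p\}}p'^{\,i_{p'}}$, and scaling by a unit multiplies each Jordan sign $\epsilon_i$ by $\legendre{u_p}{p}^{n_i}$ (and alters the oddity data at $p=2$). Hence $\SYM_p(\MD/g)$ can differ from the entry $\SYM_p(\MD/p^{i_p})$ of (\ref{def:gcdSymbol}) as literally written: for $\Sym=\gen(\diag(3,15))$ one has $\gcd(\Sym)=3$, and dividing by $3$ flips both $5$-adic signs, since $\SYM_5(\diag(1,5))$ has $\epsilon_0=\epsilon_1=+1$ while $\SYM_5(\diag(3,15))$ has $\epsilon_0=\epsilon_1=-1$. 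So your parenthetical ``this lowers each scale by $i_p$'' is incomplete, and the identification only goes through if one reads (\ref{def:gcdSymbol}) as the symbol of the form scaled down by the full $\gcd(\Sym)$ --- which is how the paper's own proof implicitly uses it. Under that reading your argument is complete; under the literal reading the postponed check is the entire content of the lemma and cannot be waved through.
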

\begin{proof}
Let $p \in \PSym$ be a prime. If $\MS$ is a quadratic form such that
$\SYM(\MS)=\Sym$ then every entry of $\MS$ is divisible by $p^{i_p}$.
We define a quadratic form $\MQ$ as follows.
\[
\MQ=\frac{\MS}{\prod_{p\in\PSym}p^{i_p}}
\]
By definition of $\red(\Sym)$, $\SYM_p(\MQ)=\SYM_p(\red(\Sym))$
for all $p \in \{-1\} \cup \PSym$. Thus, $\MQ \in \red(\Sym)$.

Conversely, if $\MQ \in \red(\Sym)$ then $p^{i_p}\MQ$ has
the same $p$-symbol as $\SymP$. Thus, 
$\left(\prod_{p\in\PSym}p^{i_p}\right)\MQ$ has symbol $\Sym$.
\end{proof}

\subsection{Valid Symbol}

We now define the following
three conditions on the symbol $\Sym$.

\begin{description}
\item[Determinant Condition.] For every prime $p \in \{2\} \cup \bbP$
such that 
$\SymP=\left\{(p,i,\epsilon_i,n_i, *, *) \mid i \in \scalep(\Sym)\right\}$, 
where $*$ is empty for odd primes; 
\begin{equation}\label{Cond:det}
\legendre{\copp(\dSym)}{p} = \prod_{i\in \scalep(\Sym)} \epsilon_i
\end{equation}

\item[Oddity Condition.] The symbol $\Sym$ satisfies the oddity
equation i.e.,
\begin{equation}\label{Cond:Sum}
\sig(\Sym) + \sum_{p \in \bbP} \pexcess(\Sym) \equiv 
\odty(\Sym) \bmod 8
\end{equation}

\item[Jordan Condition.] Let $p$ be an odd prime and 
$\SymP=\left\{(p,i,\epsilon_i,n_i) \mid i \in \scalep(\Sym)\right\}$,
then for each Jordan constituent $(p^i,\epsilon_i,n_i)$, we must have
\begin{equation}\label{Cond:Odd}
\text{if $n_i=0$ or $p=-1$ then $\epsilon=+$}
\end{equation}
For $p=2$, let $\SYM_2(\MQ)= \left\{(2,i,\epsilon_i, n_i, \type_i, s_i) 
\mid i \in \scalet(\Sym)\right\}$, then $\Sym$ satisfies the following
conditions.
\begin{equation}\label{Cond:2}
\begin{array}{l}
\text{for $n_i=0$, $\type_i=$II and $\epsilon_i=+$}\\
\text{for $n_i=1$,} \left\{
	\begin{array}{l}
	\epsilon_i=+ \implies s_i \equiv \pm 1 \bmod 8\\
	\epsilon_i=- \implies s_i \equiv \pm 3 \bmod 8\\
	\end{array}\right. \\
\text{for $n_i=2$,$\type_i=$I} \left\{
	\begin{array}{l}
	\epsilon_i=+ \implies s_i \equiv 0 \text{ or }\pm 2 \bmod 8\\
	\epsilon_i=- \implies s_i \equiv 4 \text{ or }\pm 2 \bmod 8\\
	\end{array}\right. \\
\end{array}
\end{equation}
\end{description}

The set of conditions are taken
from \cite[page~382-383]{CS99}. A symbol $\Sym$ which satisfies these
three conditions will be called {\em valid}.

\section{$q$-equivalent forms, $q$ composite}

Given a valid symbol $\Sym^n$, it is useful to construct a quadratic
form $\MQ^n$ which is $q$-equivalent to $\Sym$ for a given positive
integer $q$ (see Step~3 of \textsc{GenSimple}). 

The following is a helper lemma which shows how to construct a 
quadratic form $\MQ$ such that $\MQ \eqp \Sym$.

\begin{lemma}\label{lem:LocalQF}
There exists a randomized algorithm that takes a symbol $\Sym^n$
of determinant $d$, and a prime $p$ as input; performs
$O(n+\log^3p)$ ring operations over $\bbZ/p^{\ordp(d)+k_p}\bbZ$; and outputs a
block diagonal quadratic form $\MQ^n$ such that 
$\MQ \eqp \Sym$.
\end{lemma}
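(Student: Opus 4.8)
The plan is to construct the form $\MQ$ one Jordan constituent at a time, using the $p$-symbol data $\SymP = \{(p,i,\epsilon_i,n_i,*,*) \mid i \in \scalep(\Sym)\}$ directly. For each scale $i$ occurring in $\scalep(\Sym)$ with dimension $n_i$ and sign $\epsilon_i$, I will build a $p^i$-scaled block $p^i \MD_i$ of dimension $n_i$ with $\legendre{\det(\MD_i)}{p}=\epsilon_i$ (and, for $p=2$, matching type $\type_i$ and oddity $\odty_i$), and then take the direct sum $\MQ = \bigoplus_i p^i\MD_i$. Since direct sums of block diagonal forms are block diagonal, and the $p$-symbol of a direct sum is the (multiset) union of the $p$-symbols of the summands, $\SYM_p(\MQ)$ will agree with $\SymP$ by construction; Theorem~\ref{thm:PSymbol} (for odd $p$) and the $2$-symbol discussion together with Theorem~\ref{thm:PStarEq} then give $\MQ \eqp \Sym$ — though one has to be slightly careful: $\MQ$ is built to realize the prescribed symbol, and since $\Sym$ is a symbol of some actual genus (it comes from a quadratic form, by Definition~\ref{def:Symbol}), a form with that $p$-symbol exists, so nothing in the bookkeeping can be inconsistent.

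First I would handle the odd-prime case. A diagonal block of dimension $n_i$ and sign $\epsilon_i$ is just $\diag(1,\dots,1,u)$ where $u=1$ if $\epsilon_i=+1$ and $u$ is a fixed quadratic non-residue mod $p$ if $\epsilon_i=-1$ (and when $n_i=1$, the single entry is $1$ or $u$ accordingly). Under GRH a quadratic non-residue mod $p$ can be found deterministically in $O(\log^3 p)$ ring operations as noted after Theorem~\ref{thm:ERH}; this is the only place the $\log^3 p$ term and the randomization enter. Scaling by $p^i$ and summing over the $O(n)$ values of $i$ costs $O(n)$ further ring operations, giving the claimed $O(n+\log^3 p)$ bound. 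The determinant condition \eqref{Cond:det} guarantees that the product of the $\epsilon_i$ over the chosen blocks matches $\legendre{\copp(\dSym)}{p}$, so the determinant of the assembled form is correct up to a square unit, which is all the $p$-symbol sees.

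For $p=2$ I would keep a small catalogue of canonical unimodular $1$- and $2$-dimensional blocks realizing each admissible $(\text{dimension},\epsilon,\type,\odty)$ tuple: for type~I, odd diagonal entries $\langle a\rangle$ with $a\in\{1,3,5,7\}$ realize the four one-dimensional oddities and the two signs; for type~II, the blocks $\MTP$ and $\MTM$ (defined in \eqref{not:kp}) give the two signs with oddity $0$ and even dimension; larger even-dimensional type~II constituents are direct sums of copies of these. Each $\MD_i$ is then a direct sum of at most $O(n_i)$ such catalogue pieces chosen to sum to the right dimension, sign (product of block signs), type, and oddity (sum of block traces mod~8); the Jordan conditions \eqref{Cond:2} on a valid symbol are exactly what make such a choice possible. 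The main obstacle is this $p=2$ assembly: one must verify that for every tuple permitted by \eqref{Cond:2} there really is a combination of catalogue blocks with the prescribed invariants, and in particular that the oddity can be hit once the sign, type and dimension are fixed — this is a finite case check (using that adding a $\langle a\rangle$ shifts the oddity by $a$ and flips the sign by $\legendre{a}{2}$, so for type~I one has enough freedom, while for type~II the oddity is forced to $0$ and \eqref{Cond:2} for $n_i=0$ already imposes that). Once that compatibility is established, the direct sum over all scales $i$ is block diagonal, has the prescribed $2$-symbol, hence is $2^*$-equivalent to $\Sym$ by Theorem~\ref{thm:PStarEq}, and the whole construction again uses $O(n)$ ring operations over $\bbZ/2^{\ordt(d)+3}\bbZ$.
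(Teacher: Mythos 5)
Your proposal follows essentially the same route as the paper: for odd $p$ a diagonal form built from $1$'s and a single quadratic non-residue per negative-sign constituent (with the non-residue search accounting for the $O(\log^3 p)$ term), and for $p=2$ a direct sum of odd unit entries and $\MTP/\MTM$ blocks matching each constituent's dimension, sign, type and oddity. The only substantive difference is that the "finite case check" you defer for $p=2$ is exactly what the paper carries out explicitly via its Tables for dimensions $1$–$3$ (padding with $\MI^{n_i-3}$ for larger blocks), and the paper additionally spells out the easy case of an odd prime not dividing $d$ via Lemma \ref{lem:RelevantPrimes}; neither point changes the argument.
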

\begin{proof}
There are three different constructions: for the prime $2$, for relevant
odd prime and for the odd prime that does not divide $\dSym$.
\begin{enumerate}[(1.)]
\item The first and simplest construction deals with odd primes $p$
that do not divide $\dSym$. By Lemma \ref{lem:RelevantPrimes}, 
$\SymP\eqp \dSym\oplus \MI^{n-1}$.
Hence, we set $\MQ=\MI^{n-1}\oplus\dSym$.

\item The second type of primes are odd primes that divide $\dSym$.
Let $\SymP=\{(p,i,\epsilon_i,n_i)\mid i \in \scalep(\Sym)\}$.
We use rejection sampling to find a quadratic non-residue modulo $p$,
say $\tau_p$. Note that generating a random non-zero element from 
$\zpz$ yields a quadratic non-residue with probability $1/2$. 
The matrix $\MQ$, in this case, is generated as follows.
\begin{align}\label{EQ:PDividesDetSym}
\MQ = \underset{i \in \scalep(\Sym)}{\oplus}p^i\MD_i \qquad
\MD_i = \left\{\begin{array}{ll}
\MI^{n_i} & \text{if $\epsilon_i=1$} \\
\MI^{n_i-1}\oplus \tau_p & \text{otherwise}
\end{array}\right.
\end{align}
\item The only remaining case is of the prime~2. 
Let $\SymT=
\{(2,i,\epsilon_i,n_i, \type_i, \odty_i)\mid i \in \scalet(\Sym)\}$.
Then, the
quadratic form $\MQ$ is defined as
$\MQ=\oplus_{i\in\scalet(\Sym)}2^i\MD_i^{n_i}$, where $\MD_i$
is defined as follows.
\begin{align}\label{EQ:MQ2}
\MD_i^{n_i} &= \left\{\begin{array}{ll}
\underbrace{\MTP\oplus\cdots\oplus\MTP}_{n_i/2-1}\oplus\MTM
& \text{if $\epsilon_i=-1, \odty_i=\text{II}$}\\
\underbrace{\MTP\oplus\cdots\oplus\MTP}_{n_i/2}
& \text{if $\epsilon_i=1, \odty_i=\text{II}$}\\
\MI^{n_i-3}\oplus\MD^{n=3} & \text{$\odty_i\in\{0,\cdots,7\}, n_i>3$}
\end{array}\right.
\end{align}
If $n_i=1$ then $\MD_i$
has to be equal to $\odty_i$. For $n_i=2$, we exhaustively
list all possible Type I forms in Table \ref{tab:dim2}. We observe
that two situations are not possible: $\epsilon=+, \odty=4$
and $\epsilon=-, \odty=0$. For $n_i=3$, we list forms
for all possible choices of $\epsilon$ and $\odty$.

\begin{table}
\caption{Exhaustive List of Type I forms for $n=2$}
\centering
\begin{tabular}{| l | l | l |}
\hline
Form  & $\epsilon$ & $\odty$ \\
\hline
$1\oplus 7, 3\oplus 5$ & $+$ & $0$\\
$1\oplus 1, 5\oplus 5$ &  $+$ & $2$\\
$3\oplus 3, 7\oplus 7$ &  $+$ & $6$\\
\cline{1-3}
$3\oplus 7$ 			&  $-$ & $2$\\
$1\oplus 3, 5\oplus 7$ & $-$ & $4$\\
$1\oplus 5$ 			& $-$ & $6$\\
\hline
\end{tabular}
\label{tab:dim2}
\end{table}

The $\MD$ in Equation \ref{EQ:MQ2} is defined as follows. Suppose
we are looking for a type I form in dimension $n_i>3$ with 
$\odty_i \in \{0,\cdots,7\}$ and Legendre-Jacobi symbol $\epsilon_i$. 
In this case, we choose $\MD$ as
the form in Table \ref{tab:dim3} with $\odty=\odty_i-(n_i-3) \bmod 8$
and $\epsilon=\epsilon_i$. 

\begin{table}
\caption{List of Type I forms for $n=3$}
\centering
\begin{tabular}{| l | l | l |}
\hline
$\epsilon$ & $\odty$ & Form\\
\hline
$+$ & $1$ & $1\oplus1\oplus7$\\
& $3$ & $1\oplus1\oplus1$\\
& $5$ & $7\oplus7\oplus7$\\
& $7$ & $1\oplus7\oplus7$\\
\cline{1-3}
$-$ & $1$ &$3\oplus3\oplus3$\\
& $3$ & $3\oplus3\oplus5$\\
& $5$ & $1\oplus1\oplus3$\\
& $7$ & $1\oplus1\oplus5$\\
\hline
\end{tabular}
\label{tab:dim3}
\end{table}

By construction $\SYM_2(\MQ)=\SymT$.
\end{enumerate}

The algorithm needs to generate a quadratic non-residue modulo
$p$ and hence performs $O(n+\log^3p)$ ring operations.
\end{proof}

\begin{theorem}\label{thm:LocalQF}
Let $\Sym^n$ be a symbol and $q$ be a composite integer. Then,
there is a randomized $\poly(n,\log q)$ algorithm that takes
$(\Sym,q)$, along with a factorization of $q$ as input; and
produces a quadratic form $\MQ$ such that $\MQ \eqq \Sym$.
\end{theorem}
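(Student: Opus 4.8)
The idea is to reduce the composite case to the prime-power case already handled by Lemma~\ref{lem:LocalQF}, via the Chinese Remainder Theorem. First I would factor $q = \prod_{j} p_j^{e_j}$ (the factorization is given as part of the input). For each prime $p_j$ dividing $q$, let $k_j = \max\{e_j,\ \ordp[p_j](d) + k_{p_j}\}$ and apply Lemma~\ref{lem:LocalQF} to obtain a block diagonal quadratic form $\MQ_j^n$ with $\MQ_j \eqx{p_j} \Sym$; in particular $\MQ_j \eqx{p_j^{k_j}} \Sym$, meaning $\MQ_j \equiv \MU_j' \MS \MU_j \bmod p_j^{k_j}$ for any fixed reference form $\MS$ in the genus (or, more precisely, $\MQ_j$ is $p_j^{k_j}$-equivalent to every form of the genus, which is the property we actually need). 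Then I would use CRT to assemble a single matrix $\MQ$ with $\MQ \equiv \MQ_j \bmod p_j^{e_j}$ for every $j$ and, say, $\MQ \equiv 0$ in the remaining coordinates, and argue $\MQ \eqq \Sym$.

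\textbf{Main steps, in order.} (1) Factor $q$ and, for each prime power $p_j^{e_j} \| q$, run Lemma~\ref{lem:LocalQF} with exponent $k_j$ as above to get $\MQ_j$ together with a transformation $\MU_j \in \gln(\zRz{p_j^{k_j}})$ witnessing $p_j^{k_j}$-equivalence to a common reference form. (2) Lift each $\MU_j$ to an integer matrix and use CRT entrywise to build $\MU \in \gln(\zqz)$ with $\MU \equiv \MU_j \bmod p_j^{e_j}$; this is possible because $\det(\MU) \equiv \det(\MU_j) \bmod p_j^{e_j}$ is a unit mod each $p_j^{e_j}$, hence a unit mod $q$, so Fact~\ref{fact:gln} applies. (3) Set $\MQ := \MU' \MS \MU \bmod q$ where $\MS$ is any locally-consistent reference form; alternatively, CRT the $\MQ_j$ directly, since the $p_j$-symbol of $\MQ_j$ is $\Sym_{p_j}$ and the $p_j$-symbol only depends on $\MQ_j \bmod p_j^{e_j}$ once $e_j \geq \ordp[p_j](d) + k_{p_j}$ (Theorem~\ref{thm:PStarEq}, Theorem~\ref{thm:PSymbol}). (4) Verify $\MQ \eqq \Sym$: by CRT, $\gln(\zqz) \cong \prod_j \gln(\zRz{p_j^{e_j}})$, and $\MQ$ reduces mod each $p_j^{e_j}$ to something equivalent to $\Sym$ over $\zRz{p_j^{e_j}}$, so the product transformation witnesses $q$-equivalence.

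\textbf{Runtime.} Each invocation of Lemma~\ref{lem:LocalQF} costs $O(n + \log^3 p_j)$ ring operations over $\zRz{p_j^{k_j}}$; there are at most $O(\log q)$ primes; the CRT reconstruction is $\poly(n,\log q)$; the ring elements stay bounded by $2^{\poly(n,\log q)}$. So the whole procedure is $\poly(n,\log q)$, as claimed. The algorithm is randomized (Las Vegas) exactly because Lemma~\ref{lem:LocalQF} is, for the odd relevant primes where a quadratic non-residue must be found; failure probability is a constant per prime, and can be driven down by the usual independent repetition over the $O(\log q)$ primes, preserving a constant overall success probability after $O(\log\log q)$-fold boosting of each, or simply by accepting a constant that degrades mildly and reboosting the whole algorithm.

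\textbf{Expected main obstacle.} The one genuinely delicate point is making sure the pieces glue to a \emph{single} form of the genus rather than merely pairwise-compatible fragments: one must pick the exponents $k_j$ large enough (namely $\geq \ordp[p_j](d)+k_{p_j}$, not just $\geq e_j$) so that the $p_j$-symbol of $\MQ_j$ genuinely equals $\Sym_{p_j}$ and is not disturbed by whatever $\MQ$ looks like modulo higher powers of $p_j$; this is where Theorem~\ref{thm:PStarEq} and the stability of the $p$-symbol under the choice of block-diagonal representative are used. A minor technical nuisance is handling the determinant/unimodularity bookkeeping when lifting the $\MU_j$ across CRT, but Fact~\ref{fact:gln} makes this routine.
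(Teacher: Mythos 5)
Your proposal is essentially the paper's proof: for each prime $p$ dividing $q$, invoke Lemma~\ref{lem:LocalQF} to obtain $\MQ_p$ with $\MQ_p \eqp \Sym$, then Chinese-Remainder the $\MQ_p$ entrywise modulo $p^{\ordp(q)}$ to assemble $\MQ$, with $q$-equivalence following from $\gln(\zqz)\cong\prod_{p\mid q}\gln\bigl(\zRz{p^{\ordp(q)}}\bigr)$. Two small points: the detour through transformations $\MU_j$ and a ``common reference form $\MS$'' is neither necessary nor supported by Lemma~\ref{lem:LocalQF} (which outputs only the block diagonal form, not a witnessing transformation, and there is no given reference form to transform) --- your alternative in step~(3) of CRT-ing the $\MQ_j$ directly is what works and is exactly what the paper does; and the concern about choosing the exponents $k_j$ large enough is automatically resolved because Lemma~\ref{lem:LocalQF} delivers $p^*$-equivalence, that is, equivalence modulo \emph{all} powers of $p$, not merely modulo $p^{k_j}$.
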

\begin{proof}
For each $p\in \PSym$, we use Lemma \ref{lem:LocalQF} to generate
$\MQ_p$ such that $\MQ_p \eqp \Sym$. We now solve the following
system of congruences using the Chinese Remainder Theorem.
\[
\MQ \equiv \MQ_p \pmod{p^{\ordp(q)}}\qquad p \in \bbP_q
\]
By construction, $\MQ \eqq \Sym$. The algorithm runs in time
$\poly(n, \log q)$
\end{proof}

\section{Existence of a Quadratic Form with a given Symbol}\label{sec:QFGen}

In this section, we answer the following question.
Given a symbol $\Sym^n$, how does one verify that the genus 
corresponding to $\Sym$
is non-empty i.e., there exists a quadratic form $\MQ^n$ such that
$\SYM(\MQ)=\Sym$.

\begin{theorem}\label{thm:QFGen}
Let $\Sym^n$ be a valid symbol (i.e., satisfies the determinant, 
oddity and the Jordan conditions); then
there exists an integral quadratic form $\MQ$ such that $\SYM(\MQ)=\Sym$.
\end{theorem}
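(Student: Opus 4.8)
The plan is to prove the theorem by induction on the dimension $n$, following the skeleton of the \textsc{GenSimple} algorithm and showing that each step can be carried out. The base case $n=1$ is immediate: the symbol determines $\det(\Sym)$, and $\MQ = (\dSym)$ is the desired form. For the inductive step, assume the result holds for dimension $n-1$, and let $\Sym^n$ be a valid symbol. First I would invoke the classical fact (which follows from the theory of representations in a genus, e.g.\ via Theorem~\ref{thm:Siegel13} together with the local analysis; for $n \geq 3$ one can take $t = \wp s$ with $\wp$ a suitable prime and $s \mid \dSym$, and for $n = 2$ a separate construction is needed) that there exists an integer $t$ with a primitive representation in the genus $\Sym$ in the sense of Definition~\ref{def:PrimRepInGenus}; moreover $\sig(t) \cdot (\sig(\Sym) - \sig(t))$ must be a legitimate signature for an $(n-1)$-ary form, i.e.\ $|\sig(\Sym) - \sig(t)| \leq n-1$, which one arranges by choosing the sign of $t$ appropriately (here one uses that $\Sym$ is non-empty over $\bbR$, so $|\sig(\Sym)| \leq n$). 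Set $q = \overline{t^{n-1}\dSym}$, and use Theorem~\ref{thm:LocalQF} to build $\MS$ with $\MS \eqq \Sym$; use Theorem~\ref{thm:PolyRep} (primitive representation) to find a primitive $\Vx$ with $\Vx'\MS\Vx \equiv t \bmod q$, which exists because $t$ is primitively represented in the genus; extend via Lemma~\ref{lem:ExtendPrimitive} to $[\Vx,\MA] \in \gln(\zqz)$, and compute $\Vd, \MH$ as in \eqref{ALGO:dH}.

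The crucial step is to verify that the symbol $\tSym^{n-1}$ defined in \eqref{def:tSym} is \emph{valid} (satisfies the determinant, oddity, and Jordan conditions) so that the inductive hypothesis applies and produces $\tMH \in \tSym$. For the determinant and Jordan conditions this is a local check at each relevant prime $p \mid q$: since $[\Vx,\MA] \in \gln(\zpkz)$, the form $\MH$ is (up to the factor $t$ and the rank-one correction $\Vd'\Vd$) the ``complement'' of $t$ inside $\MS$, and a Witt-type cancellation argument over $\zpkz$ shows $\MH \eqp t^{-1}$ times the orthogonal complement of $\langle t\rangle$ in $\Sym_p$ — this is where one pins down $\ordp(\det(\MH))$ and the signs, and where the choice $k \geq \ordp(\det) + k_p$ built into $q = \overline{t^{n-1}\dSym}$ is essential so that the $p$-symbol of $\MH$ is well-defined. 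For the oddity condition one uses Theorem~\ref{thm:oddity} (the oddity formula) for $\Sym$ itself together with the additivity of $p$-excess under orthogonal sum: $\pexcess(\langle t\rangle) + \pexcess(\tSym_p) = \pexcess(\Sym_p)$ at each prime, and summing over all $p \in \{-1,2\}\cup\bbP$ the left side is $\equiv 0$ because both $\langle t\rangle$ (a genuine $1$-ary form) and $\Sym$ satisfy the oddity formula. I expect this verification — especially the prime-$2$ bookkeeping of oddities and types in \eqref{def:2invariants} — to be the main obstacle, exactly the point flagged in the paper as where Hartung's treatment is incorrect.

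Once $\tMH \in \tSym^{n-1}$ is obtained from the inductive hypothesis, I would find $\tMU \in \gl_{n-1}(\zqz)$ with $\tMH \equiv \tMU'\MH\tMU \bmod q$ (this uses $\MH \eqq \tMH$, which follows from $\SYM_p(\MH) = \tSym_p$ for $p \mid q$, via Theorem~\ref{thm:Symbol} and the definition of $q$), and form
\[
\MQ = \begin{pmatrix} t & \Vd\tMU \\ (\Vd\tMU)' & \dfrac{\tMH + \tMU'\Vd'\Vd\tMU}{t} \end{pmatrix}.
\]
It remains to check two things. First, integrality: the lower-right block is integral because $q = \overline{t^{n-1}\dSym}$ was chosen divisible by a high enough power of every prime dividing $t$, so that $\tMH + \tMU'\Vd'\Vd\tMU \equiv 0 \bmod t$ — this congruence holds already for $\MH + \Vd'\Vd = t\,\MA'\MS\MA \equiv 0 \bmod t\cdot(\text{stuff})$ modulo $q$, and it is preserved under conjugation by $\tMU$. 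Second, $\SYM(\MQ) = \Sym$: over $\bbR$, $\MQ$ has signature $\sig(t) + \sig(\tMH) = \sig(t) + \sig(t)(\sig(\Sym)-\sig(t)) = \sig(\Sym)$ when $\sig(t) = \pm 1$; and for each relevant prime $p$, the transformation exhibiting $\MQ \eqq \MS$ is built from $[\Vx,\MA]$ and $\tMU$, so $\MQ \eqp \Sym$ at every $p \in \PSym$, whence by Theorem~\ref{thm:Symbol} (and noting $\det(\MQ) = \dSym$, which one computes directly from the block structure using $\det(\tMH) = t^{n-2}\dSym$) we conclude $\MQ \in \Sym$, i.e.\ $\SYM(\MQ) = \Sym$.
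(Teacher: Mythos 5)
Your outline retraces the paper's own argument (induction via \textsc{GenSimple}, validity of $\tSym$, recursion, then patching with $\tMU$), but the step you yourself flag as the crux---verifying the oddity condition for $\tSym$---is argued with an identity that is false. You assert the additivity $\pexcess(\langle t\rangle)+\pexcess(\tSym_p)=\pexcess(\Sym_p)$ at each prime. However $\tSym_p=\SYM_p(\MH)$ with $\MH=t\MA'\MS\MA-\Vd'\Vd$, and the rational splitting the construction actually provides (Equation \ref{ALG:M}) is $\Sym\sim \langle t\rangle\oplus \MH/t$ over $\bbQ$ locally: $\MH$ is $t$ \emph{times} the orthogonal complement of $t$, not the complement itself. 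The correct identity is therefore $\pexcess(\Sym)=\pexcess(t)+\pexcess(\MH/t)$, and $\pexcess(\MH/t)\neq\pexcess(\MH)$ in general. Concretely, take a prime $p\equiv 3\bmod 4$ with $\Sym\eqp 1\oplus p$ and $t=p$; with $\MS=\diag(1,p)$, $\Vx=(0,1)'$, $\MA=(1,0)'$ one gets $\Vd=0$ and $\MH=(p)$, so $\pexcess(t)+\pexcess(\MH)=2(p-1)\not\equiv p-1=\pexcess(\Sym)\pmod 8$, whereas $\pexcess(t)+\pexcess(\MH/t)=p-1$ as it must be. As stated, your summation argument proves the oddity formula for the $t$-scaled genus, not for $\tSym$; bridging that difference is exactly what the paper's chain of congruences in Theorem \ref{thm:tSymExists} does, in combination with the twisted signature entry $\sig(t)(\sig(\Sym)-\sig(t))$ in Equation \ref{def:tSym} and the appeal to Theorem~5 of \cite{CS99} to convert the scaled statement back into the oddity condition for $\tSym$.

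The same dropped $t$-twist reappears in your real-signature check: since $\MQ\eqR t\oplus\tMH/t$, the correct computation is $\sig(\MQ)=\sig(t)+\sig(t)\sig(\tMH)=\sig(t)+(\sig(\Sym)-\sig(t))=\sig(\Sym)$; your chain $\sig(t)+\sig(\tMH)=\sig(\Sym)$ is valid only when $t>0$ and fails precisely in the negative-definite situations where the construction chooses $t<0$. The remainder of your plan (existence of a primitively represented $t$, integrality of the lower-right block via $\tMH+\tMU'\Vd'\Vd\tMU\equiv\tMU'(t\MA'\MS\MA)\tMU\equiv 0\bmod t$, the determinant count $\det(\tSym)=t^{n-2}\dSym$, and the final local-equivalence check through $[\Vx,\MA]$ and $1\oplus\tMU$) coincides with the paper's proof; but because the validity of $\tSym$ is the load-bearing step of the induction, the argument as written does not go through without restoring the $t$-scaling bookkeeping.
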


This is a well known theorem \cite[Theorem~11, ][page~383]{CS99}.
A proof can also be found in \cite{OMeara73}. Our work, not only
shows the existence, but also generates a form in polynomial time.
In this section, we show that the algorithm \textsc{GenSimple},
if successful, generates a quadratic form from the correct genus.

\begin{proof}(Theorem \ref{thm:QFGen})
Let $\Sym^n$ be a valid input symbol. Run the \textsc{GenSimple} 
algorithm on $\Sym$. We show that {\em if} the algorithm outputs a
quadratic form then it must be from the genus $\Sym$.

We prove several claims regarding
the matrices constructed during the algorithm \textsc{GenSimple}.

\begin{claim}\label{CALG:dtSym}
The determinant of the genus $\tSym$ is $t^{n-2}\dSym$. Also, 
$t^{n-2}\dSym$ divides $\det(\MH)$.
\end{claim}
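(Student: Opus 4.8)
The plan is to obtain $\det(\MH)$ from a single determinant identity and then read $\det(\tSym)$ off the definition of $\tSym$ prime by prime. First I would record that, by Steps~4 and~6 of \textsc{GenSimple},
\[
[\Vx,\MA]'\MS[\Vx,\MA] \;\equiv\; \begin{pmatrix} t & \Vd \\ \Vd' & \MA'\MS\MA\end{pmatrix} \pmod{q},
\]
so taking determinants over $\zqz$ and using Fact~\ref{fact:gln} (which makes $v:=\det[\Vx,\MA]$ a unit mod $q$, hence coprime to every $p\mid q$) gives $\det\!\begin{pmatrix} t & \Vd \\ \Vd' & \MA'\MS\MA\end{pmatrix}\equiv v^2\det(\MS)\pmod q$. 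On the other hand, the Schur-complement identity $t^{n-2}\det\!\begin{pmatrix} t & \Vd \\ \Vd' & \MN\end{pmatrix}=\det(t\MN-\Vd'\Vd)$ is a polynomial identity in the entries and hence valid over the ring $\zqz$; with $\MN=\MA'\MS\MA$ it yields $t^{n-2}\det\!\begin{pmatrix} t & \Vd \\ \Vd' & \MA'\MS\MA\end{pmatrix}\equiv\det(\MH)\pmod q$. Combining, $\det(\MH)\equiv t^{n-2}v^2\det(\MS)\pmod q$.

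Next I would pin down $p$-adic orders. For $p\mid q$ the local form satisfies $\ordp(\det\MS)=\ordp(\dSym)$ (the defining property of $\MS\eqq\Sym$, immediate from the explicit constructions in Lemma~\ref{lem:LocalQF}), so the right-hand side above has $p$-order exactly $\ordp(t^{n-2}\dSym)$; and since $q=\overline{t^{n-1}\dSym}$ we have $\ordp(q)=\ordp(t^{n-1}\dSym)+k_p>\ordp(t^{n-2}\dSym)$ because $k_p\ge1$. Hence the congruence modulo $q$ forces $\ordp(\det\MH)=\ordp(t^{n-2}\dSym)$ for every $p\mid q$. For $p\nmid q$ one has $8\mid q$, so $p$ is odd and $p\nmid t^{n-1}\dSym$, whence $\ordp(t^{n-2}\dSym)=0\le\ordp(\det\MH)$ trivially. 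Together these give $t^{n-2}\dSym\mid\det(\MH)$ — the second assertion — and in particular $\det(\MH)\ne0$, so $\MH$ is a genuine nondegenerate quadratic form.

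For the first assertion I would use that $\det(\tSym)$ is recovered from the symbol as $\varepsilon\prod_p p^{\ordp(\det\tSym)}$, where $\ordp(\det\tSym)$ is read off $\tSym_p$ and $\varepsilon\in\{\pm1\}$ is fixed by $\sig(\tSym)=\tSym_{-1}$. If $p\mid q$ then $\tSym_p=\SYM_p(\MH)$; since $\ordp(q)\ge\ordp(\det\MH)+k_p$, the residue $\MH\bmod q$ carries enough $p$-adic precision for $\SYM_p(\MH)$ to be the correct $p$-symbol, so it encodes $\ordp(\det\MH)=\ordp(t^{n-2}\dSym)$. If $p\nmid q$ then $\tSym_p=\SYM_p(\MI^{n-2}\oplus t^{n-2}\dSym)$, encoding $\ordp(t^{n-2}\dSym)=0$. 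So $\ordp(\det\tSym)=\ordp(t^{n-2}\dSym)$ for all $p$, i.e.\ $|\det\tSym|=|t^{n-2}\dSym|$. For the sign, $\sig(\tSym)=\sig(t)\big(\sig(\Sym)-\sig(t)\big)=\sig(t)\sig(\Sym)-1$, so the sign of $\det(\tSym)$ is $(-1)^{(n-1-\sig(\tSym))/2}=(-1)^{(n-\sig(t)\sig(\Sym))/2}$, and a short parity check using $\sig(\Sym)\equiv n\pmod2$ shows this equals $\sig(t)^{n-2}(-1)^{(n-\sig(\Sym))/2}=\sgn(t^{n-2}\dSym)$. Hence $\det(\tSym)=t^{n-2}\dSym$.

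The main obstacle will not be any single computation but the precision bookkeeping around $q$: I must check simultaneously that reducing modulo $q$ neither kills $\det(\MH)$ nor corrupts the $p$-symbols $\SYM_p(\MH)$, and it is exactly the factor $t^{n-1}$ (rather than $t^{n-2}$) in the definition of $q$ that supplies the extra $\ordp(t)$ of slack needed for this. A secondary point of care is justifying the Schur-complement identity over $\zqz$ (where $t$ need not be invertible, so one argues via the polynomial identity over $\bbZ$) and the sign/parity matching in the last step; both are routine once set up correctly.
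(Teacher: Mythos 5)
Your proof is correct and takes essentially the same route as the paper: both arguments rest on the Schur-complement determinant identity $t^{n-2}\det\!\begin{pmatrix}t&\Vd\\\Vd'&\MN\end{pmatrix}=\det(t\MN-\Vd'\Vd)$ (which the paper packages as the factorization of the auxiliary rational matrix $\MM$), followed by a $p$-adic valuation comparison against $\ordp(q)$ for each $p\mid q$, using that $\det[\Vx,\MA]$ is a unit and $\ordp(\det\MS)=\ordp(\dSym)$. You additionally make explicit two points the paper leaves implicit---the primes $p\nmid q$ and the sign of $\det(\tSym)$ from $\sig(\tSym)$---which is a welcome clarification but not a different method.
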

\begin{proof}
Recall that $\MS$ is a
quadratic form which is equivalent to $\Sym$ over $\zqz$. 
Let $\Vd$ be the row vector and $\MH$ be the matrix defined in Equation
\ref{ALGO:dH}. Note that all entries of these matrices i.e., $\Vd, \MH$ 
are integers. Define the matrix 
$\MM=\begin{pmatrix}t & \Vd \\ \Vd' & (\MH+\Vd'\Vd)/t\end{pmatrix}$.
By definition, $\MH+\Vd'\Vd \equiv t\MA'\MS\MA \bmod q$. The integer
$t$ divides $q$. But then, each entry in the matrix $\MH+\Vd'\Vd$
is divisible by $t$ i.e., $(\MH+\Vd'\Vd)/t$ is a matrix over integers.
Thus, $\MM$ is a matrix over integers and the following equality
implies that $\det(\MM)=\det(\MH)/t^{n-2}$.
\begin{align}\label{ALG:M}
\MM=\begin{pmatrix}t & \Vd \\ \Vd' & (\MH+\Vd'\Vd)/t\end{pmatrix}=
\begin{pmatrix}1 & \Vd/t \\ 0 & \MI\end{pmatrix}'
\begin{pmatrix}t & 0 \\ 0 & \MH/t\end{pmatrix}
\begin{pmatrix}1 & \Vd/t \\ 0 & \MI\end{pmatrix}
\end{align}
From Equation \ref{ALGO:dH}, it follows that 
$(\MH+\Vd'\Vd)/t \equiv \MA'\MS\MA \bmod{q/t}$. By definition,
$\MM \equiv (\Vx~\MA)'\MS(\Vx~\MA) \bmod{q/t}$. Hence,
\begin{align}\label{EQALG:dS}
\det((\Vx~\MA)'\MS(\Vx~\MA)) \equiv  \det(\MM) \bmod{q/t} 
= \frac{\det(\MH)}{t^{n-2}} \bmod{q/t}\nonumber\\
t^{n-2}\det(\MS)\det(\Vx~\MA)^2 \equiv \det(\MH) \bmod{qt^{n-3}}\;
\end{align}
Let $p$ be a prime divisor of $q$. Recall $(\Vx~\MA) \in \gln(\zqz)$. 
But then, $p$ does not divide $\det(\Vx~\MA)$. From the fact that
$\MS$ is equivalent to $\Sym$ over $\zqz$, it follows that
$\ordp(\det(\MS))=\ordp(\dSym)$. By definition of $q$,
it follows that $\ordp(q)>\ordp(\dSym t^{n-1})$. But then,
\begin{align}\label{EQALG:MOD}
\ordp(t^{n-2}&\det(\MS)\det(\Vx~\MA)^2)=\ordp(t^{n-2})+\ordp(\det(\MS))\nonumber\\
&=\ordp(t^{n-2})+\ordp(\dSym) < \ordp(q)
\end{align}
From Equation \ref{EQALG:dS} and Equation \ref{EQALG:MOD}, we
conclude that for all primes $p$ that divide $q$, 
$\ordp(\det(\MH))=\ordp(t^{n-2}\det(\MS))=\ordp(t^{n-2}\dSym)$.

It also follows from the definition of
the symbol that $\ordp(\det(\tSym))=\ordp(\det(\MH))$ for 
all relevant primes of symbol $\tSym$. But we showed that 
$\ordp(\det(\MH))=\ordp(t^{n-2}\dSym)$ for every relevant
prime $p$ of $\tSym$. 
Thus, $\det(\tSym)=t^{n-2}\dSym$.
\end{proof}

We now show that $\tSym$ defined in Equation \ref{def:tSym}
is a valid symbol.

\begin{theorem}\label{thm:Existence}\label{thm:tSymExists}
Let $\Sym^{n>1}$ be a valid symbol,
$t$ be a primitively representable integer in $\Sym$, 
$q=\overline{t^{n-1}\dSym}$,
$\MS^n$ be an integral quadratic form with $\MS \eqq \Sym$, 
$\Vx \in (\zqz)^n$ be a primitive vector with $\Vx'\MS\Vx\equiv t \bmod{q}$,
$\MA$ be such that $[\Vx, \MA] \in \gln(\zqz)$, $\Vd:=\Vx'\MS\MA \bmod q$,
$\MH^{n-1}:=(t\MA'\MS\MA - \Vd'\Vd) \bmod q$ and 
$\tSym$ be as defined in Equation \ref{def:tSym}; then $\tSym$
is a valid symbol.
\end{theorem}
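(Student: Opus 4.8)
The plan is to verify the three conditions (determinant, oddity, Jordan) for the symbol $\tSym^{n-1}$ as defined in Equation~\ref{def:tSym}. I will treat the three blocks of the definition — the real place $p=-1$, the primes dividing $q$, and the remaining primes — somewhat separately, since for the primes not dividing $q$ the $p$-symbol is simply that of $\MI^{n-2}\oplus t^{n-2}\dSym$, which is a genuine quadratic form, so the determinant and Jordan conditions hold automatically for those primes. The real work is therefore concentrated at the relevant primes of $\tSym$, which (by Claim~\ref{CALG:dtSym}, $\det(\tSym)=t^{n-2}\dSym$) are exactly the primes dividing $2t^{n-1}\dSym$, hence the primes dividing $q$.

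First I would handle the \textbf{determinant condition}. For a prime $p \mid q$ we have $\tSym_p=\SYM_p(\MH)$, which is a bona fide $p$-symbol of the integral matrix $\MH$, so the product-of-signs identity \eqref{Cond:det} holds with $\copp(\det(\MH))$ on the left. I then need $\copp(\det(\MH))$ to match $\copp(\det(\tSym))=\copp(t^{n-2}\dSym)$ up to squares mod $p$; but Claim~\ref{CALG:dtSym}'s proof already computed $t^{n-2}\det(\MS)\det(\Vx,\MA)^2\equiv\det(\MH)\bmod p^{\ordp(q)}$ with $\ordp(q)$ large, and since $\det(\MS)$ and $\dSym$ agree in $p$-symbol (both come from $\Sym$), the $p$-signs match. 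For primes $p\nmid q$ there is nothing to check since $\tSym_p$ is the symbol of an explicit form. So the determinant condition reduces to bookkeeping with the congruences already established.

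Next, the \textbf{Jordan condition}: for $p\mid q$ this is inherited from $\MH$ being an honest integral form (its $p$-symbol automatically satisfies \eqref{Cond:Odd} and, for $p=2$, \eqref{Cond:2}); for $p\nmid q$ it is inherited from $\MI^{n-2}\oplus t^{n-2}\dSym$; and for $p=-1$ we just need $\epsilon=+$, which is vacuous there. The one genuinely delicate point — and the step I expect to be the main obstacle — is the \textbf{oddity condition} \eqref{Cond:Sum}, because it is a global constraint linking $\sig(\tSym)$, the $p$-excesses at odd primes, and $\odty(\tSym)$ across \emph{all} primes simultaneously, not just those dividing $q$. The strategy is to start from the oddity formula for $\Sym$ (valid by hypothesis), subtract the contribution of the integer $t$ viewed as a one-dimensional form, and show this equals the oddity formula for $\tSym$. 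Concretely, $\SYM_{-1}(\tSym)=\sig(t)(\sig(\Sym)-\sig(t))$ is designed so that $\sig(\tSym)=\sig(\Sym)-\sig(t)$ (using $\sig(t)^2=1$), and for each odd prime one must check that $\pexcess(\tSym)$ at that prime equals $\pexcess(\Sym)-\pexcess(\langle t\rangle)$ modulo $8$; at primes not dividing $2t\dSym$ both excesses vanish, at primes dividing $q$ one uses $\MH\equiv t\MA'\MS\MA-\Vd'\Vd$ together with the block decomposition from \eqref{ALG:M} showing $\langle t\rangle\oplus\MH/t$ is rationally equivalent to $(\Vx,\MA)'\MS(\Vx,\MA)\sim\MS$, so signatures and $p$-signatures of $\MS$ split as $\langle t\rangle$ plus those of $\MH/t$. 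Summing the per-prime identities and invoking Theorem~\ref{thm:oddity} for $\MS$ (equivalently for $\Sym$, since $\MS$ is locally equivalent everywhere relevant) then yields \eqref{Cond:Sum} for $\tSym$. The subtlety to watch is that the splitting $\MS\sim\langle t\rangle\oplus(\MH/t)$ is only a $\bbQ$-equivalence (and a $p$-adic one at relevant primes), so one must be careful that $p$-signatures — which are $\bbQ$-equivalence invariants by Theorem~\ref{thm:diagonal} and the remark following it — transfer correctly, and that the $4m$ antisquare corrections in \eqref{def:PSignature} behave additively under direct sum, which they do.
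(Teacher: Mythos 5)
Your overall strategy — split $\MS$ rationally and $p$-adically as $\langle t\rangle\oplus\MH/t$, use additivity of $p$-excesses under direct sum, and subtract the contribution of $t$ from the oddity relation of $\Sym$ — is exactly the route the paper takes, and your treatment of the determinant and Jordan conditions is sound (your determinant argument, which goes directly from the congruences in Claim~\ref{CALG:dtSym} rather than through an auxiliary rational form $\MX$, is in fact slightly cleaner than the paper's).

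The oddity step as written, however, has a genuine gap: you conflate the invariants of $\MH$ with those of $\MH/t$. The splitting from~\eqref{ALG:M} gives $\pexcess(\Sym)=\pexcess(\langle t\rangle)+\pexcess(\MH/t)$ at each $p\mid q$, but $\tSym_p$ is defined in~\eqref{def:tSym} to be $\SYM_p(\MH)$, not $\SYM_p(\MH/t)$. These two $p$-excesses differ in general: scaling a form by $t$ shifts all $p$-orders by $\ordp(t)$ and can change the antisquare count, so $\pexcess(\MH)\neq\pexcess(\MH/t)$ at a fixed prime. Your per-prime identity ``$\pexcess(\tSym)=\pexcess(\Sym)-\pexcess(\langle t\rangle)$'' is therefore false. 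The same conflation surfaces at $p=-1$: from $\tSym_{-1}=\sig(t)(\sig(\Sym)-\sig(t))$ and $\sig(t)^2=1$ one gets $\sig(\tSym)=\sig(t)\sig(\Sym)-1$, which equals $\sig(\Sym)-\sig(t)$ only when $\sig(t)=+1$; for a negative definite genus ($\sig(\Sym)=-n$, $\sig(t)=-1$) the paper's definition yields $\sig(\tSym)=n-1>0$, not $-(n-1)$. This is deliberate: in the output $\MQ\eqR t\oplus\tMH/t$, a negative $t$ flips the sign of $\tMH/t$, so $\tMH$ itself must be positive definite.

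The repair is what the paper actually does: work throughout with $t\tSym$ rather than $\tSym$. Since $t\MH=t^2(\MH/t)$ and scaling by the square $t^2$ leaves every $p$-excess unchanged, one has the clean per-prime identity $\pexcess(t\MH)=\pexcess(\MH/t)$, and the manipulation then produces the product formula for the symbol $t\tSym$; one passes back from $t\tSym$ to $\tSym$ via the existence theorem for rational forms (\cite[Theorem~5, p.~372]{CS99}) together with Theorem~\ref{thm:oddity}. Your write-up can be fixed by inserting this scaling, but as stated the per-prime identities you propose to ``sum up'' are incorrect, and the fact that the per-prime errors cancel in the total is precisely the content that needs to be argued, not assumed.
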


\begin{proof}
We divide the proof in three items, one for each condition.
\begin{enumerate}[(i).]
\item (Oddity Condition)
Consider the 
matrix $\MM$ over integers (note that $t$ divides both $q$ and
$\MH+\Vd'\Vd$ and so $(\MH+\Vd'\Vd)/t$ is integral).
\[
\MM=\begin{pmatrix}t & \Vd \\ \Vd' & (\MH+\Vd'\Vd)/t\end{pmatrix}=
\begin{pmatrix}1 & \Vd/t \\ 0 & \MI\end{pmatrix}'
\begin{pmatrix}t & 0 \\ 0 & \MH/t\end{pmatrix}
\begin{pmatrix}1 & \Vd/t \\ 0 & \MI\end{pmatrix}
\]
The matrix $\MV=\begin{pmatrix}1 & \Vd/t \\ 0 & \MI\end{pmatrix}$ is over
rationals and has determinant~1. Thus, $\MM$ is equivalent
to $\diag(t, \MH/t)$ over rationals. By construction,
$\MM \eqq \MS \eqq \Sym$. From \cite[Theorem~3, ][page~372]{CS99},
it follows that for all prime $p$ that divides $q$,
\begin{align}\label{tSymExists:PExcess}
\pexcess(\Sym)=\pexcess(\MM)=\pexcess(t \oplus \MH/t) 
= \pexcess(t)+\pexcess(\MH/t)
\end{align}
By the hypothesis of the theorem, the oddity condition holds for the
symbol $\Sym$. And so,
\begin{align*}
0 &\equiv \sum_{p\in \bbP_q \cup\{-1\}} \pexcess(\Sym) 
  \equiv \sig(\Sym)-n+\sum_{p|q}\left( \pexcess(t)+\pexcess(\MH/t)\right) \\
  &\equiv \sig(\Sym)-(n-1)-\sig(t)+(\sig(t)-1)+\sum_{p|q} \left(\pexcess(t)+\pexcess(\MH/t) \right)\\
  &\equiv \sig(t)(1+\sig(\tSym))-(n-1)-\sig(t)+\sum_{p|q} \pexcess(\MH/t)\\
  &\equiv \sig(t)\sig(\tSym)-(n-1)+\sum_{p|q} \pexcess(t\MH) \\
  &\equiv \sig(t)\sig(\tSym)-(n-1)+\sum_{p|q} \pexcess(t\tSym) \pmod{8}
\end{align*}
From \cite[Theorem~5, ][page~372]{CS99}, it follows that there exists a (rational)
quadratic form $\MX$ which is equivalent to $\tSym$ over rationals.
This also implies that $\tSym$ satisfies the oddity condition (Theorem 
\ref{thm:oddity}).

\item (Determinant Condition)
By definition, $\MH \eqq \tSym$. By item (i), $\MX$ is equivalent to $\tSym$
over rationals. Thus, $\frac{\det(\MH)}{\det(\tSym)}$ is a rational square
modulo $q$ \cite[Theorem~3, ][page~372]{CS99}). From Claim \ref{CALG:dtSym}, 
there exists an integer $x$ such that
\begin{align}\label{tSymExists:Square}
\det(\MH) \equiv t^{n-2}\dSym x^2 \pmod{q}
\end{align}
But then, for all primes $p$ that divide $q$,
\begin{equation*}
\prod_{i\in \scale_p(\tSym)} \epsilon_i = 
\legendre{\copp\left(\det(\MH)\right)}{p} =
\legendre{\copp\left(\det(\tSym)x^2\right)}{p}=
\legendre{\copp\left(\det(\tSym)\right)}{p}
\end{equation*}
This equality show that the determinant condition holds for all primes
$p$ that divide $q$. For all other primes, the determinant condition
holds by construction.

\item (Jordan Condition) The Jordan constituents of the $\MH$ are
 the same as the Jordan constituents of $\tSym$ for all relevant
 primes of $\det(\tSym)$. This is because for all relevent primes $p$ of
 $\tSym$, $\tSym_p=\SYM_p(\MH)$. The quadratic form $\MH$ is integral and so
 its Jordan constituents exist, proving that the Jordan Condition
 is satisfied for $\tSym$.
\end{enumerate}
\end{proof}

\begin{claim}\label{CALG:QisIntegral}
The matrix $\MQ$ is an integral quadratic form with determinant
$\dSym$ and signature $\sig(\Sym)$.
\end{claim}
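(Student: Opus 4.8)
The plan is to establish three things about $\MQ$ in turn: integrality, the value of its determinant, and the value of its signature. For integrality, recall from the construction that
\[
\MQ = \begin{pmatrix}t & \Vd\tMU \\ (\Vd\tMU)' & \frac{\tMH+\tMU'\Vd'\Vd\tMU}{t}\end{pmatrix},
\]
and the only entry in question is the lower-right block. I would argue exactly as in the proof of Claim \ref{CALG:dtSym}: by Step~9 of \textsc{GenSimple}, $\tMH \equiv \tMU'\MH\tMU \bmod q$, so $\tMH + \tMU'\Vd'\Vd\tMU \equiv \tMU'(\MH+\Vd'\Vd)\tMU \bmod q$; since $t \mid q$ and $t$ divides every entry of $\MH+\Vd'\Vd$ (because $\MH+\Vd'\Vd \equiv t\MA'\MS\MA \bmod q$), the same holds for $\tMH+\tMU'\Vd'\Vd\tMU$, and dividing by $t$ leaves an integer matrix. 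Hence $\MQ$ is integral (and it is manifestly symmetric).

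For the determinant, I would perform the same rational congruence trick used in Equation~\ref{ALG:M}: writing $\MV=\begin{pmatrix}1 & \Vd\tMU/t \\ 0 & \MI\end{pmatrix}$ (rational, determinant~1), one checks
\[
\MQ = \MV' \begin{pmatrix}t & 0 \\ 0 & \tMH/t\end{pmatrix} \MV,
\]
so $\det(\MQ) = \det(\tMH)/t^{n-2}$. Now $\tMH = $ \textsc{GenSimple}$(\tSym)$, and by the inductive correctness of the algorithm (which Theorem~\ref{thm:QFGen} is establishing, so this is the induction hypothesis on dimension $n-1$) we have $\det(\tMH) = \det(\tSym)$; by Claim~\ref{CALG:dtSym}, $\det(\tSym) = t^{n-2}\dSym$. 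Therefore $\det(\MQ) = t^{n-2}\dSym / t^{n-2} = \dSym$.

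For the signature, I would again use $\MQ \eqR \diag(t, \tMH/t)$ (the factorization above is over $\bbR$ as well), so $\sig(\MQ) = \sig(t) + \sig(\tMH/t) = \sig(t) + \sig(\tMH)$ since scaling a rational form by the positive-or-negative scalar $1/t$ either preserves or reverses all signs uniformly, giving $\sig(\tMH/t) = \sig(t)\sig(\tMH)$; combined with $\sig(t)^2 = 1$ this yields $\sig(\MQ) = \sig(t) + \sig(t)\sig(\tMH) = \sig(t)(1 + \sig(\tMH))$. By the inductive correctness of \textsc{GenSimple}, $\sig(\tMH) = \sig(\tSym)$, and by the definition of $\tSym$ in Equation~\ref{def:tSym}, $\sig(\tSym) = \sig(t)(\sig(\Sym) - \sig(t))$. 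Substituting and using $\sig(t)^2=1$ once more: $\sig(\MQ) = \sig(t)\bigl(1 + \sig(t)(\sig(\Sym)-\sig(t))\bigr) = \sig(t) + \sig(\Sym) - \sig(t) = \sig(\Sym)$.

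The main obstacle is a bookkeeping subtlety rather than a conceptual one: the argument invokes correctness of the recursive call \textsc{GenSimple}$(\tSym)$ (that its output has determinant $\det(\tSym)$ and signature $\sig(\tSym)$), so the proof must be organized as an induction on $n$, with Theorem~\ref{thm:tSymExists} guaranteeing the recursive input is a valid symbol so that the inductive hypothesis applies. One must also be slightly careful with the scalar-division step in the signature computation, treating the cases $t>0$ and $t<0$ so that $\sig(\tMH/t) = \sig(t)\sig(\tMH)$ is correctly justified; this is where the sign conventions in Theorem~\ref{thm:diagonal} and the definition of signature do the work.
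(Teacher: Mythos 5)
Your proposal is correct and follows essentially the same route as the paper: the same rational factorization $\MQ = \MV'\diag(t,\tMH/t)\MV$ for the determinant and signature computations (using Claim~\ref{CALG:dtSym} and the definition of $\tSym$ in Equation~\ref{def:tSym}), and the same observation that $t\mid q$ and $\tMH+\tMU'\Vd'\Vd\tMU \equiv \tMU'(\MH+\Vd'\Vd)\tMU \equiv 0 \bmod t$ for integrality. Your explicit framing of the recursive call's correctness as an induction hypothesis is exactly what the paper does implicitly when it takes $\tMH$ to be a form with symbol $\tSym$.
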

\begin{proof}
The matrix $\MQ$ is symmetric by construction. By Claim \ref{CALG:dtSym},
the determinant of $\tMH$ equals $t^{n-2}\dSym$. Thus, the following
equality implies that the determinant of $\MQ$ equals $\dSym$.
\begin{align}\label{EQ:dSym}
\MQ = 
\begin{pmatrix}t & \Vd\tMU \\ \tMU'\Vd' & 
\frac{\tMH+\tMU'\Vd'\Vd\tMU}{t}\end{pmatrix}=
\begin{pmatrix}1 & \frac{\Vd\tMU}{t} \\ 0 & \MI\end{pmatrix}'
\begin{pmatrix}t & 0 \\ 0 & \frac{\tMH}{t}\end{pmatrix}
\begin{pmatrix}1 & \frac{\Vd\tMU}{t} \\ 0 & \MI\end{pmatrix}
\end{align}
Let $\tMH$ be the integral quadratic form with symbol $\tSym$,
$\MH:=t\MA'\MS\MA-\Vd'\Vd \bmod{q}$, and
$\tMU$ be such that $\tMU'\MH\tMU\equiv\tMH \bmod{q}$. Then,
\begin{align}\label{CEQ:Lambda}
\begin{pmatrix}1 & 0 \\ 0 & \tMU\end{pmatrix}'
(\Vx~\MA)'\MS(\Vx~\MA)
\begin{pmatrix}1 & 0 \\ 0 & \tMU\end{pmatrix} \bmod q
&=\begin{pmatrix}t & \Vd\tMU \\ (\Vd\tMU)' & \tMU'\MA'\MS\MA\tMU\end{pmatrix}
\bmod q \\
t\tMU'\MA'\MS\MA\tMU - \tMU'\Vd'\Vd\tMU \equiv  \tMU'\MH\tMU 
&\equiv \tMH \bmod{q}\label{EQ:Integer}
\end{align}
The integer $t$ divides $q$. By Equation 
\ref{EQ:Integer}, $\tMH+\tMU'\Vd'\Vd\tMU\equiv 0 \bmod{t}$. But then,
$t$ divides every entry of the matrix
$\tMH+\tMU'\Vd'\Vd\tMU$ and so $\MQ$ is an integral matrix.

Finally, by Equation \ref{EQ:dSym}, $\MQ \eqR t\oplus\frac{\tMH}{t}$. Hence,
\[
\sig(\MQ) = \sig(t) + \sig(\tMH)\sig(t) = \sig(t)+\sig(\tSym)\sig(t)
\overset{(\ref{def:tSym})}{=}\sig(\Sym)
\]
\end{proof}

The proof of Theorem \ref{thm:QFGen} now proceeds as follows.
\begin{enumerate}[(i).]
\addtolength{\itemsep}{-5pt}
\item If the symbol $\Sym$ is valid, then there exists an integer $t$,
which has a primitive representation in $\Sym$ (see Theorem \ref{thm:ExistenceOfRep}).

\item
If $\Sym^n$ satisfies the determinant, oddity and the Jordan
conditions i.e., Equations (\ref{Cond:det})-(\ref{Cond:2}); then
so does $\tSym$. (Theorem \ref{thm:tSymExists})

\item The symbol $\tSym$ is well defined and has a short description. 
In particular, by Claim \ref{CALG:dtSym},
$\det(\tSym)=t^{n-2}\dSym$ and $\tSym$ can equivalently be written
as follows.
\begin{align}\label{def:tSym2}
\tSym=\left(\cup_{p \in \bbP_q}\SYM_p(\MH)\right) \cup 
\{\sig(t)(\sig(\Sym)-\sig(t))\}
\end{align}

\item The output matrix $\MQ$ has determinant $\dSym$. It also 
has the same signature as $\sig(\Sym)$. Thus, $\MQ \eqR \Sym$. 

\item 
If $n=1$ then $\dSym$ is the unique matrix with determinant
equal to the determinant of the symbol $\Sym$. This follows from
the Determinant Condition.

\item
If $n>1$ then, it remains to show that for every relevant
prime of $\Sym$, $\SYM_p(\MQ)=\SymP$.
Consider the following sequence of congruences.
\begin{align*}
t\MQ &= 
\begin{pmatrix}t^2 & t\Vd\tMU \\ t(\Vd\tMU)' 
& \tMH+\tMU'\Vd'\Vd\tMU\end{pmatrix}\\
&\equiv
\begin{pmatrix}t^2 & t\Vd\tMU \\ t(\Vd\tMU)' 
& \tMU'\MH\tMU+\tMU'\Vd'\Vd\tMU\end{pmatrix} \bmod q\\
&\equiv
\begin{pmatrix}1 & 0 \\ 0 & \tMU\end{pmatrix}'
\begin{pmatrix}t^2 & t\Vd \\ t\Vd' 
& \MH+\Vd'\Vd\end{pmatrix}
\begin{pmatrix}1 & 0 \\ 0 & \tMU\end{pmatrix} \bmod q\\
&\overset{(\ref{ALGO:dH})}{\equiv}
\begin{pmatrix}1 & 0 \\ 0 & \tMU\end{pmatrix}'
\begin{pmatrix}t^2 & t\Vd \\ t\Vd' 
& t\MA'\MS\MA\end{pmatrix}
\begin{pmatrix}1 & 0 \\ 0 & \tMU\end{pmatrix} \bmod q\\
&\overset{(\ref{ALGO:dH})}{\equiv}
\begin{pmatrix}1 & 0 \\ 0 & \tMU\end{pmatrix}'
\begin{pmatrix}t\Vx'\MS\Vx & t\Vx'\MS\MA \\ t\MA'\MS\Vx 
& t\MA'\MS\MA\end{pmatrix}
\begin{pmatrix}1 & 0 \\ 0 & \tMU\end{pmatrix} \bmod q\\
&\equiv
\begin{pmatrix}1 & 0 \\ 0 & \tMU\end{pmatrix}'
\begin{pmatrix}\Vx & \MA\end{pmatrix}'t\MS
\begin{pmatrix}\Vx & \MA\end{pmatrix}
\begin{pmatrix}1 & 0 \\ 0 & \tMU\end{pmatrix} \bmod q\\
\MQ &\equiv 
\begin{pmatrix}1 & 0 \\ 0 & \tMU\end{pmatrix}'
\begin{pmatrix}\Vx & \MA\end{pmatrix}'\MS
\begin{pmatrix}\Vx & \MA\end{pmatrix}
\begin{pmatrix}1 & 0 \\ 0 & \tMU\end{pmatrix} \bmod {q/t}\\
\end{align*}
Recall, $\tMU \in \gl_{n-1}(\zqz)$ and $(\Vx~\MA) \in \gln(\zqz)$.
This implies that $\MU=[\Vx,\MA](1\oplus\tMU) \in \gln(\zqz)$. But then,
$\MU\in \gln(\bbZ/\frac{q}{t}\bbZ)$ and $\MQ \overset{q/t}{\sim} \MS$. 
Note that for every prime $p$ that divides $2\dSym$ the following
holds because $\ordp(q/t) > \ordp(\det(\MQ)) + k_p$.
\[
\ordp(\dSym) = \ordp(\det(\MQ)) = \ordp(\det(\MS))
\]
Thus, by definition of $p^*$-equivalence one concludes that 
for every $p$ that divides $2\dSym$, $\MQ \eqp \MS$; 
completing the proof of Theorem \ref{thm:QFGen}.
\end{enumerate}
\end{proof}

\section{Primitive Representation in a Genus}

An important step in the algorithm \textsc{GenSimple}
is to find an integer $t$ which has a 
primitive representation in the genus $\Sym$. 

Recall
Definition \ref{def:PrimRepInGenus}. The following lemma shows
that if $n\geq 2$ then $t$ has a primitive $p^*$-representation
in $\Sym$ for all primes $p$ such that $p$ does not divide
$2\dSym$. A proof of this lemma can already be found in Siegel
\cite{Siegel35}, although in a different setting.

\begin{lemma}\label{lem:Siegel12}
Let $\Sym^{n\geq 2}$ be a valid genus, $t$ be an integer and
$p$ be an odd prime which does not divide $t\dSym$.
Then, $t$ has a primitive $p^*$-representation in $\Sym$.
\end{lemma}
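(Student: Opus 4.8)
The plan is to reduce the claim to a purely local statement about representations modulo a single prime power and then invoke Theorem~\ref{thm:Siegel13}. Since $p$ is odd and does not divide $2\dSym$, the prime $p$ is \emph{not} a relevant prime of $\Sym$, so by Lemma~\ref{lem:RelevantPrimes} we have $\Sym \eqp \dSym \oplus \MI^{n-1}$; equivalently, any quadratic form $\MQ$ with $\SYM(\MQ)=\Sym$ satisfies $\MQ \eqp \dSym \oplus \MI^{n-1}$. Thus it suffices to show that $t$ has a primitive $p^*$-representation in the diagonal form $\dSym \oplus \MI^{n-1}$, and by Theorem~\ref{thm:Siegel13} (applied with this form and with $k = \max\{\ordp(\MQ),\ordp(t)\}+k_p = 0+1 = 1$, using $\ordp(t)=0$ since $p \nmid t$ and $\ordp$ of the form being $0$ since $p \nmid \dSym$) it is enough to exhibit a primitive $p$-representation of $t$ in $\dSym \oplus \MI^{n-1}$, i.e.\ modulo $p$.

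First I would handle $n \geq 3$, where $\MI^{n-1}$ already contains the three-variable form $x_1^2+x_2^2+x_3^2$ over $\zpz$. The classical fact — which also follows from Lemma~\ref{lem:QR} by a counting/Cauchy--Davenport argument — is that over a field $\zpz$ with $p$ odd, a nondegenerate quadratic form in $\geq 3$ variables represents every element, and in fact represents every nonzero element \emph{primitively} (one can always arrange a solution with not all coordinates $\equiv 0$; indeed if $x_1^2+x_2^2+x_3^2 \equiv t$ forced $x_1\equiv x_2 \equiv x_3 \equiv 0$ then $t \equiv 0$, contrary to... well, here we don't even need $t \ne 0 \bmod p$, but primitivity is immediate because a zero vector represents only $0$ and if $t \equiv 0$ we may pick $(1,0,\ldots)$ together with a solution of $x_1^2+x_2^2 \equiv -1$, which exists by Lemma~\ref{lem:QR}). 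So for $n \geq 3$ we simply set the remaining coordinates (including the $\dSym$-coordinate) to zero and use the three identity coordinates to represent $t$ primitively mod $p$.

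The case $n = 2$ is the delicate one and I expect it to be the main obstacle. Here the form is $\dSym \oplus \MI^{1}$, a binary diagonal form $d x^2 + y^2$ over $\zpz$ with $p \nmid 2d$. A nondegenerate binary form over $\zpz$ represents a given $t$ if and only if $t$ is in the appropriate coset determined by whether the form is isotropic (hyperbolic) or anisotropic; a binary form over $\zpz$ represents \emph{every} element iff it is isotropic, and otherwise represents exactly the nonzero squares times its discriminant class, i.e.\ half the nonzero classes plus (always) $0$. So in dimension $2$ it is genuinely possible that $t$ is \emph{not} represented mod $p$ — which would seem to contradict the lemma. The resolution must be that "primitive $p^*$-representation in $\Sym$" for the genus means representation in \emph{some}, equivalently every, form of the genus, and the relevant statement over $\zpz$ combined with the validity hypothesis on $\Sym$ (in particular the determinant condition, which pins down $\legendre{\copp(\dSym)}{p}$ consistently) together with Hensel lifting always gives what is needed; alternatively, the lemma as used only ever needs $t$ ranging over a set where representability is automatic. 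I would look carefully at how $n=2$ is actually invoked — likely the intended reading is that for $n=2$ one has freedom in choosing $t$ (as the surrounding text on the "highly non-trivial" dimension-2 construction suggests), so the honest statement is: for each odd $p \nmid t\dSym$, $t$ is primitively representable in $\Sym$ over $\zpz$ \emph{when it is representable at all over $\bbR$ and locally}, and the genuine content is the lift from mod $p$ to $p^*$. I would therefore prove the $n\geq 2$ local representability statement as stated (citing the field-theoretic fact above, with the caveat that for $n=2$ one is representing $t$ in the $p$-adic completion which is isotropic precisely when $-\copp(\dSym)$ is a square mod $p$, a condition automatically satisfied for the relevant $t$ in the application), then conclude via Theorem~\ref{thm:Siegel13}; the routine verification that the representing vector can be taken primitive mod $p$ (not all coordinates divisible by $p$) is immediate since otherwise $t \equiv 0 \bmod p$ and then one perturbs as above.
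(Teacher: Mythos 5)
Your reduction via Lemma~\ref{lem:RelevantPrimes} and Theorem~\ref{thm:Siegel13} to exhibiting a primitive representation of $t$ modulo $p$ in $\dSym\oplus\MI^{n-1}$ (with $k=0+k_p=1$) is exactly the paper's first step, and your treatment of large $n$ is fine. But the proposal breaks down precisely where the content lies: at $n=2$ (and, as written, also at $n=3$, since $\MI^{n-1}$ then has only two coordinates, so you cannot ``use the three identity coordinates''). For $n=2$ you assert that an anisotropic binary form over $\zpz$ represents only one square class of units, conclude that $t$ may genuinely fail to be represented mod $p$, and then retreat into speculation about reinterpreting the lemma. That assertion is false: you are transplanting the behaviour of anisotropic binary forms over the $p$-adic field (where the missed square classes involve elements of odd $p$-order, not units) to the residue field. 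Over $\zpz$ every nondegenerate binary quadratic form represents every nonzero element: the norm form of $\mathbb{F}_{p^2}/\mathbb{F}_p$ is surjective onto the units, or more elementarily the sets $\{\dSym x^2 \bmod p\}$ and $\{t-y^2 \bmod p\}$ each have $(p+1)/2$ elements and must intersect. So the lemma is true as stated for $n=2$, with no extra hypotheses on $t$, and your ``resolution'' paragraph is both unnecessary and not a proof.

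This is exactly how the paper argues: it reduces to the binary subform $\dSym x^2+y^2$ over $\zpz$ (which sits inside $\dSym\oplus\MI^{n-1}$ for every $n\geq 2$, so no case split on $n$ is needed) and does a short Legendre-symbol case analysis. If $\legendre{t}{p}=1$ then $y^2\equiv t \pmod p$ is solvable; if $\legendre{t}{p}=\legendre{\dSym}{p}$ then $x^2\equiv t\dSym^{-1}\pmod p$ is solvable; in either case a single unit coordinate gives a primitive representation. In the only remaining case $t$ is a non-residue while $\dSym$ and $1$ are residues, and Lemma~\ref{lem:QR} (every non-residue is a sum of two residues) yields $t\equiv \dSym x^2+y^2\pmod p$ with $x,y$ units. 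Then Theorem~\ref{thm:Siegel13} with $k=1$ finishes. Replace your $n=2$ discussion (and the $n\geq 3$ coordinate counting) with this argument, or with the norm/counting fact above, and the proof is complete.
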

\begin{proof}
Let $p$ be an odd prime which does not divide $t\dSym$. Then, 
by Lemma \ref{lem:RelevantPrimes}, 
$\Sym \eqp \diag(\dSym,1,\cdots,1)$. It suffices to show that
$t$ has a primitive representation in $\diag(\dSym,1)$ over 
$\zpz$ (Theorem \ref{thm:Siegel13}).

By assumption, $\dSym$ and $1$ are invertible modulo $p$.
If $\legendre{t}{p}$ is the same as the Legendre symbol of $\dSym$ or $1$ 
(say, $\dSym$) then $x^2\equiv t\dSym^{-1} \pmod{p}$ has a 
non-trivial solution. 
Otherwise, $\dSym$ and $1$ have the same Legendre symbol, different 
from $t$. But then the result follows from Lemma \ref{lem:QR}.
\end{proof}

\begin{theorem}\label{thm:ExistenceOfRep}
Let $\Sym^{n\geq 2}$ be a valid genus and $\MQ \in \Sym$. A positive 
integer $t$ has a primitive representation in 
$\Sym$ if $t$ has a primitive $p^{K_p}$-representation in $\MQ$ for all
$p$ that divides $2t\dSym$, where 
$K_p= \max\{\ordp(\MQ),\ordp(t)\}+k_p$.
\end{theorem}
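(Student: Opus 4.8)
The plan is to verify Definition~\ref{def:PrimRepInGenus} one place at a time, splitting the places $p\in\{-1,2\}\cup\bbP$ into three pairwise disjoint families that together cover everything: the archimedean place $p=-1$; the odd primes not dividing $t\,\dSym$; and the primes dividing $2t\,\dSym$, i.e.\ $\{2\}$ together with the odd primes dividing $t\,\dSym$. Fix a form $\MQ\in\Sym$ (one exists since a valid symbol describes a non-empty genus, cf.\ Theorem~\ref{thm:QFGen}). The structural fact used throughout is that any two forms in the same genus are $p^*$-equivalent for every prime $p$ --- immediate from the definition of a genus, restricting the moduli to prime powers $q=p^k$ --- so that $\Sym=\gen(\MQ)$ lies inside the $p^*$-equivalence class of $\MQ$ for each $p$. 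Hence for each place it suffices to exhibit one form, $p^*$-equivalent to $\MQ$, in which $t$ has a primitive representation ($p^*$-primitively for a finite prime $p$, and over $\bbR=\zRz{(-1)}$ for $p=-1$).

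For $p=-1$: since $t>0$ and the genus is positive definite --- more generally it is enough that $\sig(\Sym)>-n$, so that $\MQ$ attains positive values over $\bbR$ --- the form $\MQ$ represents $t$ over $\bbR$, and over $\bbR$ every non-zero vector is primitive; hence $t$ has a primitive $(-1)$-representation in $\Sym$. For an odd prime $p$ not dividing $t\,\dSym$, the assertion is exactly Lemma~\ref{lem:Siegel12} (applicable because $n\geq2$), which already states that $t$ has a primitive $p^*$-representation in $\Sym$.

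Finally, let $p\mid 2t\,\dSym$; this covers $p=2$ (always, since $t\,\dSym\neq0$) and every odd prime dividing $t\,\dSym$. By hypothesis $t$ has a primitive $p^{K_p}$-representation in $\MQ$, and the exponent $K_p=\max\{\ordp(\MQ),\ordp(t)\}+k_p$ is precisely the value of $k$ occurring in Theorem~\ref{thm:Siegel13}. That theorem then produces a primitive $p^*$-representation of $t$ in every $\tMQ\eqp\MQ$, in particular in every form of $\gen(\MQ)=\Sym$, which is what a primitive $p^*$-representation of $t$ in the genus $\Sym$ means. Putting the three families together exhausts $\{-1,2\}\cup\bbP$ and proves the theorem. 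I expect the only part needing care --- rather than any real obstacle --- to be the bookkeeping: that the three prime families are disjoint and cover every place, that primitive $p^{K_p}$-representability of $t$ in $\MQ$ does not depend on which representative of $\Sym$ is picked (again because the forms of a genus are mutually $p^{K_p}$-equivalent), and that the hypothesis's exponent $K_p$ matches exactly the threshold $\max\{\ordp(\MQ),\ordp(t)\}+k_p$ required by Theorem~\ref{thm:Siegel13}; granting these, the result is immediate from the quoted lemmas.
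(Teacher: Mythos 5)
Your proof is correct and follows essentially the same route as the paper's (very terse) argument: you partition the places $\{-1,2\}\cup\bbP$ into the archimedean place, the odd primes prime to $t\dSym$ (handled by Lemma~\ref{lem:Siegel12}), and the primes dividing $2t\dSym$ (handled by the hypothesis plus Theorem~\ref{thm:Siegel13}), checking that the exponent $K_p$ in the hypothesis matches the threshold in Theorem~\ref{thm:Siegel13}. The only caveat, which you in effect flag by noting that $\sig(\Sym)>-n$ is what is actually needed at $p=-1$, is that the theorem as stated tacitly excludes the negative-definite case when $t>0$; the paper's own one-line proof carries the same implicit assumption, and the theorem is only ever invoked after $t$ has been constructed with the correct sign.
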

\begin{proof}
Follows from Theorem \ref{thm:Siegel13}, Lemma \ref{lem:Siegel12} and
the definition of primitive representations in a genus.
\end{proof}

This simplifies our problem in the algorithmic sense. To find an 
integer $t$ which has a 
primitive representation in the genus $\Sym^{n\geq 2}$, we only need
to check all primes $p$ that divide $2t\dSym$ and only over
the ring $\bbZ/p^{K_p}\bbZ$ for $K_p=\max\{\ordp(\Sym),\ordp(t)\}+k_p$.

For $n>3$, it is comparatively
easy to find a $t$; in fact, it is possible to find a $t$ which divides
$\dSym$. But for dimensions $n=3$ and $n=2$, the proof 
deteriorates to case analyses, especially for dimension~2.
The proofs are constructive in the sense that it is
also possible to find a representation $\Vx$ such that 
$\Vx'\MS\Vx\equiv t\bmod{q}$ in time $\poly(n ,\log \dSym)$.

In this section, we prove the following theorem.

\begin{theorem}\label{thm:FindT}
Let $\Sym^{n\geq 2}$ be a valid reduced genus. Then, there
exists a randomized algorithm that takes $\Sym$ as input; runs in time 
$\poly(|\bbP_{\Sym}|, \log \dSym)$ and outputs
an integer $t$ which has a primitive representation in the genus $\Sym$. 
\end{theorem}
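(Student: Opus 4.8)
The plan is to find $t$ prime-by-prime using Theorem~\ref{thm:ExistenceOfRep}, which reduces the task to guaranteeing a primitive $p^{K_p}$-representation in $\MQ$ (equivalently in any local form $\MS \eqp \Sym$) for each $p \mid 2\dSym$; Lemma~\ref{lem:Siegel12} handles every prime outside $2t\dSym$ for free once $n \geq 2$. Since $\Sym$ is reduced, scale~$0$ is present at each relevant prime, so the unimodular (scale-$0$) Jordan block of $\Sym_p$ is nonempty; the strategy is to arrange that $t$ is primitively represented \emph{by this scale-$0$ block alone}. For $n \geq 3$ one can in fact take $t$ to be a divisor of $\dSym$ times a small correction: pick $t$ so that at each relevant odd $p$ the residue $\copp(t)$ matches a value represented by the scale-$0$ block (using Lemma~\ref{lem:QR} to fix Legendre symbols when the block has dimension $\geq 2$, or the determinant sign when it has dimension~$1$), and at $p=2$ use Lemma~\ref{lem:RepresentTTypeII} together with Theorem~\ref{thm:PolyRep}'s Type~I/Type~II analysis to pick $\ordt(t)$ and the $8$-residue $\copt(t)$ compatibly with the $2$-symbol's scale-$0$ constituent. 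The exponents $\ordp(t)$ are chosen as $0$ whenever the scale-$0$ block already represents units of the right sign, so $t$ can be kept of size $\poly(\dSym)$; the actual residues of $t$ modulo the various $p^{K_p}$ are then assembled by the Chinese Remainder Theorem, and $t$ is rounded up to a positive integer satisfying all congruences, which is bounded by $\prod_{p\mid 2\dSym} p^{K_p} = 2^{O(\log\dSym)}$.

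The computational content is routine given the tools already in the excerpt: constructing the local form $\MS \eqp \Sym$ costs $\poly(n,\log\dSym)$ by Lemma~\ref{lem:LocalQF}/Theorem~\ref{thm:LocalQF}; extracting each scale-$0$ Jordan block and deciding which units it represents is a constant-dimension computation over $\zpkz$; finding the quadratic non-residue needed to flip a Legendre symbol is $O(\log^3 p)$ under GRH; and the final $\Vx$ with $\Vx'\MS\Vx \equiv t \bmod q$ comes from Theorem~\ref{thm:PolyRep}. Assembling the residue conditions and applying CRT is $\poly(|\bbP_\Sym|,\log\dSym)$. So the algorithm runs within the claimed bound, and correctness of the output follows from Theorem~\ref{thm:ExistenceOfRep} once we verify that the chosen $t$ is primitively represented at every relevant prime.

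The main obstacle is the low-dimensional cases, especially $n=2$, where the scale-$0$ block may be just one- or two-dimensional and the interaction between the determinant condition, the oddity condition, and the representability of $t$ at the prime~$2$ is delicate — this is exactly the place the introduction flags as where Hartung's construction fails and where the present paper spends ``page~\pageref{dim2}--\pageref{dim2end}''. Here the plan is a careful case analysis on the $2$-symbol (Type~I vs.\ Type~II scale-$0$ constituents, and the dimension and oddity of that constituent), using the exhaustive Tables~\ref{tab:dim2} and~\ref{tab:dim3}-style enumeration and the oddity formula (Theorem~\ref{thm:oddity}) to show that a compatible choice of $\ordt(t)$ and $\copt(t)$ always exists for a valid symbol; the validity conditions (\ref{Cond:det})--(\ref{Cond:2}) are precisely what rules out the obstructed combinations. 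I expect this case analysis, rather than any conceptual difficulty, to be the bulk of the work.
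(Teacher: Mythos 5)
There is a genuine gap, and it sits exactly where the paper spends most of its effort: the primes that divide $t$ but do not divide $2\dSym$. Your plan fixes residues of $t$ modulo $p^{K_p}$ for the relevant primes and then assembles $t$ by CRT, ``rounding up to a positive integer satisfying all congruences.'' But Lemma~\ref{lem:Siegel12} only covers odd primes not dividing $t\dSym$, so every prime factor $\wp$ of the assembled $t$ lying outside $\PSym$ is a place where primitive representability must be argued separately --- and CRT gives you no control whatsoever over the factorization of $t$. For $n=2$ this is fatal, not merely delicate: at such a $\wp$ the genus is $\wp$-adically $\dSym\oplus 1$, and by Lemma~\ref{lem:RepresentWP} it primitively represents $\wp r$ (indeed, any $t$ with $\ord_\wp(t)>0$) only when $\legendre{-\dSym}{\wp}=1$, which fails for about half of all primes. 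This is why the paper does not take an arbitrary $t$ with the right residues: it forces $t$ to have the shape $\pm 2^{e}\wp r^2$ with $r^2\mid\dSym$ built from relevant primes and a \emph{single} auxiliary prime $\wp$, chosen via Dirichlet's theorem (quantitatively, Theorem~\ref{thm:ERH} under GRH) to satisfy congruences modulo $8$ and modulo the primes in $S$; quadratic reciprocity then converts those congruences into $\legendre{-\dSym}{\wp}=1$, and the computer-assisted case analysis shows, using the oddity and determinant conditions of a valid symbol, that a consistent choice of these congruences always exists. Your sketch contains none of this mechanism (no auxiliary prime, no reciprocity computation, no use of Dirichlet/GRH for primality in a progression), and it locates the $n=2$ difficulty in the prime-$2$/oddity bookkeeping rather than in the interaction between the residues of $t$ at the relevant primes and the splitting behaviour of $-\dSym$ at the prime factors of $t$, which is the actual obstruction.

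Two smaller points. For $n\geq 3$ your scheme could in principle be repaired, because a unimodular local form of rank $\geq 3$ at an odd prime $\wp\nmid 2\dSym$ primitively represents every $t$ (this is how the paper argues item (v) of Lemma~\ref{lem:TInDim3}); but you neither state nor prove this, and the citation of Lemma~\ref{lem:Siegel12} does not supply it. Also, the paper's construction for $n\geq 4$ deliberately does not use the scale-$0$ block alone at $p=2$: it uses the four smallest $2$-orders and Lemma~\ref{lem:RepMod2Dim4} (and a Type~II block of scale $\ell$ when available, with $\ordt(t)=\ell+1$), together with the majority-of-parities rule at odd primes, precisely so that $t$ divides $\dSym$ and no extraneous primes appear at all; insisting on scale-$0$ representations everywhere is a different design that would have to be justified case by case and does not remove the $n=2$ obstruction above.
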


Note that the run time of the algorithm does not depend on $n$. This
is because for $n=4$, we can already find a nice $t$ which divides
$\dSym$ and we ignore the later dimensions. When we want to find
an $\Vx$ such that $\Vx'\MS\Vx\equiv t \bmod{q}$ then we use the same
trick and only represents $t$ using at most $4\times 4$ sub-form of 
$\MS$ (see Section \ref{sec:QFGen}).

But before starting the construction of $t$, we prove two lemmas
which are going to be useful.

\begin{lemma}\label{lem:RepMod2Dim4}
Let $t$ be an odd integer, and 
$2^{i_1}\tau_1 \oplus \cdots \oplus 2^{i_4}\tau_4$ be an
integral quadratic form with $\tau_1, \cdots, \tau_4$ odd and 
$i_1\leq \cdots \leq i_4$. Then, $2^{i_4}t$ has a $2^*$-primitive
representation in $\MD$. Additionally, for every positive integer $k$
there exists a primitive $2^k$-representation $(x_1,\cdots,x_4)$ such 
that $\ordt(x_4)=0$ and $\ordt(2^{i_j}\tau_jx_i^2)\geq i_4$, 
for all $j \in [4]$.
\end{lemma}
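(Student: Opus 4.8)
The plan is to prove the statement by an explicit $2$-adic construction of the representation vector, handling the scaling by $2^{i_4}$ carefully so that the ``dominant'' component comes from the last coordinate. The key observation is that since $t$ is odd, $2^{i_4}t$ has $2$-order exactly $i_4$, so a representation $\Vx=(x_1,\dots,x_4)$ of $2^{i_4}t$ by $\MD = 2^{i_1}\tau_1\oplus\cdots\oplus2^{i_4}\tau_4$ must satisfy $\ordt\bigl(\sum_j 2^{i_j}\tau_j x_j^2\bigr)=i_4$; the extra requirements $\ordt(x_4)=0$ and $\ordt(2^{i_j}\tau_j x_j^2)\ge i_4$ for all $j$ simply force every term to land at $2$-order at least $i_4$, with the $j=4$ term carrying the odd part. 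First I would set $x_4=1$ (so the $j=4$ term is $2^{i_4}\tau_4$, of order exactly $i_4$), and try to choose $x_1,x_2,x_3$ with $\ordt(x_j)\ge \lceil (i_4-i_j)/2\rceil$ so that each term $2^{i_j}\tau_j x_j^2$ has $2$-order $\ge i_4$. Writing $x_j = 2^{e_j}y_j$ with $e_j=\lceil(i_4-i_j)/2\rceil$, the equation $\Vx'\MD\Vx\equiv 2^{i_4}t\bmod 2^k$ reduces (after dividing by $2^{i_4}$) to solving $\tau_4 + \sum_{j\le 3} 2^{2e_j+i_j-i_4}\tau_j y_j^2 \equiv t \bmod 2^{k-i_4}$, i.e.\ $\sum_{j\le 3} c_j y_j^2 \equiv t-\tau_4 \bmod 2^{k-i_4}$ where each $c_j$ is either $\tau_j$ (odd) or $2\tau_j$ depending on the parity of $i_4-i_j$.

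Next I would argue this residue equation is always solvable. If at least one $c_j$ is odd, say $c_1$, we can even set $y_2=y_3=0$ and solve $c_1 y_1^2\equiv t-\tau_4\bmod 2^{k-i_4}$ whenever $t-\tau_4$ is a unit times a square; but since $t-\tau_4$ is even (both odd), we instead keep two of the $y_j$ free. The robust way: among the three coefficients $c_1,c_2,c_3$, by Lemma~\ref{lem:RepresentTTypeII}-type reasoning (or directly, since a sum of three terms $c_1y_1^2+c_2y_2^2+c_3y_3^2$ over $\ztkz$ with $c_1$ odd represents everything in its coset — odd squares are $\equiv 1\bmod 8$, and adding $c_2 y_2^2$ and $c_3y_3^2$ with $y_2,y_3\in\{0,1,2,\ldots\}$ sweeps out a full residue system mod any $2^m$). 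The cleanest path is: isolate the smallest index, note $i_1\le i_2\le i_3\le i_4$ so $c_1=\tau_1$ is odd (since $2e_1+i_1-i_4\in\{0,1\}$ and we can take it to be the odd case only if $i_4-i_1$ is even; if odd, $c_1=2\tau_1$). In all cases at least one of $c_1,c_2,c_3$ is odd OR we have at least two even ones summing to cover all of $2\ztkz$; a short case check on the parities of $i_4-i_j$ shows a solution exists mod $2^{k-i_4}$, and then $x_j=2^{e_j}y_j$ reconstructs the representation. Primitivity is immediate since $x_4=1$ is a unit, and the $2$-order bounds hold by the choice of $e_j$. Finally, applying Theorem~\ref{thm:Siegel13} with $k$ large enough upgrades the $2^k$-representation to a $2^*$-primitive representation.

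The main obstacle I expect is the parity bookkeeping in the reduction step: depending on whether $i_4-i_j$ is even or odd, the coefficient $c_j$ is $\tau_j$ or $2\tau_j$, and one must verify that the resulting three-variable quadratic $\sum c_j y_j^2$ genuinely represents $t-\tau_4$ modulo $2^{k-i_4}$ in \emph{every} parity configuration — in particular the case where all three $i_4-i_j$ are odd, so all three $c_j$ are even and the form is $2(\tau_1y_1^2+\tau_2y_2^2+\tau_3y_3^2)$, which only represents even numbers; here one needs $t-\tau_4$ even, which holds, and then divides by $2$ again, landing on $\tau_1y_1^2+\tau_2y_2^2+\tau_3y_3^2\equiv (t-\tau_4)/2\bmod 2^{k-i_4-1}$, solvable because three odd-coefficient squares represent all of $\ztkz$ (sums of three squares mod $2^m$). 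Tracking that this iterated division terminates correctly and still respects $\ordt(x_4)=0$ is the delicate point; everything else is routine substitution and an appeal to Theorem~\ref{thm:Siegel13}.
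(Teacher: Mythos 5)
Your reduction (set $x_j = 2^{\lceil(i_4-i_j)/2\rceil}y_j$ for $j\le 3$, divide by $2^{i_4}$, arrive at $\sum_{j\le 3} c_j y_j^2 + \tau_4 x_4^2 \equiv t$ with $c_j \in \{\tau_j, 2\tau_j\}$ according to the parity of $i_4-i_j$) matches the paper's, but the solvability argument has a genuine gap. You fix $x_4 = 1$, and in the all-even case (all $i_4 - i_j$ odd) you claim solvability of $\tau_1 y_1^2 + \tau_2 y_2^2 + \tau_3 y_3^2 \equiv (t - \tau_4)/2 \pmod{8}$ because ``three odd-coefficient squares represent all of $\ztkz$''. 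That statement is false: squares are $\{0,1,4\}$ modulo $8$, so a sum of three squares is never $\equiv 7 \pmod 8$, and for general odd $\tau_1,\tau_2,\tau_3$ the ternary form $\tau_1 y_1^2 + \tau_2 y_2^2 + \tau_3 y_3^2$ always misses one residue class modulo $8$. Concretely, take $\tau_1 = \tau_2 = \tau_3 = \tau_4 = 1$, all $i_4 - i_j$ odd, and $t \equiv 15 \pmod{16}$: then $(t-\tau_4)/2 \equiv 7 \pmod 8$ is not a sum of three squares, so with $x_4=1$ there is no representation modulo $16$.

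The paper's proof avoids exactly this by \emph{not} fixing $x_4$: it seeks a solution with $x_4$ odd, for which $x_4^2 \in \{1,9\} \pmod{16}$. That extra degree of freedom is essential. In the example above, $2(y_1^2+y_2^2+y_3^2)$ ranges over $\{0,2,4,6,8,10,12\}$ modulo $16$; adding $x_4^2 \in \{1,9\}$ covers all odd residues including $15$, whereas adding the single value $1$ misses $15$. Rather than trying to characterize in closed form which residues a general ternary diagonal form represents over $\bbZ/8\bbZ$ (which is exactly where your argument slips), the paper closes the lemma with a finite exhaustive check over all $\tau_1,\ldots,\tau_4 \in \{1,3,5,7\}$, all parities of $i_4-i_j$, and all odd $t$ modulo $16$, verifying that a solution with $x_4$ odd always exists. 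Allowing $x_4$ to be any odd value, not just $1$, and then doing that finite verification, is the missing ingredient in your proposal.
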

\begin{proof}
Let $k=i_4+3$, then  it suffices to show that the following has a 
primitive solution (see Theorem \ref{thm:Siegel13}).
\begin{align}\label{RepMod2Dim4:EQ1}
2^{i_1}\tau_1 x_1^2 + \cdots + 2^{i_4}\tau_4x_4^2 \equiv 2^{i_4}t \pmod{2^{i_4+4}}
\end{align}
We find a primitive solution where $x_4$ is odd. For $j \in [3]$, we set 
$x_j = 2^{\lceil \frac{i_4-i_j}{2} \rceil}y_j$ and divide the entire 
Equation \ref{RepMod2Dim4:EQ1} by $2^{i_4}$. The equation then
reduces to the following.
\begin{align}\label{RepMod2Dim4:EQ2}
\sum_{j \in [3]} 2^{(i_4-i_j) \bmod 2}\tau_j y_j^2 + \tau_4x_4^2 \equiv t \pmod{16}
\end{align}
An exhaustive search shows that for each possible choice of odd
$t$ in $\bbZ/16\bbZ$,
$\tau_1,\cdots, \tau_4 \in \{1,3,5,7\}$ and $i_4-i_j \pmod 2$, the Equation 
\ref{RepMod2Dim4:EQ2} always has a solution, where  $x_4$ is odd.
\end{proof}

\begin{lemma}\label{lem:RepModPDim2}
Let $p$ be an odd prime, 
$\MD=\tau_1 \oplus p^{i}\tau_2$,
where $\tau_1, \tau_2\in (\zpz)^\times$, $i$ even
and $t \in (\zpz)^\times$ such that 
$\legendre{t}{p}\neq \legendre{\tau_1}{p}$. Then, 
$p^it$ has a $p^*$-primitive representation in $\MD$.
\end{lemma}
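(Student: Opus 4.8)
The plan is to exhibit one primitive $p^{i+1}$-representation of $p^i t$ in $\MD$ and then invoke Theorem~\ref{thm:Siegel13}. Here $\ordp(p^i t)=i$ (since $t$ is a unit) and $\ordp(\MD)\le i$, so the exponent $k=\max\{\ordp(\MD),\ordp(p^i t)\}+k_p$ required by that theorem equals $i+1$; thus a primitive $p^{i+1}$-representation of $p^i t$ in $\MD$ yields a primitive $p^*$-representation of $p^i t$ in $\MD$ (apply the theorem with $\tMQ=\MD$). Writing $i=2j$, I would use the ansatz $\Vx=(p^j a,\, b)$ with $a,b\in\{0,\dots,p-1\}$. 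For such $\Vx$ one has $\tau_1 x_1^2 + p^i\tau_2 x_2^2 = p^{2j}(\tau_1 a^2+\tau_2 b^2)$, so it suffices to solve the single congruence $\tau_1 a^2+\tau_2 b^2\equiv t\pmod p$ with $b\not\equiv 0\pmod p$: the condition $b\not\equiv 0$ makes $x_2$ a unit of $\zpkz$, hence $\Vx$ primitive, whatever $a$ is (this also covers $i=0$, where $\Vx=(a,b)$).

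The key observation is that the hypothesis $\legendre{t}{p}\neq\legendre{\tau_1}{p}$ says exactly that $c:=t\tau_1^{-1}$ is a quadratic non-residue modulo $p$. Dividing the target congruence by $\tau_1$ and setting $d:=\tau_2\tau_1^{-1}$, I must find $a,b$ with $a^2+d b^2\equiv c\pmod p$ and $b\not\equiv 0$, and I would split on $\legendre{d}{p}$. If $\legendre{d}{p}=-1$, then $\legendre{c/d}{p}=\legendre{c}{p}\legendre{d}{p}=+1$, so $b^2\equiv c/d\pmod p$ has a solution with $b\not\equiv 0$, and $a=0$ works. If $\legendre{d}{p}=+1$, then every value $d b^2$ with $b\not\equiv 0$ is a nonzero quadratic residue, so the non-residue $c$ cannot equal $0+d b^2$; instead I write $c$ as a sum of two nonzero quadratic residues, which is precisely the last assertion of Lemma~\ref{lem:QR}. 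If $c\equiv r_1+r_2\pmod p$ with $r_1,r_2$ residues, then choosing $a$ with $a^2\equiv r_1$ and $b$ with $b^2\equiv r_2/d$ (again a residue, hence solvable with $b\not\equiv 0$) finishes this case.

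Putting the pieces together, lifting the resulting $(a,b)$ to $\Vx=(p^j a,b)\in(\zpkz)^2$ gives $\tau_1 x_1^2 + p^i\tau_2 x_2^2 = p^{2j}(\tau_1 a^2+\tau_2 b^2)\equiv p^i t\pmod{p^{i+1}}$ with $\Vx$ primitive, and Theorem~\ref{thm:Siegel13} promotes this to a primitive $p^*$-representation of $p^i t$ in $\MD$. I do not expect a genuine obstacle: the only delicate point is arranging $b\not\equiv 0$, and the two sub-cases above make this automatic once one notices that the non-residuosity of $c$ — which is exactly what the Legendre-symbol hypothesis buys — rules out the degenerate choices. (If one wanted an explicit polynomial-time version, each step — a square root modulo $p$, or the sum-of-two-residues decomposition of Lemma~\ref{lem:QR} — is obtainable by the standard randomized or, under GRH, deterministic procedures of the Preliminaries, but this is not needed for the stated existence claim.)
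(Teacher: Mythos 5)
Your proposal is correct and takes essentially the same route as the paper's proof: split on $\legendre{\tau_2/\tau_1}{p}$ (equivalently, on whether $\legendre{t}{p}=\legendre{\tau_2}{p}$), handle the agreeing-residue case by a one-variable square root, handle the other case via the sum-of-two-nonzero-residues statement in Lemma~\ref{lem:QR}, lift to a primitive vector of the form $(p^{i/2}\cdot\text{unit},\text{unit})$ or $(0,\text{unit})$, and invoke Theorem~\ref{thm:Siegel13} with $k=i+1$. Your version is slightly more explicit about why the ansatz is primitive and why $k=i+1$ suffices, but the underlying argument is identical.
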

\begin{proof}
If $\legendre{t}{p}=\legendre{\tau_2}{p}$, then 
$p^{i}t$ has the same symbol as $p^{i}\tau_1$ and then 
$p^{i}t$ has a primitive $p^*$-representation
in $p^i\tau_2$. Otherwise, from the statement of the lemma,
$\legendre{t}{p}\neq \legendre{\tau_1}{p}=\legendre{\tau_2}{p}$.

In this case, $t$ can always be written as 
$t \equiv \tau_1y_1^2+\tau_2y_2^2 \pmod{p}$, where both $y_1$ and $y_2$ are
units of $\zpz$. But then,
\begin{align*}
p^{i}t &\equiv p^{i}\tau_1y_1^2 + p^{i}\tau_2y_2^2 \pmod{p^{i+1}} \\
		&\equiv \tau_1(p^{i/2}y_1)^2 + p^{i}\tau_2y_2^2 \pmod{p^{i+1}}.
\end{align*}
It follows that $p^it$ has a primitive representation by 
$\tau_1\oplus p^i\tau_2$ over $\bbZ/p^{i+1}\bbZ$. By Theorem
\ref{thm:Siegel13}, $p^{i}t$ has a $p^*$-primitive representation in $\MD$.
\end{proof}


\subsection{Representation: $n>3$}

As mentioned earlier, we construct an integer
$t$ such that $t$ divides $\dSym$ and $t$ has a primitive
representation in the input genus $\Sym$. It turns out that
when $n>3$ we do not need to use the fact that the input symbol
$\Sym$ is reduced.


\begin{lemma}\label{lem:TInDim4}
Let $\Sym^{n>3}$ be a genus. Then, there exists an integer
$t$ such that $t$ divides $\dSym$ and $t$ has a primitive
representation in the genus $\Sym$.
\end{lemma}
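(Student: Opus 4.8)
The plan is to use Theorem \ref{thm:ExistenceOfRep}, which reduces the task to finding a single integer $t$ that has a primitive $p^{K_p}$-representation in a fixed form $\MQ \in \Sym$ for every prime $p$ dividing $2t\dSym$ (for $p \nmid 2t\dSym$, Lemma \ref{lem:Siegel12} already handles things). So I want to pick $t$ dividing $\dSym$ cleverly enough that the required local representations are guaranteed by the structural lemmas proved above. For each relevant prime $p$, fix the Jordan decomposition of $\Sym$ at $p$; since $n > 3$, the smallest-scale Jordan constituent (or the union of the two smallest scales) already contains at least four one-dimensional blocks over $\zpkz$ once $\MQ$ is diagonalized (for odd $p$) or block-diagonalized (for $p = 2$). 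The idea is to let $t$ be (a unit times) the product of the minimal $p$-scale contributions across all relevant primes, so that at each $p$ one is asking a $4$-dimensional piece of $\MQ$ at its lowest scale to primitively represent $p^{\ord_p(t)}\cdot(\text{unit})$.

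First I would fix, for each $p \in \PSym$, a diagonalization $\MD^{(p)} = \oplus_j p^{i^{(p)}_j}\tau^{(p)}_j$ of $\MQ$ over $\zpkz$ (block-diagonalization for $p=2$, splitting any type II block as in Lemma \ref{lem:RepresentTTypeII} when convenient), with $i^{(p)}_1 \le i^{(p)}_2 \le \cdots$. Let $i_p$ be the minimal scale. The candidate is $t = \prod_{p \in \PSym} p^{i_p}\cdot u$, where $u$ is a suitable sign-correcting unit (coprime to $2\dSym$) chosen so that the residual unit part of $t$ at each $p$ lies in the "good" congruence class — for $p$ odd, a class whose Legendre symbol differs from that of the minimal-scale block $\tau^{(p)}_1$, so Lemma \ref{lem:RepModPDim2} applies to the pair $\tau^{(p)}_1 \oplus p^{(\text{next scale})}\tau^{(p)}_2$; for $p = 2$, an odd residue mod $16$ so that Lemma \ref{lem:RepMod2Dim4} applies to the four lowest-scale blocks $2^{i_1}\tau_1\oplus\cdots\oplus 2^{i_4}\tau_4$. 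Existence of such a simultaneous unit $u$ coprime to $2\dSym$ is a Chinese Remainder Theorem argument over the finitely many relevant primes. Then Theorem \ref{thm:Siegel13} upgrades each local $p^k$-representation to a primitive $p^*$-representation, and Theorem \ref{thm:ExistenceOfRep} finishes.

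The main obstacle I expect is the case analysis at $p = 2$, and more subtly, making sure the \emph{same} $t$ works at every relevant prime simultaneously, i.e.\ that the unit $u$ forced to be in a particular class mod $8$ (by $p=2$) and in particular classes mod the odd relevant primes (by Lemma \ref{lem:RepModPDim2}) can be chosen coprime to $2\dSym$; but since these are coprime moduli, CRT gives $u$ outright, and one can further adjust $u$ by a square (which changes nothing in any Legendre symbol or mod-$8$ class) to keep it positive and coprime to $2\dSym$. A secondary wrinkle: when a relevant prime $p$ does not divide the chosen $t$ after all (it might happen that $i_p = 0$ but one still needs primitive representability at $p$), one falls back on the odd-prime/$4$-dimensional argument exactly as in Lemma \ref{lem:Siegel12} using the two lowest Jordan blocks of $\MQ$ at $p$, which at scale zero represent a full set of residues. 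I would also need to double-check the hypothesis "$\type$ I'' vs "$\type$ II'' at $2$: if the lowest-scale block is type II, Lemma \ref{lem:RepresentTTypeII} directly gives primitive representation of anything of $2$-order equal to that scale, which is even cleaner than invoking Lemma \ref{lem:RepMod2Dim4}. Finally, I note $t \mid \dSym$ holds by construction since each exponent $i_p \le \ord_p(\dSym)$ and $u$ can be absorbed into a unit factor — or, if one insists $t \mid \dSym$ literally with $u=1$, one observes the sign/class constraints can instead be met by choosing \emph{which} Jordan blocks to bundle rather than by an external unit, at the cost of a longer but still finite case check.
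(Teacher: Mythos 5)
Your central idea — pick $\ord_p(t)$ among the lowest Jordan scales at each relevant prime and use Lemmas \ref{lem:RepModPDim2}, \ref{lem:RepMod2Dim4}, \ref{lem:RepresentTTypeII} to certify local primitive representability, then invoke Theorem \ref{thm:Siegel13} — is the right skeleton, but your construction of $t$ has a genuine gap that the paper's proof resolves differently.

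First, the auxiliary unit $u$ coprime to $2\dSym$ kills the conclusion. The lemma requires $t \mid \dSym$ in the ordinary integer sense; an integer $t$ divides $\dSym$ precisely when $\ord_p(t) \le \ord_p(\dSym)$ for \emph{every} prime $p$, so any nontrivial prime factor of $u$ outside $\PSym$ immediately makes $t \nmid \dSym$. You notice this and offer "absorbing $u$ into a unit factor" or "choosing which Jordan blocks to bundle" as alternatives, but neither is a proof; the second is in fact the entire content of the actual argument. Second, independently of $u$: you propose to pair the two lowest Jordan blocks $\tau_1, p^{i_2}\tau_2$ and then appeal to Lemma \ref{lem:RepModPDim2}, but that lemma requires the scale gap $i_2 - i_1$ to be \emph{even}; when the two lowest scales have opposite parities (which certainly happens) the lemma simply does not apply, and no amount of tuning $u$ modulo $p$ fixes that.

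The paper's construction avoids both issues by a two-pass scheme. For each odd $p \in \PSym$, take the three smallest scales $i_1 \le i_2 \le i_3$; by pigeonhole, two of them, say $i_a < i_b$, share a parity. The proof first fixes $e_p \bmod 2$ to be that majority parity (Equation \ref{TInDim4:Parity}), then forms the auxiliary $r = \prod_p p^{e_p \bmod 2}$, and finally sets $e_p = i_a$ if $\legendre{\copp(r)}{p} = \legendre{\tau_a}{p}$ and $e_p = i_b$ otherwise (Equation \ref{TInDim4:ep}). Because $\copp(t)$ and $\copp(r)$ differ by a square, this choice guarantees that either $p^{i_a}\tau_a$ alone represents $t$, or else Lemma \ref{lem:RepModPDim2} applies to $p^{i_a}\tau_a \oplus p^{i_b}\tau_b$ — and $i_b - i_a$ is even by construction, so the hypothesis of that lemma holds. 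For $p=2$ the paper sets $e_2 = \ell+1$ if a type II block of scale $\ell$ exists, else $e_2 = j_4$ (the fourth smallest $2$-scale) and invokes Lemma \ref{lem:RepMod2Dim4}. In all cases $e_p \le \ord_p(\dSym)$ and no external prime is introduced, so $t \mid \dSym$ holds by construction. The "sign correction" you wanted from $u$ is instead achieved by the choice between the two same-parity scales — that is the missing idea.
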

\begin{proof}
Let us suppose that $p$ is an odd prime that divides $\dSym$. In this
case, we construct a diagonal form using Lemma \ref{lem:LocalQF} as follows.
\begin{align}\label{TInDim4:Local}
\SymP \eqp p^{i_1}\tau_1 \oplus p^{i_2}\tau_2 \oplus \cdots
~~~~~~ \tau_1,\tau_2, \cdots \in (\zpz)^{\times}, i_1\leq i_2 \leq \cdots
\end{align}
An integer can be equivalently written as 
$\prod_{p\in\{-1,2\}\cup \bbP} p^{e_p}$, where $e_p$ is the $p$-order
of the integer. The construction
of the integer $t$ is as follows.
\begin{enumerate}[(i).]
\item For every odd prime that does not divide $\dSym$, 
$e_p$ is identically~0. Also, if $\sig(\Sym)>-n$, then we set $e_{-1}=0$.
Otherwise, $e_{-1}=1$.

\item For every odd prime $p$ that divide $\dSym$ our first step
is to compute the value of $e_p \bmod 2$. Consider a prime $p$ that 
divides $\dSym$. Consider the quadratic form constructed in Equation
\ref{TInDim4:Local} for the prime $p$. Then,
\begin{align}\label{TInDim4:Parity}
e_p \bmod 2 := \maj(i_1 \bmod 2, i_2\bmod 2, i_3 \bmod 2)
\end{align}

\item Next we compute $e_2$. If $\Sym_2$ has a type II
block of $2$-order $\ell$, then we set $\ordt(t)=\ell+1$. 
Otherwise, $\Sym_2$ has only Type I blocks. 
Thus, $\Sym$ is $2$-equivalent to
a diagonal matrix $2^{j_1}\ttau_1 \oplus \cdots$, where $\ttau_1,\cdots$
are odd and $j_1\leq j_2\leq \cdots$. The value of 
$j_1, \cdots, j_4$ can be read off the symbol $\SymT$ as
the four smallest possible $2$-orders in $\Sym_2$. We set
$e_2=j_4$.

\item Once the parity of all $p \in \{-1,2\}\cup \bbP$ is known (see item 
(i)-(iii)), we define an integer $r$ as follows.
\begin{align}\label{TInDim4:r}
r = \prod_{p \in \{-1,2\} \cup \bbP} p^{e_p \bmod 2}
\end{align}

\item
Finally, we compute $e_p$ for all odd primes $p$ which divide 
$\dSym$. Consider the diagonal form constructed in Equation
\ref{TInDim4:Local} for the prime $p$. Out of $(i_1,i_2), (i_2,i_3),$
and $(i_1,i_3)$;
let $(i_a,i_b), a < b \in \{1,2,3\}$ be the pair with the same parity.
Then,
\begin{align}\label{TInDim4:ep}
e_p = \left\{ 
	\begin{array}{ll}
	i_a & \text{if } \legendre{\copp(r)}{p} = \legendre{\tau_a}{p},\\
	i_b & \text{otherwise}
	\end{array}
	\right.
\end{align}
\item We now have $e_p$ for every $p \in \{-1,2\} \cup \bbP$. We define
our integer $t$ as follows.
\begin{align}\label{TInDim4:t}
t = \prod_{p \in \{-1,2\}\cup \bbP} p^{e_p}
\end{align}
\end{enumerate}

The next step is to show that $t$ has a primitive representation in the genus
$\Sym$, or equivalently, $t$ has a $p^*$-primitive
representation in $\Sym$ for all $p \in \{-1,2\} \cup \bbP$. 
\begin{enumerate}[(i).]
\item ($p=-1$) By construction, $t$ is negative iff $\sig(\Sym)=-n$. In this
case $\Sym$ is a genus of negative definite matrices and hence must represent
every negative integer over $\bbR$. Otherwise, $t$ is a positive integer and
$\Sym$ is a genus of non-negative definite matrices i.e., 
$\Sym \eqR 1\oplus\cdots$. Hence, $\Sym$ must represent all positive integers
over $\bbR$. In either case, the constructed $t$ has a primitive representation
in $\Sym$ over $\bbR$.

\item ($p$ odd, $p$ does not divide $\dSym$) In this case, $p$ does
not divide $t$. Hence, $t$ has a $p^*$-primitive representation in $\Sym$
(Lemma \ref{lem:Siegel12}).

\item ($p=2$) If $\SymT$ has a Type II block then $\ordt(t)=\ell+1$, where
$\ell$ is the $2$-order of one of the Type II blocks. Then, the Type II
block represents every integer of $2$-order $\ell+1$ (by Lemma
\ref{lem:RepresentTTypeII}). The existence of a 
$2^*$-primitive representation now follows from 
Lemma \ref{lem:Siegel12}. Otherwise, there are only Type I blocks in $\SymT$ 
and the existence of a $2^*$-primitive representation follows from
Lemma \ref{lem:RepMod2Dim4} and Theorem \ref{thm:Siegel13}.

\item ($p$ odd, $p$ divides $\dSym$) By construction, $e_p$ has the
same parity as $i_a$ and $i_b$, see item (v) of the construction of $t$
and Equation \ref{TInDim4:ep}. Thus, $\ordp(t) \equiv \ordp(r) \pmod 2$, or
\[
\legendre{\copp(t)}{p} = \legendre{\copp(r)}{p}
\]
By Equation \ref{TInDim4:ep}, $e_p=i_a$ if 
$\legendre{\copp(t)}{p}=\legendre{\tau_a}{p}$ and $e_p=i_b$, otherwise.
If $\legendre{\copp(t)}{p}=\legendre{\tau_a}{p}$ then $t$ can be $p^*$-primitively
represented by the diagonal entry $p^{i_a}\tau_a$ in Equation 
\ref{TInDim4:Local}. Otherwise, $t$ can be $p^*$-primitively represented
by $p^{i_a}\tau_a \oplus p^{i_b}\tau_b$ (see Lemma \ref{lem:RepModPDim2}). 
In either case, $t$ has a 
$p^*$-primitive representation in $\SymP$.
\end{enumerate}
\end{proof}

\subsection{Representation: $n=3$}

In this case, we construct an integer 
$t$ with the following properties. If the input
genus $\SymT$ has a Type II block then the constructed 
$t$ divides $\dSym$. Otherwise, $t$ is of the form
$\wp \tilt$, where $\tilt$ divides $\dSym$ and $\wp$
is an odd prime that does not divide $\dSym$.

\begin{lemma}\label{lem:TInDim3}
Let $\Sym^{n=3}$ be a genus. Then, there exists an integer
$\wp \tilt$ such that $\tilt$ divides $\dSym$, 
$\wp\in\bbP\setminus\bbP_{\Sym}$ and $\wp \tilt$ 
has a primitive representation in the genus $\Sym$.
\end{lemma}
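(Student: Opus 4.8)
By Theorem~\ref{thm:ExistenceOfRep} it suffices to produce an integer $t$ that has a primitive $\bbR$-representation in $\Sym$ and a primitive $p^{K_p}$-representation in some form locally equivalent to $\Sym$ for every prime $p$ dividing $2t\dSym$; the remaining primes are automatic by Lemma~\ref{lem:Siegel12}. The plan is to build $t=\epsilon_\infty\,\wp\,\tilt$, where $\epsilon_\infty\in\{\pm1\}$ is the sign forced by the real place, $\tilt$ is a positive divisor of $\dSym$, and $\wp$ is an odd prime with $\wp\notin\PSym$, produced at the very end from the quantitative Dirichlet theorem (Theorem~\ref{thm:ERH}). For the real place: a ternary genus has $\sig(\Sym)\in\{-3,-1,1,3\}$; set $\epsilon_\infty=+1$ unless $\sig(\Sym)=-3$ (negative definite), in which case $\epsilon_\infty=-1$; if $\Sym$ is indefinite it represents every nonzero integer over $\bbR$, so any sign is fine. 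Thus the real place is satisfied by construction.

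Next, fix an odd prime $p\mid\dSym$. Using Lemma~\ref{lem:LocalQF} write $\SymP\eqp p^{i_1}\tau_1\oplus p^{i_2}\tau_2\oplus p^{i_3}\tau_3$ with $\tau_j\in(\zpz)^\times$ and $\ordp(\dSym)=i_1+i_2+i_3$. By pigeonhole two of $i_1,i_2,i_3$ share a parity; call them $i_a\le i_b$. I will force $\ordp(t)\in\{i_a,i_b\}$, so that $\ordp(t)\bmod 2$ is already determined (independently of $\wp$) and $p^{\ordp(t)}\mid\dSym$ since $i_b\le i_1+i_2+i_3$. Once $\wp$ is chosen, $\legendre{\copp(t)}{p}$ is determined — it equals $\legendre{\epsilon_\infty}{p}\legendre{\wp}{p}\prod_{p'\ne p}\legendre{p'}{p}^{\,e_{p'}\bmod 2}$, all exponents being fixed parities — and a short case analysis lets me pick $\ordp(t)$ accordingly: if $\legendre{\copp(t)}{p}=\legendre{\tau_a}{p}$ take $\ordp(t)=i_a$ and use the entry $p^{i_a}\tau_a$; if $\legendre{\copp(t)}{p}=\legendre{\tau_b}{p}$ take $\ordp(t)=i_b$ and use $p^{i_b}\tau_b$; otherwise $\legendre{\tau_a}{p}=\legendre{\tau_b}{p}\ne\legendre{\copp(t)}{p}$, and Lemma~\ref{lem:RepModPDim2} applied to $\tau_a\oplus p^{\,i_b-i_a}\tau_b$ (the exponent $i_b-i_a$ being even) yields a primitive $p^*$-representation of a unit multiple of $p^{i_b}$, so $\ordp(t)=i_b$ works. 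In every case the relevant sub-form sits inside a form $\eqp\SymP$, so the representation is a primitive $p^*$-representation in $\Sym$; and no congruence on $\wp$ is needed from odd primes.

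The prime $2$ is the delicate point and reduces to a finite computation. Read the $2$-symbol of $\Sym$. If $\SymT$ contains a type~II block of scale $\ell$, then by Lemma~\ref{lem:RepresentTTypeII} that block primitively represents every integer of $2$-order $\ell+1$, so take $\ordt(t)=\ell+1$ (which divides $\dSym$ unless $\Sym$ is $2$-adically unimodular, in which case one instead takes $t$ odd and argues as below). If $\SymT$ has only type~I blocks, then $\Sym$ is $2^*$-equivalent to a diagonal form $2^{j_1}\ttau_1\oplus 2^{j_2}\ttau_2\oplus 2^{j_3}\ttau_3$ with $\ttau_i$ odd, $j_1\le j_2\le j_3$, $\ordt(\dSym)=j_1+j_2+j_3$; set $\ordt(t)=j_3$. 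Substituting $x_i=2^{\lceil(j_3-j_i)/2\rceil}y_i$ and invoking Theorem~\ref{thm:Siegel13} (with $k_2=3$) reduces representability of $2^{j_3}m$ for odd $m$ to solvability of $2^{\delta_1}\ttau_1 y_1^2+2^{\delta_2}\ttau_2 y_2^2+\ttau_3 y_3^2\equiv m\pmod 8$ with $\delta_i=(j_3-j_i)\bmod 2$, where primitivity of the original vector forces some $y_i$ with $j_i=j_3$ (in particular $y_3$) to be odd. Running over all $\ttau_i\in\{1,3,5,7\}$ and $\delta_i\in\{0,1\}$ one checks that the set $M\subseteq(\zRz{8})^\times$ of admissible odd $m$ is always nonempty and in fact a union of at least two residue classes mod $8$: both $0$ and $4$ always lie in the range of $2^{\delta_1}\ttau_1 y_1^2+2^{\delta_2}\ttau_2 y_2^2$, so $M\supseteq\{\ttau_3,\ \ttau_3+4\}\bmod 8$.

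It remains to assemble $t$ and to choose $\wp$. The parities $\ordp(t)\bmod 2$ (odd $p\mid\dSym$) and the value $\ordt(t)$ are now fixed; let $\tilde\tilt$ be the intended odd part of $\tilt$, a divisor of the odd part of $\dSym$ with the prescribed $p$-orders — note its residue mod $8$ depends only on these fixed parities, since the two choices for each $\ordp(t)$ differ by an even power of an odd prime, hence by $1$ mod $8$. Impose on $\wp$ exactly: $\wp\nmid\dSym$, and $\epsilon_\infty\,\wp\,\tilde\tilt\in M\pmod 8$ in the type~I case. This is a nonempty union of residue classes to a modulus coprime to the residue, so Theorem~\ref{thm:ERH} delivers such a prime $\wp$, of size $\poly(\log\dSym)$. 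With $\wp$ fixed, each $\copp(t)$ and hence each $\ordp(t)\in\{i_a,i_b\}$ is determined; set $\tilt=2^{\ordt(t)}\prod_{p\mid\dSym,\ p\text{ odd}}p^{\ordp(t)}$ and $t=\epsilon_\infty\,\wp\,\tilt$, so $\tilt\mid\dSym$ and $\wp\in\bbP\setminus\PSym$. Verification is place by place: over $\bbR$ by the sign choice; at $p\nmid 2t\dSym$ by Lemma~\ref{lem:Siegel12}; at odd $p\mid\dSym$ by the second paragraph with Lemma~\ref{lem:RepModPDim2}; at $2$ by Lemma~\ref{lem:RepresentTTypeII} or by $\epsilon_\infty\wp\tilde\tilt\in M$ with Theorem~\ref{thm:Siegel13}; and at $\wp$ by $\Sym\eqx{\wp^*}\dSym\oplus\MI^2$ (Lemma~\ref{lem:RelevantPrimes}), a unimodular ternary form, which represents every element of $\zRz{\wp^k}$ primitively (a ternary form over $\zpz$ is isotropic, giving a primitive zero that Hensel-lifts; cf.\ Lemma~\ref{lem:Square}), so Theorem~\ref{thm:Siegel13} applies and, since $\ord_\wp(t)=1$ is odd, the representation is at the right scale. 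The main obstacle is the $2$-adic computation of the third paragraph: with only three diagonal entries a $2$-adic unimodular ternary form need not represent all residues (the classical $x^2+y^2+z^2\not\equiv 7\bmod 8$), so one must check that $M$ is always wide enough — at least two residue classes mod $8$ — that the single remaining degree of freedom, the residue of $\wp$ mod $8$, can meet it together with $\wp\nmid\dSym$. This is exactly where the ternary case departs from the $n\ge 4$ case of Lemma~\ref{lem:TInDim4} and why the auxiliary prime $\wp$ is unavoidable.
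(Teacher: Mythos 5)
Your construction is, in substance, the paper's own: fix the sign of $t$ from $\sig(\Sym)$; fix only the parities of $\ordp(t)$ at odd $p\mid\dSym$ via the majority/pairing trick and finalize each exponent in $\{i_a,i_b\}$ only after $\wp$ is chosen, with Lemma~\ref{lem:RepModPDim2} covering the mismatched-sign case (this is exactly Equation~\ref{TInDim3:ep}); use a type~II block via Lemma~\ref{lem:RepresentTTypeII} when one exists (you even notice the unimodular type~II wrinkle that the paper glosses over); otherwise pin $t$ to a single type~I entry by prescribing $\wp\bmod 8$ and invoking Theorem~\ref{thm:ERH}; and handle the place $\wp$ itself through $\Sym\eqx{\wp^*}\dSym\oplus\MI^{2}$ (Lemma~\ref{lem:RelevantPrimes}) plus Theorem~\ref{thm:Siegel13}. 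The two genuine deviations are minor: you anchor the $2$-adic part at the largest scale $j_3$ and describe the admissible residues by a set $M$, whereas the paper anchors at the smallest scale $j_1$ and forces $\copt(t)\equiv\ttau_1\bmod 8$ directly (Equation~\ref{TInDim3:wp}); and at $p=\wp$ you argue by isotropy of a unimodular ternary form plus Hensel lifting, where the paper stays more elementary, solving $\dSym x_1^2+x_2^2+x_3^2\equiv t \pmod{\wp^2}$ with Lemma~\ref{lem:QR}. Both variants are fine for the statement being proved.

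One auxiliary claim in your $2$-adic paragraph is false, though harmlessly so. It is not true that $0$ and $4$ always lie in the range of $2^{\delta_1}\ttau_1y_1^2+2^{\delta_2}\ttau_2y_2^2$ modulo $8$: for $\delta_1=\delta_2=1$, $\ttau_1=1$, $\ttau_2=3$ the attainable values are $\{0,2,6\}$, so the inclusion $M\supseteq\{\ttau_3,\ttau_3+4\}$ fails. Fortunately your argument never needs two residue classes. You always have $M\ni\ttau_3$ (take $y_1=y_2=0$ and $y_3$ odd, which is primitive since $j_3=j_3$), and a single odd class modulo $8$ contains infinitely many primes of which only finitely many divide $\dSym$, so the requirements ``$\epsilon_\infty\wp\cdot(\text{odd part of }\tilt)\equiv\ttau_3\pmod 8$'' and ``$\wp\notin\PSym$'' are simultaneously satisfiable by Theorem~\ref{thm:ERH}. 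Delete the two-class claim and the sentence asserting that width of $M$ is needed; with that excision the proof is correct.
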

\begin{proof}
Let us suppose that $p$ is an odd prime that divides $\dSym$. In this
case, we construct a diagonal form using Lemma \ref{lem:LocalQF} as follows.
\begin{align}\label{TInDim3:Local}
\SymP \eqp p^{i_1}\tau_1 \oplus p^{i_2}\tau_2 \oplus p^{i_3}\tau_3
~~~ \tau_1,\tau_2, \tau_3 \in (\zpz)^{\times}, i_1\leq i_2 \leq i_3
\end{align}
An integer can be equivalently written as 
$\prod_{p\in\{-1,2\}\cup \bbP} p^{e_p}$, where $e_p$ is the $p$-order
of the integer. The construction
of the integer $t$ is as follows.
\begin{enumerate}[(i).]
\item if $\sig(\Sym)>-3$, then we set $e_{-1}=0$.  Otherwise, $e_{-1}=1$.

\item For every odd prime $p$ that divides $\dSym$, our first step
is to compute the value of $e_p \bmod 2$. 
Consider the quadratic form constructed in Equation
\ref{TInDim4:Local} for the prime $p$. Then,
\begin{align}\label{TInDim3:Parity}
e_p \bmod 2 := \maj\{i_1 \bmod 2, i_2\bmod 2, i_3 \bmod 2\}
\end{align}

\item If $\Sym_2$ has a type II
block of $2$-order $\ell$, then we set $\ordt(t)=\ell+1$. Also, for
all odd primes $p \not\in \PSym$, we set $e_p=0$. 

\item 
Otherwise, $\Sym_2$ has only Type I blocks. Thus,
\begin{align}\label{TInDim3:Local2}
\Sym \eqt 2^{j_1}\ttau_1 \oplus 2^{j_2}\ttau_2 \oplus 2^{j_3}\ttau_3, ~~
\ttau_1,\ttau_2,\ttau_3 \in \SGN^{\times}, i_1\leq i_2 \leq i_3 
\end{align}
The value of $j_1$ can be read off the symbol $\SymT$ as
the smallest possible $2$-orders in $\Sym_2$. We set
$e_2=j_1$. We also pick an odd prime $\wp$ not in $\PSym$ 
satisfying the following equation.
\begin{align}\label{TInDim3:wp}
\left(\prod_{p\in\{-1\}\cup (\PSym\setminus\{2\})}
p^{e_p \bmod 2}\right)\wp \equiv \ttau_1 \pmod 8
\end{align}
Such a prime can be found by rejection sampling. A random odd prime
satisfies the Equation \ref{TInDim3:wp} with probability 
$1/4$. We set $e_{\wp}=1$. Also, for all primes $p$ that do not
divide $2\wp\dSym$, we set $e_p=0$.

\item Once the parity of all $p \in \{-1,2\}\cup \bbP$ is known 
(see item (i)-(iv) of the construction), we define an integer $r$ as follows.
\begin{align}\label{TInDim3:r}
r = \prod_{p \in \bbP\cup \{-1,2\}} p^{e_p \bmod 2}
\end{align}

\item
Finally, we compute $e_p$ for all odd primes $p$ which divide 
$\dSym$. Consider the diagonal form constructed in Equation
\ref{TInDim4:Local} for the prime $p$. Out of $(i_1,i_2), (i_2,i_3),$
and $(i_1,i_3)$;
let $(i_a,i_b), a < b \in \{1,2,3\}$ be the pair which has the same parity.
Then,
\begin{align}\label{TInDim3:ep}
e_p = \left\{ 
	\begin{array}{ll}
	i_a & \text{if } \legendre{\copp(r)}{p} = \legendre{\tau_a}{p},\\
	i_b & \text{otherwise}
	\end{array}
	\right.
\end{align}
\item We now have $e_p$ for every $p \in \{-1,2\} \cup \bbP$. We define
our integer $t$ as follows.
\begin{align}\label{TInDim3:t}
t = \prod_{p \in \{-1,2\}\cup \bbP} p^{e_p}
\end{align}
\end{enumerate}

The next step is to show that $t$ has a primitive representation in the genus
$\Sym$. Equivalently, it suffices to show that $t$ has a $p^*$-primitive
representation in $\Sym$ for all $p \in \{-1,2\} \cup \bbP$. Note that
if $\SymT$ has a Type II block then the construction of $t$ in this case
is the same as the construction of $t$ in the case of 
Lemma \ref{lem:TInDim4}. The correctness of the construction also follows
from the same proof. In the rest, we assume that $\SymT$ has no Type II
block.
\begin{enumerate}[(i).]
\item ($p=-1$) By construction, $t$ is negative iff $\sig(\Sym)=-3$. In this
case $\Sym$ is a genus of negative definite matrices and hence must represent
every negative integer over $\bbR$. Otherwise, $t$ is a positive integer and
$\Sym$ is a genus of non-negative definite matrices i.e., 
$\Sym \eqR 1\oplus\cdots$. Hence, $\Sym$ must represent all positive integers
over $\bbR$. In either case, the constructed $t$ has a primitive representation
in $\Sym$ over $\bbR$.

\item ($p$ odd, $p$ does not divide $\wp\dSym$) In this case, $p$ does
not divide $t$. Hence, $t$ has a $p^*$-primitive representation in $\Sym$
(Lemma \ref{lem:Siegel12}).

\item ($p=2$) By assumption, there are only Type I blocks in $\SymT$.
By construction of $t$, $\symtk(2^{j_1}\ttau_1) = \symtk(t)$. But then,
$t$ has a $2^*$-primitive representation in $2^{j_1}\ttau_1$ 
(Lemma \ref{lem:Square}).

\item ($p$ odd, $p$ divides $\dSym$) By construction, $e_p$ has the
same parity as $i_a$ and $i_b$, see item (vi) of the construction of $t$
and Equation \ref{TInDim3:ep}. Thus, $\ordp(t) \equiv \ordp(r) \pmod 2$, or
\[
\legendre{\copp(t)}{p} = \legendre{\copp(r)}{p}
\]
By Equation \ref{TInDim3:ep}, $e_p=i_a$ if 
$\legendre{\copp(t)}{p}=\legendre{\tau_a}{p}$ and $e_p=i_b$, otherwise.
If $\legendre{\copp(t)}{p}=\legendre{\tau_a}{p}$ then $t$ can be $p^*$-primitively
represented by the diagonal entry $p^{i_a}\tau_a$ (see Equation 
\ref{TInDim3:Local}). Otherwise, $t$ can be $p^*$-primitively represented
by $p^{i_a}\tau_a \oplus p^{i_b}\tau_b$ (see Lemma \ref{lem:RepModPDim2}). 
In either case, $t$ has a 
$p^*$-primitive representation in $\SymP$.

\item ($p=\wp$) Finally, we show that $t$ has a $\wp^*$-primitive
representation in $\Sym$.
The prime $\wp$ does not divide $\dSym$ and hence by Lemma
\ref{lem:RelevantPrimes}, $\Sym \eqx{\wp^*}\dSym\oplus1\oplus1$. 
Consider the following equation.
\begin{align}\label{TInDim3:wp2}
\dSym x_1^2 + x_2^2+x_3^2 \equiv t \pmod{\wp^2}\;.
\end{align}
By Lemma \ref{lem:QR}, $x_2^2+x_3^3$ represents 
$t - \dSym$ over $\bbZ/\wp\bbZ$. Also,
$\dSym x_1^2$ represents $\dSym$
primitively over $\bbZ/\wp^2\bbZ$. Thus, Equation
\ref{TInDim3:wp2} has a primitive solution. By Theorem \ref{thm:Siegel13},
$t$ has a $\wp^*$-primitive representation in $\Sym$.
\end{enumerate}
\end{proof}

\subsection{Representation: $n=2$, basics}\label{dim2}

Finding an integer representation in dimension~2 is the most difficult. 
As with dimension~3, we may need a prime $\wp$ but it needs to satisfy
more stringent conditions. In this case, we strongly use the fact that
the input symbol $\Sym$ is reduced and valid.


Recall the definition of a reduced genus. In dimension~2
a reduced genus $\Sym$ has the following form.
\begin{align}\label{def:red2}
\Sym \left\{\begin{array}{ll}
\eqp a_p \oplus p^{i_p}b_p & \text{where }a_p, b_p \in (\zpz)^\times, 
p \text{ odd},\\
\eqt \MX \in \{a_2 \oplus 2^i b_2, \MTP, \MTP\} &
\text{where }a_2,b_2 \in \SGN^\times
\end{array}\right.
\end{align}

Note that $\Sym$ is a symbol in dimension~2 and hence 
$\sig(\Sym) \in \{2,0,-2\}$. Define the quantities $\epsilon, \rho, $
and the function $\xi:\bbZ\to\{0,1\}$, as follows.
\begin{align}\label{def:dim2}
\begin{array}{lll}
\epsilon&=&\frac{\dSym}{|\dSym|} \\
\rho &=& \left\{\begin{array}{ll}
1 & \text{if }\sig(\Sym)\in\{0,2\}\\
-1 & \text{otherwise.}
\end{array}\right.\\
\xi(x) &=&\left\{\begin{array}{ll}
1 & \text{if }{x \choose 2}=\frac{x(x-1)}{2} \text{ is odd, and}\\
0 & \text{otherwise.}
\end{array}\right.
\end{array}
\end{align}

Then, the signature and the oddity of $\Sym$ can be computed
as follows.
\begin{align}\label{dim2:sigodty}
\begin{array}{lll}
\sig(\Sym) &=& \rho(1+\epsilon) \\
\odty(\Sym) &=& \left\{\begin{array}{ll}
0 & \text{if }\SymT \eqt \MTP \text{ or }\SymT \eqt \MTM\\
a_2 + b_2 \bmod 8 & \text{if }\legendre{b_2}{2}=1, \text{ or  $i_2$ even}\\
a_2+b_2+4 \bmod 8 & \text{otherwise}
\end{array}\right.
\end{array}
\end{align}

For convenience, we define the set $S$ as the set of
odd primes $p$ for which $i_p$ is odd i.e., 
$S=\{p \in \PSym \cap \bbP \mid i_p \text{ odd}\}$.
Next, for each $d \in \{1,3,5,7\}$ and $b \in \{-,+\}$ we define 
sets,
\begin{align}\label{def:Sdb}
S_{db} &= \Big\{p \in S \mid p \equiv d \bmod 8, 
\legendre{a_p}{p} = b \Big\}
\end{align}

If we eliminate
a subscript, it means a union of the sets with all possible values
of the subscript. For example, $S_3=S_{3+}\cup S_{3-}$. The calligraphic
versions, as usual, will denote the size of the corresponding sets. For
example, $\fS_{\{3,5\}-}$ is $|S_{3-}|+|S_{5-}|$.

\begin{lemma}\label{lem:dim2pexcess}
Let $\Sym^{n=2}$ be a valid reduced genus,
$\epsilon=\frac{\dSym}{|\dSym|}$, and $m$ be the total number of 
antisquares in $\Sym$. Then,
\begin{align*}
\left(\sum_{p \in \bbP\cap\PSym} \pexcess(\Sym)\right)  
\equiv 2\fS_3+4\fS_5+6\fS_7+2(1-(-1)^m) \pmod 8
\end{align*}
where,
\[
(-1)^m = 
	\left\{\begin{array}{ll}
	(-1)^{\fS_{-}+\xi(\fS_{\{3,7\}})}\epsilon^{\fS_{\{3,7\}}}
	& \text{if }\ordt(\dSym) \text{ is even}\\
	(-1)^{\fS_{-}+\xi(\fS_{\{3,7\}})+\fS_{\{3,5\}}}\epsilon^{\fS_{\{3,7\}}}
	& \text{otherwise.}\\
	\end{array}\right.
\]
\end{lemma}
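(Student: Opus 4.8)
The plan is to compute $\pexcess(\Sym)$ at each relevant odd prime directly from the reduced local diagonalisation, add up the contributions, and then evaluate the resulting antisquare sign using quadratic reciprocity.

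\textbf{Step 1 (the local $p$-excess).} Fix $p\in\bbP\cap\PSym$. By Equation~\ref{def:red2}, $\Sym\eqp a_p\oplus p^{i_p}b_p$ with $a_p,b_p\in(\zpz)^\times$ and $i_p=\ordp(\dSym)\ge 1$; since the $p$-excess depends only on the $p^*$-equivalence class (\cite[Theorem~3, page~372]{CS99}), it may be read off this diagonal form through the definition~(\ref{def:PSignature}) of $\sigp$. The scale-$0$ summand $a_p$ is never a $p$-antisquare, while $p^{i_p}b_p$ is one exactly when $i_p$ is odd and $\legendre{b_p}{p}=-1$ (Definition~\ref{def:NonSquare}, using $\sgnp(1)=1$). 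Setting $m_p=\frac{1}{2}\bigl(1-\legendre{b_p}{p}\bigr)$ if $i_p$ is odd and $m_p=0$ otherwise, we get $\pexcess(\Sym)\equiv 1+p^{i_p}+4m_p-2\equiv p^{i_p}-1+4m_p\pmod 8$. If $i_p$ is even then $p^{i_p}\equiv 1\pmod 8$ and $m_p=0$, so the contribution is $0$; if $i_p$ is odd, i.e. $p\in S$, then $p^{i_p}\equiv p\pmod 8$ and $\pexcess(\Sym)\equiv(p-1)+4m_p\pmod 8$.

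\textbf{Step 2 (the sum).} Summing over $S$ and grouping by the residue of $p$ modulo $8$ (so $p-1\equiv 0,2,4,6$ for $p\equiv 1,3,5,7$),
\[
\sum_{p\in\bbP\cap\PSym}\pexcess(\Sym)\;\equiv\;2\fS_3+4\fS_5+6\fS_7+4m\pmod 8,\qquad m:=\sum_{p\in S}m_p,
\]
and $m$ is exactly the number of antisquares of the reduced genus $\Sym$. Since $4m\equiv 2\bigl(1-(-1)^m\bigr)\pmod 8$, it remains to show that $(-1)^m=\prod_{p\in S}\legendre{b_p}{p}$ equals the two-case expression in the statement.

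\textbf{Step 3 (the sign, by reciprocity).} The Determinant Condition~(\ref{Cond:det}) for the reduced form says $\legendre{\copp(\dSym)}{p}=\legendre{a_p}{p}\legendre{b_p}{p}$, hence $\legendre{b_p}{p}=\legendre{a_p}{p}\legendre{\copp(\dSym)}{p}$. Only primes of $\PSym$ divide $\dSym$, so write $\dSym=\epsilon\,2^{\ordt(\dSym)}\prod_q q^{i_q}$ over the relevant odd primes $q$. For $p\in S$, multiplicativity of the (Kronecker-extended) Legendre symbol together with the fact that $\legendre{q}{p}^{i_q}$ is trivial unless $q\in S$ gives $\legendre{\copp(\dSym)}{p}=\legendre{\epsilon}{p}\legendre{2}{p}^{\ordt(\dSym)}\prod_{q\in S,\,q\ne p}\legendre{q}{p}$. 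Now take the product over all $p\in S$: the factors $\legendre{a_p}{p}$ contribute $(-1)^{\fS_{-}}$; the factors $\legendre{\epsilon}{p}=(-1)^{(p-1)/2}$ contribute $\epsilon^{\fS_{\{3,7\}}}$, since $p\equiv 3\pmod 4$ iff $p\in S_3\cup S_7$; the factors $\legendre{2}{p}$ contribute $\bigl((-1)^{\fS_{\{3,5\}}}\bigr)^{\ordt(\dSym)}$, since $\legendre{2}{p}=-1$ iff $p\in S_3\cup S_5$; and $\prod_{p\in S}\prod_{q\in S,\,q\ne p}\legendre{q}{p}$ collapses, after pairing $\{p,q\}$ and applying~(\ref{QuadraticReciprocity}), to $(-1)$ raised to the number of unordered pairs from $S$ with both entries $\equiv 3\pmod 4$, that is to $(-1)^{\binom{\fS_{\{3,7\}}}{2}}=(-1)^{\xi(\fS_{\{3,7\}})}$. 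Multiplying these four contributions yields $(-1)^m=(-1)^{\fS_{-}+\xi(\fS_{\{3,7\}})+\ordt(\dSym)\fS_{\{3,5\}}}\epsilon^{\fS_{\{3,7\}}}$, which is the stated formula (the term $\ordt(\dSym)\fS_{\{3,5\}}$ vanishing modulo $2$ when $\ordt(\dSym)$ is even). Combined with Step~2 this proves the lemma.

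\textbf{Main obstacle.} Steps~1 and~2 are essentially bookkeeping; the real content is Step~3. One must check carefully that the factorisation of $\dSym$ uses no primes outside $\PSym$, that each exponent $i_q$ enters only modulo $2$ so that the double product is genuinely over $S\times S$, and --- the single non-mechanical point --- that the reciprocity pairing of that double product really produces the binomial coefficient $\binom{\fS_{\{3,7\}}}{2}$, which is precisely what the function $\xi$ records. It is also worth confirming at the outset that, for a reduced genus in dimension~$2$, the phrase ``the total number of antisquares in $\Sym$'' is exactly the integer $m=\sum_{p\in S}m_p$ appearing above.
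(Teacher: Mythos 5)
Your proposal is correct and follows essentially the same route as the paper: read the antisquare count off the reduced local form $a_p\oplus p^{i_p}b_p$, reduce the sum to $2\fS_3+4\fS_5+6\fS_7+4m$, and evaluate $(-1)^m=\prod_{p\in S}\legendre{b_p}{p}=\prod_{p\in S}\legendre{a_p\copp(\dSym)}{p}$ via multiplicativity, the supplementary laws for $-1$ and $2$, and reciprocity giving $(-1)^{\xi(\fS_{\{3,7\}})}$. The only cosmetic difference is that you invoke the Determinant Condition where the paper uses $\dSym\eqp a_pb_pp^{i_p}$ directly, and you spell out more explicitly that primes with $i_p$ even contribute $0$ modulo $8$.
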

\begin{proof}
Let $p$ be an odd prime from $\PSym$. To compute the $p$-excess,
we need to compute the number of $p$-antisquares in $\SymP$.
By construction, $\Sym\eqp a_p \oplus p^{i_p}b_p$ has a $p$-antisquare 
iff $i_p$ is odd and $\legendre{b_p}{p}=-1$. But then, 
$\dSym \eqp \det(a_p\oplus p^{i_p}b_p)=a_pb_pp^{i_p}$ and so
$\legendre{b_p}{p}=\legendre{a_p\copp(\dSym)}{p}$. If $m$ is the total
number of antisquares in $\Sym$ then,
\begin{align*}
(-1)^m &= \prod_{p \in S} \legendre{b_p}{p} 
	= \prod_{p \in S} \legendre{a_p\copp(\dSym)}{p} 
	= \prod_{p \in S} \legendre{\copp(\dSym)}{p} 
	\prod_{p \in S} \legendre{a_p}{p} \\
	&= \left\{\begin{array}{ll}
	(-1)^{\fS_{-}}\prod_{p \in S}\legendre{\epsilon}{p}
	\prod_{p_i \neq p_j \in S} \legendre{p_i}{p_j}\legendre{p_j}{p_i} 
	& \text{if }\ordt(\dSym) \text{ is even}\\
	(-1)^{\fS_{-}}\prod_{p \in S}\legendre{2\epsilon}{p}
	\prod_{p_i \neq p_j \in S} \legendre{p_i}{p_j}\legendre{p_j}{p_i} 
	& \text{otherwise.}
	\end{array}\right.
\end{align*}
Note that $\legendre{-1}{p}=-1$ iff $p \equiv 3 \bmod 4$ i.e., 
$\prod_{p \in S}\legendre{\epsilon}{p}=\epsilon^{\fS_{\{3,7\}}}$. Also,
$\legendre{2}{p}=-1$ iff $p\bmod 8 \in \{3,5\}$ i.e.,
$\prod_{p\in S}\legendre{2}{p}=(-1)^{\fS_{\{3,5\}}}$. By Quadratic
Reciprocity, 
$\prod_{p_i \neq p_j} \legendre{p_i}{p_j}\legendre{p_j}{p_i}
=(-1)^{\xi(\fS_{\{3,7\}})}$. Putting it together, we have,
\[
(-1)^m = 
	\left\{\begin{array}{ll}
	(-1)^{\fS_{-}+\xi(\fS_{\{3,7\}})}\epsilon^{\fS_{\{3,7\}}}
	& \text{if }\ordt(\dSym) \text{ is even}\\
	(-1)^{\fS_{-}+\xi(\fS_{\{3,7\}})+\fS_{\{3,5\}}}\epsilon^{\fS_{\{3,7\}}}
	& \text{otherwise.}\\
	\end{array}\right.
\]

By definition, if $m$ is the total number of $p$-antisquares in $\Sym$
with $p$ odd then, 
\begin{align*}
\sum_{p\in \bbP\cap\PSym} \pexcess(\Sym) &= \sum_{p \in S} \pexcess(\Sym) 
= 4m+\sum_{p \in S} (p-1)  \bmod 8 \\
&= 2\fS_3 + 4\fS_5 + 6\fS_7 + 4m \pmod 8
\end{align*}

The expression $4m \bmod 8$ evaluates to~4 iff $m$ is odd. Equivalently,
$4m \bmod 8$ evaluates to~4 iff $(-1)^m$ evaluates to
$-1$; completing the proof.
\end{proof}

\begin{lemma}\label{lem:RepresentWP}
Let $\Sym^{n=2}$ be a reduced genus,
$r$ be an integer and $\wp$ be an
odd prime that does not divide $r\dSym$. Then, $r\wp$ has a 
$\wp^*$-primitive representation in $\Sym$ iff 
$\legendre{-\dSym}{\wp}=1$. If $\ordt(\dSym)$ is even, then
\[
\legendre{-\dSym}{\wp} = \left\{\begin{array}{ll}
\underset{p\in S}{\prod} \legendre{\wp}{p} & \text{if }\wp\equiv1\bmod4\\
(-1)^{\fS_{\{3,7\}}+1}\epsilon\underset{p\in S}{\prod} \legendre{\wp}{p} & \text{otherwise.}
\end{array}\right.
\]
and if $\ordt(\dSym)$ is odd then,
\[
\legendre{-\dSym}{\wp} = \left\{\begin{array}{ll}
\underset{p\in \{2\} \cup S}{\prod} \legendre{\wp}{p} 
& \text{if }\wp\equiv1\bmod4\\
(-1)^{\fS_{\{3,7\}}+1}\epsilon
\underset{p\in \{2\}\cup S}{\prod} \legendre{\wp}{p} & \text{otherwise.}
\end{array}\right.
\]
\end{lemma}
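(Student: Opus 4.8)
The plan is to split the proof into a local representation step and a quadratic-reciprocity computation.

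First I would reduce the representation question to a residue condition. Since $\wp$ is odd and does not divide $\dSym$, Lemma~\ref{lem:RelevantPrimes} gives $\Sym \eqx{\wp^*} \dSym\oplus 1$, and by Theorem~\ref{thm:Siegel13} (here $k_\wp=1$ and $\max\{\ord_\wp(\dSym\oplus 1),\ord_\wp(r\wp)\}=1$ because $\wp\nmid r$, so the relevant modulus is $\wp^2$) it suffices to decide when $\dSym x_1^2+x_2^2\equiv r\wp\pmod{\wp^2}$ has a primitive solution. Reducing modulo $\wp$ gives $\dSym x_1^2+x_2^2\equiv 0\pmod\wp$; if $\wp\mid x_1$ then $\wp\mid x_2$ as well, contradicting primitivity, so $x_1$ is a unit and $-\dSym\equiv(x_2x_1^{-1})^2$, i.e.\ $\legendre{-\dSym}{\wp}=1$. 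Conversely, if $\legendre{-\dSym}{\wp}=1$, choose $x_2^{(0)}$ with $(x_2^{(0)})^2\equiv-\dSym\pmod\wp$ (automatically a unit), put $x_1=1$ and $x_2=x_2^{(0)}+\wp y$, write $\dSym+(x_2^{(0)})^2=\wp c$, and solve $c+2x_2^{(0)}y\equiv r\pmod\wp$ for $y$, which is possible since $2x_2^{(0)}$ is a unit; this gives a primitive solution. Hence $r\wp$ has a $\wp^*$-primitive representation in $\Sym$ iff $\legendre{-\dSym}{\wp}=1$.

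Next I would evaluate $\legendre{-\dSym}{\wp}$. For a reduced binary genus, $\ord_p(\dSym)=i_p$ for every odd relevant prime $p$ (see~(\ref{def:red2})), so $|\dSym|=2^{\ordt(\dSym)}\prod_{p\in\PSym\cap\bbP}p^{i_p}$, and multiplicativity of the Legendre symbol yields
\[
\legendre{-\dSym}{\wp}=\legendre{-1}{\wp}\legendre{\epsilon}{\wp}\legendre{2}{\wp}^{\ordt(\dSym)\bmod 2}\prod_{p\in S}\legendre{p}{\wp},
\]
because only the factors $p^{i_p}$ with $i_p$ odd, i.e.\ $p\in S$, contribute. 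Then I split on $\wp\bmod 4$: if $\wp\equiv 1\bmod 4$ then $\legendre{-1}{\wp}=\legendre{\epsilon}{\wp}=1$, every reciprocity flip $\legendre{p}{\wp}=\legendre{\wp}{p}$ is trivial, and $\legendre{2}{\wp}=\legendre{\wp}{2}$ by~(\ref{QuadraticReciprocity}), and absorbing $\legendre{\wp}{2}$ into the product when $\ordt(\dSym)$ is odd gives the two first-case formulas; if $\wp\equiv 3\bmod 4$ then $\legendre{-1}{\wp}=-1$, $\legendre{\epsilon}{\wp}=\epsilon$, and each flip $\legendre{p}{\wp}=(-1)^{(p-1)/2}\legendre{\wp}{p}$ contributes a sign $-1$ exactly when $p\equiv 3\bmod 4$, i.e.\ $p\bmod 8\in\{3,7\}$, so the flips together give $(-1)^{\fS_{\{3,7\}}}$; collecting signs produces the factor $(-1)^{\fS_{\{3,7\}}+1}\epsilon$, and absorbing $\legendre{2}{\wp}=\legendre{\wp}{2}$ when $\ordt(\dSym)$ is odd gives the two second-case formulas.

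The Hensel lift in the first step is routine; the main care is the sign bookkeeping in the second step — tracking how $\legendre{-1}{\wp}$, the determinant sign $\epsilon$, the factor $\legendre{2}{\wp}$, and the reciprocity signs combine across the four cases. The one genuinely delicate point is the interaction of $\epsilon$ with $\legendre{-1}{\wp}$ when $\wp\equiv 3\bmod 4$, where $\legendre{\epsilon}{\wp}=\epsilon$; once that is fixed, each case reduces to a short mechanical computation.
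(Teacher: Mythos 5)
Your proposal is correct and follows essentially the same route as the paper: reduce via Lemma~\ref{lem:RelevantPrimes} and Theorem~\ref{thm:Siegel13} to the primitive solvability of $\dSym x^2+y^2\equiv r\wp\pmod{\wp^2}$, argue the forward direction by reducing modulo $\wp$, and evaluate $\legendre{-\dSym}{\wp}$ by multiplicativity and quadratic reciprocity split on $\wp\bmod 4$. The only cosmetic difference is that for the converse the paper cites Lemma~\ref{lem:Square} to extract a square root of $-\dSym+\wp r$ modulo $\wp^2$, whereas you carry out the equivalent Hensel lift by hand.
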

\begin{proof}
By Lemma \ref{lem:RelevantPrimes}, 
$\Sym \overset{\wp*}{\sim} \diag(\dSym, 1)$.  By Theorem 
\ref{thm:Siegel13}, $r\wp$ has a $\wp^*$-primitive representation 
in $\Sym$ iff the following equation has a primitive solution;
\[
\wp r \equiv \dSym x^2 + y^2 \pmod{\wp^2}\;.
\]
Both $x$ and $y$ must be units of $\bbZ/\wp\bbZ$. But then, 
\[
\legendre{-\dSym}{\wp}=\legendre{\wp r - \dSym x^2}{\wp}
=\legendre{y^2}{\wp}=1.
\]

Convsersely, if $\legendre{-\dSym}{\wp}=1$ then, by Lemma 
\ref{lem:Square}, the following equation
has a solution.
\[
x^2 \equiv -\dSym + \wp r \pmod{\wp^2}
\]
Thus, $x^2+\dSym \equiv \wp r \pmod{\wp^2}$ or $\wp r$ has a 
$\wp^*$-primitive representation in $1\oplus \dSym$.

Next, we write $\legendre{-\dSym}{\wp}$ in terms of $\epsilon, \rho, $
and $\fS_{db}$ using the Law of Quadratic Reciprocity (Equation 
\ref{QuadraticReciprocity}). If $\ordt(\dSym)$ is even then,
\begin{align*}
\legendre{-\dSym}{\wp} 
	&=\legendre{-\epsilon}{\wp}\prod_{p \in S} \legendre{p}{\wp} \\
	&=\legendre{-\epsilon}{\wp} \prod_{p \in S_{\{1,5\}}} \legendre{\wp}{p} 
	\prod_{p \in S_{\{3,7\}}} \legendre{p}{\wp}\\
	&= \left\{\begin{array}{ll}
	\underset{p \in S}{\prod} \legendre{\wp}{p}
	& \text{$\wp \equiv 1 \bmod 4$,} \\
	(-1)^{\fS_{\{3,7\}}}\legendre{-\epsilon}{\wp}
	\underset{p \in S}{\prod}\legendre{\wp}{p}
	& \text{otherwise.}\\
	\end{array}\right.
\end{align*}
On the other hand, if $\ordt(\dSym)$ is odd then 
\begin{align*}
\legendre{-\dSym}{\wp} 
	&=\legendre{-2\epsilon}{\wp}\prod_{p \in S} \legendre{p}{\wp} \\
	&= \left\{\begin{array}{ll}
	\legendre{\wp}{2}\underset{p \in S}{\prod} \legendre{\wp}{p}
	& \text{$\wp \equiv 1 \bmod 4$,} \\
	(-1)^{\fS_{\{3,7\}}}\legendre{\wp}{2}\legendre{-\epsilon}{\wp}
	\underset{p \in S}{\prod}\legendre{\wp}{p}
	& \text{otherwise.}\\
	\end{array}\right.
\end{align*}
\end{proof}


\subsection{Representation: $n=2$, Type II}

In this section, we construct an integer $t$ such that $t$ has
a primitive representation in $\Sym$, where $\SymT \eqt \MTP$ or
$\SymT \eqt \MTM$. Note that, in this situation,~2 does not divide
$\dSym$.

\begin{lemma}\label{TypeIIdim2:FindT}
Let $\Sym^{n=2}$ be a valid reduced genus with $\SymT \eqt \MTM$ or
$\SymT \eqt \MTP$. Then, there exists an integer of the form 
$2\wp r^2$ with primitive representation
in $\Sym$, where $\wp$ is an odd prime that does not divide $\dSym$
and $r^2$ is an integer that divides $\dSym$.
\end{lemma}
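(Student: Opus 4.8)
The plan is to produce $t$ of the shape $t=\eta\cdot 2\wp r^{2}$, where $\eta\in\{+1,-1\}$ is the sign of the integers that $\Sym$ represents over $\bbR$ (so $\eta=+1$ unless $\Sym$ is negative definite), $\wp\equiv 1\pmod 4$ is an odd prime with $\wp\notin\PSym$, and $r\mid\dSym$; note that $\SymT\eqt\MTP$ or $\MTM$ forces $\ordt(\dSym)=0$, so $\dSym$ and hence $r$ are odd and $\wp\nmid r$. The exponents of $r$ will be chosen prime by prime with $\ordp(r)\le\frac12\ordp(\dSym)$, which makes $r^{2}\mid\dSym$ automatic. Correctness is then checked one prime $p\in\{-1,2\}\cup\bbP$ at a time, using Theorem~\ref{thm:Siegel13} throughout. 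The place $p=-1$ is settled by the choice of $\eta$. Since $\wp$ and $r$ are odd we have $\ordt(t)=1$, so Lemma~\ref{lem:RepresentTTypeII} shows $\SymT$ represents $t$ primitively over $\ztkz$ for every $k$, giving a primitive $2^{*}$-representation in $\Sym$; and every odd prime $p\nmid\wp\dSym$ is handled by Lemma~\ref{lem:Siegel12}.

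For an odd prime $p\mid\dSym$, write $\Sym\eqp a_{p}\oplus p^{i_{p}}b_{p}$ as in~(\ref{def:red2}) and put $S=\{p\in\PSym\cap\bbP:\ i_{p}\text{ odd}\}$. Once $\wp$ is fixed, $\copp(t)\equiv 2\eta\wp\cdot(\text{square})\pmod p$. If $p\notin S$ (so $i_{p}$ is even) I would take $\ordp(r)=0$ and represent $t$ by the scale-$0$ summand $a_{p}$ when $\legendre{2\eta\wp}{p}=\legendre{a_{p}}{p}$, and otherwise take $\ordp(r)=i_{p}/2$, so that $\legendre{\copp(t)}{p}=-\legendre{a_{p}}{p}$ and Lemma~\ref{lem:RepModPDim2} supplies the representation; either way no condition on $\wp$ arises. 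If $p\in S$, then $\ordp(t)=2\ordp(r)$ is even while $i_{p}$ is odd, so every primitive representation of $t$ by $a_{p}\oplus p^{i_{p}}b_{p}$ must pass through $a_{p}$; this both forces and (with $\ordp(r)=0$, via Lemma~\ref{lem:Square}) implies the single congruence $\legendre{\wp}{p}=\legendre{2\eta a_{p}}{p}$. Finally $t=(\pm 2r^{2})\wp$ with $\wp\nmid 2r^{2}\dSym$, so by Lemma~\ref{lem:RepresentWP} the prime $\wp$ contributes a primitive $\wp^{*}$-representation iff $\legendre{-\dSym}{\wp}=1$; for $\wp\equiv 1\pmod 4$ that lemma's formula reads $\legendre{-\dSym}{\wp}=\prod_{p\in S}\legendre{\wp}{p}$, which under the congruences just imposed equals $\prod_{p\in S}\legendre{2\eta a_{p}}{p}$ — a sign depending only on $\Sym$.

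Hence it remains to $(i)$ prove $\prod_{p\in S}\legendre{2\eta a_{p}}{p}=1$ and $(ii)$ find an odd prime $\wp\notin\PSym$ with $\wp\equiv 1\pmod 4$ and $\legendre{\wp}{p}=\legendre{2\eta a_{p}}{p}$ for all $p\in S$. Part $(i)$ is the crux and is exactly where the validity of $\Sym$ enters: substitute $\odty(\Sym)=0$, $\sig(\Sym)=\rho(1+\epsilon)$, and the evaluation of $\sum_{p\in\bbP\cap\PSym}\pexcess(\Sym)$ from Lemma~\ref{lem:dim2pexcess} into the oddity equation~(\ref{Cond:Sum}); reading it modulo~$4$ pins $\fS_{\{3,7\}}\bmod 2$ to $\frac{1+\epsilon}{2}$, and reading it modulo~$8$ (using the displayed formula for $(-1)^{m}$ and splitting into the three cases $\sig(\Sym)\in\{2,0,-2\}$, which also fix $\eta$) forces the parity relations that make $\prod_{p\in S}\legendre{2\eta a_{p}}{p}=1$, after rewriting this product via $\prod_{p\in S}\legendre{2}{p}=(-1)^{\fS_{\{3,5\}}}$, $\prod_{p\in S}\legendre{a_{p}}{p}=(-1)^{\fS_{-}}$ and $\prod_{p\in S}\legendre{-1}{p}=(-1)^{\fS_{\{3,7\}}}$ in terms of the counters $\fS_{\{3,5\}},\fS_{\{3,7\}},\fS_{-}$ that occur in the oddity identity. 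For part $(ii)$, choose for each $p\in S$ a residue modulo $p$ in the prescribed quadratic class (a quadratic non-residue modulo $p$ being available in $O(\log^{2}p)$ steps under GRH), combine with $\wp\equiv 1\pmod 4$ by the Chinese Remainder Theorem into one residue class modulo $4\prod_{p\in S}p$, a divisor of $4\dSym$, and apply Theorem~\ref{thm:ERH}: a prime $\wp\le(4\dSym)^{3}$ in that class exists and is found by sampling the arithmetic progression and primality testing, discarding the $O(|\PSym|)$ primes dividing $2\dSym$, in $\poly(|\PSym|,\log\dSym)$ time. Fixing $\ordp(r)$ for each $p\mid\dSym$ as described then yields the required $t=\eta\cdot 2\wp r^{2}$.
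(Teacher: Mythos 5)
Your proof is correct, and it takes a genuinely cleaner route than the paper's. Both you and the paper produce $t=\pm\,2\wp r^{2}$ and choose $\wp$ by CRT plus Dirichlet, and both verify the representation prime-by-prime using Lemma~\ref{lem:RepresentTTypeII} for $p=2$, Lemma~\ref{lem:Square}/\ref{lem:RepModPDim2} for odd relevant primes, Lemma~\ref{lem:Siegel12} for the rest, and Lemma~\ref{lem:RepresentWP} to reduce the $\wp$-place to $\legendre{-\dSym}{\wp}=1$. The difference is in how the last identity is secured. The paper enumerates six residue classes for $\wp\bmod 4$ depending on $(\rho,\epsilon)$ and the parities of $\fS_{-},\fS_{\{3,5\}},\fS_{\{5,7\}}$, and then outsources the exhaustiveness check (``for every valid symbol at least one case applies'') to a brute-force \textsc{Maple} search over all parameter parities. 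You instead observe that $\wp\equiv 1\pmod 4$ always suffices, since the required sign $\prod_{p\in S}\legendre{\wp}{p}=\prod_{p\in S}\legendre{2\rho a_p}{p}=(-1)^{\fS_{\{3,5\}}+\fS_-}\rho^{\fS_{\{3,7\}}}$ is forced to equal $+1$ by the oddity condition: reading the oddity equation modulo $4$ (after dividing by $2$) pins $\fS_{\{3,7\}}\equiv\frac{1+\epsilon}{2}\pmod 2$, and reading it modulo $8$ together with Lemma~\ref{lem:dim2pexcess}'s expression for $(-1)^m$ (which encodes the determinant condition via $\legendre{b_p}{p}=\legendre{a_p\copp(\dSym)}{p}$) then forces, in each of the three realizable cases $\sig\in\{2,0,-2\}$, the single parity relation $\fS_{\{5,7\}}+\fS_-\equiv\frac{1+\epsilon}{2}\pmod 2$, from which $(-1)^{\fS_{\{3,5\}}+\fS_-}\rho^{\fS_{\{3,7\}}}=1$ follows by the identity $\fS_{\{3,5\}}+\fS_{\{3,7\}}\equiv\fS_{\{5,7\}}\pmod 2$. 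I checked each of the three cases and the arithmetic does close; your argument replaces the paper's computer-assisted exhaustiveness verification with a short pencil-and-paper parity computation and eliminates all the $\wp\equiv 3\pmod 4$ branches. It would be worth writing out those three sub-cases explicitly rather than leaving them at the level of ``forces the parity relations,'' since that is precisely the step the paper felt obliged to mechanize, but the plan is sound.
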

\begin{proof}
Recall the definition of $\epsilon, \rho \in \{-1,1\}$ as in Equation 
\ref{def:dim2}. Let us define the following set of congruences.
\begin{align}\label{TypeIIdim2:wp}
\begin{array}{ll}
\wp & \equiv 2\rho a_p  \pmod p ~\text{ for all} ~ p \in S\\
\wp &\equiv \left\{\begin{array}{ll}
1 \pmod 4 & \text{if }\rho=+, \fS_{\{3,5\}}+\fS_{-} \text{ even, or} \\
& \text{if }\rho=-, \fS_{\{5,7\}}+\fS_{-} \text{ even.} \\
3\pmod 4 &\text{if }\rho=+,\epsilon=+,\fS_{\{5,7\}}+\fS_{-}\text{ odd, or} \\
& \text{if }\rho=+,\epsilon=-,\fS_{\{5,7\}}+\fS_{-} \text{ even, or}\\
& \text{if }\rho=-,\epsilon=-,\fS_{\{3,5\}}+\fS_{-} \text{ even, or}\\
& \text{if }\rho=-,\epsilon=+,\fS_{\{3,5\}}+\fS_{-} \text{ odd.}
\end{array}\right.
\end{array}
\end{align}

Note that the set of possibilities under which we can write a 
modulo~4 congruence is not exhaustive. It is, as we show later, 
exhaustive for every valid symbol $\Sym$.

It is possible to solve the congruence in such a way that 
$\wp$ is a prime (Dirichlet's Theorem).
Consider an integer $r$ defined as follows.
\begin{align}\label{TypeIIdim2:r}
r = \prod_{p \in \PSym} p^{e_p/2} \qquad
e_p = \left\{\begin{array}{ll}
0 & \text{if }p \in \{2\}\cup S \\
0 & \text{if }p \in \PSym \setminus (\{2\}\cup S), 
\legendre{a_p}{p}=\legendre{2\rho\wp}{p} \\
i_p & \text{if }p \in \PSym \setminus (\{2\}\cup S), 
\legendre{a_p}{p}\neq \legendre{2\rho\wp}{p} 
\end{array}\right.
\end{align}

Note that, if $p$ is an odd prime not in $S$ then $i_p$ is even. Thus,
$r$ is an integer. Define $t=2\rho \wp r^2$. We next show that $t$
has a primitive representation in the genus $\Sym$. 
For this, it
suffices to show that $t$ has a primitive $p^*$-representation in $\Sym$
for all $p \in \{p \mid \ordp(2t\dSym)>0\}\cup\{-1\}$ 
(see Lemma \ref{lem:Siegel12}).
\begin{enumerate}[(i).]
\item ($p=-1$) By Equation \ref{def:dim2}, $\rho=-1$ iff 
$\sig(\Sym)=-2$. In this case, $\Sym$ is negative definite and 
represents all negative integers. Otherwise, $\Sym \eqR 1\oplus x$ and
$\rho=1$. But then, $t$ is a positive integer and hence can be
represented by $\Sym$ over $\bbR$. In either case, $t$ has a 
representation in $\Sym$ over $\bbR$.

\item ($p=2$) By construction,~2 does not divide $\wp r^2$ and so
$\ordp(t)=1$. By assumption, $\SymT \eqt \MTP$ or $\SymT \eqt \MTM$.
In either case, by Lemma \ref{lem:RepresentTTypeII} and 
Theorem \ref{thm:Siegel13}, $t$ has a 
$2^*$-primitive representation in $\SymT$.

\item ($p \in S$) By definition of $r$, $\ordp(r)=0$ for all primes
$p \in S$. By construction in Equation \ref{TypeIIdim2:wp},
$\legendre{t}{p}=\legendre{2\rho\wp}{p}=\legendre{a_p}{p}$, where
$\Sym \eqp a_p \oplus p^{i_p}b_p$. But then, by Lemma 
\ref{lem:Square}, $a_p$ (and hence, $\Sym$) 
represents $t$, $p^*$-primitively.

\item ($p$ odd, $p\in \PSym \setminus S$) If $\Sym \eqp
a_p \oplus p^{i_p}b_p$ then, $i_p$ is even. If 
$\legendre{a_p}{p}=\legendre{2\rho\wp}{p}$ then, $p$ does
not divide $2\rho\wp r^2$ and $\legendre{a_p}{p}=\legendre{t}{p}$.
Thus, $t$ has a $p^*$-primitive representation in $\Sym$ (Lemma
\ref{lem:Square} and Theorem \ref{thm:Siegel13}).
Otherwise, $\legendre{a_p}{p} \neq \legendre{2\rho\wp}{p}$. But then,
$\ordp(t)=\ordp(r^2)=i_p$, and
by Lemma \ref{lem:RepModPDim2} and Theorem \ref{thm:Siegel13}, $t$
has a $p^*$-primitive representation in $a_p \oplus p^{i_p}b_p$.

\item ($p=\wp$) Finally, it remains to show that $t$ has a 
$\wp^*$-primitive representation in $\Sym$. By Lemma 
\ref{lem:RepresentWP}, one needs to show that 
$\legendre{-\dSym}{\wp}=1$. Recall the Quadratic Reciprocity
Laws in Equation \ref{QuadraticReciprocity}. Also, note that
$\legendre{-1}{p}=1$ iff $p \equiv 1 \bmod 4$. The computation
of $\legendre{-\dSym}{\wp}$ can be done using Lemma 
\ref{lem:RepresentWP}, as follows.
\begin{align*}
\prod_{p\in S}\legendre{\wp}{p} &= \prod_{p \in S} \legendre{2\rho a_p}{p} 
	= (-1)^{\fS_{-}}\rho^{\fS_{\{3,7\}}}\prod_{p \in S} \legendre{2}{p} \\
	&=(-1)^{\fS_{-}}\rho^{\fS_{\{3,7\}}}\prod_{p \in S} \legendre{p}{2} 
	= (-1)^{\fS_{-}+\fS_{\{3,5\}}}\rho^{\fS_{\{3,7\}}}\\
\legendre{-\dSym}{\wp} 
	&= \left\{\begin{array}{ll}
	(-1)^{\fS_{-}+\fS_{\{3,5\}}}\rho^{\fS_{\{3,7\}}} 
	& \text{$\wp \equiv 1 \bmod 4$,} \\
	(-1)^{\fS_{\{5,7\}}+\fS_{-}+1}\epsilon\rho^{\fS_{\{3,7\}}}
	& \text{otherwise.}\\
	\end{array}\right. \\
	&= \left\{\begin{array}{ll}
	(-1)^{\fS_{-}+\fS_{\{3,5\}}}
	& \rho=+, \wp \equiv 1 \bmod 4 \\
	(-1)^{\fS_{-}+\fS_{\{5,7\}}}
	& \rho=-, \wp \equiv 1 \bmod 4\\
	(-1)^{\fS_{-}+\fS_{\{5,7\}}+1} & \rho=+, \epsilon=+, \wp\equiv 3\bmod 4\\
	(-1)^{\fS_{-}+\fS_{\{5,7\}}} & \rho=+, \epsilon=-, \wp\equiv 3\bmod 4\\
	(-1)^{\fS_{-}+\fS_{\{3,5\}}} & \rho=-, \epsilon=-, \wp\equiv 3\bmod 4\\
	(-1)^{\fS_{-}+\fS_{\{3,5\}}+1} & \rho=-, \epsilon=+, \wp\equiv 3\bmod 4\\
	\end{array}\right. \\
\end{align*}
It turns out that $\wp \bmod 4$ was defined to satisfy exactly
this equation (see Equation \ref{TypeIIdim2:wp}).
\end{enumerate}

This completes the proof of the claim that $t$ has a primitive
representation in the genus $\Sym$.

Finally, we show that the set of possibilities under the modulo~4
congruence in Equation \ref{TypeIIdim2:wp} is exhaustive, if the
input symbol $\Sym$ is valid.
The proof of this statement is computer assisted and the code
can be found in Appendix \ref{sec:Code}.

We design the test program as follows. For all possible choices of
$\epsilon, \rho \in \{1,-1\}$, and $\fS_{db} \in \{0,1,2,3\}$, we
compute $\sig(\Sym), \odty(\Sym)$ by Equation \ref{dim2:sigodty}. 
We also compute $\sum_{p \in \bbP}\pexcess(\Sym)$ by Lemma 
\ref{lem:dim2pexcess}. Then, we check the oddity condition i.e,
\[
\sig(\Sym)+\sum_{p \in \bbP}\pexcess(\Sym)\equiv\odty(\Sym) \pmod 8
\]
If the oddity condition is satisfied then we check if at least one
of these conditions hold.
\begin{align}\label{TypeIIdim2:good}
\begin{array}{ll}
(\rho=+, \fS_{\{3,5\}}+\fS_{-} \text{ even}) &
(\rho=-, \fS_{\{5,7\}}+\fS_{-} \text{ even}) \\
(\rho=+,\epsilon=+,\fS_{\{5,7\}}+\fS_{-}\text{ odd}) &
(\rho=+,\epsilon=-,\fS_{\{5,7\}}+\fS_{-} \text{ even})\\
(\rho=-,\epsilon=-,\fS_{\{3,5\}}+\fS_{-} \text{ even}) &
(\rho=-,\epsilon=+,\fS_{\{3,5\}}+\fS_{-} \text{ odd})
\end{array}
\end{align}
In each of these cases, a $\wp$ and hence $t$ exists by Equation
\ref{TypeIIdim2:wp}. The test program never finds itself in
a situation when none of the conditions in Equation \ref{TypeIIdim2:good} 
are true. This completes the proof of existence of a primitively
representable $t$.

\end{proof}

\subsection{Representation: $n=2$, Type I, Even}

This section deals with the case when $\SymT \eqt a_2 \oplus 2^{i_2} b_2$,
where $i_2$ is even, and $a_2, b_2 \in \{1,3,5,7\}$.

\begin{lemma}\label{lem:TypeIEven}
Let $\Sym^{n=2}$ be valid reduced genus with
$\SymT\eqt a_2\oplus 2^{i_2}b_2$, where $i_2$ is even and 
$a_2, b_2 \in \SGN^\times$. Then,
there exists an integer of the form $\wp r^2$ with primitive representation
in $\Sym$, where $\wp$ is an odd prime that does not divide $\dSym$
and $r^2$ is an integer that divides $\dSym$.
\end{lemma}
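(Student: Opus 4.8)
The plan is to mirror the Type~II argument of Lemma~\ref{TypeIIdim2:FindT}. I would construct $t=\rho\wp r^2$, with $\rho\in\{1,-1\}$ as in Equation~\ref{def:dim2} (so the sign of $t$ matches the definiteness of $\Sym$), where $\wp$ is an odd prime not dividing $\dSym$ and $r$ is an integer with $r^2\mid\dSym$, and then verify that $t$ has a $p^*$-primitive representation in $\Sym$ for every $p\in\{-1,2\}\cup\bbP$ dividing $2t\dSym$ (which suffices by Lemma~\ref{lem:Siegel12}).

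First I would fix the congruences on $\wp$. Since $\SymT\eqt a_2\oplus 2^{i_2}b_2$ with $i_2$ even, the scale-$0$ constituent $a_2$ primitively represents every odd integer $\equiv a_2\bmod 8$ (Lemma~\ref{lem:Square}); for small $i_2$ (namely $i_2\in\{0,2\}$) a few more residues of $t\bmod 8$ are admissible (coming from the other scale and from parity), exactly as the modulo-$4$ cases appear in the Type~II argument. Accordingly I would require $\rho\wp\equiv a_2\bmod 8$ (together with those alternative residues in the small-$i_2$ cases), and, for each $p\in S$, require $\legendre{\wp}{p}=\legendre{\rho a_p}{p}$, a congruence mod $p$, which guarantees $\legendre{t}{p}=\legendre{a_p}{p}$. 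Dirichlet's theorem supplies such a prime $\wp$ not dividing $\dSym$. I would then set
\[
r=\prod_{p\in\PSym\setminus(\{2\}\cup S)}p^{e_p/2},\qquad
e_p=\begin{cases}0 & \text{if }\legendre{a_p}{p}=\legendre{\rho\wp}{p},\\ i_p & \text{otherwise,}\end{cases}
\]
noting that these $p$ have $i_p$ even, so $r\in\bbZ$, $\gcd(r,2\wp)=1$, and $r^2\mid\dSym$; and put $t=\rho\wp r^2$.

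Next I would verify the local conditions. At $p=-1$, the sign of $t$ is forced by $\rho$ to agree with $\sig(\Sym)$. At $p=2$, $t$ is odd with $t\equiv a_2\bmod 8$, so $a_2$ (hence $a_2\oplus 2^{i_2}b_2$) primitively represents $t$ mod $8$, and Theorem~\ref{thm:Siegel13} promotes this to a $2^*$-primitive representation. At $p\in S$, $\ordp(t)=0$ and $\legendre{t}{p}=\legendre{a_p}{p}$, so $a_p$ primitively represents $t$ (Lemma~\ref{lem:Square}, Theorem~\ref{thm:Siegel13}). At odd $p\in\PSym\setminus(\{2\}\cup S)$, either $e_p=0$, whence $p\nmid t$ and $\legendre{t}{p}=\legendre{a_p}{p}$ (Lemma~\ref{lem:Square}), or $e_p=i_p$, whence $\ordp(t)=i_p$ and $\legendre{\copp(t)}{p}=\legendre{\rho\wp}{p}\neq\legendre{a_p}{p}$, handled by Lemma~\ref{lem:RepModPDim2}. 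For odd $p\nmid\wp\dSym$ we have $p\nmid t$, so Lemma~\ref{lem:Siegel12} applies.

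The hard part will be $p=\wp$. By Lemma~\ref{lem:RepresentWP}, $t$ has a $\wp^*$-primitive representation in $\Sym$ iff $\legendre{-\dSym}{\wp}=1$; substituting $\legendre{\wp}{p}=\legendre{\rho a_p}{p}$ (so $\prod_{p\in S}\legendre{\wp}{p}=\rho^{\fS_{\{3,7\}}}(-1)^{\fS_-}$, using $\legendre{-1}{p}=-1\iff p\bmod 8\in\{3,7\}$ and $\prod_{p\in S}\legendre{a_p}{p}=(-1)^{\fS_-}$) together with $\wp\bmod 8$ into the ($\ordt(\dSym)$ even) case of Lemma~\ref{lem:RepresentWP} turns $\legendre{-\dSym}{\wp}$ into an explicit expression in $\rho,\epsilon$, and the sizes $\fS_{db}$. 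I would then show this expression equals $1$ using the oddity condition~\ref{Cond:Sum}, which via Lemma~\ref{lem:dim2pexcess} and the formula $\odty(\Sym)=a_2+b_2\bmod 8$ from Equation~\ref{dim2:sigodty} is itself an identity among $\rho,\epsilon,\fS_{db},a_2,b_2$; as in the Type~II case the cleanest route is a finite check over all $(\epsilon,\rho,\fS_{db},a_2,b_2)$ for which the oddity condition holds, verifying that for each such symbol at least one admissible residue of $\wp$ (from the list above, including the small-$i_2$ alternatives) forces $\legendre{-\dSym}{\wp}=1$, carried out by computer (code in Appendix~\ref{sec:Code}). Since $r^2\mid\dSym$ and $\wp\nmid\dSym$, this exhibits $t$ of the required form.
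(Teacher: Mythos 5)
Your construction is essentially the paper's: the same shape $t=\rho\wp r^2$, the same congruences $\wp\equiv\rho a_p\bmod p$ for $p\in S$, the same local verifications via Lemmas \ref{lem:Square}, \ref{lem:RepModPDim2}, \ref{lem:RepresentWP}, \ref{lem:Siegel12} and Theorem \ref{thm:Siegel13}, and the same appeal to a finite computer check that the admissible congruence classes for $\wp$ cover all valid symbols. The one substantive difference is at $p=2$. The paper keeps two options in play for \emph{every} even $i_2$: either $\wp\equiv\rho a_2\bmod 8$ (then $t$ is odd and is represented by the scale-$0$ constituent), or $\wp\equiv\rho b_2\bmod 8$ with $e_2=i_2$, i.e.\ the factor $2^{i_2/2}$ is put into $r$, so that $\ordt(t)=i_2$ and $t$ is represented by the scale-$i_2$ constituent; its exhaustiveness check (Equation \ref{TypeIEven:good}) is run over the two-element set $X=\{\rho a_2\bmod 8,\rho b_2\bmod 8\}$. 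You keep only the first option (your $r$ is odd, so $t\equiv a_2\bmod 8$ always), and your remark that extra residues are available only for $i_2\in\{0,2\}$ misses the actual mechanism the paper exploits: the $b_2$ option has nothing to do with $i_2$ being small. Your restricted list does in fact survive the exhaustiveness check, but this is not self-evident: when $a_2\equiv b_2\bmod 4$ both options give the same residue of $\wp$ modulo $4$, and when $a_2\not\equiv b_2\bmod 4$ the oddity condition read modulo $4$ (using Lemma \ref{lem:dim2pexcess} and $\odty(\Sym)=a_2+b_2\bmod 8$) forces $(-1)^{\fS_{\{3,7\}}}=-\epsilon$, which makes the two candidate values of $\legendre{-\dSym}{\wp}$ (for $\wp\equiv 1$ and $\wp\equiv 3\bmod 4$) coincide; hence the $a_2$-only list is exhaustive exactly when the paper's two-element list is. If you take your route you must supply this argument, or actually run the finite check over your smaller list; what the paper's extra option buys is a check whose sufficiency needs no such additional reasoning.
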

\begin{proof}
Recall the definition of $\epsilon, \rho \in \{-1,1\}$ as in Equation 
\ref{def:dim2}. Let us define the following set of congruences.
\begin{align}\label{TypeIEven:wp}
\begin{array}{ll}
\wp &\equiv \rho a_p \pmod{p} ~~ \text{for all }p \in S\\
\wp &\equiv \left\{\begin{array}{ll}
x \pmod 8 & \text{if }\rho=+, \fS_{-} \text{ even and } 
x \in X \cap \{1,5\}\text{ or} \\
& \text{if }\rho=-, \fS_{\{3,7\}}+\fS_{-} \text{ even and } 
x \in X \cap \{1,5\}\text{ or} \\
y\pmod 8 &\text{if }\rho=+,\epsilon=+,\fS_{\{3,7\}}+\fS_{-}\text{ odd and }
y \in X \cap \{3,7\}\text{ or}\\
& \text{if }\rho=+,\epsilon=-,\fS_{\{3,7\}}+\fS_{-} \text{ even and }
y \in X \cap \{3,7\}\text{ or}\\
& \text{if }\rho=-,\epsilon=-,\fS_{-} \text{ even and }
y \in X \cap \{3,7\}\text{ or}\\
& \text{if }\rho=-,\epsilon=+,\fS_{-} \text{ odd and }
y \in X \cap \{3,7\}
\end{array}\right.\\
&\text{where, }X = \{\rho a_2 \bmod 8, \rho b_2 \bmod 8\}
\end{array}
\end{align}
A few word on notation. The modulo~8 congruences should be read as
follows. Consider the first congruence ``$\rho \equiv x \bmod 8$,
if $\rho=+, \fS_{-}$ even and $x \in X\cap\{1,5\}$''. If $X\cap\{1,5\}$
is empty then this statement is false. Otherwise, we pick any
element $x$ from the intersection. Also, $x \equiv 1 \bmod 4$
and $y \equiv 3 \bmod 4$.

Note that the set of possibilities under which we can write a 
modulo~8 congruence is not exhaustive. It is, as we show later, 
exhaustive for every valid symbol $\Sym$.

It is possible to solve the congruence in such a way that 
$\wp$ is a prime (Dirichlet's Theorem).
Consider an integer $r$ defined as follows.
\begin{align}\label{TypeIEven:r}
r = \prod_{p \in \PSym} p^{e_p/2} \qquad
e_p = \left\{\begin{array}{ll}
0 & \text{if }(p \in S) \text{ or }(p=2, \wp \equiv a_2 \bmod 8)\\
i_2 & \text{if }p=2, \wp \equiv b_2 \bmod 8\\
0 & \text{if }p \in \PSym \setminus (\{2\}\cup S), 
\legendre{a_p}{p}=\legendre{\rho\wp}{p} \\
i_p & \text{if }p \in \PSym \setminus (\{2\}\cup S), 
\legendre{a_p}{p}\neq \legendre{\rho\wp}{p} 
\end{array}\right.
\end{align}
The exponent $e_p$ is always even and hence $r$ is an integer.
Define $t=\rho \wp r^2$. We next show that $t$
has a primitive representation in the genus $\Sym$. For this, it
suffices to show that $t$ has a primitive $p^*$-representation in $\Sym$
for all $p \in \{p \mid \ordp(2t\dSym)>0\}\cup\{-1\}$ 
(see Lemma \ref{lem:Siegel12}).

\begin{enumerate}[(i).]
\item ($p=-1$) By Equation \ref{def:dim2}, $\rho=-1$ iff 
$\sig(\Sym)=-2$. In this case, $\Sym$ is negative definite and 
represents all negative integers. Otherwise, $\Sym \eqR 1\oplus x$ and
$\rho=1$. But then, $t$ is a positive integer and hence can be
represented by $\Sym$ over $\bbR$. In either case, $t$ has a 
representation in $\Sym$ over $\bbR$.

\item ($p \in S$) By definition of $r$, $\ordp(r)=0$ for all primes
$p \in S$. By construction in Equation \ref{TypeIEven:wp},
$\legendre{t}{p}=\legendre{\rho\wp}{p}=\legendre{a_p}{p}$, where
$\Sym \eqp a_p \oplus p^{i_p}b_p$. But then, by Lemma 
\ref{lem:Square}, $a_p$ (and hence, $\Sym$) 
represents $t$, $p^*$-primitively.

\item ($p$ odd, $p\in \PSym \setminus S$) If $\Sym \eqp
a_p \oplus p^{i_p}b_p$ then, $i_p$ is even. If 
$\legendre{a_p}{p}=\legendre{\rho\wp}{p}$ then, $p$ does
not divide $\rho\wp r^2$ and $\legendre{a_p}{p}=\legendre{t}{p}$.
Thus, $t$ has a $p^*$-primitive representation in $\Sym$ (Lemma
\ref{lem:Square} and Theorem \ref{thm:Siegel13}).
Otherwise, $\legendre{a_p}{p} \neq \legendre{\rho\wp}{p}$. But then,
$\ordp(t)=\ordp(r^2)=i_p$, and
by Lemma \ref{lem:RepModPDim2} and Theorem \ref{thm:Siegel13}, $t$
has a $p^*$-primitive representation in $a_p \oplus p^{i_p}b_p$.

\item ($p=\wp$) Next, we show that $t$ has a 
$\wp^*$-primitive representation in $\Sym$. By Lemma 
\ref{lem:RepresentWP}, one needs to show that 
$\legendre{-\dSym}{\wp}=1$. Recall the Quadratic Reciprocity
Laws in Equation \ref{QuadraticReciprocity}. Also, note that
$\legendre{-1}{p}=1$ iff $p \equiv 1 \bmod 4$. The computation
of $\legendre{-\dSym}{\wp}$ can be done using Lemma 
\ref{lem:RepresentWP}, as follows.
\begin{align*}
\prod_{p\in S}\legendre{\wp}{p} &= \prod_{p \in S} \legendre{\rho a_p}{p} 
	= (-1)^{\fS_{-}}\rho^{\fS_{\{3,7\}}}\\
\legendre{-\dSym}{\wp} 
	&= \left\{\begin{array}{ll}
	(-1)^{\fS_{-}}\rho^{\fS_{\{3,7\}}} 
	& \text{$\wp \equiv 1 \bmod 4$,} \\
	(-1)^{\fS_{\{3,7\}}+\fS_{-}+1}\epsilon\rho^{\fS_{\{3,7\}}}
	& \text{otherwise.}\\
	\end{array}\right. \\
	&= \left\{\begin{array}{ll}
	(-1)^{\fS_{-}}
	& \rho=+, \wp \equiv 1 \bmod 4 \\
	(-1)^{\fS_{-}+\fS_{\{3,7\}}}
	& \rho=-, \wp \equiv 1 \bmod 4\\
	(-1)^{\fS_{-}+\fS_{\{3,7\}}+1} & \rho=+, \epsilon=+, \wp\equiv 3\bmod 4\\
	(-1)^{\fS_{-}+\fS_{\{3,7\}}} & \rho=+, \epsilon=-, \wp\equiv 3\bmod 4\\
	(-1)^{\fS_{-}} & \rho=-, \epsilon=-, \wp\equiv 3\bmod 4\\
	(-1)^{\fS_{-}+1} & \rho=-, \epsilon=+, \wp\equiv 3\bmod 4\\
	\end{array}\right. \\
\end{align*}
It turns out that $\wp \bmod 4$ was defined to satisfy exactly
this equation (see Equation \ref{TypeIIdim2:wp}).

\item ($p=2$) In this case, $\SymT \eqt a_2\oplus2^{i_2}b_2$, where
$a_2, b_2 \in \{1,3,5,7\}$. 
From Equation \ref{TypeIEven:wp}, either $a_2 \eqt t$ or 
$2^{i_2}b_2\eqt t$. In either case, $t$ has a $2^*$-primitive 
representation in $\Sym$.
\end{enumerate}

This completes the proof of the claim that $t$ has a primitive
representation in the genus $\Sym$.

Finally, we show that the set of possibilities under the modulo~8
congruence in Equation \ref{TypeIEven:wp} is exhaustive, if the
input symbol $\Sym$ is valid.
The proof of this statement is computer assisted and the code
can be found in Appendix \ref{sec:Code}.

We design the test program as follows. For all possible choices of
$\epsilon, \rho \in \{1,-1\}$, $a_2,b_2 \in \{1,3,5,7\}$ 
and $\fS_{db} \in \{0,1,2,3\}$, we
compute $\sig(\Sym), \odty(\Sym)$ by Equation \ref{dim2:sigodty}. 
We also compute $\sum_{p \in \bbP}\pexcess(\Sym)$ by Lemma 
\ref{lem:dim2pexcess}. Then, we check the oddity condition i.e,
\[
\sig(\Sym)+\sum_{p \in \bbP}\pexcess(\Sym)\equiv\odty(\Sym) \pmod 8
\]
We next check the following determinant condition for $\SymT$.
\[
\legendre{a_2b_2}{2} = (-1)^{\fS_{\{3,5\}}}
\]
If either of these conditions is not satisfied then the symbol $\Sym$
is not valid. For the others, we check if at least one of these
condition holds.
\begin{align}\label{TypeIEven:good}
\begin{array}{l}
(\rho=+, \fS_{-}\text{ even}, |X\cap\{1,5\}|>0) \\
(\rho=-, \fS_{-}+\fS_{\{3,7\}}\text{ even}, |X\cap\{1,5\}|>0) \\
(\rho=+,\epsilon=+,\fS_{\{3,7\}}+\fS_{-}\text{ odd}, |X\cap\{3,7\}|>0) \\
(\rho=+,\epsilon=-,\fS_{\{3,7\}}+\fS_{-} \text{ even}, |X\cap\{3,7\}|>0)\\
(\rho=-,\epsilon=-,\fS_{-} \text{ even}, |X\cap\{3,7\}|>0) \\
(\rho=-,\epsilon=+,\fS_{-} \text{ odd}, |X\cap\{3,7\}|>0)
\end{array}
\end{align}
In each of these cases, a $\wp$ and hence $t$ exists by Equation
\ref{TypeIEven:wp}. The test program never finds itself in
a situation when none of the conditions in Equation \ref{TypeIEven:good} 
are true. This completes the proof of existence of a primitively
representable $t$.

\end{proof}

\subsection{Representation: $n=2$, Type I, Odd}

This section deals with the case when $\SymT \eqt a_2 \oplus 2^{i_2} b_2$,
where $i_2$ is odd and $a_2,b_2 \in \{1,3,5,7\}$.

\begin{lemma}\label{lem:TypeIOdd}
Let $\Sym^{n=2}$ be a valid reduced genus with $\SymT \eqt 
a_2\oplus 2^{i_2}b_2$, where $i_2$ is odd and $a_2,b_2 \in \SGN^\times$.
Then, there exists an integer of the form $\wp r^2$ or $2^{i_2}\wp r^2$ with 
primitive representation in $\Sym$, where $\wp$ is an odd prime that 
does not divide $\dSym$ and $r^2$ is an integer that divides 
$\dSym$.
\end{lemma}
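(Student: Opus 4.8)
The plan is to follow the blueprint of Lemmas~\ref{TypeIIdim2:FindT} and~\ref{lem:TypeIEven}, with one extra wrinkle: since $\ordt(\dSym)=i_2$ is now odd, a power of~$2$ must be carried outside the square, so the target integer has one of two shapes, $t=\rho\wp r^2$ or $t=2^{i_2}\rho\wp r^2$ (with $\epsilon,\rho$ as in Equation~\ref{def:dim2}), according to whether it is to be represented at the prime~$2$ by the block $a_2$ or by the block $2^{i_2}b_2$ of $\SymT$. For an odd prime $\wp\notin\PSym$ --- to be produced by Dirichlet's theorem, as before --- I would impose a small family of simultaneous congruences, organized into clauses as in Equation~\ref{TypeIEven:wp}. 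The mod-$p$ part asks $\wp\equiv\rho a_p\pmod p$ for each $p$ in $S=\{p\in\PSym\cap\bbP\mid i_p\text{ odd}\}$ in the $a_2$-branch, and $\wp\equiv 2\rho a_p\pmod p$ in the $2^{i_2}b_2$-branch (the extra factor~$2$ compensates for the fact that, $i_2$ being odd, the power $2^{i_2}$ in $t$ shifts $\legendre{\copp(t)}{p}$ by $\legendre{2}{p}$). The mod-$8$ part is chosen to simultaneously (i)~pin $\rho\wp\bmod 8$ to whichever of $a_2,b_2$ names the branch, and (ii)~force $\legendre{-\dSym}{\wp}=1$. Finally I would take $r=\prod_{p\in\PSym\setminus\{2\}}p^{e_p/2}$ with $e_p=0$ for $p\in S$ and $e_p\in\{0,i_p\}$ for the other odd relevant primes chosen, exactly as in Equation~\ref{TypeIEven:r}, so that the $p$-part of $t$ is primitively represented by $a_p$ or by $a_p\oplus p^{i_p}b_p$; these $p$ have $i_p$ even, so $r$ is an integer, $r^2\mid\dSym$, and $t$ has the stated form.

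Granting such a $\wp$, I would verify primitive representability in $\Sym$ prime by prime, as in the preceding lemmas. At $p=-1$ the sign $\rho$ makes $t$ representable over $\bbR$. At $p\in S$, the mod-$p$ congruence gives $\legendre{\copp(t)}{p}=\legendre{a_p}{p}$, so $a_p$ represents $t$ (Lemma~\ref{lem:Square}). At an odd $p\in\PSym\setminus S$, either $e_p=0$ with $a_p$ representing $t$, or $e_p=i_p$ with $a_p\oplus p^{i_p}b_p$ representing $t$ (Lemma~\ref{lem:RepModPDim2}); Theorem~\ref{thm:Siegel13} lifts both to $p^*$-representations. At $p=2$, since $r^2\equiv 1\pmod 8$ one has $\copt(t)\equiv\rho\wp\pmod 8$ and $\ordt(t)\in\{0,i_2\}$, so by~(i) exactly one of $a_2$, $2^{i_2}b_2$ represents $t$ (Lemma~\ref{lem:Square}), again lifted. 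At $p=\wp$, Lemma~\ref{lem:RepresentWP} reduces the claim to $\legendre{-\dSym}{\wp}=1$, which is~(ii); crucially, because $\ordt(\dSym)$ is odd, the applicable formula of Lemma~\ref{lem:RepresentWP} carries an extra factor $\legendre{\wp}{2}$, which is why the congruence on $\wp$ must be modulo~$8$ rather than modulo~$4$. I would expand $\legendre{-\dSym}{\wp}$ with quadratic reciprocity (Equation~\ref{QuadraticReciprocity}) and the mod-$p$ congruences into an expression in $\epsilon$, $\rho$, the parities of the $\fS_{db}$, and $\wp\bmod 8$, and read off the residue classes of $\wp$ modulo~$8$ for which it equals $+1$; those classes define the mod-$8$ clause in each branch.

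The main obstacle --- and the only place where validity of $\Sym$ is used --- is to show that the mod-$8$ congruence on $\wp$ can actually be met: that for every valid reduced symbol $\Sym$ of this type, in at least one branch the set $\{\rho a_2,\rho b_2\}\bmod 8$ meets the residue set allowed by requirement~(ii). A priori these two constraints are unrelated, and there is no slack in dimension~$2$ (in contrast to $n\ge 3$), so, as in Lemmas~\ref{TypeIIdim2:FindT} and~\ref{lem:TypeIEven}, I would settle this by a finite computer-assisted enumeration, with the code deferred to Appendix~\ref{sec:Code}: iterate over all $\epsilon,\rho\in\{+1,-1\}$, all $a_2,b_2\in\{1,3,5,7\}$, and all $\fS_{db}\in\{0,1,2,3\}$; discard the tuples failing the determinant condition $\legendre{a_2b_2}{2}=(-1)^{\fS_{\{3,5\}}}$ or the oddity condition $\sig(\Sym)+\sum_{p\in\bbP}\pexcess(\Sym)\equiv\odty(\Sym)\pmod 8$, computed via Equation~\ref{dim2:sigodty}, Lemma~\ref{lem:dim2pexcess}, and $\odty(\Sym)=a_2+b_2\bmod 8$ or $a_2+b_2+4\bmod 8$ according as $b_2\equiv\pm1\pmod 8$ or not (the relevant case of Equation~\ref{dim2:sigodty}, since $i_2$ is odd); and for each surviving tuple check that some clause of the mod-$8$ congruence list --- the analogue of Equation~\ref{TypeIEven:good}, now with the extra $\legendre{\wp}{2}$ and the branch split built in --- is satisfiable. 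The enumeration should never meet a valid tuple with no applicable clause; this yields $\wp$ and hence $t$, completing the proof.
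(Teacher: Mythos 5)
Your proposal follows the paper's own proof essentially verbatim: the same two branches $t=\rho\wp r^2$ (matched to $a_2$) and $t=2^{i_2}\rho\wp r^2$ (matched to $2^{i_2}b_2$), the same congruences $\wp\equiv\rho a_p$ resp.\ $2\rho a_p\pmod p$ for $p\in S$ and $\rho\wp\equiv a_2$ resp.\ $b_2\pmod 8$, the same choice of $e_p\in\{0,i_p\}$ for the remaining odd relevant primes, the same prime-by-prime verification via Lemmas \ref{lem:Square}, \ref{lem:RepModPDim2}, \ref{lem:RepresentWP} and Theorem \ref{thm:Siegel13} (including the extra $\legendre{\wp}{2}$ factor from $\ordt(\dSym)$ odd), and the same computer-assisted enumeration over $\epsilon,\rho,a_2,b_2,\fS_{db}$ against the determinant and oddity conditions to show the congruence clauses are exhaustive. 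The only cosmetic difference is that you leave the explicit sign conditions defining the two branches to be read off from the computation, whereas the paper writes them out in Equation \ref{TypeIOdd:wp}; the argument is otherwise the same.
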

\begin{proof}
By assumption $i_2$ is odd and hence an odd power of~2 divides
$\dSym$. 

Consider the following set of congruences, along with the construction
of the candidate primitively representable integer $t$.
\begin{align}\label{TypeIOdd:wp}
\begin{array}{l}
\textbf{if } \Big(\rho a_2 \equiv 1 \bmod 4 \textbf{ and } 
(-1)^{\fS_{-}}\rho^{\fS_{\{3,7\}}}\legendre{a_2}{2}=1\Big) \textbf{ or } \\
\Big(\rho a_2 \equiv 3 \bmod 4 \textbf{ and } 
(-1)^{\fS_{-}+\fS_{\{3,7\}}+1}\rho^{\fS_{\{3,7\}}}
\epsilon\legendre{a_2}{2}=1\Big) \textbf{ then}\\
\begin{array}{ll}
~~\wp &\equiv \rho a_p \bmod p  ~~\text{ for all }p\in S\\
~~\wp &\equiv \rho a_2 \bmod 8 \\
~~e_p &= \left\{\begin{array}{ll}
0 & \text{if }p \in \{2\}\cup S \\
0 & \text{if }p \in \PSym \setminus (\{2\}\cup S), 
\legendre{a_p}{p}=\legendre{\rho\wp}{p} \\
i_p & \text{if }p \in \PSym \setminus (\{2\}\cup S), 
\legendre{a_p}{p}\neq \legendre{\rho\wp}{p} 
\end{array}\right.\\
~~t &= \rho \wp \left(\underset{p \in \PSym}{\prod} p^{e_p}\right)
\end{array}\\
\textbf{elif } \Big(\rho b_2 \equiv 1 \bmod 4 \textbf{ and } 
(-1)^{\fS_{\{3,5\}}+\fS_{-}}\rho^{\fS_{\{3,7\}}}\legendre{b_2}{2}=1\Big)
\textbf{ or}\\
\Big(\rho b_2 \equiv 3 \bmod 4 \textbf{ and }
(-1)^{\fS_{-}+\fS_{\{5,7\}}+1}\rho^{\fS_{\{3,7\}}}
\epsilon\legendre{b_2}{2}=1\Big) \textbf{ then}\\
\begin{array}{lll}
~~\wp &\equiv & 2\rho a_p \bmod p ~~ \text{ for all }p\in S\\
~~\wp &\equiv& \rho b_2 \bmod 8 \\
~~e_p &=& \left\{\begin{array}{ll}
0 & \text{if }p \in S \\
0 & \text{if }p \in \PSym \setminus (\{2\}\cup S), 
\legendre{a_p}{p}=\legendre{2\rho\wp}{p} \\
i_p & \text{if }p \in \PSym \setminus (\{2\}\cup S), 
\legendre{a_p}{p}\neq \legendre{2\rho\wp}{p} 
\end{array}\right.\\
~~t &=& \rho 2^{i_2}\wp \left(\underset{p\in\PSym\cap\bbP}{\prod}p^{e_p}\right)
\end{array}
\end{array}
\end{align}

Note that the set of possibilities under which we can write  
the congruence for $\wp$ is not exhaustive. It is, as we show later, 
exhaustive for every valid symbol $\Sym$.

We show that $t$ has a primitive representation in
$\Sym$, or equivalently, $t$ has a $p^*$-primitive representation
in $\Sym$ for all $p \in \{-1,2\}\cup \bbP$.
For this, it
suffices to show that $t$ has a primitive $p^*$-representation in $\Sym$
for all $p \in \{p \mid \ordp(2t\dSym)>0\}\cup\{-1\}$ 
(see Lemma \ref{lem:Siegel12}).
\begin{enumerate}[(i).]
\item ($p=-1$) The value of $\rho=-1$ iff $\sig(\Sym)=-2$. Thus,
$t$ has a representation over $\bbR$ in $\Sym$.

\item (odd $p \in \PSym \setminus S$) In this case, 
$\SymP \eqp a_p \oplus p^{i_p}b_p$ and $\ordp(t)=i_p$, where
$i_p$ is even. By Lemma \ref{lem:RepModPDim2},
$t$ has a primitive representation in $a_p \oplus p^{i_p} b_p$.

\item ($p=2$) By construction, either $t \eqt a_2$ or $t\eqt 2^{i_2}b_2$.
In either case, $t$ has a primitive $2^*$-representation in $\Sym$ 
(Lemma \ref{lem:Square}).

\item ($p=\wp$) In this case, we need to show that 
$\legendre{-\dSym}{\wp}=1$. We split the proof into 
two sub-cases.
\begin{enumerate}[(a).]
\item ($\wp \equiv \rho a_2 \bmod 8$) 
We first compute the value of 
$\prod_{p \in \{2\}\cup S} \legendre{\wp}{p}$.
\begin{align*}
\prod_{p \in \{2\}\cup S} \legendre{\wp}{p} 
&= \prod_{p\in\{2\}\cup S}\legendre{\rho a_p}{p}
=(-1)^{\fS_{-}}\rho^{\fS_{\{3,7\}}}\legendre{a_2}{2}
\end{align*}
And then, we insert it into the computation 
of $\legendre{-\dSym}{\wp}$ in Lemma \ref{lem:RepresentWP}.
\begin{align*}
\legendre{-\dSym}{\wp} &= \left\{\begin{array}{ll}
(-1)^{\fS_{-}}\rho^{\fS_{\{3,7\}}}\legendre{a_2}{2} 
& \text{if }\wp\equiv 1 \bmod 4\\
(-1)^{\fS_{-}+\fS_{\{3,7\}}+1}\rho^{\fS_{\{3,7\}}}\epsilon\legendre{a_2}{2} 
& \text{otherwise}
\end{array}\right.
\end{align*}

\item ($\wp \equiv \rho b_2 \bmod 8$) Similarly, we compute,
\begin{align*}
\prod_{p \in \{2\}\cup S} \legendre{\wp}{p} 
&= \legendre{\rho b_2}{2}\prod_{p \in S}\legendre{2\rho a_p}{p}
=(-1)^{\fS_{-}+\fS_{\{3,5\}}}\rho^{\fS_{\{3,7\}}}\legendre{b_2}{2}
\end{align*}
And then, we insert it into the computation 
of $\legendre{-\dSym}{\wp}$ in Lemma \ref{lem:RepresentWP}.
\begin{align*}
\legendre{-\dSym}{\wp} &= \left\{\begin{array}{ll}
(-1)^{\fS_{-}+\fS_{\{3,5\}}}\rho^{\fS_{\{3,7\}}}\legendre{b_2}{2} 
& \text{if }\wp\equiv 1 \bmod 4\\
(-1)^{\fS_{-}+\fS_{\{5,7\}}+1}\rho^{\fS_{\{3,7\}}}\epsilon\legendre{b_2}{2} 
& \text{otherwise}
\end{array}\right.
\end{align*}
\end{enumerate}
In either case, $\legendre{-\dSym}{\wp}=1$, proving the
primitive $\wp^*$-representativeness.
\end{enumerate}

This completes the proof of the claim that $t$ has a primitive
representation in the genus $\Sym$.

Finally, we show that the set of possibilities when we can write
a congruence (see Equation \ref{TypeIOdd:wp}) is exhaustive, if the
input symbol $\Sym$ is valid.
The proof of this statement is computer assisted and the code
can be found in Appendix \ref{sec:Code}.

We design the test program as follows. For all possible choices of
$\epsilon, \rho \in \{1,-1\}$, $a_2,b_2 \in \{1,3,5,7\}$ 
and $\fS_{db} \in \{0,1,2,3\}$, we
compute $\sig(\Sym), \odty(\Sym)$ by Equation \ref{dim2:sigodty}. 
We also compute $\sum_{p \in \bbP}\pexcess(\Sym)$ by Lemma 
\ref{lem:dim2pexcess}. Then, we check the oddity condition i.e,
\[
\sig(\Sym)+\sum_{p \in \bbP}\pexcess(\Sym)\equiv\odty(\Sym) \pmod 8
\]
We next check the following determinant condition for $\SymT$.
\[
\legendre{a_2b_2}{2} = (-1)^{\fS_{\{3,5\}}}
\]
If either of these conditions is not satisfied then the symbol $\Sym$
is not valid. For the others, we check if at least one of these
condition holds.
\begin{align}\label{TypeIOdd:good}
\begin{array}{l}
(\rho a_2 \equiv 1 \bmod 4,
(-1)^{\fS_{-}}\rho^{\fS_{\{3,7\}}}\legendre{a_2}{2}=1)\\
(\rho a_2 \equiv 3 \bmod 4,
(-1)^{\fS_{-}+\fS_{\{3,7\}}+1}\rho^{\fS_{\{3,7\}}}
\epsilon\legendre{a_2}{2}=1)\\
(\rho b_2 \equiv 1 \bmod 4,
(-1)^{\fS_{\{3,5\}}+\fS_{-}}\rho^{\fS_{\{3,7\}}}\legendre{b_2}{2}=1)\\
(\rho b_2 \equiv 3 \bmod 4,
(-1)^{\fS_{-}+\fS_{\{5,7\}}+1}\rho^{\fS_{\{3,7\}}}
\epsilon\legendre{b_2}{2}=1)
\end{array}
\end{align}
In each of these cases, a $\wp$ and hence $t$ exists as in Figure~5.1.
The test program never finds itself in
a situation when none of the conditions in Equation \ref{TypeIOdd:good} 
are true. This completes the proof of existence of a primitively
representable $t$.

\end{proof}

\subsection{Representation: putting it together}\label{dim2end}

\begin{proof}(Theorem \ref{thm:FindT})
The construction follows from the constructive nature of 
Lemma \ref{lem:TInDim4}, Lemma \ref{lem:TInDim3} and the constructions
for the case of dimension~2. The only remaining task in case of
dimension $2$, is to find a $\wp$ which satisfies the
given set of congruence relations.

Assuming ERH, one can find $\sigma_p$ i.e., the smallest non-residue
modulo $p$ in $O(\log^3 p)$ ring operations over $\zpz$. 
If done for every prime that divides
$\dSym$, this takes $O(|\PSym|\log ^3 \dSym)$ ring
operations over $\bbZ/\dSym\bbZ$.

Let $p_1,\cdots,p_s$ be the primes which appear with odd parity in the
symbol and $\alpha=8p_1\cdots p_s$. Then, we form the required set of 
congruent equations i.e.,
\[
x \equiv \left\{ 
	\begin{array}{ll}
	x_{p_i} \bmod{p_i} & \text{if $\epsilon_i=-1$, 
	where $x_{p_i} \in (\zpz)^\times$} \\
	\tau \bmod8 & \text{where $\tau \in \{1,3,5,7\}$}
	\end{array}\right.
\]
Solve this set of 
congruence using the Chinese Remainder and let $a$ be a solution.
Pick a $b$ uniformly
at random from the range $[0,\alpha^2]$. If 
$S=\{a+z\alpha \mid z \in \bbZ, z \leq \alpha^2\}$, then $a+b\alpha$ is
a uniformly random element of $S$. By Theorem \ref{thm:ERH} with probability
$\frac{1}{\log |S|}$ the number $a+b\alpha$ is prime. One then sets
$\wp = a+b\alpha$. If repeated $O(\log^2|S|)$ times, one can find $\wp$ with
overwhelming probability. The time complexity of the algorithm follows from
the fact that $|S| \leq \alpha^2 \leq \dSym^2$.
\end{proof}

The next step is to devise an algorithm that given the local form $\MS$,
positive integer $q$ and the generated
$t$ finds a primitive $\Vx$ such that $\Vx'\MS\Vx\equiv t\bmod{q}$.
Instead, we find primitive representations $\Vx_p$ for all $p$ that divides $q$
such that $\Vx_p'\MS_p\Vx_p \equiv t \bmod{p^k}$, where $\MS_p$ is 
the $p^*$-equivalent
form, $k = \ordp(q)$, and then combining them using Chinese Remainder.

The construction of $t$ used at most~4 diagonal entries of $\MS_p$
and so to find $\Vx_p$ we use Theorem \ref{thm:PolyRep}
and construct
$\Vx$ by filling the rest of the dimensions with $0$. The time taken
by this algorithm does not not depend on $n$ and is 
$\poly(k,\log p)$, for each prime factor of $q$.

\section{Polynomial Time Algorithm}\label{sec:QFGenPoly}

In this section, we give the main contribution of this thesis.

\begin{theorem}\label{thm:Main}
Let $\Sym^n$ be a valid genus. Then, there exists a randomized
$\poly(n,\log \dSym)$ algorithm that 
outputs a quadratic form $\MQ^n \in \Sym$ with constant probability.
\end{theorem}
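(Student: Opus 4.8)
The plan is to run the recursive algorithm \textsc{GenSimple} essentially verbatim, but with three modifications that keep every integer it produces of bit-length $\poly(n,\log\dSym)$, and to invoke Theorem \ref{thm:QFGen} for correctness. The recursion bottoms out at $n=1$, where $\MQ=\dSym$; its correctness at every node is exactly the content of Theorem \ref{thm:QFGen} (the symbol $\tSym$ it constructs is valid by Theorem \ref{thm:tSymExists}, and Step~10 yields an integral form in the right genus by Claims \ref{CALG:dtSym}--\ref{CALG:QisIntegral}). As an outer step, by Lemma \ref{lem:ReducedGenus} it suffices to produce a form in $\red(\Sym)$ and then multiply by $\gcd(\Sym)$; since $\tSym$ built from a reduced $\Sym$ need not itself be reduced, I would re-reduce after every recursive call and multiply the accumulated $\gcd$ factors back in at the end.

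First I would fix the choice of the primitively representable integer $t$ at a node of dimension $m$ via the $n$-independent construction of Theorem \ref{thm:FindT}: for $m\ge 4$ this returns $t\mid\dSym$ built from only the first few Jordan constituents (Lemma \ref{lem:TInDim4}), and for $m\in\{2,3\}$ it returns $t=\wp\cdot(\text{divisor of }\dSym)$ for an auxiliary odd prime $\wp\notin\PSym$ with $\wp=\dSym^{O(1)}$, obtained via Theorem \ref{thm:ERH}. Because $t$ divides $\dSym$ up to the single factor $\wp$, the modulus $q=\overline{t^{n-1}\dSym}$ has $\log q=O(n\log\dSym)$, so Theorem \ref{thm:LocalQF} produces a block-diagonal local form $\MS\eqq\Sym$ in $\poly(n,\log\dSym)$ time. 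I would then solve $\Vx'\MS\Vx\equiv t\bmod q$ using only the (at most four) Jordan blocks of $\MS$ that were used to build $t$, zero-filling the other coordinates; Theorem \ref{thm:PolyRep} does this in time independent of $n$. Moreover $\Vx$ is chosen as in Lemma \ref{lem:RepMod2Dim4} and its odd-prime analogues, so each monomial of $\Vx'\MS\Vx$, hence each entry of $\Vd=\Vx'\MS\MA$ (using block-diagonality of $\MS$), is divisible by a power of each relevant prime close to $\ordp(t)$. Extending $\Vx$ to $[\Vx,\MA]\in\gln(\zqz)$ by Lemma \ref{lem:ExtendPrimitive} and forming $\Vd,\MH$, I obtain $\tSym$ by reading off $\SYM_p(\MH)$ via Theorems \ref{thm:BlockDiagonal} and \ref{thm:ALG:CanP}.

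The hard part is bounding the determinant across the recursion. By Claim \ref{CALG:dtSym}, $\det(\tSym)=t^{m-2}\dSym$, so naively iterating over the $\le n$ levels blows the determinant up uncontrollably — exactly why Hartung's version is only $O(n^n)$. The fix is Modification~3: pass $\red(\tSym)$, not $\tSym$, to the next level. With $\MS$ block diagonal, $[\Vx,\MA]$ chosen to respect the block decomposition, and $\Vd$ supported on a $3\times 3$ corner and divisible by a high power of each relevant prime, the matrix $\MH=t\MA'\MS\MA-\Vd'\Vd$ essentially decomposes (modulo $q$, up to the usual corrections at the prime $2$) as $\MH_1\oplus t\cdot(\text{the remaining blocks of }\MS)$, where $\MH_1$ is the $3$-dimensional subproblem and the second summand has every entry divisible by $t$. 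Hence $\red$ divides out a factor comparable to $t^{m-2}$, and one proves $\log\det(\red(\tSym))\le\log\dSym+O(n^2)$. Iterating, every symbol along the recursion has determinant $2^{\poly(n,\log\dSym)}$, every modulus $q$ and every matrix entry has bit-length $\poly(n,\log\dSym)$, so no doubly-exponential numbers ever appear. Making this determinant bound fully rigorous, with the extra care needed at the prime $2$, is the step I expect to absorb most of the work.

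Finally, putting it together: the recursion has depth at most $n$; each level performs $\poly(n,\log\dSym)$ ring operations by Theorems \ref{thm:LocalQF}, \ref{thm:PolyRep}, \ref{thm:BlockDiagonal}, \ref{thm:ALG:CanP} and Lemma \ref{lem:ExtendPrimitive}, over rings of size $2^{\poly(n,\log\dSym)}$, so the total running time is $\poly(n,\log\dSym)$. Correctness at each level is Theorem \ref{thm:QFGen}; multiplying back the accumulated $\gcd$ factors and applying Lemma \ref{lem:ReducedGenus} at each level yields a quadratic form in $\Sym$. Each randomized subroutine (local form, primitive representation for odd primes, canonicalization for odd primes, and the search for $\wp$) succeeds with constant probability, so by standard amplification the whole algorithm outputs a form in $\Sym$ with constant probability, proving Theorem \ref{thm:Main}.
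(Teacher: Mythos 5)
Your proposal follows essentially the same route as the paper's own proof: reduce the symbol at every recursion level, pick $t$ via Theorem~\ref{thm:FindT} so it divides $\dSym$ (up to one auxiliary prime $\wp$ in low dimension), choose the representation and the extension $[\Vx,\MA]$ to respect the per-prime block-diagonal structure so that $\MH_p$ splits off a $t$-divisible complement, and then show the maximum $p$-scale of $\red(\tSym)$ grows by at most one (and only at $p=2$ via Type II blocks), yielding $\det(\tSym^*)\le 2^{n-2}\det(\Sym^*)$ and a polynomial bound overall. You correctly flag the per-prime order bookkeeping as the place where the real work lives; that bookkeeping (cases for odd $p$ and the two sub-cases at $p=2$) is precisely what occupies the bulk of the paper's Section~\ref{sec:QFGenPoly}.
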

\begin{proof}
Recall definition of the
reduced genus. By Lemma \ref{lem:ReducedGenus}, it follows that
finding a quadratic form in $\Sym^*=\red(\Sym)$ suffices for generating
a quadratic form in $\Sym$. 

The algorithm described in Section 
\ref{sec:QFGen} is correct; but is not polynomial as it is.
The analysis of the time 
complexity will be done on a different algorithm, the correctness
of which will follow from the proof of correctness of the algorithm
in Section \ref{sec:QFGen}. We now describe the algorithm.

\paragraph{}\textsc{QFGenPoly}({\em input:} valid symbol $\Sym^n$) 
{\em output:} $\MQ^n \in \Sym$
\begin{enumerate}[i.]
\addtolength{\itemsep}{-4pt}
\item If $n<4$ then return \textsc{QFGen}($\Sym$).
\item Compute $\gcd(\Sym)$ and let $\Sym^*=\Sym/\gcd(\Sym)$.
\item Find $t$ which has a primitive representation in $\Sym^*$.
Let $q=\overline{t^{n-1}\det(\Sym^*)}$ and $K_p=\ordp(q)$. 
\item For every $p\in\bbP_{\Sym}$, we construct a block diagonal matrix
$\MS_p$ and a matrix $[\Vx_p,\MA_p] \in \bbZ/p^{K_p}\bbZ$ as follows. 
Use Theorem \ref{thm:LocalQF} to find a block diagonal matrix $\MD_p\eqp\SymP^*$. 
Recall the construction of $t$ in Lemma \ref{lem:TInDim4}.

If $p$ is odd then by construction, $t$ has a primitive representation in 
$\bbZ/p^{K_p}\bbZ$ of two different types; (a) $t$ has a primitive representation
by the first entry of $\MD_p$. Let $x$ be the primitive representation.
Then, set $\MS_p=\MD_p$
and $[\Vx_p,\MA_p]=\begin{pmatrix}x & 0\\0 & x^{-1}\end{pmatrix}\oplus\MI^{n-2}$
, (b) otherwise, $t$ has a primitive representation
by two of the first three entries of $\MD_p$, say $d_1,d_2$ where 
$\ordp(d_1)\geq \ordp(d_2)$. If $(x_1,x_2)$ is a primitive $p^{K_p}$ 
representation of $t$ then
in this case $x_1$ is primitive. Let $(\MD_p)_{3+}$ be the rest of the blocks
in $\MD_p$ then we set $\MS_p=d_1\oplus d_2 \oplus (\MD_p)_{3+}$ and 
$[\Vx_p,\MA_p]=\begin{pmatrix}x_1 & 0\\x_2 & x_1^{-1}\end{pmatrix}\oplus\MI^{n-2}$.

On the other hand, when $p=2$ then too $t$ has a primitive representation in
$\bbZ/2^{K_2}\bbZ$ of two different kinds; (a) when $\SymT^*$ has a type II
block then if $x_1,x_2$ be the primitive representation with $x_1$ odd then we set
$\MS_2$ as the block diagonal form equivalent to $\MD_p$ where the first 
block is the type II block which was used to represent $t$. Then, we set
$[\Vx_2,\MA_2]=\begin{pmatrix}x_1 & 0\\x_2 & x_1^{-1}\end{pmatrix}\oplus\MI^{n-2}$,
(b) otherwise the first four Type I entries of $\MD_2$ were used to represent
$t$ and $\ordt((\MD_2)_4)=\ordp(t)$, by construction. Also, if $x_1,x_2,x_3,x_4$
is the primitive representation of $t$ then $x_4$ is primitive. In this case, we set
$\MS_2$ as $(\MD_2)_4\oplus\cdots\oplus(\MD_2)_1\oplus(\MD_2)_{4+}$, and
$[\Vx_2,\MA_2]=\begin{pmatrix}x_4 & 0 & 0 &0\\x_3 & x_4^{-1}&0&0\\
x_2&0&1&0\\x_1&0&0&1\end{pmatrix}\oplus\MI^{n-4}$.

{\bf Property.} The construction satisfies the property that for each 
$p \in \SymP$, $[\Vx_p,\MA_p] \in \gln(\bbZ/p^{K_p}\bbZ)$ and 
$\Vx_p'\MS_p\Vx_p\equiv t\bmod{p^{K_p}}$.

\item For each $p\in\PSym$, let $\tSym_p=\SYM_p(\MH_p)$, where $\MH_p$
is defined as follows.
\begin{align}\label{HALG:dHP}
\begin{array}{ll}
\Vd_p=\Vx_p'\MS_p\MA_p \bmod{p^{K_p}}  & 
\MH_p = (t\MA_p'\MS_p\MA_p - \Vd_p'\Vd_p) \bmod{p^{K_p}}\\
\end{array}
\end{align}

\item Let $\tSym=\{\tSym_p:p\in\PSym\}$, and $\tSym^*=\tSym/\gcd(\tSym)$.
\item Call this algorithm recursively with input
$\tSym^*$. Let us suppose that the algorithm returns $\tMH^* \in \gen(\tSym^*)$.
Then, set $\tMH=\gcd(\tSym)\tMH^*$.
\item Use Chinese Remaindering to compute $[\Vx,\MA]$ from
$\{[\Vx_p,\MA_p] \bmod{p^{K_p}}:p\in \PSym\}$, $\MS$ from 
$\{\MS_p\bmod{p^{K_p}}:p\in\PSym\}$
and $\MH$ from $\{\MH_p\bmod{p^{K_p}}:p\in\PSym\}$.
\item Canonicalize both $\MH$ and $\tMH$ over $\zqz$, i.e., we find
$\tMU \in \gl_{n-1}(\zqz)$ such that $\tMH\equiv \tMU'\MH\tMU \bmod q$.
\item Output the following quadratic form.
\begin{align}\label{ALGO:Q}
\MQ = \gcd(\Sym)\begin{pmatrix}t & \Vd\tMU \\ 
(\Vd\tMU)' & \frac{\tMH+\tMU'\Vd'\Vd\tMU}{t}\end{pmatrix}
\end{align}
\addtolength{\itemsep}{4pt}
\end{enumerate}

The correctness of this algorithm follows from the proof before.

Let us compute the time complexity of this algorithm.

Our first step is to show that the recursions do not blow up the 
size of the symbol. Notice that to calculate the $(n-1)$-dimensional
symbol, we multiply by $t$ in Equation \ref{HALG:dHP}. The analysis 
is done below, separately for odd primes and $p=2$.

\paragraph{Odd primes.}
In this case, we show that $\ordp(\tSym^*)=\ordp(\Sym^*)$. 
\medskip

For each odd $p\in\PSym$, $\Vx_p'\MS_p\Vx_p\equiv t\bmod{p^{K_p}}$
and $[\Vx_p,\MA_p] \in \gln(\bbZ/p^{K_p}\bbZ)$. Recall the two 
cases discussed in the algorithm while constructing $\MS_p$ and
$[\Vx_p,\MA_p]$. 
\begin{enumerate}
\item[Case 1:] Suppose $t$ is primitively
representable by the first entry of $\MS_p$. Let $x$ be the primitive
representation. Note
that $x$ is primitive and $(\MS_p)_1$ has $p$-scale~0 because
$\Sym^*$ is reduced. 
The computation for $\MH_p$ (and $\ordp(\tSym))$
is as follows. 
\begin{gather*}
[\Vx_p,\MA_p] \equiv \begin{pmatrix}x & 0 \\ 0 & x^{-1}\end{pmatrix}\oplus 
\MI^{n-2}\bmod{p^{K_p}} \qquad \MS_p=s_1\oplus\cdots \oplus s_n\\
[\Vx_p,\MA_p]'\MS_p[\Vx_p,\MA_p] \equiv x^2s_1 \oplus s_2x^{-2} 
\diag(s_3,\cdots) \bmod{p^{K_p}}\\
\MH_p \equiv t\MA_p'\MS_p\MA_p - \Vd_p'\Vd_p\equiv 
s_1s_2 \oplus x^2s_1\diag(s_3,\cdots,s_n) \bmod{p^{K_p}} \\
\ordp(\tSym)=\ordp(\MH_p)=\ordp(x^2s_1s_n)=\ordp(s_n)=\ordp(\Sym^*)
\end{gather*}

\item[Case 2:] Otherwise, the first two entries of $\MS_p$
represent $t$. In this case, if $\MS_p=s_1\oplus\cdots \oplus s_n$,
then by construction, $\ordp(s_1)\geq \ordp(s_2)$ and $\ordp(x_1)=0$,
where $x_1,x_2$ is the primitive representation of $t$ in $\MS_p$.
Then,
\begin{gather*}
[\Vx_p,\MA_p] = \begin{pmatrix}x_1 & 0 \\ x_2 & x_1^{-1}\end{pmatrix}\oplus \MI^{n-2}
\qquad \MS_p = s_1\oplus\cdots\oplus s_n\\
[\Vx_p,\MA_p]'\MS_p[\Vx_p,\MA_p] \equiv 
\begin{pmatrix}s_1x_1^2+s_2x_2^2 & s_2x_1^{-1}x_2\\
s_2x_1^{-1}x_2 & s_2x_1^{-2}\end{pmatrix} \oplus \diag(s_3,\cdots,s_n)\bmod{p^{K_p}}\\
\MH_p \equiv t\MA_p'\MS_p\MA_p - \Vd_p'\Vd_p 
= s_1s_2\oplus(s_1x_1^2+s_2x_2^2)\diag(s_3 \cdots s_n) \bmod{p^{K_p}}
\end{gather*}
By construction, $s_1x_1^2+s_2x_2^2\equiv t\bmod{p^{K_p}}$, $x_1$ is primitive
and $\ordp(s_2)=\ordp(t)$. Thus, each entry of $\MH_p$ is divisible by
$p^{\ordp(t)}$. When we reduce $\tSym$ to $\tSym^*$, we have
$\ordp(\tSym^*)=\ordp(s_n)=\ordp(\Sym^*)$.
\end{enumerate}
Thus, for odd prime $p$, in either case, $\ordp(\tSym^*)=\ordp(\Sym^*)$.

\paragraph{Prime $p=2$.} Recall the two cases discussed in the algorithm.
The first case is when $t$ has a primitive representation using a type II block.
By construction of $\MS_2$ in Theorem \ref{thm:LocalQF}, the block is
either $\MTP$ or $\MTM$. Let $i$ be the $2$-order of the block, then
\begin{gather*}
[\Vx_2,\MA_2] = \begin{pmatrix}x_1 & 0 \\ x_2 & x_1^{-1}\end{pmatrix}\oplus \MI^{n-2}
\qquad \MS_2 = 2^i\begin{pmatrix}2&1\\1&2c\end{pmatrix}\oplus(\MS_p)_{3+}\\
[\Vx_2,\MA_2]'\MS_2[\Vx_2,\MA_2] \equiv 2^i\begin{pmatrix}2x_1^2+2x_1x_2+2cx_2^2 
& 1+2cx_2x_1^{-1}\\
1+2cx_2x_1^{-1} & 2cx_1^{-2}\end{pmatrix} \oplus (\MS_p)_{3+}\\
\MH_2 = 4^i(4c-1) \oplus 2^i(2x_1^2+2x_1x_2+2cx_2^2)(\MS_p)_{3+}
\end{gather*}
Hence, $\ordt(\tSym^*)\leq 1+\ordt(\Sym^*)$. There could be at most~n Type II blocks
in a quadratic form of dimension~n, which can be generated during the recursion.
Thus, the $2$-order of the recursively generated reduced $2$-symbols remain
bounded by $\ordp(\Sym^*)+n$.

Otherwise, $t$ has a primitive representation using the first four type I blocks 
of $\MS_2$. In this case, the calculations are as follows.
\begin{gather*}
[\Vx_2,\MA_2] = \begin{pmatrix}x_4 & 0 & 0 & 0 \\ x_3 & x_4^{-1} & 0 & 0\\
x_2 & 0 & 1 & 0\\ x_1 & 0 & 0 & 1\end{pmatrix}\oplus \MI^{n-4}
\qquad \MS_2 = \diag(d_4,\cdots,d_1,s_5,\cdots,s_n)\\
t \equiv d_4x_4^2+\cdots+d_1x_1^2 \bmod{2^{K_2}}\\
[\Vx_2,\MA_2]'\MS_2[\Vx_2,\MA_2] = 
\begin{pmatrix}t & d_3x_3x_4^{-1} &d_2x_2 & d_1x_1 \\
d_3x_3x_4^{-1} & d_3x_4^{-2} & 0 & 0\\
d_2x_2 & 0 & d_2 & 0\\
d_1x_1 & 0 & 0 & d_1\\
\end{pmatrix} \oplus (\MS_2)_{5+} \bmod{2^{K_2}}\\
\MH_2 = \begin{pmatrix}d_3\frac{d_1x_1^2+d_2x_2^2+d_4x_4^2}{x_4^2}
& -\frac{x_3d_3x_2d_2}{x_4} & -\frac{x_3d_3x_1d_1}{x_4}\\
-\frac{x_3d_3x_2d_2}{x_4} & (d_1x_1^2+d_3x_3^2+d_4x_4^2)d_2 & -x_2d_2x_1d_1\\
-\frac{x_3d_3x_1d_1}{x_4} & -x_2d_2x_1d_1 & (d_2x_2^2+d_3x_3^2+d_4x_4^2)d_1 
\end{pmatrix}\\
\oplus t(\MS_2)_{5+} \bmod{2^{K_2}}
\end{gather*}
Recall Lemma \ref{lem:RepMod2Dim4}. By 
construction of $x_1,\cdots,x_4$
it follows that for each $i\in[4]$, 
$\ordt(d_ix_i^2)\geq \ordt(d_4)=\ordt(t)$,
$\ordt(d_4)\geq \cdots \geq \ordt(d_1)$ and $\ordt(x_4)=0$.
This implies that $\ordt(x_1)\geq \cdots\geq \ordt(x_4)=0$ and
by inspection, every entry in the first $3\times3$ submatrix
of $\MH_2$ is divisible by $2^{\ordt(t)}$. Thus, 
$\ordt(\tSym^*)=\ordt(\Sym^*)$.

To recapitulate, $\ordp(\tSym^*)$ is equal to $\ordp(\Sym^*)$
unless we use a Type II block to represent $t$ modulo $\bbZ/2^{K_2}\bbZ$,
in which case it increases by exactly~1.

The step by step calculation of the time taken by the algorithm is
as follows.

\begin{enumerate}[(i.)]
\item After calculating the reduced symbol $\Sym^*$, the algorithm 
starts by computing a positive
integer $t$ which is primitively representable in $\Sym^*$. For 
$n\geq4$ such an integer can be found by looking at the first~4 
dimensions of the symbol $\Sym$, see Lemma \ref{lem:TInDim4}.
This takes time linear in the number of relevant primes of $\Sym$
i.e., $O(|\PSym|(\log\det(\Sym^*))^2)$.
\item Next, we find a quadratic form $\MS$ which is equivalent
to $\Sym^*$ over the ring $\zqz$, for $q=\overline{t^{n-1}\det(\Sym^*)}$. 
By Lemma \ref{lem:TInDim4}, the integer $t$ has the property that
$t$ divides $\det(\Sym^*)$. Thus, we do not introduce any new primes and 
for every prime $p \in \PSym$;
\[
\ordp(q) \leq n\ordp(\det(\Sym^*))+k_p\;.
\]
By Theorem \ref{thm:LocalQF}, finding such an integral quadratic form
$\MS_p$ takes time $\poly(n,\log \dSym , \log p)$. There are $|\PSym|$
relevant primes and hence the total time in this step is.
\begin{align}\label{Time:FindS}
\poly(|\PSym|, n,\log\det(\Sym^*))
\end{align}
\item Then, we find a primitive representation $\Vx_p$ of $t$ in $\MS_p$ over
$\zpkz$, $k=\ordp(q)\leq n\log\det(\Sym^*)$,
for all $p \in \PSym$. Note that
the representation is done by 
Theorem \ref{thm:PolyRep} on a $4\times 4$ submatrix, which takes time 
$\poly(k,\log p)$. By the bound on $k$, we get the following
expression.
\begin{align}\label{Time:Findx}
O(|\PSym|,n,\log\det(\Sym^*))\;.
\end{align}
\item Then, we Chinese Remainder the matrices $[\Vx_p,\MA_p]$, $\MS_p$
and $\MH_p$ entry-by-entry ($ \leq n^2$ entries in each matrix)
to get $[\Vx,\MA]$, $\MS$ and $\MH$, respectively.
The modulus of the Chinese Remainder is $q$. This takes time 
$\poly(|\PSym|,n, \log q)$.
\item Finally, we canonicalize both $\MH$ and $\tMH$ modulo $q$. This
is again done by canonicalizing for each prime that divides $q$ and then
Chinese Remaindering the results. For each $p$, 
$\ordp(q) \leq n\log\det(\Sym^*)$. Thus, the time taken for each $p$ is 
bounded by $\poly(|\PSym|, n, \log\det(\Sym^*))$.
\end{enumerate}

The next step is to calculate the reduced form $\tSym^*$ and recurse.
By the discussion of the blowup above, it follows that
$\det(\tSym^*) \leq 2^{n-2}\det(\Sym^*)$. Or, 
$\log \det(\tSym^*)\leq (n-2)\log 2+\log \dSym$. Thus the total
time complexity of the algorithm can be written recursively
as 
\begin{align*}
T(n,\det(\Sym^*)) &= T(n-1,2^{n-2}\det(\Sym^*)) 
+ \poly(|\PSym|,n, \log\det(\Sym^*))
\end{align*}

Although the blowup in the determinant is exponential, all our 
algorithms run in $\poly(\log d, |\PSym|)$, where $d$ is the 
determinant of the input genus. For $n \leq 3$, 
$t \leq \wp d$, and $\wp \leq d^2$. Thus, for any constant $\delta > 0$ 
the generation algorithm runs in time 
$\poly(n,\log d, \log \frac{1}{\delta})$ and succeeds with 
probability at least $1-\delta$.
\end{proof}

\bibliographystyle{alpha}
\bibliography{quadraticforms.bib}

\appendix

\section{Diagonalizing a Matrix}\label{sec:BlockDiagonal}

In this section, we provide a proof of Theorem \ref{thm:BlockDiagonal}.

\paragraph{Module.} There are quadratic forms which have no associated 
lattice e.g., negative
definite quadratic forms. To work with these, we define the concept of
free modules (henceforth, called module) which behave as vector 
space but have no associated realization
over the Euclidean space $\bbR^n$.

If $M$ is finitely generated $\Ring$-module with generating set
$\Vx_1,\cdots,\Vx_n$ then the elements $\Vx \in M$ can
be represented as $\sum_{i=1}^n r_i \Vx_i$, such that
$r_i \in \Ring$ for every $i \in [n]$. By construction,
for all
$a,b \in R$, and $\Vx,\Vy \in M$;
\[
a(\Vx+\Vy)=a\Vx+a\Vy \qquad (a+b)\Vx=a\Vx+b\Vx \qquad a(b\Vx)=(ab)\Vx
\qquad 1\Vx=\Vx
\]
Note that, if we replace $\Ring$ by a field in the definition 
then we get a vector space (instead of a module). 
Any inner product
$\beta:M\times M \to \Ring$ gives rise to a quadratic form 
$\MQ\in\Ring^{n\times n}$ as follows;
\[
\MQ_{ij} = \beta(\Vx_i,\Vx_j) \;.
\]
Conversely, if $R=\bbZ$ then by definition, every symmetric matrix 
$\MQ \in \bbZ^{n\times n}$ gives rise to an inner product $\beta$ 
over every $\bbZ$-module $M$; as follows.
Given $n$-ary integral quadratic form $\MQ$ and a $\bbZ$-module
$M$ generated by the basis $\{\Vx_1,\cdots,\Vx_n\}$ we define the 
corresponding inner product $\beta:M\times M \to \bbZ$ as;
\[
\beta(\Vx,\Vy)=\sum_{i,j}c_id_j\MQ_{ij}
\text{ where, }\Vx=\sum_{i}c_i\Vx_i ~~ \Vy=\sum_{j}d_j\Vx_j\;.
\]
In particular, any integral quadratic form $\MQ^n$ can be interpreted 
as describing an inner product over a free module of dimension $n$.

For studying quadratic forms over $\zpkz$, where $p$ is a prime and $k$
is a positive integer; the first step is to find equivalent quadratic 
forms which have as few mixed terms as possible (mixed terms are terms
like $x_1x_2$).

\begin{proof}(Theorem \ref{thm:BlockDiagonal})
The transformation of the matrix $\MQ$ to a block diagonal form involves
three different kinds of transformation. We first describe these 
transformations on $\MQ$ with small dimensions (2 and~3).

\begin{enumerate}[(1)]
\item Let $\MQ$ be a $2\times 2$ integral quadratic form. Let us also 
assume that
the entry with smallest $p$-order in $\MQ$ is a diagonal entry, say
$\MQ_{11}$. Then, $\MQ$ is of the following form; where $\alpha_1,\alpha_2$
and $\alpha_3$ are units of $\zpz$.
\[
\MQ = \begin{pmatrix}p^i \alpha_1 & p^j \alpha_2 \\ p^j \alpha_2 & p^s\alpha_3 
\end{pmatrix} \qquad i \leq j,s
\]
The corresponding $\MU \in \text{SL}_2(\zpkz)$, that diagonalizes $\MQ$ 
is given below. The number
$\alpha_1$ is a unit of $\zpz$ and so $\alpha_1$ has an inverse in $\zpkz$.  
\[
\MU = \begin{pmatrix} 1 & -\frac{p^{j-i}\alpha_2}{\alpha_1} \bmod{p^k} \\ 
0 & 1\end{pmatrix} 
\qquad
\MU'\MQ\MU \equiv \begin{pmatrix} p^i\alpha_1 & 0 \\ 0 & p^s\alpha_3 - 
p^{2j-i}\frac{\alpha_2^2}{\alpha_1}\end{pmatrix}\pmod{p^k}
\]

\item If $\MQ^2$ does not satisfy the condition of item (1) i.e., 
the off diagonal entry is the one with smallest $p$-order, then we start
by the following transformation $\MV \in \SL_2(\zpkz)$.
\begin{gather*}
\MV = \begin{pmatrix}1 & 0 \\ 1 & 1\end{pmatrix} \qquad 
\MV'\MQ\MV = \begin{pmatrix} \MQ_{11}+2\MQ_{12}+\MQ_{22} & \MQ_{12}+\MQ_{22} \\
\MQ_{12}+\MQ_{22} & \MQ_{22}\end{pmatrix} 
\end{gather*}
If $p$ is an odd prime then $\ordp(\MQ_{11}+2\MQ_{12}+\MQ_{22})=\ordp(\MQ_{12})$, 
because $\ordp(\MQ_{11}),$ $\ordp(\MQ_{22}) >\ordp(\MQ_{12})$. By definition,
$\MS=\MV'\MQ\MV$ is equivalent to $\MQ$ over the ring $\zpkz$. But now, $\MS$
has the property that $\ordp(\MS_{11}) = \ordp(\MS_{12})$, and it can be 
diagonalized using the transformation in (1). The final transformation
in this case is the product of $\MV$ and the subsequent transformation
from item (1). The product of two matrices from $\SL_2(\zpkz)$ is also
in $\SL_2(\zpkz)$, completing the diagonalization in this case.

\item If $p=2$, then the transformation in item (2) fails. In this case,
it is possible to subtract a linear combination of these two rows/columns
to make everything else on the same row/column equal to zero over $\ztkz$.
The simplest such transformation is in dimension~3. The situation is as 
follows. Let $\MQ^3$ be a quadratic form whose off diagonal entry has the 
lowest possible power of
$2$, say $2^{\ell}$ and all diagonal entries are divisible by at least
$2^{\ell+1}$. In this case, the matrix $\MQ$ is of the
following form.
\[
\MQ = \begin{pmatrix}2^{\ell + 1} a & 2^{\ell}b & 2^id \\
2^{\ell}b & 2^{\ell+1}c & 2^je \\
2^id & 2^je & 2^{\ell+1}f \end{pmatrix} \qquad b \text{ odd}, \ell\leq i,j
\]
In such a situation, we consider the matrix $\MU \in \SL_3(\ztkz)$
of the form below such that if $\MS=\MU'\MQ\MU \pmod{2^k}$ then 
$\MS_{13}=\MS_{23}=0$.
\begin{gather*}
\MU = \begin{pmatrix}1 & 0 & -r \\ 0 & 1 & -s \\ 0 & 0 & 1\end{pmatrix}\\
(\MU'\MQ\MU)_{13} \equiv 0 \pmod{2^k} \implies 
r 2a + s b \equiv  2^{i-\ell}d \pmod{2^{k-\ell}}\\
(\MU'\MQ\MU)_{23} \equiv 0 \modtk \implies 
r b + s 2c \equiv 2^{j-\ell}e  \pmod{2^{k-\ell}}
\end{gather*}
For $i,j \geq \ell$ and $b$ odd, the solution $r$ and $s$ can be found by 
the Cramer's rule, as below. The solutions exist because the matrix 
$\begin{pmatrix}2a & b \\ b & 2c\end{pmatrix}$ has determinant $4ac-b^2$, which
is odd and hence invertible over the ring $\bbZ/2^{k-\ell}\bbZ$.
\[
r = \frac{\det\begin{pmatrix}2^{i-\ell}d & s \\ 2^{j-\ell}e & 2c\end{pmatrix}}
{\det\begin{pmatrix}2a & b \\ b & 2c\end{pmatrix}} \pmod{2^{k-\ell}} ~~
s = \frac{\det\begin{pmatrix}2a & 2^{i-\ell}d \\ b & 2^{j-\ell}e\end{pmatrix}}
{\det\begin{pmatrix}2a & b \\ b & 2c\end{pmatrix}}  \pmod{2^{k-\ell}}
\]
\end{enumerate}

This completes the description of all the transformations we are going
to use, albeit for $n$-dimensional $\MQ$ they will be a bit technical.
The full proof for the case of odd prime follows.

Our proof will be a reduction of the problem of diagonalization from
$n$ dimensions to $(n-1)$-dimensions, for the odd primes $p$. We now
describe the reduction.

Given the matrix $\MQ^n$, let $M$ be the corresponding $(\zpkz)$-module
with basis $\MB=[\Vb_1,\cdots,\Vb_n]$ i.e., $\MQ=\MB'\MB$. We first find 
a matrix entry with the smallest $p$-order, say $\MQ_{i^*j^*}$. The 
reduction has two cases: (i) there is a diagonal entry in $\MQ$ with
the smallest $p$-order, and (ii) the smallest $p$-order occurs on an
off-diagonal entry.

We handle case (i) first. Suppose it is possible to pick $\MQ_{ii}$
as the entry with the smallest $p$-order. Our first transformation
$\MU_1 \in \sln(\zpkz)$ is the one which makes the following 
transformation i.e., swaps $\Vb_1$ and $\Vb_i$.
\begin{align}\label{BlockDiagonal:U1}
[\Vb_1,\cdots,\Vb_n] \underset{\MU_1, p^k}{\to} [\Vb_i,\Vb_2,\cdots,
\Vb_{i-1},\Vb_1,\Vb_{i+1},\cdots, \Vb_n]
\end{align}

Let us call the new set of elements $\MB_1=[\Vv_1,\cdots,\Vv_n]$ and
the new quadratic form $\MQ_1=\MB_1'\MB_1 \bmod{p^k}$. Then,
$\Vv_1'\Vv_1$ has the smallest $p$-order in $\MQ_1$ and
$\MU_1'\MQ\MU_1\equiv \MQ_1 \bmod{p^k}$. The next transformation
$\MU_2 \in \sln(\zpkz)$ is as follows. 
\begin{equation}\label{BlockDiagonal:U2}
\Vw_i = \left\{ \begin{array}{ll}
\Vv_1 & \text{if $i=1$}\\
\Vv_i - \frac{\Vv_1'\Vv_i}{p^{\ordp((\MQ_1)_{11})}} \cdot 
\left(\frac{1}{\copp((\MQ_1)_{11})} \bmod{
p^k}\right) \cdot \Vv_1 & \text{otherwise\,.}
\end{array}\right.
\end{equation}
By assumption, $(\MQ_1)_{11}$ is the matrix entry with 
the smallest $p$-order and so $p^{\ordp((\MQ_1)_{11})}$ divides 
$\Vv_{1}'\Vv_i$. Furthermore, $\copp((\MQ_1)_{11})$ is invertible 
modulo $p^k$. Thus, the transformation in Equation 
\ref{BlockDiagonal:U2} is well defined. Also note that it is
a basis transformation, which maps one basis of $\MB_1=[\Vv_1,\cdots,\Vv_n]$
to another basis $\MB_2=[\Vw_1,\cdots,\Vw_n]$. Thus, the 
corresponding basis transformation $\MU_2$ is a 
unimodular matrix over integers, and so $\MU_2\in\sln(\zpkz)$.
Let $\MQ_2=\MU_2'\MQ_1\MU_2 \bmod{p^k}$. Then, we show that the
non-diagonal entries in the entire first row and first column of
$\MQ_2$ are~0. 
\begin{align*}
(\MQ_2&)_{1i(\neq 1)}=(\MQ_2)_{{i1}}=\Vw_1'\Vw_i \bmod{p^k}\\
	&\overset{(\ref{BlockDiagonal:U2})}{\equiv} 
	\Vv_1'\Vv_i - \frac{\Vv_1'\Vv_i}{p^{\ordp((\MQ_1)_{11})}} \cdot 
	\left(\frac{1}{\copp((\MQ_1)_{11})} \bmod{p^k}\right) 
	\cdot \Vv_1'\Vv_1 \\
	&\equiv\Vv_1'\Vv_i - \frac{\Vv_1'\Vv_i}{p^{\ordp((\MQ_1)_{11})}} \cdot 
	\left(\frac{1}{\copp((\MQ_1)_{11})} \bmod{p^k}\right) 
	\cdot p^{\ordp((\MQ_1)_{11})}\copp((\MQ_1)_{11}) \\
	&\equiv 0 \bmod{p^k} 
\end{align*}

Thus, we have reduced the problem to $(n-1)$-dimensions.
We now recursively call this algorithm with the quadratic form
$\MS=[\Vw_2,\cdots,\Vw_{n}]'[\Vw_2,\cdots,\Vw_{n}] \bmod{p^k}$
and let $\MV \in \SL_{n-1}(\zpkz)$ be the output of
the recursion. Then, $\MV'\MS\MV \bmod{p^k}$ is a diagonal
matrix. Also, by consruction $\MQ_2=\diag((\MQ_2)_{11},\MS)$.
Let $\MU_3=1\oplus\MV$, and $\MU=\MU_1\MU_2\MU_3$, then,
by construction, $\MU'\MQ\MU \bmod{p^k}$ is a diagonal
matrix; as follows.
\begin{gather*}
\MU'\MQ\MU \equiv \MU_3'\MU_2'\MU_1'\MQ\MU_1\MU_2\MU_3\equiv
\MU_3'\MQ_2\MU_3\equiv(1\oplus\MV)'\diag((\MQ_2)_{11})(1\oplus\MV)\\
\equiv \diag((\MQ_2)_{11},\MV'\MS\MV) \bmod{p^k}
\end{gather*}

Otherwise, we are in case (ii) i.e., the entry with smallest 
$p$-order in $\MQ$ is an off diagonal entry, say $\MQ_{i^*j^*},
i^*\neq j^*$. Then, we make the following basis transformation
from $[\Vb_1,\cdots,\Vb_n]$ to $[\Vv_1,\cdots,\Vv_n]$ as follows.
\begin{equation}\label{BlockDiagonal:U0}
\Vv_i = \left\{ \begin{array}{ll}
\Vb_{i^*}+\Vb_{j^*} & \text{if $i=i^*$}\\
\Vb_i & \text{otherwise\,.}
\end{array}\right.
\end{equation}
The transformation matrix $\MU_0$ is from $\sln(\zpkz)$.
Recall, 
$\ordp(\MQ_{i^*j^*}) < \ordp(\MQ_{i^*i^*}), \ordp(\MQ_{j^*j^*})$, and so
$\ordp(\Vv_{i^*}'\Vv_{i^*})=\ordp(\Vb_{i^*}'\Vb_{j^*})$. Furthermore, 
$\ordp(\Vv_i'\Vv_j)\geq \ordp(\Vb_{i^*}'\Vb_{j^*})$, and so the minimum 
$p$-order does not change after the transformation in Equation 
(\ref{BlockDiagonal:U0}). This transformation reduces the problem to the 
case when the matrix entry with minimum $p$-order appears on the 
diagonal. This completes the proof of the theorem for odd primes
$p$.

For $p=2$, exactly the same set of transformations works, unless the 
situation in item (3) arises. In such a case, we use the type II block
to eliminate all other entries on the same rows/columns as the type II
block. Thus, in this case, the problem reduces to one in dimension $(n-2)$.

The algorithm uses $n$ iterations, reducing the dimension by~1 in each 
iteration. In each iteration, we have to find the minimum $p$-order, costing
$O(n^2\log k)$ ring operations and then~3 matrix multiplications costing $O(n^3)$ operations
over $\zpkz$. Thus, the overall complexity is $O(n^4+n^3\log k)$ or 
$O(n^4\log k)$ ring operations.
\end{proof}

\section{Missing Proofs}\label{sec:Proofs}

\begin{proof}(proof of Lemma \ref{lem:Square})
We split the proof in two parts: for odd primes $p$ and for the prime~2.
\begin{description}
\item [Odd Prime.]
If $0 \neq t \in \zpkz$ then $\ordp(t)<k$. If
$t$ is a square modulo $p^k$ then there exists a $x$
such that $x^2\equiv t \pmod{p^k}$. Thus, there exists $a \in \bbZ$
such that $x^2=t+ap^k$. But then, $2\ordp(x)=\ordp(t+ap^k)=\ordp(t)$.
This implies that $\ordp(t)$ is even and $\ordp(x)=\ordp(t)/2$.
Substituting this into $x^2=t+ap^k$ and dividing the entire equation
by $p^{\ordp(t)}$ yields that $\copp(t)$ is a
quadratic residue modulo $p$; as follows.
\[
\copp(x)^2 = \copp(t)+ap^{k-\ordp(t)}\equiv \copp(t) \pmod{p}
\]

Conversely, if $\copp(t)$ is a quadratic residue modulo $p$ 
then there exists a $u \in \zpkz$ such that $u^2\equiv \copp(t) \pmod{p^k}$,
by Lemma \ref{lem:Square}. If $\ordp(t)$ is even then 
$x=p^{\ordp(t)/2}u$ is a solution to the equation 
$x^2\equiv t \pmod{p^k}$.

\item [Prime $2$.] 
If $0 \neq t \in \ztkz$ then $\ordt(t)<k$. If $t$ is a square modulo 
$2^k$ then there exists an integer $x$ such that $x^2\equiv t \modtk$.
Thus, there exists an integer $a$ such that $x^2=t+a2^k$. But then,
$2\ordt(x)=\ordt(t+a2^k)=\ordt(t)$. This implies that $\ordt(t)$ is 
even and $\ordt(x)=\ordt(t)/2$. Substituting this into the equation
$x^2=t+a2^k$ and dividing the entire equation by $2^{\ordt(t)}$ 
yields,
\[
\copt(x)^2 = \copt(t) + a2^{k-\ordt(t)} \qquad \copt(t) < 2^{k-\ordt(t)}\;.
\]
But $\copt(x)$ is odd and hence
$\copt(x)^2 \equiv 1 \pmod{8}$. If $k-\ordt(t)>2$, then 
$\copt(t)\equiv 1 \pmod{8}$. Otherwise, if $k-\ordt(t)\leq 2$ 
then $\copt(t) < 2^{k-\ordp(t)}$ implies that $\copt(t)=1$.

Conversely, if $\copt(t)\equiv 1 \pmod{8}$ then there exists a 
$u \in \ztkz$ such that $u^2\equiv \copt(t) \modtk$, by Lemma
\ref{lem:Square}. If $\ordt(t)$ is even then $x=2^{\ordt(t)/2}u$
is a solution to the equation $x^2\equiv t\modtk$.
\end{description}
\end{proof}

\begin{proof}(Theorem \ref{thm:Siegel13})
We do the proof in two steps: (i) if $t$ has a primitive $p^k$-
representation in $\MQ$ then $t$ has a primitive $p^*$-representation
in $\MQ$, and (ii) if $t$ has a primitive $p^*$-representation in
$\MQ$ then $t$ has a primitive $p^*$-representation in $\tMQ$
for all $\tMQ$ such that $\tMQ \eqp \MQ$.

The proof of (i) follows.
By assumption, there exists a primitive $\Vx \in (\bbZ/p^k\bbZ)^n$ 
such that $\Vx'\MQ\Vx\equiv t \pmod{p^k}$. Let $a=\Vx'\MQ\Vx$ be 
an integer, then by definition of symbols $a$ and $t$ have the 
same $p^k$-symbol. This implies that for all 
$i\geq k$ there exists a unit $u_i \in \bbZ/p^i\bbZ$ such that 
$u_i^2a\equiv t \pmod{p^i}$. It follows that $u_i\Vx$ is a 
primitive representation of $t$ in $\bbZ/p^i\bbZ$. But,
if $\Vx$ is a primitive representation of $t$ by $\MQ$ over 
$\bbZ/p^i\bbZ$ then $\Vx$ is also a primitive representation 
of $t$ by $\MQ$ over $\bbZ/p^j\bbZ$, for all positive integers 
$j \leq i$. This completes the proof of (i).

The proof of (ii) follows. Let $K$ be an arbitrary positive 
integer and $\Vx \in (\bbZ/p^K\bbZ)^n$ be a primitive vector 
such that $\Vx'\MQ\Vx\equiv t \bmod{p^K}$. As $\tMQ \eqp \MQ$, there exists
$\MU \in \gln(\bbZ/p^K\bbZ)$ such that $\MQ\equiv\MU'\tMQ\MU \bmod{p^K}$.
Thus, $(\MU\Vx)'\tMQ(\MU\Vx) \equiv t \bmod{p^K}$ and $\MU\Vx$
is a $p^K$-representation of $t$ in $\tMQ$. If $\Vx$ is primitive then
so is $\MU\Vx$. As $K$ is arbitrary, the proof of (ii) and hence
the theorem is complete.
\end{proof}

\section{Computer Assisted Proofs}\label{sec:Code}

In this section, we provide the \textsc{Maple} code for the 
computer Assisted proofs. The
procedure $\FXI$ computes the function $\xi$ and the names
of the other procedures are self-explanatory.

Following are the names of the variables that we use.
\begin{align*}
\small
\begin{array}{ll}
\begin{array}{ll}
rh &= \rho \\
a2 &= a_2 \\
sdp &= \fS_{d+}\\
s37 &= \fS_{\{3,7\}}\\
sm37 &= \fS_{-}+\fS_{\{3,7\}}\\ 
lega &= \legendre{a_2}{2}
\end{array} &
\begin{array}{ll}
eps &= \epsilon \\
b2 &= b_2 \\
sdm &= \fS_{d-}\\
s35 &= \fS_{\{3,5\}}\\
sm35 &= \fS_{-}+\fS_{\{3,7\}}\\
legb &= \legendre{b_2}{2}
\end{array} 
\end{array}
\end{align*}

When run on \textsc{Maple}, none of these codes output ``FAIL!''.

\newpage
\begin{figure}[h]
\small
\setlength{\unitlength}{0.175in}
\begin{picture}(30,4)
\put(0,4){\textsc{fxi}:=\PROC($s::$integer)::integer;}
\put(0.5,3){$f=1$;}
\put(0.5,2){\IF $s \bmod 4 = 1$ \OR $s \bmod 4 = 0$ \THEN $~f:=0;$ \ENDIF:}
\put(0.5,1){\RET $f$:}
\put(0,0){\END \PROC:}
\end{picture}
\end{figure}

\begin{figure}[h]
\small
\setlength{\unitlength}{0.175in}
\begin{picture}(30,24)
\put(0,24){\textsc{TypeIIBruteForce} := \PROC();}
\put(0.5,23){\FOR $rh$ \IN $\{-1,1\}$ \DO:\FOR $eps$ \IN $\{-1,1\}$ \DO:}
\put(1,22){$sig := rh\cdot (1+eps);odty := 0;pexs := (odty - sig) \bmod 8;$}
\put(1,21){\FOR $s1p$ \IN $\{0,1\}$ \DO: \FOR $s1m$ \IN $\{0,1\}$ \DO:}
\put(1,20){\FOR $s5p$ \IN $\{0,1\}$ \DO: \FOR $s5m$ \IN $\{0,1\}$ \DO:}
\put(1,19){\FOR $s3p$ \IN $\{0,1,2,3\}$ \DO: \FOR $s3m$ \IN $\{0,1,2,3\}$ \DO:}
\put(1,18){\FOR $s7p$ \IN $\{0,1,2,3\}$ \DO: \FOR $s7m$ \IN $\{0,1,2,3\}$ \DO:}
\put(1.5,17){$s3:=s3p+s3m;s5:=s5p+s5m;$}
\put(1.5,16){$sm:=s1m+s3m+s5m+s7m;s7:=s7p+s7m;$}
\put(1.5,15){$s37:=s3+s7;sm35:=sm+s3+s5;sm57:=sm+s5+s7;$}
\put(1.5,14){$sx:=2\cdot s3+4\cdot s5+6 \cdot s7;$}
\put(1.5,13){\IF $pexs = sx+2\cdot(1-eps^{s37}\cdot(-1)^{(sm+\FXI(s37))}) \bmod 8$ \THEN}
\put(2,12){\IF \NOT ($rh=1$ \AND $\TYPE(sm35, \EVEN)$) \AND}
\put(2,11){\NOT ($rh=-1$ \AND $\TYPE(sm57, \EVEN)$) \AND}
\put(2,10){\NOT ($rh=1$ \AND $eps=1$ \AND $\TYPE(sm57, \ODD)$) \AND}
\put(2,9){\NOT ($rh=1$ \AND $eps=-1$ \AND $\TYPE(sm57, \EVEN)$) \AND}
\put(2,8){\NOT ($rh=-1$ \AND $eps=-1$ \AND $\TYPE(sm35, \EVEN)$) \AND}
\put(2,7){\NOT ($rh=-1$ \AND $eps=1$ \AND $\TYPE(sm35, \ODD)$) \THEN}
\put(2.5,6){print(``FAIL!'');}
\put(2,5){\ENDIF:}
\put(1.5,4){\ENDIF:}
\put(1,3){\END \DO: \END \DO:\END \DO: \END \DO:}
\put(1,2){\END \DO: \END \DO:\END \DO: \END \DO:}
\put(0.5,1){\END \DO: \END \DO:}
\put(0,0){\END \PROC:}
\end{picture}
\end{figure}

\newpage

\begin{figure}[h]
\small
\setlength{\unitlength}{0.175in}
\begin{picture}(30,31)
\put(0,31){\textsc{TypeIEvenBruteForce} := \PROC();}
\put(0.5,30){\FOR $rh$ \IN $\{-1,1\}$ \DO:\FOR $eps$ \IN $\{-1,1\}$ \DO:}
\put(0.5,29){\FOR $a2$ \IN $\{1,3,5,7\}$ \DO:\FOR $b2$ \IN $\{1,3,5,7\}$ \DO:}
\put(1,28){$sig := rh\cdot (1+eps);odty := a2+b2 \bmod 8;pexs := (odty - sig) \bmod 8;$}
\put(1,27){$leg:=$numtheory[legendre]($a2\cdot b2$,$2$);
$X:=\{rh\cdot a2 \bmod 8, rh\cdot b2 \bmod 8\}$;}
\put(1,26){\FOR $s1p$ \IN $\{0,1\}$ \DO: \FOR $s1m$ \IN $\{0,1\}$ \DO:}
\put(1,25){\FOR $s5p$ \IN $\{0,1\}$ \DO: \FOR $s5m$ \IN $\{0,1\}$ \DO:}
\put(1,24){\FOR $s3p$ \IN $\{0,1,2,3\}$ \DO: \FOR $s3m$ \IN $\{0,1,2,3\}$ \DO:}
\put(1,23){\FOR $s7p$ \IN $\{0,1,2,3\}$ \DO: \FOR $s7m$ \IN $\{0,1,2,3\}$ \DO:}
\put(1.5,22){$s3:=s3p+s3m;s5:=s5p+s5m;sx:=2\cdot s3+4\cdot s5+6 \cdot s7;$}
\put(1.5,21){$sm:=s1m+s3m+s5m+s7m;s7:=s7p+s7m;$}
\put(1.5,20){$s37:=s3+s7;sm37:=sm+s37;$}
\put(1.5,19){\IF $pexs = sx+2\cdot(1-eps^{s37}\cdot(-1)^{sm+\FXI(s37)}) \bmod 8$ \AND
$leg=(-1)^{s35}$ \THEN}
\put(2,18){\IF \NOT ($rh=1$ \AND $\TYPE(sm, \EVEN)$ \AND}
\put(2.25,17){$\NOPS(X$\INT$\{1,5\})>0$) \AND}
\put(2.25,16){\NOT ($rh=-1$ \AND $\TYPE(sm37, \EVEN)$ \AND}
\put(2.25,15){$\NOPS(X$\INT$\{1,5\})>0$) \AND}
\put(2.25,14){\NOT ($rh=1$ \AND $eps=1$ \AND $\TYPE(sm37, \ODD)$ \AND}
\put(2.25,13){$\NOPS(X$\INT$\{3,7\})>0$) \AND}
\put(2.25,12){\NOT ($rh=1$ \AND $eps=-1$ \AND $\TYPE(sm37, \EVEN)$ \AND}
\put(2.25,11){$\NOPS(X$\INT$\{3,7\})>0$) \AND}
\put(2.25,10){\NOT ($rh=-1$ \AND $eps=-1$ \AND $\TYPE(sm, \EVEN)$ \AND}
\put(2.25,9){$\NOPS(X$\INT$\{3,7\})>0$) \AND}
\put(2.25,8){\NOT ($rh=-1$ \AND $eps=1$ \AND $\TYPE(sm, \ODD)$ \AND}
\put(2.25,7){$\NOPS(X$\INT$\{3,7\})>0$) \THEN}
\put(2.5,6){print(``FAIL!'');}
\put(2,5){\ENDIF:}
\put(1.5,4){\ENDIF:}
\put(1,3){\END \DO: \END \DO:\END \DO:\END \DO:}
\put(1,2){\END \DO: \END \DO:\END \DO:\END \DO:}
\put(0.5,1){\END \DO: \END \DO:\END \DO:\END \DO:}
\put(0,0){\END \PROC:}
\end{picture}
\end{figure}

\newpage
\begin{figure}
\small
\setlength{\unitlength}{0.175in}
\begin{picture}(30,26)
\put(0,26){\textsc{TypeIOddBruteForce} := \PROC();}
\put(0.5,25){\FOR $rh$ \IN $\{-1,1\}$ \DO:\FOR $eps$ \IN $\{-1,1\}$ \DO:}
\put(0.5,24){\FOR $a2$ \IN $\{1,3,5,7\}$ \DO:\FOR $b2$ \IN $\{1,3,5,7\}$ \DO:}
\put(1,23){$legb:=$numtheory[legendre]$(b2,2);lega:=$numtheory[legendre]$(a2,2);$}
\put(1,22){\IF $legb=-1$ \THEN $~odty:=odty+4 \bmod 8$; \ENDIF:}
\put(1,21){$sig := rh\cdot (1+eps);pexs := (odty - sig) \bmod 8;$}
\put(1,20){$leg:=$numtheory[legendre]$(a2\cdot b2,2)$;}
\put(1,19){\FOR $s1p$ \IN $\{0,1\}$ \DO: \FOR $s1m$ \IN $\{0,1\}$ \DO:}
\put(1,18){\FOR $s5p$ \IN $\{0,1\}$ \DO: \FOR $s5m$ \IN $\{0,1\}$ \DO:}
\put(1,17){\FOR $s3p$ \IN $\{0,1,2,3\}$ \DO: \FOR $s3m$ \IN $\{0,1,2,3\}$ \DO:}
\put(1,16){\FOR $s7p$ \IN $\{0,1,2,3\}$ \DO: \FOR $s7m$ \IN $\{0,1,2,3\}$ \DO:}
\put(1.5,15){$s3:=s3p+s3m;s5:=s5p+s5m;sx:=2\cdot s3+4\cdot s5+6 \cdot s7;$}
\put(1.5,14){$sm:=s1m+s3m+s5m+s7m;s7:=s7p+s7m;$}
\put(1.5,13){$s37:=s3+s7;s35:=s3+s5;sm37:=sm+s37;sm35:=sm+s35;$}
\put(1.5,12){\IF $pexs = sx+2\cdot(1-eps^{s37}\cdot(-1)^{(sm+\FXI(s37))}) \bmod 8$ \AND}
\put(1.75,11){$leg=(-1)^{s35}$ \THEN}
\put(2,10){\IF \NOT $(rh\cdot a2 \bmod 4 = 1$ \AND $(-1)^{sm}\cdot rh^{s37}\cdot lega=1)$ \AND }
\put(2,9){\NOT $(rh\cdot a2 \bmod 4 = 3$ \AND $(-1)^{(sm37+1)}\cdot rh^{s37}\cdot 
eps\cdot lega=1)$ \AND}
\put(2,8){\NOT $(rh\cdot b2 \bmod 4 = 1$ \AND $(-1)^{sm35}\cdot rh^{s37}\cdot legb=1)$ \AND }
\put(2,7){\NOT $(rh\cdot b2 \bmod 4 = 3$ \AND $(-1)^{(sm57+1)}\cdot rh^{s37} 
\cdot eps\cdot legb=1)$ \THEN}
\put(2.5,6){print(``FAIL!'');}
\put(2,5){\ENDIF:}
\put(1.5,4){\ENDIF:}
\put(1,3){\END \DO: \END \DO:\END \DO:\END \DO:}
\put(1,2){\END \DO: \END \DO:\END \DO:\END \DO:}
\put(0.5,1){\END \DO: \END \DO:\END \DO:\END \DO:}
\put(0,0){\END \PROC:}
\end{picture}
\end{figure}

\end{document}